\newtheorem{lemma}{Lemma}
\newtheorem{theorem}[lemma]{Theorem}
\newtheorem{proposition}[lemma]{Proposition}
\DeclareRobustCommand{\abbrevcrefs}{%
\crefname{proposition}{Prop.}{Props.}%
}
\DeclareRobustCommand{\cshref}[1]{{\abbrevcrefs\cref{#1}}}
\newcommand{\R}{\mathbb{R}}
\newcommand\1{{\ensuremath {\mathds 1} }}
\newcommand{\cM}{\mathcal{M}}
\newcommand{\cB}{\mathcal{B}}
\newcommand{\cE}{\mathcal{E}}
\newcommand{\cF}{\mathcal{F}}
\newcommand{\cN}{\mathcal{N}}
\newcommand{\cH}{\mathcal{H}}
\newcommand{\cU}{\mathcal{U}}
\newcommand{\tr}{{\rm Tr}\,}
\renewcommand{\geq}{\geqslant}
\renewcommand{\leq}{\leqslant}
\renewcommand{\hat}{\widehat}
\renewcommand{\tilde}{\widetilde}
\newcommand{\eps}{\varepsilon}
\newcommand{\nn}{\nonumber}
\newcommand{\dd}{\mathrm{d}}
\newcommand{\ch}{\mathrm{ch}}
\newcommand{\sh}{\mathrm{sh}}
\newcommand{\ad}{\mathrm{ad}}
\newcommand{\ao}{\mathfrak{a}}
\newcommand{\hc}{\mathrm{h.c.}}
\newcommand{\Ran}{\mathrm{Ran}}
\newcommand{\vphi}{\varphi}
\newcommand{\vep}{\varepsilon}
\newcommand{\T}{\widetilde{T}}
\newcommand{\simS}{\sim_{\raisebox{-0.3ex}{\scriptsize S}}}
\newcommand{\wt}{\widetilde}
\newcommand{\bZ}{\mathbb{Z}} 
\newcommand{\bR}{\mathbb{R}} 
\newcommand{\bN}{\mathbb{N}} 
\newcommand{\ph}{\mathbb{\varphi}} 
\newcommand{\weyl}{W_{\hspace{-.06cm} N_0}}
\numberwithin{equation}{section}
\title{Upper Bound for the Free Energy of \\ Dilute Bose Gases at Low Temperature}
\author{Florian Haberberger\thanks{Department of Mathematics, LMU Munich, Theresienstrasse 39, 80333 Munich, Germany.} , Christian Hainzl$^*$, 
Benjamin Schlein\thanks{Institute of Mathematics, University of Zurich, Winterthurerstrasse 190, 8057 Zurich, Switzerland.
%Emails: haberberger@math.lmu.de, hainzl@math.lmu.de, benjamin.schlein@math.uzh.ch, triay@math.lmu.de
} , 
Arnaud Triay$^*$ }
\begin{document}
\maketitle

\begin{abstract}
We consider a Bose gas at density $\rho > 0$, interacting through a repulsive potential $V \in L^2 (\bR^3)$ with scattering length $\mathfrak{a} > 0$. We prove an upper bound for the free energy of the system, valid at low temperature $T \lesssim \rho \mathfrak{a}$. Combined with the recent lower bound obtained in \cite{HabHaiNamSeiTri-23}, our estimate resolves the free energy per unit volume up to and including the Lee--Huang--Yang order $\mathfrak{a} \rho^2  (\rho \mathfrak{a}^3)^{1/2}$.
\end{abstract} 
 
 \section{Introduction} 

\subsection{Setting and main result}

Consider a gas of $n$ bosons, moving in a periodic box $\Omega$ and interacting through a non-negative, radially symmetric and compactly supported potential $V\in L^2(\R^3)$ with scattering length $\mathfrak{a} > 0$. The Hamiltonian of the system has the form 
\begin{equation}\label{eq:H1} H_n = \sum_{i=1}^n -\Delta_{x_i} + \sum_{1\leq i<j\leq n} V (x_i - x_j) \end{equation} 
and, according to Bose statistics, it acts on the Hilbert space $L^2_s (\Omega^n)$, the subspace of $L^2 (\Omega^n)$ consisting of functions that are symmetric with respect to permutations of the $n$ particles. The canonical free energy of the gas at temperature $T \geq 0$ is defined by 
\begin{equation} \label{eq:FNL} F_\Omega (n) = \inf_\Gamma \; \big[ \tr H_n \Gamma - T S( \Gamma) \big],  \end{equation}
where the infimum is taken over all density matrices $\Gamma$ on $L^2_s (\Omega^n)$, i.e. all non-negative operators $\Gamma$ on $L^2_s (\Omega^n)$ with $\tr \Gamma = 1$. Moreover, $S (\Gamma) = - \tr \Gamma \log \Gamma$ denotes the entropy of $\Gamma$.
We are interested in the thermodynamic free energy per unit volume at density $\rho > 0$ and at temperature $T$, which is defined by the limit 
\begin{equation}\label{eq:frhot} f (\rho , T) = \lim_{\substack{n, |\Omega| \to \infty : \\ n / |\Omega| = \rho}} \frac{F_\Omega (n)}{|\Omega|}, \end{equation} 
where $|\Omega |$ denotes the volume of the box $\Omega$. The existence of the thermodynamic limit and its convexity in $\rho$ are well-known; see \cite{Ruelle}.

At temperature $T= 0$, the free energy (\ref{eq:FNL}) is the ground state energy of the Hamiltonian (\ref{eq:H1}). In the thermodynamic limit, Lee--Huang--Yang predicted in \cite{LHY-57} that the ground state energy per unit volume is given by 
\begin{equation}\label{eq:LHY} e (\rho) = f (\rho, 0) = 4 \pi \mathfrak{a} \rho^2 \left( 1 + \frac{128}{15 \sqrt{\pi}} (\rho \mathfrak{a}^3)^{1/2} + \dots \right) \end{equation} 
up to contributions of lower order in $\rho$ in the dilute limit $\rho \mathfrak{a}^3 \to 0$. To leading order, the validity of (\ref{eq:LHY}) was rigorously established in \cite{Dyson-57} and \cite{LieYng-98}, as an upper and, respectively, lower bound.
A lower bound capturing the second order corrections on the right-hand side of (\ref{eq:LHY}) has been shown in \cite{FouSol-20} (for integrable potentials) and in \cite{FouSol-23} (for potentials including a hard-core). The matching upper bound was proved in \cite{YauYin-09} (for sufficiently regular $V$); recently, a simpler proof of the upper bound (for $V \in L^3 (\bR^3)$)
%valid for all compactly supported, non-negative and radially symmetric $V \in L^3 (\bR^3)$  
was obtained in \cite{BCS} (see also \cite{Basti-22}).
For hard-core potentials, the derivation of an upper bound confirming (\ref{eq:LHY}) to second order is still an open question; in this setting, the best available estimate, matching (\ref{eq:LHY}) up to an error  comparable with the predicted second order term, has been recently derived in  \cite{BCGOPS-22}. 

For $T>0$, the free energy (\ref{eq:FNL}) is attained by the Gibbs state $ {\rm z}^{-1} e^{-H_n/T}$ with the normalization constant ${\rm z} = \tr e^{-H_n/T}$. In \cite{LHY-57}, it is predicted that in the dilute limit and at low energy, $H_n$ can be approximated by a linear combination of uncoupled harmonic oscillators, labeled by the momentum $p$, with the dispersion $\eps (p) = \sqrt{|p|^4 + 16 \pi \rho \mathfrak{a} \, p^2}$. Under this assumption, a simple ideal gas computation leads to the expectation that 
\begin{equation}\label{eq:frhot-low} 
\begin{split} 
f (\rho, T) = \; &4 \pi \mathfrak{a} \rho^2 \left( 1 + \frac{128}{15 \sqrt{\pi}} (\rho \mathfrak{a}^3)^{1/2}  \right) + \frac{T^{5/2}}{(2\pi)^3} \int_{\bR^3} \log \left( 1  - e^{- \sqrt{|p|^4 + \frac{16 \pi\rho \mathfrak{a}}{T}  p^2 }} \right) dp  + \ldots
\end{split} 
 \end{equation} 
as $\rho \mathfrak{a}^3 \to 0$. Up to errors that are small with respect to $\rho^2 \mathfrak{a}$, the validity of (\ref{eq:frhot-low}) was established in \cite{Seiringer-08,Yin-10}, for temperatures $T \lesssim T_c(\rho)$, with $T_c (\rho) \sim \rho^{2/3}$ denoting the critical temperature of the free Bose gas. In the present paper, we aim at resolving (\ref{eq:frhot-low}) up to the Lee--Huang--Yang order $\rho^2 \mathfrak{a} (\rho \mathfrak{a}^3)^{1/2}$. To this end, we are going to consider temperatures $T \lesssim \rho \ao$, for which the thermal contribution is, at most, comparable with the Lee--Huang--Yang term. 

As a lower bound, a rigorous derivation of (\ref{eq:frhot-low}) to the order $\rho^2 \mathfrak{a} (\rho \mathfrak{a}^3)^{1/2}$ was recently obtained in \cite{HabHaiNamSeiTri-23} for all $T \lesssim \rho \mathfrak{a}$. The goal of the present work is to show an upper bound matching the lower bound of \cite{HabHaiNamSeiTri-23} and therefore proving the formula  (\ref{eq:frhot-low}) for all $T \lesssim \rho \mathfrak{a}$.

Let us elaborate on the choice of the temperature. When $T \simeq \rho \ao$, the typical momenta of thermal excitations $T^{1/2} \simeq (\rho \ao)^{1/2} = \ell_{\rm GP}^{-1}$ is of the order of the inverse of the healing length, also known as the Gross--Pitaevskii length, which is also the typical order of momenta of the excitations responsible for the Lee-Huang-Yang correction at $T=0$.
In fact, as we will explain in more detail later, the study of the thermodynamic limit (\ref{eq:frhot}) often relies on the analysis of subsystems of fixed size. 
The Gross--Pitaevskii length $\ell_\text{GP}$ is also the shortest length on which the systems exhibits the correct second order behavior. 
The free energy is then given by an expansion similar to (\ref{eq:frhot-low}), as it has already been proven in \cite{DeuSeiYng-19,DeuSei-20,BocDeuSto-23,BocDeuSto-24,CapDeu-23}.  

For the proof of the lower bound of (\ref{eq:frhot-low}) in \cite{HabHaiNamSeiTri-23}, it was sufficient to consider subsystems of size slightly larger than $\ell_\text{GP}$. For the upper bound, on the other hand, we need to work on boxes with length of the order 
$\mathfrak{a} (\rho \mathfrak{a}^3)^{-\gamma}$ for some $\gamma > 1$,
which is much larger than $\ell_\text{GP}$. This is a consequence of the fact that the effect of Dirichlet boundary conditions, which are relevant to show upper bounds, on the free energy is substantially stronger than the effect of Neumann boundary conditions  (relevant for lower bounds). From this point of view the analysis is more involved and consequently new techniques are required.

The following theorem is our main result. 

\begin{theorem}\label{thm:main} 
Let $V \in L^2 (\bR^3)$ be non-negative, spherically symmetric and compactly supported. Let $0\leq T \leq C\rho \mathfrak{a}$ for some $C>0$. Then there exist $c, \epsilon > 0$ such that
\begin{equation} \label{eq:main} 
\begin{split} 
f(\rho,T) \leq \; &4\pi\mathfrak{a}\rho^2\left(1 + \frac{128}{15\sqrt{\pi}} (\rho \mathfrak{a}^3)^{1/2} + c (\rho \mathfrak{a}^3)^{1/2+\epsilon} \right) + \frac{T^{5/2}}{(2\pi)^3} \int_{\R^3} \log \left(1-e^{-\sqrt{|p|^4+\frac{16\pi\rho\mathfrak{a}}{T} p^2} }\right) \dd p. \end{split}
\end{equation} 
\end{theorem}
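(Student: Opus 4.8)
The plan is to construct an explicit trial density matrix on a large box with Dirichlet boundary conditions and to estimate its free energy. \textbf{Reduction to a box.} By a standard tiling argument — cutting a huge box into cells of side $L$ separated by empty corridors of width $\mathrm{diam}(\mathrm{supp}\,V)$, which costs nothing since $V\geq 0$, using translation-invariant product trial states and exploiting the continuity of $f$ in $\rho$ — one bounds
\begin{equation*}
f(\rho,T)\;\leq\;\frac{1}{L^3}\,F^{\rm Dir}_{\Lambda_L}\!\big(\lceil\rho_L L^3\rceil\big)\;+\;o\!\big(\mathfrak{a}\rho^2(\rho\mathfrak{a}^3)^{1/2}\big),
\end{equation*}
where $\Lambda_L=[0,L]^3$, $F^{\rm Dir}_{\Lambda_L}$ denotes the canonical free energy with Dirichlet conditions and $\rho_L\to\rho$ as $L\to\infty$. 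One chooses $L=\mathfrak{a}(\rho\mathfrak{a}^3)^{-\gamma}$ with $\gamma>1$ fixed later. The restriction $\gamma>1$ is forced: on a box of side $\ell_{\rm GP}=(\rho\mathfrak{a})^{-1/2}$ the Dirichlet Laplacian alone already raises the energy by $\sim\rho^2\mathfrak{a}L^3$ — a contribution of leading order — so the box must be polynomially larger than $\ell_{\rm GP}$ for this distortion to become negligible at Lee--Huang--Yang precision.

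\textbf{The trial state.} Working in the bosonic Fock space over $L^2(\Lambda_L)$ and, after the usual c-number substitution, in the excitation Fock space over $\{\varphi_0\}^\perp$, I would take (schematically)
\begin{equation*}
\Gamma\;=\;\frac{P_n\,\mathcal{U}^*\big(|0\rangle\langle 0|\otimes\Gamma_{\rm qf}\big)\mathcal{U}\,P_n}{\mathrm{Tr}\big[P_n\,\mathcal{U}^*(|0\rangle\langle 0|\otimes\Gamma_{\rm qf})\mathcal{U}\,P_n\big]},
\qquad \mathcal{U}=W\!\big(\sqrt{N_0}\,\varphi_0\big)\,e^{B_1}\,e^{B_2},
\end{equation*}
where $P_n$ projects onto the $n$-particle sector; $\varphi_0\geq 0$ is a normalised condensate wavefunction equal to $L^{-3/2}$ in the bulk and decaying to $0$ over a boundary layer of width $\ell_\partial$, with $\ell_{\rm GP}\ll\ell_\partial\ll L$; $W(\sqrt{N_0}\varphi_0)$ populates the condensate with $N_0\simeq\rho L^3$ particles; $e^{B_1}$ is a Bogoliubov transformation whose kernel implements the short-range two-body correlations, built from the zero-energy scattering solution of $V$ on a ball of radius $\sim\ell_{\rm GP}$ (truncated and projected orthogonally to $\varphi_0$, as in \cite{BCS}); $e^{B_2}$ is a further, generally cubic, transformation that removes the residual kinetic--potential mismatch at the next order (as in \cite{YauYin-09,BCS}); and $\Gamma_{\rm qf}$ is the quasi-free Gibbs state at temperature $T$ of the diagonalised effective Bogoliubov Hamiltonian $\mathbb{H}=\sum_p\varepsilon_L(p)\,b_p^*b_p$, with $b_p$ the Bogoliubov-rotated modes and $\varepsilon_L(p)\to\sqrt{|p|^4+16\pi\rho\mathfrak{a}\,p^2}$ as $L\to\infty$.

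\textbf{Energy and entropy.} Since unitaries preserve entropy, $S(\Gamma)$ equals the ideal-Bose entropy $-\sum_p[\,n_p\log n_p-(1+n_p)\log(1+n_p)\,]$ of the Bogoliubov modes, up to a controlled error from the $P_n$ cutoff. Expanding $\mathcal{U}^*H_n\,\mathcal{U}$ and applying Wick's theorem to the quasi-free state, $\mathrm{Tr}(H_n\Gamma)-TS(\Gamma)$ splits into: the Gross--Pitaevskii energy $4\pi\mathfrak{a}\rho^2 L^3$ of $\varphi_0$, with boundary-layer corrections $o(\mathfrak{a}\rho^2(\rho\mathfrak{a}^3)^{1/2}L^3)$; the zero-temperature Bogoliubov sum $\tfrac12\sum_p[\varepsilon_L(p)-p^2-8\pi\rho\mathfrak{a}+\dots]$, which renormalises (using the scattering length) to $\tfrac{128}{15\sqrt\pi}\,4\pi\mathfrak{a}\rho^2(\rho\mathfrak{a}^3)^{1/2}L^3$ up to lower order; the thermal term $T\sum_p\log(1-e^{-\varepsilon_L(p)/T})$, a Riemann sum converging to $\tfrac{T}{(2\pi)^3}L^3\!\int_{\R^3}\log(1-e^{-\varepsilon(p)/T})\,dp$; and error terms — cubic and quartic contributions in the excitation operators (controlled via $e^{B_2}$ and a priori bounds on $\mathrm{Tr}\,\Gamma\,\mathcal{N}_+$), the cost of the $P_n$ projection, the discrepancy between Dirichlet eigenfunctions and plane waves, and the error in the scattering-length renormalisation on $\Lambda_L$. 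Dividing by $L^3$, letting $L\to\infty$ along $L=\mathfrak{a}(\rho\mathfrak{a}^3)^{-\gamma}$, and finally choosing $\gamma>1$ and $\ell_\partial$ so that all these errors are $O(\mathfrak{a}\rho^2(\rho\mathfrak{a}^3)^{1/2+\epsilon})$ for some $\epsilon>0$, yields \eqref{eq:main}.

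\textbf{Main obstacle.} The crux is precisely the large box. On boxes of size $\ell_{\rm GP}$ — where the Bogoliubov analysis is cleanest and was carried out at $T=0$ in \cite{BCS,YauYin-09} and for the lower bound at $T>0$ in \cite{HabHaiNamSeiTri-23} — Dirichlet boundary conditions spoil the free energy already at leading order, so one is forced onto a box polynomially larger than $\ell_{\rm GP}$. Consequently the whole construction (condensate-depletion bounds, the scattering-length renormalisation, the cubic correction, and the Riemann-sum approximation of the thermal integral) must be controlled with errors \emph{uniform over the entire range of length scales from $\mathfrak{a}$ up to $L=\mathfrak{a}(\rho\mathfrak{a}^3)^{-\gamma}$}, and in particular over the many soft modes $|p|\lesssim\ell_{\rm GP}^{-1}$, whose dispersion $\varepsilon_L(p)\approx\sqrt{16\pi\rho\mathfrak{a}}\,|p|$ is small and whose thermal occupation is large; keeping the accumulated error $o(\mathfrak{a}\rho^2(\rho\mathfrak{a}^3)^{1/2})$ is exactly what demands techniques beyond those of \cite{BCS} and \cite{HabHaiNamSeiTri-23}.
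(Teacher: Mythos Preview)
Your overall framework is correct and close to the paper's: reduce to a box of side $L=\mathfrak{a}(\rho\mathfrak{a}^3)^{-\gamma}$ with $\gamma>1$, build a trial state by a Weyl shift followed by a quadratic Bogoliubov transformation, a cubic correction, and a second Bogoliubov rotation diagonalising the effective quadratic Hamiltonian, and take the thermal reference state to be the Gibbs state of the diagonalised Bogoliubov operator. You also correctly identify the main obstacle --- that $\gamma>1$ forces the analysis far beyond the Gross--Pitaevskii scale, so the techniques of \cite{BCS} and \cite{HabHaiNamSeiTri-23} do not carry over.

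What is missing is precisely the resolution of that obstacle. Saying that $e^{B_2}$ is a cubic transformation ``as in \cite{YauYin-09,BCS}'' does not work here: in \cite{BCS} the cubic map is a \emph{non-unitary} exponential of creation operators acting on the vacuum, which allows explicit computation; at positive temperature it must act on a mixed state with $\sim N^{3\kappa/2}$ excitations (with $\kappa>1/2$), the Duhamel expansion of a single unitary $e^{B}$ does not converge, and non-unitarity would destroy the entropy identity you rely on. The paper's key new idea is to write the cubic transformation as an ordered product $T_c=\prod_{k\in S}e^{\cB_k}$ over shell momenta, where each $\cB_k$ carries cutoffs $\Theta_{k,r}$ enforcing $(\cB_k^\sharp)^2=0$; this gives a closed formula $e^{\cB_k}=\cos X_k+\cB_k^\sharp\frac{\sin X_k}{X_k}-\frac{\sin X_k}{X_k}\cB_k^\circ+\cB_k^\sharp\frac{\cos X_k-1}{X_k^2}\cB_k^\circ$ with $X_k=|\cB_k|$ small in expectation, so the action on kinetic energy, $Q_3$, and $Q_4$ can be computed and iterated over $k\in S$. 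Without this construction (or an equivalent one) the cubic step is a genuine gap, not a matter of sharper estimates. Two further strategic choices in your outline make life harder than necessary: the paper works on \emph{periodic} boxes (so the condensate is the constant function and momentum is conserved) and only passes to Dirichlet at the very end via a soft localisation lemma, whereas your spatially varying $\varphi_0$ breaks translation invariance throughout; and the paper stays grand-canonical, avoiding your $P_n$ projection by adjusting $N_0$ so that $\tr\,\cN\Gamma_L\geq\rho L^3$ and invoking the convexity of $\rho\mapsto f(\rho,T)$.
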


Similarly as for the result on the lower bound \cite{HabHaiNamSeiTri-23}, our proof also allows to take slightly larger temperatures, that is $T \leq C \rho \ao (\rho\ao^3)^{-\nu}$ for some $\nu>0$ small enough. For the sake of simplicity we restrict our attention to $T \leq C \rho \ao$.

\subsection{Localization into small boxes and grand canonical ensemble}

To show Theorem \ref{thm:main}, it is convenient to consider a system confined in a smaller box $\Lambda_L = [-L/2 ; L/2]^3$, with $L = \rho^{-\gamma}$ and with periodic boundary conditions, in the grand canonical setting, where the number of particles is allowed to fluctuate. If $\gamma > 1$, we will show that an upper bound on the small periodic box $\Lambda_L$ implies an upper bound in the thermodynamic limit by patching small boxes together, see \cref{fig:boxes}. To work in the grand canonical ensemble, we introduce the bosonic Fock space 
\[ \cF (\Lambda_L) = \bigoplus_{k \geq 0} L^2 (\Lambda_L)^{\otimes_s k} \]
and consider the Hamiltonian
\begin{equation} \label{eq:cHL} \cH = \sum_{p \in \frac{2\pi}{L} \bZ^3 } p^2 a_p^* a_p + \frac{1}{2} \sum_{p,q,r \in \frac{2\pi}{L} \bZ^3 } \widehat{V} (r) a_{p+r}^* a_q^* a_{q+r} a_p \end{equation} 
on $\cF (\Lambda_L)$.
Here, for a momentum $p \in \frac{2\pi}{L}\bZ^3 $, $a_p^*$ and $a_p$ are the usual creation and annihilation operators satisfying canonical commutation relations. 

Localization to the small periodic box $\Lambda_L$ is achieved through the following standard proposition, which is proven similarly to \cite[Prop. 1.2]{BCS}, see also \Cref{fig:boxes}. For the sake of completeness, we sketch the proof in Appendix \ref{app:loc}, stressing the changes due to the fact that we work at positive temperature.
\begin{proposition}[Comparison to small periodic boxes]\label{prop:localization}
Let $0 < R < \ell < L$ such that $V(x) = 0$ for all $|x| \geq R$. Let $\Gamma_L$ be a density matrix on the Fock space $\cF (\Lambda_L)$, i.e. a non-negative operator on $\cF (\Lambda_L)$ with $\tr \, \Gamma_L = 1$, satisfying periodic boundary conditions and the bound $\tr \cN \Gamma_L < \infty$. Let
\begin{equation}\label{eq:cond-N} 
\tilde\rho:= \frac{1}{(L+2\ell+R)^3}\tr \, \cN \Gamma_L.
\end{equation}
Then there exists a constant $c>0$ such that 
\begin{equation}\label{eq:localization} f (\tilde \rho, T) \leq \frac{1}{(L+2\ell+R)^3} \Big[ \tr \, \cH \Gamma_L - T S(\Gamma_L) \Big] + \frac{c}{L\ell} \tilde \rho
\end{equation}
for all $T \leq C \rho \mathfrak{a}$.
\end{proposition}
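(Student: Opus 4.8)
The plan is to exhibit an explicit trial density matrix on a large box, assembled from copies of $\Gamma_L$. First, since averaging $\Gamma_L$ over the translations of the torus $\Lambda_L$ leaves $\tr\cH\Gamma_L$ and $\tr\cN\Gamma_L$ unchanged and, by concavity of the entropy, can only increase $S(\Gamma_L)$, it makes the claimed inequality only stronger; so we may assume $\Gamma_L$ is translation invariant, hence has constant one-particle density $\tilde\rho$. Fix a smooth cutoff $\theta$ on $\Lambda_L$ with $\theta\equiv 1$ on the concentric cube of side $L-2\ell$, $\theta\equiv 0$ near the faces of $\Lambda_L$, and $\|\nabla\theta\|_\infty\lesssim \ell^{-1}$; put $\bar\theta:=\sqrt{1-\theta^2}$, so $\theta^2+\bar\theta^2=1$, $\|\nabla\bar\theta\|_\infty\lesssim\ell^{-1}$, and $1-\theta^2$ is supported in a collar of volume $\lesssim L^2\ell$. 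The isometry $f\mapsto(\theta f,\bar\theta f)$ of $L^2(\Lambda_L)$ into $L^2(\Lambda_L)\oplus L^2(\Lambda_L)$ lifts by second quantization to an isometry of $\cF(\Lambda_L)$ into $\cF(\Lambda_L)\otimes\cF(\Lambda_L)$; let $\Gamma_{12}$ be the image of $\Gamma_L$ and set $\tilde\Gamma:=\mathrm{Tr}_2\,\Gamma_{12}$, $\sigma:=\mathrm{Tr}_1\,\Gamma_{12}$. A direct computation gives that the $k$-particle density of $\tilde\Gamma$ is $\prod_{i=1}^k\theta(x_i)^2$ times that of $\Gamma_L$; in particular the particles of $\tilde\Gamma$ sit at distance $\gtrsim\ell$ from $\partial\Lambda_L$. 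Now tile the periodic box $\Omega$ (side a large multiple of $L+2\ell+R$) by translates of a cell of side $L+2\ell+R$, place a centered copy of $\tilde\Gamma$ in each cell, and let $\Gamma_\Omega$ be the resulting tensor product; its density is $\tilde\rho':=(L+2\ell+R)^{-3}\tr\cN\tilde\Gamma$, and since adjacent cells are more than $R$ apart the interaction does not couple distinct cells.

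Next I estimate the energy. Because the particles of $\tilde\Gamma$ lie in the interior of each cell, the kinetic operator on $\Omega$ acts on them exactly as the periodic one, and by the decoupling of cells $\tr H_\Omega\Gamma_\Omega=N_{\mathrm{cells}}\,\tr\cH\tilde\Gamma$ with $N_{\mathrm{cells}}=|\Omega|/(L+2\ell+R)^3$. To compare $\tr\cH\tilde\Gamma$ with $\tr\cH\Gamma_L$, I use the IMS identity $\theta(-\Delta)\theta+\bar\theta(-\Delta)\bar\theta=-\Delta+|\nabla\theta|^2+|\nabla\bar\theta|^2$: lifted to Fock space it shows that the total kinetic energy of $\Gamma_{12}$ equals $\tr\dG(-\Delta)\Gamma_L+\tr\dG(|\nabla\theta|^2+|\nabla\bar\theta|^2)\Gamma_L$, so dropping the non-negative contribution of the second factor and using translation invariance, $\tr\dG(-\Delta)\tilde\Gamma\leq\tr\dG(-\Delta)\Gamma_L+C\ell^{-2}\tilde\rho\,L^2\ell$. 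For the interaction, the two-particle density of $\tilde\Gamma$ is $\theta(x)^2\theta(y)^2$ times that of $\Gamma_L$, and since $0\leq\theta^2\leq1$ and $V\geq0$ the interaction energy does not increase. Hence $\tr\cH\tilde\Gamma\leq\tr\cH\Gamma_L+C\tilde\rho L^2/\ell$.

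The new ingredient compared with the $T=0$ argument is the entropy. Since $\Gamma_\Omega$ is a product, $S(\Gamma_\Omega)=N_{\mathrm{cells}}\,S(\tilde\Gamma)$; since second quantization of an isometry preserves the entropy of states, $S(\Gamma_{12})=S(\Gamma_L)$, and subadditivity gives $S(\tilde\Gamma)\geq S(\Gamma_L)-S(\sigma)$. Combining with the energy bound,
\[ \tr H_\Omega\Gamma_\Omega-TS(\Gamma_\Omega)\leq N_{\mathrm{cells}}\Big[\tr\cH\Gamma_L-TS(\Gamma_L)+C\tilde\rho L^2/\ell+TS(\sigma)\Big]. \]
It remains to bound $S(\sigma)$, the entropy of the marginal carried by the collar modes. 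I would use the maximum-entropy inequality $S(\sigma)\leq\sum_j g(\langle a_j^*a_j\rangle_\sigma)$ with $g(x)=(1+x)\log(1+x)-x\log x$, over an eigenbasis of the collar Laplacian, together with $\sum_j\langle a_j^*a_j\rangle_\sigma=\tr\dG(\bar\theta^2)\Gamma_L\lesssim\tilde\rho L^2\ell$ and $\sum_j\epsilon_j\langle a_j^*a_j\rangle_\sigma\lesssim\tr\dG(-\Delta)\Gamma_L+C\tilde\rho L^2/\ell$ (the latter being harmless: if $\tr\cH\Gamma_L$ much exceeds its expected size the asserted inequality is trivial). By convexity this reduces $S(\sigma)$ to the entropy of a Bose–Einstein occupation subject to these two constraints, and using $T\lesssim\rho\ao$ one obtains $TS(\sigma)\lesssim\tilde\rho L^2/\ell$ provided $\ell$ is chosen of the order of $\ell_{\rm GP}=(\rho\ao)^{-1/2}$ (up to a logarithm of $\rho\ao^3$), which is compatible with $R<\ell<L$. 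I expect this entropy estimate to be the main obstacle; everything else follows the $T=0$ localization of \cite[Prop.~1.2]{BCS}.

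Finally, dividing by $|\Omega|=N_{\mathrm{cells}}(L+2\ell+R)^3$ and letting $|\Omega|\to\infty$, the definition of $f$ yields
\[ f(\tilde\rho',T)\leq\frac{1}{(L+2\ell+R)^3}\big[\tr\cH\Gamma_L-TS(\Gamma_L)\big]+\frac{c\,\tilde\rho}{L\ell}. \]
Since the discarded collar particles are an $O(\ell/L)$ fraction of the total, $\tilde\rho'=\tilde\rho\,(1-O(\ell/L))$, and I upgrade the estimate to the bound at density exactly $\tilde\rho$ using the continuity (convexity) of $\rho\mapsto f(\rho,T)$ together with the a priori bound $\partial_\rho f\lesssim\rho\ao$, which gives $|f(\tilde\rho,T)-f(\tilde\rho',T)|\lesssim\rho\ao\,\tilde\rho\,\ell/L\lesssim\tilde\rho/(L\ell)$ for the chosen $\ell$; alternatively one retains the collar particles inside slightly wider corridors at negligible extra cost. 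Either way one obtains \eqref{eq:localization}.
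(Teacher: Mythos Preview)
Your approach differs substantially from the paper's, and the difference is precisely at the point you flag as the main obstacle: the entropy loss. The paper avoids this issue entirely. Instead of an IMS decomposition $\theta^2+\bar\theta^2=1$ followed by a partial trace, the paper applies a single cutoff $Q_{L,\ell}(x)=\prod_j q_{L,\ell}(x^{(j)})$ directly to each eigenvector of $\Gamma_L$, with $q_{L,\ell}$ a cosine profile on the overlap regions chosen so that for any $L$-periodic $\Phi$,
\[
\int_{-L/2-\ell}^{L/2+\ell}|q_{L,\ell}(t)|^2\,\Phi(t)\,dt=\int_{-L/2}^{L/2}\Phi(t)\,dt.
\]
This identity makes the map $\Psi_j\mapsto\Psi_{j,u}^D$ an isometry from $\cF(\Lambda_L)$ into $\cF(\Lambda_{L+2\ell}^u)$: the localized eigenvectors remain orthonormal, so the localized density matrix $\Gamma^D_{L+2\ell,u}$ has the same spectrum as $\Gamma_L$. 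Hence $S(\Gamma^D_{L+2\ell,u})=S(\Gamma_L)$ and $\tr\cN\Gamma^D_{L+2\ell,u}=\tr\cN\Gamma_L$ exactly, for every shift $u$. The only error comes from the kinetic energy (where $\nabla$ hits the cutoff), and averaging over $u\in\Lambda_L$ produces the $C\tilde\rho/(L\ell)$ term. The passage from the grand-canonical trial state to $f(\tilde\rho,T)$ is then done via the Legendre duality $f(\tilde\rho,T)=\sup_\mu\{\mu\tilde\rho+f_{GC}(\mu,T)\}$ and Gibbs' variational principle, with no density adjustment needed.

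By contrast, your IMS route discards the collar marginal $\sigma$ and must bound $TS(\sigma)$; your sketch of this bound is incomplete, and even if it can be made to work it would force $\ell$ to a specific scale rather than yielding the proposition for arbitrary $R<\ell<L$ as stated. You also incur a density shift $\tilde\rho'\neq\tilde\rho$ that requires a separate repair. Both complications disappear with the paper's periodic partition-of-unity cutoff, which is the key idea you are missing.
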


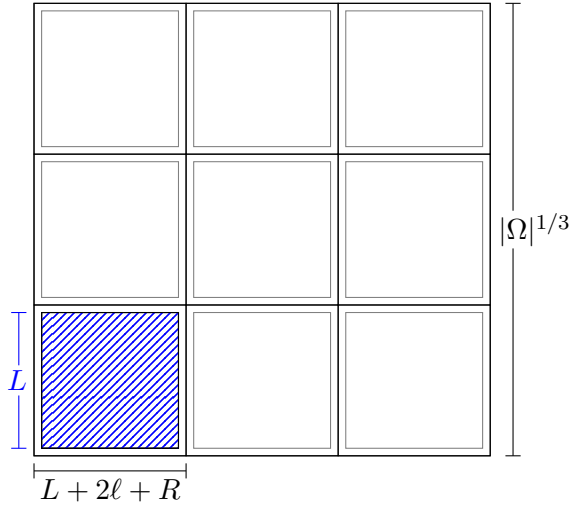
\begin{figure}[H] 
\centering
     \begin{tikzpicture}[scale=1]
        % Main square
        \draw (0,0) rectangle (6,6);
        % Inner squares
        \foreach \i in {0,2,4} {
            \foreach \j in {0,2,4} {
                % Large square
                \draw (\i,\j) rectangle (\i+2,\j+2);
                % Small square
               \draw[gray,thin] (\i+0.1,\j+0.1) rectangle (\i+1.9,\j+1.9);
            }
          \draw[pattern=north east lines, pattern color=blue] (0.1,0.1) rectangle (1.9,1.9); 
        }
        % Label below
        \draw (0,-0.2) -- (2,-0.2);
        \node at (1,-0.45) {\(L+2\ell+R\)};
        \draw (0,-0.1) -- (0,-0.3);
        \draw (2,-0.1) -- (2,-0.3);
        % Label left
        \draw (6.3,0) -- (6.3,2.75);
        \draw (6.3,3.25) -- (6.3,6);
        \node at (6.2,3) {\(\qquad |\Omega|^{1/3}\)}; 
        \draw (6.2,0) -- (6.4,0);
        \draw (6.2,6) -- (6.4,6);
        % Label right
        \draw[thin,blue] (-0.2,0.1) -- (-0.2,0.8);
        \draw[thin,blue] (-0.2,1.2) -- (-0.2,1.9);
        \node[thin,blue] at (-0.22,1) {\(L\)}; 
        \draw[thin,blue] (-0.1,0.1) -- (-0.3,0.1);
        \draw[thin,blue] (-0.1,1.9) -- (-0.3,1.9);
    \end{tikzpicture}
    
\caption{We obtain a trial state on the thermodynamic box $\Omega$ by first taking the periodic state $\Gamma_L$ on $\Lambda_L$ from \cshref{prop:loc-bound}
and placing it in a slightly bigger box of sidelength $L+2\ell$. On this box, we modify $\Gamma_L$ such that it satisfies Dirichlet boundary conditions, while also not changing its free energy significantly. Finally, we copy this state into boxes of sidelength $L+2\ell+R$, where the slight increase of the box sidelength by $R$ prevents interactions among them.}
\label{fig:boxes}
    \end{figure}

Taking into account \cshref{prop:localization}, our main challenge is the construction of the trial state $\Gamma_L$ for the free energy functional on the small periodic box $\Lambda_L$ with $L = \rho^{-\gamma}$. As discussed above, to make sure that the localization error on the right-hand side of (\ref{eq:localization}) is smaller than the resolution we want to achieve, we will need to choose $\gamma > 1$.
%and we will eventually choose $\ell = L^{\frac{1}{2\gamma}}$ which optimizes this error. 
In the next proposition, whose proof occupies the bulk of the paper, we construct the trial state $\Gamma_L$. Its particle density is slightly bigger than the desired density $\rho$ in order to compensate for the extension of the box $\Lambda_L$ that was needed in \cshref{prop:localization} to switch to Dirichlet boundary conditions and to avoid interactions among boxes.
\begin{proposition}[Trial state on small periodic boxes] \label{prop:loc-bound}
For ${\rho} > 0$ and $\gamma > 1$ we set $L = {\rho}^{-\gamma}$. Then, if $\gamma$ is small enough,  there exists $\epsilon > 0$ and a density matrix $\Gamma_L$ on $\cF (\Lambda_L)$, i.e. a non-negative operator on $\cF (\Lambda_L)$ with $\tr \, \Gamma_L = 1$, that satisfies periodic boundary conditions and is such that 
\begin{equation}\label{eq:prop3-NN2} c_1 \rho^{(\gamma+2)/2} \leq  \frac{1}{L^3}\tr \, \cN \Gamma_L - \rho  \leq c_2 \rho^{3/2}  \end{equation} 
for some $c_1,c_2>0$, and
\begin{equation}\label{eq:main-eq} \begin{split} \frac{1}{L^3} \Big[ \tr \, \cH \Gamma_L - T S (\Gamma_L) \Big] \leq \; &4 \pi \mathfrak{a} {\rho}^2 \Big( 1 + \frac{128}{15 \sqrt{\pi}} ({\rho} \mathfrak{a}^3)^{1/2} + c ({\rho} \mathfrak{a}^3)^{1/2 +\epsilon} \Big) \\ &+ \frac{T^{5/2}}{(2\pi)^3} \int_{\R^3} \log \left(1- e^{-\sqrt{|p|^4 + \frac{16 \pi {\rho} \mathfrak{a}}{T}  p^2 }} \right) dp  \end{split} \end{equation}
for some $c>0$ and for all $T \leq C \rho \mathfrak{a}$.
\end{proposition}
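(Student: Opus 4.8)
The plan is to use a trial state of the form $\Gamma_L=\weyl\,e^{B_1}e^{B_2}\,\rho_{\mathrm{th}}\,e^{-B_2}e^{-B_1}\weyl^*$ --- a condensate coherent state dressed by a scattering correlation structure and a Bogoliubov diagonalisation, with the thermal Gibbs state $\rho_{\mathrm{th}}$ of the diagonalised quadratic Hamiltonian playing the role of the vacuum in the $T=0$ constructions of \cite{YauYin-09,BCS}. Concretely: fix $N_0\in\bN$ with $N_0/L^3$ slightly below $\rho$; conjugating $\cH$ by $\weyl$ (which shifts $a_0\mapsto a_0+\sqrt{N_0}$) and performing the $c$-number substitution yields an excitation Hamiltonian on the Fock space over the modes $p\neq0$, consisting of a constant $\tfrac12\widehat V(0)N_0^2$, a quadratic part, cubic terms with a prefactor $\sqrt{N_0}$, and the quartic interaction. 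Apply a correlation structure $e^{B_1}$ --- a quadratic transformation, possibly supplemented by a cubic one as in \cite{BCS}, with kernel built from the zero-energy scattering solution of $(-\Delta+\tfrac12V)f=0$, truncated at a length of order $\ell_{\mathrm{GP}}=(\rho\mathfrak{a})^{-1/2}$ and cut off in the infrared at momenta of order $L^{-1}$ --- so that $\widehat V(0)$ becomes $8\pi\mathfrak{a}$ in the constant and in the quadratic form, up to remainders. Finally diagonalise the resulting Bogoliubov Hamiltonian by a Bogoliubov transformation $e^{B_2}$, obtaining $\sum_{p\neq0}\varepsilon(p)a_p^*a_p$ plus a constant with $\varepsilon(p)\approx\sqrt{|p|^4+16\pi\rho\mathfrak{a}\,|p|^2}$, and take $\rho_{\mathrm{th}}=\bigotimes_{p\neq0}Z_p^{-1}e^{-\varepsilon(p)a_p^*a_p/T}$, viewed as a state on $\cF(\Lambda_L)$ that is vacuum in the zero mode.

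\emph{Entropy, Lee--Huang--Yang terms and the thermal integral.} Since $\weyl,e^{B_1},e^{B_2}$ are unitary, $S(\Gamma_L)=S(\rho_{\mathrm{th}})=\sum_{p\neq0}\bigl[(1+n_p)\log(1+n_p)-n_p\log n_p\bigr]$ with $n_p=(e^{\varepsilon(p)/T}-1)^{-1}$. Moving all conjugations onto $\cH$ and evaluating in the centred quasi-free state $\rho_{\mathrm{th}}$, the constants collect into $4\pi\mathfrak{a}\rho^2L^3+4\pi\mathfrak{a}\rho^2\tfrac{128}{15\sqrt\pi}(\rho\mathfrak{a}^3)^{1/2}L^3$, the diagonalised quadratic part contributes $\sum_{p\neq0}\varepsilon(p)n_p$, the cubic terms have vanishing expectation (being odd against a centred quasi-free state), and everything else is an error. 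The elementary identity $\varepsilon n-T\bigl[(1+n)\log(1+n)-n\log n\bigr]=T\log(1-e^{-\varepsilon/T})$ at $n=(e^{\varepsilon/T}-1)^{-1}$ then shows that the free energy of $\Gamma_L$ contains $T\sum_{p\neq0}\log(1-e^{-\varepsilon(p)/T})$, which after the rescaling $p\mapsto\sqrt{T}\,p$ and a Riemann-sum comparison on $\tfrac{2\pi}{L}\bZ^3$ equals $\tfrac{T^{5/2}L^3}{(2\pi)^3}\int_{\R^3}\log\bigl(1-e^{-\sqrt{|p|^4+\frac{16\pi\rho\mathfrak{a}}{T}p^2}}\bigr)\dd p$ up to a negligible error; here $\gamma>1/2$ already suffices to resolve the scale $\sqrt T\sim\sqrt\rho$ of the integrand and its logarithmic singularity at the origin.

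\emph{Particle density and the role of the large box.} A direct computation gives $\tr\cN\Gamma_L=N_0+O(\rho^{3/2}L^3)$, the correction coming from the Bogoliubov depletion $\sum_p\sinh^2\theta_p$, the scattering depletion, and the thermal cloud $\sum_p(1+2\sinh^2\theta_p)n_p$; tuning $N_0/L^3$ so that $\tfrac1{L^3}\tr\cN\Gamma_L=\rho+\delta$ with $\delta$ at the lower end $\sim\rho^{(\gamma+2)/2}$ of the admissible window yields \eqref{eq:prop3-NN2}. Since the main terms of \eqref{eq:main-eq} are written through the target density $\rho$ rather than through $\rho+\delta$, replacing one by the other costs $8\pi\mathfrak{a}\rho\,\delta\,L^3\sim\rho^{(\gamma+4)/2}L^3$, which is $\leq\rho^{5/2+\epsilon}L^3$ precisely because $\gamma>1$; combined with the localisation error $\tfrac{c}{L\ell}\tilde\rho$ of \cref{prop:localization} --- which forces $\ell$ to be large, hence makes the box extension non-negligible and forces a positive excess density --- this is the structural reason why the construction must be carried out on a box with $L=\rho^{-\gamma}$ for some $\gamma>1$, rather than on boxes of size $\ell_{\mathrm{GP}}$.

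\emph{The error estimates --- the main obstacle.} The bulk of the proof is to show that every remaining term --- the tails of the scattering renormalisation, the commutator terms produced by conjugating the quartic (and cubic) interaction through $e^{B_1}e^{B_2}$, and the quartic interaction itself --- has expectation in $\rho_{\mathrm{th}}$ bounded by $C\rho^{5/2+\epsilon}L^3$. Because $\Gamma_L$ is explicit, the needed a priori bounds (on $\tr\cN_+\rho_{\mathrm{th}}$ and its square, on the excitation kinetic energy, and on weighted sums $\sum_p|p|^{-k}n_p$) are directly computable, but the control is delicate for two intertwined reasons: first, $L=\rho^{-\gamma}\gg\ell_{\mathrm{GP}}$ means there are many infrared modes and the scattering correlations must interlock correctly with the Bogoliubov diagonalisation over the whole range $\rho^\gamma\lesssim|p|\lesssim1$; second, the thermal occupations $n_p\sim T/\varepsilon(p)\sim\sqrt\rho/|p|$ are large for $|p|\lesssim\sqrt\rho$, so estimates that are harmless against the vacuum now carry extra factors and must be reorganised. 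I expect this combination --- propagating the correlation estimates uniformly across scales on a box far larger than the healing length, while simultaneously absorbing the heavily populated thermal infrared modes --- to be where the new techniques are required; the admissible window $1<\gamma<\gamma_0$, and hence the value of $\epsilon$, is then dictated by keeping this error power positive.
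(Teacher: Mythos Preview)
Your overall architecture --- Weyl shift, high-momentum quadratic renormalisation, Bogoliubov diagonalisation on the shell, and a thermal Gibbs state in place of the vacuum --- matches the paper, and your discussion of the entropy, the thermal integral, the particle-number tuning, and the reason for $\gamma>1$ is essentially correct. But there is a genuine gap at the point you flag as ``possibly supplemented by a cubic one as in \cite{BCS}'': the cubic transformation is not optional, and the \cite{BCS} implementation does not survive the passage to positive temperature on the large box.

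Concretely: to reach Lee--Huang--Yang precision the cubic terms $Q_3$ must be renormalised, and a purely quadratic $e^{B_1}$ cannot do this. In \cite{BCS} the cubic map was a \emph{non-unitary} exponential of creation-only cubics, acting on the vacuum so that its effect could be computed almost explicitly. Here this fails for two reasons. First, $S(\Gamma_L)=S(\rho_{\mathrm{th}})$ requires \emph{unitary} conjugations, so a non-unitary cubic is forbidden. Second, the thermal state $\rho_{\mathrm{th}}$ has order $N^{3\kappa/2}$ excitations with $\kappa>1/2$, and the Duhamel expansion $e^{-\cB}Xe^{\cB}=\sum_n\frac{(-1)^n}{n!}\ad_{\cB}^{(n)}(X)$ for a single cubic generator $\cB=\sum_{k,r}\eta_r a^*_{k+r}a^*_{-r}a_k-\hc$ does not converge in this regime; the smallness that made it controllable in the Gross--Pitaevskii setting ($\kappa=0$ or small) is gone. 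So the step you describe as ``as in \cite{BCS}'' would not close.

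The paper's resolution is to factor the cubic transformation as a product $T_c=\prod_{k\in S}e^{\cB_k}$ over shell momenta, with each $\cB_k=\cB_k^\sharp-\cB_k^\circ$ carrying hard cutoffs $\Theta_{k,r}$ that (i) forbid creating a $k$-connection if one already exists and (ii) forbid creating $k'$-connections for $k'\neq k$. These exclusion rules force $(\cB_k^\sharp)^2=0$, so each factor closes exactly,
\[
e^{\cB_k}=\cos X_k+\cB_k^\sharp\frac{\sin X_k}{X_k}-\frac{\sin X_k}{X_k}\cB_k^\circ+\cB_k^\sharp\frac{\cos X_k-1}{X_k^2}\cB_k^\circ,\qquad X_k^2=\cB_k^\circ\cB_k^\sharp\lesssim N^{-2+3\kappa+\eps}\cN_k,
\]
and the full product can then be iterated mode by mode with errors summable in $k$. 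A second ingredient you do not anticipate is the use of parity projections $\mathbb P_k$ (on $\cM_k+\cM_{-k}$, counting particles plus connections) together with a ``monogamy'' property of high-momentum pairs in $T_c\Gamma T_c^*$; these are what make the mixed terms $Q_3^{S},Q_3^{M},Q_4^{M,1}$ vanish in expectation and keep $Q_4^{S},Q_4^{M,2},Q_4^{M,3}$ small, replacing the \cite{BCS} device of evaluating everything against the vacuum. Without these two mechanisms the error control you outline in your last paragraph cannot be carried out.
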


Combining Proposition \ref{prop:localization} and \ref{prop:loc-bound} we are ready to prove Theorem \ref{thm:main}.

\subsection{Proof of Theorem \ref{thm:main}}
%\begin{proof}[Proof of Theorem \ref{thm:main}] 
Let $\gamma>1$ small enough. Putting $L = {\rho}^{-\gamma}$, we conclude from \cshref{prop:loc-bound} that there exists a density matrix $\Gamma_L$ on $\cF (\Gamma_L)$, satisfying periodic boundary conditions and the above estimates (\ref{eq:prop3-NN2}) and (\ref{eq:main-eq}). Setting now $\ell = L^\alpha = \rho^{-\gamma \alpha}$ for some $\alpha \in (0,1)$ and defining $\tilde\rho$ as in \eqref{eq:cond-N}, \cshref{prop:localization} implies that 
%\[  \tr \, \cN \Gamma_L \geq \rho (1+ c \rho)L^3
% \]
%and 
%\[ \begin{split}  \frac{1}{L^3} \big[  \tr \, \cH \Gamma_L - T S (\Gamma_L) \Big] \leq \; &4 \pi \mathfrak{a} {\rho}^2 \Big( 1 + \frac{128}{15 \sqrt{\pi}} ({\rho} \mathfrak{a}^3)^{1/2} + C ({\rho} \mathfrak{a}^3)^{1/2 + \eps} \Big) \\ &+ \frac{T^{5/2}}{(2\pi)^3} \int \log \left(1- e^{-\sqrt{|p|^4 + \frac{16 \pi \mathfrak{a} {\rho}}{T} p^2}} \right) dp \end{split}  \]
%for some $\eps > 0$ and for all $T \leq C{\rho} \mathfrak{a}$. 
%\cshref{prop:localization} therefore implies that 
\[ \begin{split}  f(\tilde \rho , T) \leq \; &4 \pi \mathfrak{a} {\rho}^2 \Big( 1 + \frac{128}{15 \sqrt{\pi}} ({\rho} \mathfrak{a}^3)^{1/2}  \Big) + \frac{T^{5/2}}{(2\pi)^3} \int_{\R^3} \log \left(1- e^{-\sqrt{|p|^4 + \tfrac{16 \pi \mathfrak{a} {\rho}}{T} p^2}} \right) dp \\ &+ C \big[ \rho^{5/2+\epsilon} +  \rho^{1+\gamma +\alpha \gamma} \big]. \end{split} \]
Moreover, by \eqref{eq:prop3-NN2} there exists $c_1>0$ such that
\begin{align*}
\tilde{\rho} = \frac{\tr \cN \Gamma_L}{(L+2\ell+R)^3} \geq (\rho + c_1 \rho^{(\gamma+2)/2}) (1 - C \rho^{\gamma -\alpha \gamma}) \geq \rho + c_1 \rho^{(\gamma+2)/2} - C \rho^{1+\gamma - \alpha \gamma} \geq \rho
\end{align*}
for $\rho$ small enough, if we assume $\alpha < 1/2$. Here, we used that $RL^{-1} \leq \ell L^{-1} = \rho^{\gamma - \alpha \gamma}$.

Using that $r \mapsto f(r,T)$ is convex (see \cite{Ruelle}), that $f(0,T) = 0$ and that $f(\tilde \rho,T) \geq 0$ for our range of parameters (see \cite{HabHaiNamSeiTri-23}), we have $f(\rho,T) \leq f(\widetilde{\rho},T)$.
%\begin{align*}
%f(\rho,T) = f\left(\frac{1}{\lambda }\tilde \rho + \left(1-\frac{1}{\lambda}\right) 0 \right) \leq \frac{1}{\lambda} f(\tilde\rho,T) \leq f(\tilde \rho,T)
%\end{align*}
Taking into account that $\gamma > 1$ and choosing $\alpha = 1/2 - \eta$ for a sufficiently small $\eta > 0$, this concludes the proof of Theorem~\ref{thm:main}.
%\end{proof}
\qed
 
\subsection{Strategy of the proof of Proposition \ref{prop:loc-bound}}
 
The remainder of the paper is devoted to the proof of \Cref{prop:loc-bound}. In order to show (\ref{eq:main-eq}), it is convenient to rescale variables $x_j \mapsto x_j / L$. Particles then move in the unit torus $\Lambda$, i.e. the unit box $[-1/2,1/2]^3$ with periodic boundary conditions. For convenience, we introduce $N := {\rho} L^3 = {\rho}^{1-3\gamma}$ or, equivalently,
$L = 
%{\rho}^{-\gamma} = N^{\gamma /(3\gamma-1)} 
N^{1-\kappa},
$
with $\kappa = (2\gamma-1)/(3\gamma-1)$. The constraint on the temperature becomes $T \leq C N^{-2+3\kappa}$. We conclude that the Hamilton operator  \eqref{eq:cHL}, acting on $\cF (\Lambda_L)$, is unitarily equivalent to 
$L^{-2} \cH_N
%= N^{-2+2\kappa} \cH_N
$, with the new Hamiltonian 
\begin{equation}\label{eq:cHN} \cH_N = \sum_{p \in \Lambda^*} p^2 a_p^* a_p + \frac{1}{2} \sum_{p,q,r \in \Lambda^*} \widehat{V}_N (r) a_{p+r}^* a_q^* a_{r} a_p \end{equation} 
acting on $\cF (\Lambda)$. Here, we introduced the notation $\widehat{V}_N (r) = N^{-1+\kappa} \widehat{V} (r / N^{1-\kappa})$ and we denoted by $\Lambda^* = 2\pi \bZ^3$ the set of momenta on the torus $\Lambda$. The condition $\gamma > 1$ is equivalent to $\kappa > 1/2$. 

In the next sections, we will construct a density matrix $\Gamma_N$ on $\cF (\Lambda)$ that has the correct expected number of particles and the correct energy. That is, we have the bound
\begin{equation}\label{eq:trNGN} c_1 N^{3\kappa /2 - (\kappa-1/2)} \leq \tr \cN \Gamma_N - N \leq c_2 N^{3\kappa/2} , \end{equation}
which is equivalent to (\ref{eq:prop3-NN2}) and also 
\begin{equation}\label{eq:claim-GN}
\begin{split}  
&\tr \cH_N \Gamma_N - T L^2 S (\Gamma_N) 
\\
& \leq 4\pi\mathfrak{a} N^{1+\kappa}  \left( 1 + \frac{128}{15 \sqrt{\pi}} ( N^{-2+3\kappa}\mathfrak{a}^{3} )^{1/2}  \right)  
\\ & \hspace{.55cm} + TN^{2-2\kappa} \hspace{-.6cm} \sum_{N^{\kappa/2-\eps} < |p| \leq N^{\kappa/2+\eps}} \hspace{-.5cm} \log\left[1 - \exp\Big(-\frac{1}{T N^{2-2\kappa}}\sqrt{|p|^4 + 16 \pi \mathfrak{a} N^\kappa p^2}\;\Big)\right] + C N^{5\kappa/2-\eps/2}  
\end{split} 
\end{equation} 
for all $T \leq C N^{-2 + 3\kappa}$ and some fixed $\eps > 0$. After rescaling and approximating the sum with an integral, we will obtain the statement of \cshref{prop:loc-bound}. Note that, after rescaling to the unit torus, the effective temperature is given by $TL^2 = TN^{2-2\kappa}$.

To construct the trial state $\Gamma_N$, we will apply a strategy similar to the one used in \cite{BCS} to prove an upper bound for the ground state energy at temperature $T=0$.
The work \cite{BCS} was partly motivated by the results of \cite{BBCS1,BBCS2}, establishing the validity of Bogoliubov theory in the Gross--Pitaevskii regime, where the Hamiltonian takes the form (\ref{eq:cHN}), with $\kappa=0$ (see \cite{HST,B} for alternative approaches). The main observation is that \eqref{eq:cHN} can be reduced, through a series of unitary transformations, to a Hamilton operator that, on the range of our trial state, can be approximated by a linear combination of decoupled harmonic oscillators. 
%For these decoupled harmonic oscillators a standard computation yields the the free %energy.

Let us explain the procedure a bit more precisely; the mathematical details will follow in the next section. First of all, since the temperatures under consideration are below the critical temperature, we need to factor out the Bose--Einstein condensate. 
To this end, we conjugate (\ref{eq:cHN}) with a Weyl operator $W_{N_0}$, producing a condensate with $N_0$ particles. The parameter $N_0 \in \bR_+$ will be fixed later on; it will be chosen so that $0 \leq N - N_0 \lesssim N^{3\kappa/2}$.  
Let us introduce the momentum sets
\begin{equation}\label{eq:mom-sets} \begin{split} \textrm{High momenta:} \quad H &= \{ p \in \Lambda^* : |p| > N^{1-\kappa-\eps} \},  \\
\textrm{Shell momenta:} \quad S &= \{ p \in \Lambda^* : N^{\kappa/2-\eps} < |p| \leq N^{\kappa/2+\eps} \} \end{split} \end{equation} 
with a parameter $\vep>0$ that is chosen so that the sets do not overlap, i.e. $-2+3\kappa+4\vep<0$.
To approximate $\weyl^* \cH_N \weyl$ with a quadratic Hamiltonian, we first remove the short-range correlation structure. This renormalization is achieved by conjugating $ \weyl^* \cH_N \weyl $ with a Bogoliubov transformation $e^{\cB_1}$, acting only on momenta $|p| > N^{\kappa/2+\vep}$. 
Unfortunately, to reach the correct energy, quasi-free states constructed by quadratic transformations are not enough. Therefore, we conjugate the resulting renormalized excitation Hamiltonian with a unitary cubic operator $T_c$; the choice of $T_c$ is the crucial step in our analysis. 
%$T_c$ is given by a product of cubic phases, exchanging a particle with momentum in $S$ with two particles with momenta in $H$. 
The Hamiltonian $T_c^* e^{-\cB_1}  \weyl^* \cH_N \weyl e^{\cB_1} T_c$ is well approximated by a quadratic operator, which can be diagonalized by a second Bogoliubov transformation $e^{\cB_2}$, acting on momenta in the shell $S$. Up to error terms, that are negligible in an appropriate sense, we find 
\begin{equation}\label{eq:approx}  \begin{split}  
e^{-\cB_2} T_c^* e^{-\cB_1} &\weyl^* \cH_N \weyl e^{\cB_1} T_c e^{\cB_2}  \\ &\simeq 4\pi \mathfrak{a} N^{1+\kappa} \left( 1 + \frac{128}{15\sqrt{\pi}} ( N^{-2+3\kappa}\mathfrak{a}^3)^{1/2} \right) + \sum_{p \in S} \sqrt{|p|^4 + 16 \pi \mathfrak{a} N^\kappa p^2} \, a_p^* a_p, \end{split} 
\end{equation} 
provided that $\kappa >1/2$ is small enough. At temperature $T N^{2-2\kappa}$, the Gibbs state associated with the quadratic operator on the right hand side of (\ref{eq:approx}) has the form 
\begin{equation}\label{eq:Gamma0_intro}
\Gamma_0 = Z^{-1} \mathds{1}_{\{\cN_{S^c} = 0\}} \, \exp\bigg( -\frac{1}{T N^{2-2\kappa}} \sum_{p \in S} \sqrt{|p|^4 + 16 \pi \mathfrak{a} N^\kappa p^2} \, a_p^* a_p\bigg)
\end{equation}
with the normalization constant $Z>0$ chosen so that $\tr \Gamma_0 = 1$ and where, for any $F \subset \Lambda^*$, we introduced the notation 
\[ \cN_F = \sum_{p \in F} a_p^* a_p \]
for the operator measuring the number of particles with momentum in the set $F$.
Therefore, we use 
\begin{equation}\label{eq:gamma_N}
\Gamma_N = Z^{-1} \weyl e^{\cB_1} T_c e^{\cB_2} \Gamma_0 e^{-\cB_2} T_c^* e^{-\cB_1} \weyl^*
\end{equation}
as a trial state for the Hamiltonian (\ref{eq:cHN}). With the approximation (\ref{eq:approx}), it is then not difficult to check that this trial state has indeed the correct free energy, completing the proof of (\ref{eq:claim-GN}). 

% we obtain 
%\[ \tr \cH_N \Gamma_N - T N^{2-2\kappa} S (\Gamma_N) = \tr e^{\cB_2} T_c^* e^{\cB_1} \weyl^* 
%\cH_N \weyl e^{-\cB_1} T_c e^{-\cB_2} \Gamma_0 - T N^{2-2\kappa}  S (\Gamma_0) \]
%Plugging (\ref{eq:approx}) into the first term on the right-hand side of the last identity, it is easy to arrive at 
%(\ref{eq:claim-GN}) and conclude the proof of Prop. \ref{prop:loc-bound}. The main challenge, which will %be addressed in the following sections is to show the validity of (\ref{eq:approx}), in a sufficiently strong %sense, with appropriate control on all error terms. 

We now highlight the main novelties with respect to \cite{BCS}, where a trial state was constructed to estimate the ground state energy of (\ref{eq:cHN}). Since we work at positive temperature, $\Gamma_0$ must be chosen as a mixed state that describes thermal excitations. In contrast, at zero temperature it is sufficient to take $\Gamma_0$ as the projection onto the vacuum vector in the Fock space $\cF (\Lambda)$. The analysis of \cite{BCS} strongly relied on the fact that the cubic transformation that is needed to create the correct correlation structure acted directly on the vacuum. For this reason, it was possible to implement this transformation through a non-unitary operator, given by the exponential of a cubic expression involving only creation operators, and to compute its action almost explicitly. Here, we follow a different strategy and implement a unitary transformation mainly for two reasons. First, the action of the cubic transformation on $\Gamma_0$ is more involved than its action on the vacuum and explicit computations are the exception. Second, to estimate the entropy of $\Gamma_N$ we need to compute the spectral distribution of the transformed state, which is considerably simpler if $T_c$ is unitary.

Let us briefly explain why it is challenging to control the action of a unitary cubic transformation in the current setting. The action of a unitary operator $e^\cB$ can be formally computed via the commutator expansion
\begin{align}\label{duhamel:intro}
e^{-\mathcal B} X e^{\mathcal B} = \sum_{n\geq 0} \frac{(-1)^n}{n!} \ad_{\mathcal B}^{(n)}(X),
\end{align}
with $\ad_{\mathcal B}(X)= [\cB,X]$ and where $\ad_{\mathcal B}^{(n)}$ denotes its $n-$fold iteration. For quadratic transformations, say $\mathcal B = \frac{1}{2}\sum_{k \in \Lambda^*} \eta_{k} a^*_ka^*_{-k} - \hc$ with some real numbers $\eta_k = \eta_{-k}$, it is well known that (\ref{duhamel:intro}) converges. For example for $X = a^*_p, p \in \Lambda^*$, it leads to the explicit formula 
\begin{align*}
e^{-\mathcal B} a^*_p e^{\mathcal B}  = \ch (\eta_p)a^*_p + \sh (\eta_p) a_p.
\end{align*}
Cubic transformations, however, do not enjoy the same algebraic structure. In the Gross--Pitaevskii regime ($\kappa = 0$) considered in \cite{BBCS1,BreSchSch-22,NamTri-23}, or in slightly more singular regimes ($\kappa>0$ small), as in \cite{AdhBreSch-21,BreCapSch-21,HabHaiNamSeiTri-23}, the kernel of the cubic transformation $\mathcal B= \sum_{k,r \in \Lambda^*} \eta_{r} a^*_{k+r}a^*_{-r}a_k - \hc$ can be taken to be small so that the expansion (\ref{duhamel:intro}) converges and can be truncated after a few iterations. In our case, since we consider $\kappa>1/2$, the trial state $\Gamma_0$ has too many excitations and the expansion (\ref{duhamel:intro}) is not convergent.

To deal with this problem, we implement the cubic renormalization as product of many ``smaller'' unitary operators, constructed in such a way that the expansion (\ref{duhamel:intro}) converges and yields a closed formula, similar to the one obtained for quadratic transformations. To achieve this goal, we introduce cutoff functions analogous to the ones used in \cite{BCS}, to make sure that for a given momentum $k$ in the shell $S$, we only create one pair $(-r,k+r)$ of particles with high momentum in $H$. For $k \in S$, we say that a pair $(p,q)$ of momenta in $H$ forms a $k$-connection if $p+q = k$. We define $\mathcal B_k = \mathcal B_k^\sharp - \mathcal B_k^{\circ}$, see \eqref{eq:def:Bk}  below, so that $ \mathcal B_k^\sharp$ only creates a $k$-connection if it acts on a state with no $k$-connection, and it cannot create $k'$-connections, for $k' \not = k$. Thanks to these exclusion rules, we will show in Lemma \ref{eq:BcircBdag} below that the unitary transformation $e^{\cB_k}$ is explicitly given by  
%the action of these the unitary transformations on the trial state. It turns out that, as for %fermions, if $ \mathcal B_k^\sharp$ is nilpotent, the transformation can be expressed %via
\begin{align*}
e^{\mathcal B_k} &= \cos X_k + \mathcal B_k^\sharp \frac{\sin X_k}{X_k} - \frac{\sin X_k}{X_k} \mathcal B_k^\circ +	 \mathcal B_k^\sharp \frac{\cos X_k -1}{X_k^2}\mathcal B_k^\circ,
\end{align*}
where $X_k = | \mathcal B_k^\sharp|$. Using that $X_k$ is small in average on the trial state, we will be able to iteratively combine the action of $e^{\cB_k}$, for all momenta in the shell $S$. 
Additionally, we will show that the cubic transformation preserves some properties of the trial state, which is used to simplify computations.

\medskip

\textit{Organization of the paper.} In Section \ref{sec:trial_states}, we precisely define the previously discussed unitary transformations and our trial state. In Lemma \ref{lem:TLc_main} and in Lemma \ref{lem:Tc_main}, we describe the action of the Bogoliubov transformation $e^{\cB_1}$ and of the cubic transformation $T_c$, respectively. Using these lemmas, we may conclude the proof of \cshref{prop:loc-bound}. 
Section \ref{sec:B1} is devoted to the proof of \Cref{lem:TLc_main} and Section \ref{sec:cubic} to the proof of Lemma \ref{lem:Tc_main}. Finally, in the appendix, we show properties of the kernel of the transformations and we give a proof of \cshref{prop:localization}.
%which allows us to derive an upper bound on the canonical free energy on the thermodynamic box. 

\medskip

\textit{Acknowledgements.} B.S. would like to gratefully acknowledge support from the Swiss National Science Foundation through the Grant ``Dynamical and energetic properties of Bose-Einstein condensates'', from the NCCR SwissMAP and from the European Research Council through the ERC-AdG CLaQS.  This work was partially funded by the Deutsche Forschungsgemeinschaft (DFG, German Research Foundation) – Project-ID 470903074 – TRR 352.
 
\section{The trial state} 
\label{sec:trial_states}

In this section we make explicit our trial state $\Gamma_N$ given in (\ref{eq:gamma_N}). In particular, we construct the unitarity transformations appearing in its definition and compute their actions on the Hamiltonian.

%Recall from the previous section that we choose our trial state as follows
%\begin{equation*}
%\Gamma_N = \weyl e^{\cB_1} T_c e^{\cB_2} \, \Gamma_0 \,  e^{-\cB_2} T_c^* e^{-
%\cB_1} \weyl^*,
%\end{equation*}
%with 
%\begin{equation*}
%\Gamma_0 = Z^{-1} \mathds{1}_{\{\cN_{S^c} = 0\}} \, \exp\bigg( -\frac{1}{T N^{2-2\kappa}} \sum_{p \in S} \sqrt{|p|^4 + 16 \pi \mathfrak{a} N^\kappa p^2} \, a_p^* a_p\bigg)
%\end{equation*}
%and $Z>0$ such that $\tr \Gamma_0 = 1$.
%In this section, we rigorously define the transformations $\weyl, e^{\cB_1}, T_c$ and 
%$e^{\cB_2}$ and describe their action on the Hamiltonian. 

\subsection{The Weyl transformation \texorpdfstring{$\weyl$}{W\textunderscore N\textunderscore 0}}
To generate the Bose-Einstein condensate we use the Weyl operator
\begin{equation*}\label{eq:weyl} \weyl = \exp \big[ \sqrt{N_0} (a_0^* - a_0) \big], \end{equation*}
with a parameter $N_0 \in \bN$, which will be specified later on, see \eqref{eq:N0-def}.
The Weyl operator leaves \(a_p, a^*_p\) invariant for all \(p \in \Lambda^* \setminus \{0\}\). On $a_0, a_0^*$, it acts as a shift, i.e.\ 
\begin{equation}\label{eq:weyl2} 
W^*_{N_0} a_0 \weyl = a_0 + \sqrt{N_0}, \quad W^*_{N_0} a^*_0 \weyl = a^*_0 + \sqrt{N_0}.
\end{equation} 
We obtain the excitation Hamiltonian 
\begin{align} \label{eq:Weyl}
\weyl^* \cH_N \weyl &= \frac{N_0^2}{2} \hat{V}_N(0) + Q_1 + Q_2 + Q_3 + Q_4 + \sum_{p\in \Lambda^*} \left(p^2 +N_0 \hat{V}_N(0) + N_0 \hat{V}_N(p)\right) a_p^*a_p
\end{align}
with
\begin{align*}
Q_1 &= N_0^{3/2} \hat{V}_N(0) a_0 + \hc
&Q_2 &= \frac{N_0}{2}\sum_{p \in \Lambda^*} \hat{V}_N(p)a_p^* a_{-p}^* + \hc
\\
Q_3 &= N_0^{1/2} \sum_{p,r \in \Lambda^*} \hat{V}_N(r) a_{-r}^* a_{r+p}^* a_{p} + \hc
&Q_4 &= \frac{1}{2} \sum_{p,q,r \in \Lambda^*} \hat{V}_N(r) a^*_{p+r}a^*_q a_{q+r}a_p
\end{align*}

\subsection{Quadratic transformation \texorpdfstring{$e^{\mathcal B_1}$}{e\textasciicircum B1} on momenta \texorpdfstring{$|p|>N^{\kappa/2+\vep}$}{|p|>N\textasciicircum\{k/2+eps\} } }

To factor out the short-range correlations, we conjugate $\weyl^* \cH_N \weyl$ with a Bogoliubov transformation that acts on momenta that are higher than the shell momenta. 
For $p\in \Lambda^* \setminus \{0\}$, we define $\vphi_p$ as the unique solution to the box  scattering equation 
\begin{equation} \label{eq:vphi}
p^2 \vphi_p + \frac{1}{2} \sum_{q \neq 0} \widehat{V}_N(p-q)\vphi_q = - \frac{1}{2} \widehat{V}_N(p),
\end{equation}
where we introduced the notation $"q \neq 0"$ for $q \in \Lambda^*\setminus\{0\}$.
The existence and the uniqueness of a solution to this equation was established in \cite{HST}. For the sake of completeness we provide the proof in \Cref{app:scat}.
Next, we define the box scattering length $\mathfrak{a}_N$, which appears more naturally in our setting and, as we will show, is close to the full space scattering length $\mathfrak{a}$.
\begin{equation} \label{eq:scatlength}
8\pi \mathfrak{a}_N := \widehat{V}(0) + N^{1-\kappa} \sum_{p \neq 0} \widehat{V}_N(p) \vphi_p.
\end{equation}
Important properties of the function $\vphi \colon p \mapsto \vphi_p$ with $\vphi_0 = 0$ and of the box scattering length $\mathfrak{a}_N$ are stated in the next lemma. Its proof is given in \Cref{app:scat}, see also \cite{HST}, \cite[Appendix]{B} or \cite[Appendix]{LamTri-24} for alternative proofs.
\begin{lemma} \label{lem:vphi_prop}
For $p\in \Lambda^* \setminus \{0\}$, we have the pointwise bound 
$$
\vphi_p \lesssim p^{-2} N^{-1+\kappa}.
$$
Moreover, we have the estimates 
\begin{align*}
\|\varphi\|_{1} \lesssim 1, \qquad \|\varphi\|_{2} \lesssim N^{-1+\kappa}, \qquad  \|\varphi\|_{\infty} \lesssim N^{-1+\kappa}, \qquad
\|p \varphi\|_{2}^2 \lesssim N^{-1+\kappa}.
\end{align*}
For $\varphi^{(\alpha)}_p = \varphi_p \mathds{1}_{|p| > N^\alpha}, \alpha>0$, we obtain 
\begin{align*}
 \|\varphi^{(\alpha)}\|_{2} &\lesssim N^{-1+\kappa - \alpha/2}, \qquad  \|\varphi^{(\alpha)}\|_{\infty} \lesssim N^{-1+\kappa - 2\alpha} .
 %\\ \|p \varphi_H\|_{2} &\lesssim N^{-(1-\kappa)/2}, \qquad   \|\varphi_H\|_{1} \lesssim C.
\end{align*} 
Moreover,
\begin{align} \label{eq:scat_comparison}
|\mathfrak{a}_N - \mathfrak{a}| \lesssim N^{-1+\kappa}.
\end{align}
%\begin{align*}
% \|\varphi_H}|_{2} &\lesssim N^{-\frac{3}{2}(1-\kappa) + \varepsilon/2}, \qquad  \|\varphi_H\|_{\infty} \lesssim N^{-3(1-\kappa) + 2\varepsilon}, \\
%\|p \varphi_H\|_{2} &\lesssim N^{-(1-\kappa)/2}, \qquad   \|\varphi_H\|_{1} \lesssim C.
%\end{align*}
 \end{lemma}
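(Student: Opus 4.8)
The plan is to pass to position space, invoke the standard scattering theory for the rescaled potential, and then translate the estimates back into Fourier space. Set $\varphi^\vee(x) := \sum_{p \neq 0} \varphi_p e^{i p \cdot x}$ and $g_N := 1 + \varphi^\vee$, and let $V_N$ denote the potential on $\Lambda$ with Fourier coefficients $\widehat V_N(p)$, so that $V_N(x) = N^{2-2\kappa} V(N^{1-\kappa} x)$ up to a harmless periodisation. Then \eqref{eq:vphi} is equivalent to $-\Delta g_N + \tfrac12 V_N g_N = \lambda_N$ on $\Lambda$ with $\int_\Lambda g_N = 1$ (recall $|\Lambda|=1$) and $\lambda_N = \tfrac12 \int_\Lambda V_N g_N$, or, pointwise in Fourier space, to $p^2 \varphi_p = -\tfrac12 \widehat{(V_N g_N)}(p)$ for $p \neq 0$. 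Existence and uniqueness of $\varphi$, together with the pointwise bounds $0 \leq g_N \leq 1$ on $\Lambda$ and the identity $\int_\Lambda V_N g_N = 8\pi\mathfrak{a}_N N^{-1+\kappa}$, follow from the variational/maximum-principle arguments of \cite{HST}, which I would recall in \Cref{app:scat}. From $0 \leq g_N \leq 1$ and the scaling of $V_N$ one gets immediately $\|V_N g_N\|_{L^1(\Lambda)} \leq \widehat V_N(0) = N^{-1+\kappa}\widehat V(0) \lesssim N^{-1+\kappa}$ and $\|V_N g_N\|_{L^2(\Lambda)} \leq \|V_N\|_{L^2(\Lambda)} \lesssim N^{(1-\kappa)/2}$.

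With these inputs, the $\ell^1$ bound follows by splitting $\|\varphi\|_1 = \tfrac12 \sum_{p \neq 0} |p|^{-2} |\widehat{(V_N g_N)}(p)|$ at $|p| = N^{1-\kappa}$: on $0 < |p| \leq N^{1-\kappa}$ bound $|\widehat{(V_N g_N)}(p)| \leq \|V_N g_N\|_1 \lesssim N^{-1+\kappa}$ and $\sum_{0 < |p| \leq N^{1-\kappa}} |p|^{-2} \lesssim N^{1-\kappa}$; on $|p| > N^{1-\kappa}$ use Cauchy--Schwarz and Plancherel, $\sum_{|p|>N^{1-\kappa}} |p|^{-2}|\widehat{(V_N g_N)}(p)| \leq \|V_N g_N\|_2 \big( \sum_{|p|>N^{1-\kappa}} |p|^{-4}\big)^{1/2} \lesssim N^{(1-\kappa)/2} (N^{1-\kappa})^{-1/2}$. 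Both contributions are $\lesssim 1$, so $\|\varphi\|_1 \lesssim 1$. The pointwise bound is then immediate from the equation: $|p^2 \varphi_p| = \tfrac12 |\widehat{(V_N g_N)}(p)| \leq \tfrac12 \|V_N g_N\|_1 \lesssim N^{-1+\kappa}$, i.e.\ $\varphi_p \lesssim |p|^{-2} N^{-1+\kappa}$. Since $|p| \geq 2\pi$ on $\Lambda^* \setminus \{0\}$ and $\sum_{p \neq 0} |p|^{-4} < \infty$, this yields $\|\varphi\|_\infty \lesssim N^{-1+\kappa}$ and $\|\varphi\|_2^2 \lesssim N^{-2+2\kappa}$; restricting to $|p| > N^\alpha$ and using $\sum_{|p|>N^\alpha}|p|^{-4} \lesssim N^{-\alpha}$ gives $\|\varphi^{(\alpha)}\|_\infty \lesssim N^{-1+\kappa-2\alpha}$ and $\|\varphi^{(\alpha)}\|_2^2 \lesssim N^{-2+2\kappa-\alpha}$.

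For $\|p\varphi\|_2^2$ the pointwise bound alone is too weak, as $\sum_{p\neq0}|p|^{-2}$ diverges, so I would use the positivity of $V_N$. Testing \eqref{eq:vphi} against $\overline{\varphi_p}$, summing over $p \neq 0$, and using Plancherel (together with $\varphi_p=\varphi_{-p}\in\R$, so that $\varphi^\vee$ is real) gives
\[ \|p\varphi\|_2^2 + \tfrac12 \int_\Lambda V_N |\varphi^\vee|^2 \,dx = -\tfrac12 \int_\Lambda V_N \varphi^\vee \,dx . \]
Since $V_N \geq 0$, the second term on the left is nonnegative, hence $\|p\varphi\|_2^2 \leq \tfrac12 \big| \int_\Lambda V_N (g_N - 1)\,dx \big| \leq \tfrac12 \big( \|V_N g_N\|_1 + \|V_N\|_1 \big) \lesssim N^{-1+\kappa}$.

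Finally, for \eqref{eq:scat_comparison}: by \eqref{eq:scatlength} and the identity above, $8\pi\mathfrak{a}_N = N^{1-\kappa} \int_\Lambda V_N g_N$ is $N^{1-\kappa}$ times the scattering functional of $V_N$ computed with periodic boundary conditions on $\Lambda$, whereas $8\pi\mathfrak{a} = N^{1-\kappa} \int_{\R^3} V_N f_N$ with $f_N$ the zero-energy scattering solution of $V_N$ on $\R^3$, which has scattering length $\mathfrak{a} N^{-1+\kappa}$ by scaling (indeed $V_N = \beta^2 V(\beta\,\cdot)$ with $\beta = N^{1-\kappa}$). Subtracting, $8\pi|\mathfrak{a}_N - \mathfrak{a}| = N^{1-\kappa} \big| \int V_N (g_N - f_N) \big| \leq N^{1-\kappa} \|V_N\|_1 \, \|g_N - f_N\|_{L^\infty(\operatorname{supp} V_N)}$; since $\operatorname{supp} V_N$ lies in a ball of radius $\lesssim N^{-1+\kappa} \ll 1$, on which the periodic and free Green's functions agree up to a bounded correction, the two scattering profiles differ by $\lesssim \mathfrak{a} N^{-1+\kappa}$ there, giving $|\mathfrak{a}_N - \mathfrak{a}| \lesssim N^{1-\kappa}\cdot N^{-1+\kappa}\cdot N^{-1+\kappa} = N^{-1+\kappa}$. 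This comparison of box and full-space scattering lengths (carried out in \Cref{app:scat}, cf.\ \cite{HST}) is the only genuinely delicate point: all the Fourier-space estimates reduce to the scaling of $V_N$ and the elementary lattice sums above, whereas pinning down the finite-volume correction to the scattering length at the order $N^{-1+\kappa}$ requires care with the geometry of the scattering problem at scales between the range of $V_N$ and the size of the torus.
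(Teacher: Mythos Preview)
Your approach is sound and in several places cleaner than the paper's, but one of your inputs is stated incorrectly. On the torus, $g_N = 1 + \check\vphi$ has mean one (since $\vphi_0=0$ and $|\Lambda|=1$); as $g_N$ dips below $1$ on $\operatorname{supp}V_N$, it must exceed $1$ somewhere else, so the claimed bound $g_N\le 1$ is false globally. What you actually need, and what is true, is $g_N\ge 0$: at a negative minimum one would have $-\Delta g_N\le 0$ and $\tfrac12 V_N g_N\le 0$, contradicting $-\Delta g_N+\tfrac12 V_N g_N=\lambda_N>0$. Positivity alone gives $\|V_N g_N\|_{L^1}=\int V_N g_N = 8\pi\mathfrak a_N N^{-1+\kappa}\lesssim N^{-1+\kappa}$ (using the trivial bound $\mathfrak a_N\le \tfrac{1}{8\pi}\widehat V(0)$), and this is all your low-momentum and pointwise arguments require. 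For the high-momentum part of the $\ell^1$ sum you also invoke $\|V_N g_N\|_{L^2}\le \|V_N\|_{L^2}$, which needs $|g_N|\lesssim 1$ on $\operatorname{supp}V_N$; this is correct but deserves a word of justification beyond the (false) global upper bound. One clean route is to first obtain $\|p\vphi\|_2^2\lesssim N^{-1+\kappa}$ exactly as you do (that step uses only $V_N\ge 0$, not $g_N\le 1$), then split $\widehat{V_N g_N}=\widehat V_N + \widehat V_N*\vphi$ and control the second piece via $\|\widehat V_N*\vphi\|_2\le \|\widehat V_N\|_1\|\vphi\|_2\lesssim N^{-1+\kappa}$, which is harmless.

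With that fix your argument is correct and compares favourably to the paper's. The paper does not use the pointwise input $g_N\ge 0$; instead it first proves $\|p\vphi\|_2^2\lesssim N^{-1+\kappa}$ by a resolvent estimate, deduces the pointwise bound $|p^2\vphi_p|\lesssim N^{-1+\kappa}$ from the equation, and then obtains $\|\vphi\|_1\lesssim 1$ by a two-step bootstrap through $\ell^q$ norms followed by a self-improving split with a large parameter. Your single split at $|p|=N^{1-\kappa}$, fed by the $L^1$ and $L^2$ bounds on $V_N g_N$, is more direct. For \eqref{eq:scat_comparison} the paper makes the comparison concrete by taking the truncated rescaled $\bR^3$-scattering solution $\omega_N(x)=\omega(N^{1-\kappa}x)\chi(x)$, computing the defect $\epsilon_N$ it leaves in the torus equation (with $\|\epsilon_N\|_1\lesssim N^{-1+\kappa}$), and pairing the two equations; your Green's-function sketch is the right picture but would benefit from being made equally explicit.
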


%We defer the proof of this inequality to \Cref{sect:scat_comparison_proof}.

Using the solution $\vphi_p$, we define the unitary Bogoliubov transformation $e^{\cB_1}$, with the antisymmetric operator 
$$
\cB_1 = \frac{1}{2} \sum_{|p| > N^{\kappa/2+\vep}} \hspace{-0.3cm} \sinh^{-1}(N_0 \varphi_p)  a_p^*a_{-p}^* - \hc  
$$
Recall that the parameter $\vep > 0$ was introduced in  \eqref{eq:mom-sets}.
%; the operator $\cB_1$ acts on $p \in \Lambda^*$ larger than shell momenta. 
For $p \in \Lambda^*$, the action of $e^{\cB_1}$ on an annihilation operator is given by 
\begin{equation} \label{eq:bogo1}
e^{-\cB_1} a_p e^{\cB_1} = c_p a_p + s_p a_{-p}^*,
\end{equation}
where we denote 
\begin{equation} \label{eq:sp_def}
c_p = \cosh\big(\sinh^{-1}(N_0 \varphi_p)\big) = \sqrt{1 + N_0^2 \ph_p^2},  \qquad s_p = \sinh\big(\sinh^{-1}(N_0\varphi_p) \big) = N_0\vphi_p  
\end{equation}
for all $|p| > N^{\kappa/2+\vep}$, while the transformation acts trivially otherwise, i.e.\ $c_p = 1$ and $s_p = 0$ for $|p| \leq N^{\kappa/2+\vep}$.

Conjugating \eqref{eq:Weyl} with the Bogoliubov transformation $e^{\cB_1}$ we obtain a new, renormalized excitation Hamiltonian. Its form is described in the next lemma, whose proof is deferred to Section \ref{sec:B1}. Here we introduce the orthogonal projection 
\[ \Xi := \mathds{1}_{\{\mathcal N_{(S\cup H)^c}=0\}} \mathds{1}_{\{\mathcal N_H \in 2 \mathbb{N}_0\}}. \] Since the state that we are going to use to estimate the free energy of $e^{-\cB_1} \weyl^* \cH_N \weyl e^{\cB_1}$ will be in the range of $\Xi$ (because (\ref{eq:Gamma0_intro}) is clearly in $\text{Ran } \Xi$ and because $\Xi$ commutes with $T_c$ and $e^{\cB_2}$), it is enough for us to estimate the error arising from conjugation with $e^{\cB_1}$ on this subspace.   
\begin{lemma} \label{lem:TLc_main}
Suppose that $0 \leq N-N_0 \leq C N^{3\kappa/2}$ for some constant $C>0$. Then we have 
\begin{equation}\label{eq:claimB1} 
\begin{split}
e^{-\cB_1} \weyl^* &\cH_N \weyl e^{\cB_1} \\ = \; &4\pi\mathfrak{a} N^{1+\kappa} - 8\pi\mathfrak{a} N^\kappa (N-N_0) + \sum_{p \in S} \frac{(4\pi\mathfrak{a} N^\kappa)^2}{p^2}
\\ &+ \sum_{p \in \Lambda^*} p^2a_p^*a_p +  2\hat{V}(0)N^\kappa \cN   + \sum_{p \in S} 4\pi\mathfrak{a}  N^\kappa  \big( a_p^*a_{-p}^* + \hc \big)  + Q_3 + Q_4
+ \mathcal{E}_1,
\end{split} \end{equation}
where the error term $\cE_1$ is bounded, on the range of the projection $\Xi$, by  
\begin{equation}\label{eq:claimB1E} 
\begin{split} 
\mathcal{E}_1 \lesssim \; &N^{-3\eps} \sum_{p \in \Lambda^*} p^2 a_p^* a_p + N^{-5\vep/2} Q_4 + N^{-\kappa/2-\vep/2} \cN^2 + N^{-3+7\kappa +4\vep} \cN_H^2  + N^{2-2\kappa-2\eps} \cN_H 
\\
&+ N^{5\kappa/2-\eps/2} 
\end{split} 
\end{equation} 
if $1/2 < \kappa < 8/15-2\vep/3$ and $\eps > 0$ is chosen as in (\ref{eq:mom-sets}). 
\end{lemma}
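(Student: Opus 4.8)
\medskip

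The plan is to apply the commutator expansion \eqref{duhamel:intro} to each of the terms in the Weyl-transformed Hamiltonian \eqref{eq:Weyl} and to collect, on the one hand, the explicit contributions that renormalize the coupling constant from $\widehat V$ to the scattering length $\mathfrak a$, and, on the other hand, the error terms which are controlled using the a priori bounds in \cref{lem:vphi_prop}. First I would record the basic action \eqref{eq:bogo1} of $e^{\cB_1}$ on creation/annihilation operators and note that, since $\cB_1$ acts only on momenta $|p| > N^{\kappa/2+\vep}$, it commutes with all operators supported on $S$ and with $\cN_{S^c}$; in particular it commutes with the projection $\Xi$, so it suffices to bound $\cE_1$ on $\Ran \Xi$. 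The kinetic term $\sum_p p^2 a_p^* a_p$ conjugated by $e^{\cB_1}$ produces $\sum_p p^2 a_p^* a_p$ plus a term proportional to $\sum_{|p|>N^{\kappa/2+\vep}} p^2 s_p(a_p^* a_{-p}^* + \hc)$ plus $2\sum p^2 s_p^2$; these are the leading pieces that will combine with $Q_2$. The crucial algebraic input is the box scattering equation \eqref{eq:vphi}, which is precisely designed so that $p^2 s_p + \tfrac{N_0}{2}\widehat V_N(p) + \tfrac12\sum_{q\neq 0}\widehat V_N(p-q)s_q$ vanishes up to terms involving $s_q$ only for $q$ in the range where $e^{\cB_1}$ has been switched off (i.e. $|q|\le N^{\kappa/2+\vep}$), which are lower order.

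\medskip

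The main body of the argument is the bookkeeping of the terms $Q_1,\dots,Q_4$ under conjugation. For $Q_2 = \tfrac{N_0}{2}\sum_p \widehat V_N(p) a_p^* a_{-p}^* + \hc$, conjugation replaces $a_p^*$ by $c_p a_p^* + s_p a_{-p}$; the "diagonal" part $\tfrac{N_0}{2}\sum_p \widehat V_N(p)(c_p s_{-p} + c_{-p}s_p)$ is a c-number which, combined with the c-number from the kinetic term and the definition \eqref{eq:scatlength} of $\mathfrak a_N$, yields the constants $4\pi\mathfrak a N^{1+\kappa} - 8\pi\mathfrak a N^\kappa(N-N_0) + \sum_{p\in S}(4\pi\mathfrak a N^\kappa)^2/p^2$ after replacing $\mathfrak a_N$ by $\mathfrak a$ using \eqref{eq:scat_comparison} and $N_0$-for-$N$ replacements, with all discrepancies absorbed into $\cE_1$. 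The off-diagonal part of $Q_2$ restricted to $|p|>N^{\kappa/2+\vep}$ cancels against the kinetic off-diagonal term via \eqref{eq:vphi}, up to the renormalization that produces the term $2\widehat V(0)N^\kappa \cN$ and the shell pairing term $\sum_{p\in S} 4\pi\mathfrak a N^\kappa(a_p^* a_{-p}^* + \hc)$ (here one uses that on the shell $S$, $\widehat V_N(p)\approx \widehat V(0) = \widehat V_N(0)\cdot$(const) and the sum $\sum_{q\neq 0}\widehat V_N(p-q)s_q$ for $p$ near the shell produces the scattering-length constant). For $Q_1 = N_0^{3/2}\widehat V_N(0) a_0 + \hc$, since $a_0$ is unaffected by $e^{\cB_1}$ (as $0 \le N^{\kappa/2+\vep}$) this is unchanged and must be accounted for — actually it gets absorbed into the bookkeeping with $Q_3$ since $a_0 = 0$ on the relevant subspace after the Weyl shift... more carefully, $Q_1$ is linear and will be controlled or cancelled using the expected number constraint. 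The terms $Q_3$ (cubic) and $Q_4$ (quartic) are, to leading order, left invariant — I would write $e^{-\cB_1}Q_3 e^{\cB_1} = Q_3 + (\text{error})$ and similarly for $Q_4$, where the errors come from the $s_p$-dependent corrections and are estimated term by term.

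\medskip

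The error estimate \eqref{eq:claimB1E} is where the real work lies. Each commutator $\ad_{\cB_1}^{(n)}$ applied to $Q_3$, $Q_4$, or the quadratic terms produces a polynomial in creation/annihilation operators whose coefficients are convolutions of $s_p = N_0\varphi_p$ (supported on $|p|>N^{\kappa/2+\vep}$) with $\widehat V_N$. Using \cref{lem:vphi_prop} — in particular $\|\varphi^{(\kappa/2+\vep)}\|_2 \lesssim N^{-1+\kappa-\kappa/4-\vep/2}$, $\|\varphi\|_\infty \lesssim N^{-1+\kappa}$, $\|p\varphi\|_2^2 \lesssim N^{-1+\kappa}$ — together with standard $\cN$- and $\cK$-bounds for such operator monomials (Cauchy–Schwarz in momentum, bounding $a_p$ by $\cN^{1/2}$ or $p^2 a_p^* a_p$ by the kinetic energy $\cK = \sum p^2 a_p^* a_p$), I would show that the $n$-th term is bounded by a geometric factor times the right-hand side of \eqref{eq:claimB1E}, so the series converges. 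The restriction to $\Ran\Xi$ enters crucially: on this subspace $\cN_H$ is even and bounded in a way that tames the $\cN_H^2$-type terms, and $\cN_{(S\cup H)^c}=0$ lets one discard many would-be-dangerous contributions. I would pay particular attention to the $Q_4$-derived errors (these give the $N^{-5\vep/2}Q_4$ term, i.e. a self-improving bound), to the cross terms between $Q_3$ and $\cB_1$ (giving $N^{-3+7\kappa+4\vep}\cN_H^2$, which forces $\kappa < 8/15 - 2\vep/3$ so that $N^{5\kappa/2}$ dominates), and to the c-number discrepancies from replacing $\mathfrak a_N \to \mathfrak a$ and $N_0 \to N$, which land in the harmless $N^{5\kappa/2-\vep/2}$ term.

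\medskip

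The main obstacle I anticipate is the convergence of the commutator expansion and the precise tracking of the exponents in the error bound: because $\kappa > 1/2$ the naive size of $s_p$ on the support $|p| > N^{\kappa/2+\vep}$ is not uniformly small, so one genuinely needs the $L^2$-smallness of $\varphi^{(\kappa/2+\vep)}$ rather than $L^\infty$-smallness, and one must be careful that the shell $S$ and high set $H$ are separated (the condition $-2+3\kappa+4\vep<0$) so that cross terms do not proliferate. A secondary subtlety is getting the constant in front of the $\cN$-term exactly right ($2\widehat V(0)N^\kappa$ rather than something involving $\mathfrak a_N$), which requires carefully isolating the $p$-independent part of the renormalized interaction versus the part that goes into the scattering length.
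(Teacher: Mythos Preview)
Your overall strategy — conjugate each term of \eqref{eq:Weyl} separately by $e^{\cB_1}$ and then use the scattering equation \eqref{eq:vphi} to implement the renormalization — matches the paper. However, there is a genuine gap in how you allocate the terms, and one framing issue.

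\textbf{The main gap: the role of $Q_4$.} You write $e^{-\cB_1}Q_4 e^{\cB_1}=Q_4+(\text{error})$. This is not what happens. The paper's computation (their Lemma on $Q_4$) gives
\[
e^{-\cB_1}Q_4 e^{\cB_1}=Q_4+\tfrac12\sum_{p,r}\widehat V_N(r-p)s_r\big(a_p^*a_{-p}^*+\hc\big)+\tfrac12\sum_{p,r}\widehat V_N(p-r)s_p s_r+\cE_1^{(Q_4)}.
\]
The off-diagonal quadratic term here is precisely the contribution $\tfrac12\sum_{q}\widehat V_N(p-q)s_q$ that you correctly wrote into the combination $p^2 s_p+\tfrac{N_0}{2}\widehat V_N(p)+\tfrac12\sum_q\widehat V_N(p-q)s_q$, but you did not say where it comes from — it is \emph{not} produced by the kinetic or $Q_2$ conjugations. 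Without this $Q_4$-contribution the scattering-equation cancellation on $|p|>N^{\kappa/2+\vep}$ fails, and the surviving off-diagonal term on $S$ would carry the wrong constant (you would get $\tfrac{N_0}{2}\widehat V_N(p)$ rather than $4\pi\mathfrak a N^\kappa$). Similarly, the constant $\tfrac12\sum\widehat V_N(p-r)s_p s_r$ is needed, together with $\sum p^2 s_p^2$ and $N_0\sum\widehat V_N(p)s_p$, to produce the LHY-type constants in \eqref{eq:claimB1}; your c-number bookkeeping omits it.

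\textbf{Framing issues.} Since $\cB_1$ is quadratic, the action \eqref{eq:bogo1} is exact and closed; there is no commutator series to sum and no convergence to worry about. The paper simply substitutes $a_p\mapsto c_p a_p+s_p a_{-p}^*$ into each $Q_j$ and sorts the resulting monomials (for $Q_4$ this is the decomposition $A_0+\dots+A_4$ by the number of $s$-factors). Also, $e^{\cB_1}$ does \emph{not} commute with $\Xi$: $\cB_1$ acts on all $|p|>N^{\kappa/2+\vep}$, which includes the gap region between $S$ and $H$, so it can create particles in $(S\cup H)^c$. The restriction to $\Ran\Xi$ is applied \emph{after} conjugation, to the error terms only; it is legitimate because the trial state $T_c e^{\cB_2}\Gamma_0$ lies in $\Ran\Xi$. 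Finally, $Q_1$ is disposed of more simply than you suggest: $e^{-\cB_1}Q_1e^{\cB_1}$ still involves $a_0,a_0^*$, and since $0\notin S\cup H$ it vanishes on $\Ran\Xi$.
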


\textit{Remark.} We will show that the trial state that we are going to use for the Hamiltonian $e^{-\cB_1} \weyl^* \cH_N \weyl e^{\cB_1}$ is such that, for $j=1,2$, 
\[ \cN^j_H \lesssim N^{(-2+9\kappa/2+\eps) j}; \qquad \cN^j \lesssim N^{j 3\kappa/2}; \qquad Q_4, \sum p^2 a_p^* a_p \lesssim N^{5\kappa/2} \] in expectation. This explains why, for $1/2 < \kappa < 14/27 - \vep$, all terms on the right-hand side of (\ref{eq:claimB1E}) are subleading, i.e.\ they are smaller than $N^{5\kappa/2}$, which is the resolution we are trying to achieve.   

\subsection{The cubic transformation \texorpdfstring{$T_c$}{T\textunderscore c}}

Next, we conjugate the renormalized excitation Hamiltonian $e^{-\cB_1} \weyl^* \cH_N \weyl e^{\cB_1}$ with cubic transformations, annihilating a particle with shell momentum $k \in S$ and creating a pair of particles with high momenta $-r , k+r \in H$, or, vice versa, annihilating a pair of particles with high momenta $-r, k+r \in H$ and creating a particle with shell momentum $k \in S$. 
Instead of a single cubic transformation, we consider a product of unitary operators, one for each low momentum $k \in S$. More precisely, for $k \in S$, we define 
 \begin{align} \label{eq:def:Bk}
 \mathcal B_k^\sharp &=  \sum_{r \in H_k} N^{1/2} \varphi_r a^*_{-r} a^*_{r+k}  a_k \Theta_{k,r}, \qquad \mathcal B_k^\circ = \left( \mathcal B_k^\sharp\right)^*,  \qquad
 \mathcal B_k = \mathcal B_k^\sharp - \mathcal B_k^\circ, 
 \end{align}
with $H_k = \{ r \in H \,|\, r+k \in H\}$ and where the cutoff $\Theta_{k,}$ is defined, similarly as in \cite{BCS}, by
 \begin{equation} \label{eq:Thetakr}
 \begin{split} 
 \Theta_{k,r} &=  \Theta_{k}^{(1)} \times  \Theta_{k,r}^{(2)}, \\
\Theta_{k}^{(1)} &=  \prod_{t \in H} (1-\mathds{1}_{\{\mathcal N_{-t}>0\}} \mathds{1}_{\{\mathcal N_{t+k}>0\}}), \qquad \Theta_{k,r}^{(2)} = \prod_{q \in S} (1-\mathds{1}_{\{\mathcal N_{r+q} + \mathcal N_{-(k+r) + q} >0\}}).
\end{split}
\end{equation}
Here and in the following we use the notation $\cN_q = a_q^* a_q, \, q\in \Lambda^*$, for the operator measuring the number of particles with momentum $q$. Then we define the unitary operator $T_k = e^{\cB_k}$ for every $k \in S$, and the product 
\begin{equation}\label{eq:Tc-def}
T_c = \prod_{k \in S} T_k.
\end{equation} 
In fact, since $[T_k, T_{k'}] \not = 0$, for $k' \not = k$, we need to choose an order in the finite set $S$ to define $T_c$. However, our analysis will not depend on this choice.

We say that a pair of high momenta $(r,t) \in H^2$ is a \emph{$k$-connection} if $r+t =k$. The cut-off $\Theta_{k,r}$ has two roles. The first factor $\Theta_{k,r}^{(1)}$ ensures that $ \mathcal B_k^\sharp$ only creates a  $k$-connection if there is not already one. The second factor $\Theta_{k,r}^{(2)}$ ensures that $\cB_k^\sharp$ does not create any $q$-connection, for $q\neq k$. This latter condition is implemented by asking that the shell-neighborhoods of $r$ and $-(k+r)$ are empty, that is, that there is no occupied mode with momentum $t \in r + S$ or $t \in -(k+r) + S$.

%This is summarized by
%
%\begin{align*}
%\left(\frac{1}{2}\sum_{s \in H_k} \mathcal N_{-s} \mathcal N_{s+k} - 1\right) \mathcal B_k^\sharp = 0 \\
%[\mathcal B_q^\sharp, \sum_{s \in H_k} \mathcal N_{-s} \mathcal N_{s+k}] = 0,
%\end{align*}
%for $k\neq q$. In particular
%\begin{align*}
%[\mathcal B_q^\sharp, \Theta_{k}^{(1)}] = 0,
%\end{align*}
%since $\mathcal B_q^\sharp$ does not create any $k$-connection. 

In the next lemma we describe the action of $T_c$ on the renormalized excitation Hamiltonian $e^{-\cB_1} \weyl^* \cH_N \weyl e^{\cB_1}$. When applying the lemma, 
we will only be interested in controlling the action of $T_c$ on a specific trial state $\Gamma$. For this reason, we are going to restrict our attention to states with certain additional properties, which will be satisfied by the $\Gamma$ that we are going to choose. First of all, we can restrict our attention to the range of the projection $\mathds{1}_{\{\cN_{S^c} =0\}}$ (because the trial state will not have particles with momentum outside $S$). Additionally, we can focus on states that commute with the parity operator for the sum of the number of particles with momentum $\pm k$ and the number of $\pm k$-connections. For $k \in S$, consider 
\begin{align} \label{eq:def_Mx}
\mathcal M_k = \mathcal N_{k} + \frac{1}{2}\sum_{t \in H_{k}} \mathcal N_{-t}\, \mathcal N_{t+k}
\end{align}
and the parity operators
\begin{equation}\label{eq:parity}
\begin{split} 
{\mathbb{P}}_k &= \1_{\{\mathcal M_k + \mathcal M_{-k} \in 2 \mathbb{N}_0\}}, \\
	{\mathbb{Q}}_k = 1- {\mathbb{P}}_k &= \1_{\{\mathcal M_k + \mathcal M_{-k} \in 2 \mathbb{N}_0+1\}}.
\end{split}\end{equation}
%Because of the cutoffs, for every $r \in H$ there can be at most one particle with momentum $r$. For this eason, $\cM_k$ can be interpreted as the sum of the number of particles with momentum $k$ and the number of $k$-connections. 
In accordance with the intuition that $\cB_k^\sharp$ annihilates a particle with momentum $k$ while creating a $k$-connection, one readily checks that
$[\cB_q^\sharp, \cM_k] =0$ for all $q \in S$. Therefore we find that $[ T_c, \cM_k] = [T_c, \mathbb{P}_k] = [T_c, \mathbb{Q}_k] = 0$ for all $k \in S$. For this reason, restricting to $\Gamma$ such that $[ \Gamma, \mathbb{P}_k] = [\Gamma, \mathbb{Q}_k] = 0$ for all $k \in S$, we will be able to neglect all terms in the renormalized excitation Hamiltonian $e^{-\cB_1} \weyl^* \cH_N \weyl e^{\cB_1}$ that do not preserve parity.
%(which otherwise would be difficult to control).

\begin{lemma} \label{lem:Tc_main}
Let $\Gamma$ be a normalized density matrix on $\cF (\Lambda)$ with $\Gamma = \mathds{1}_{\{\cN_{S^c} =0\}} \Gamma \mathds{1}_{\{\cN_{S^c} =0\}}$ and such that $[ \Gamma, \mathbb{P}_k] = 0$ for all $k \in S$. Assume $0 \leq N- N_0 \lesssim N^{3\kappa/2}$. Then we have
\begin{equation} \label{eq:cubic} 
\begin{split} 
\tr\, T_c^* &e^{-\cB_1} \weyl^* \cH_N \weyl e^{\cB_1} T_c  \Gamma 
\\ \leq \; &4\pi\mathfrak{a} N^{1+\kappa} - 8\pi\mathfrak{a} N^\kappa (N-N_0) + \sum_{p \in S} \frac{(4\pi\mathfrak{a} N^\kappa)^2}{p^2} 
\\
& + \tr \Big[ \sum_{p \in S} p^2a_p^*a_p + 16\pi\mathfrak{a} N^\kappa \cN_S + \sum_{p \in S}  4\pi\mathfrak{a} N^\kappa  \big(a_p^*a_{-p}^* + \hc \big) \Big] \Gamma + \delta (\Gamma) 
\end{split} 
\end{equation} 
where
\begin{equation}\label{eq:deltaG} \begin{split} \delta (\Gamma) &\lesssim \; N^{-1+ \kappa + 6\eps} \, \tr \, \big( \cN_S + N^{3\kappa/2+3\eps} \big) \, \sum_{k \in S} \cN_k^2 \, \Gamma
\\
&\qquad  + \big( N^{-\kappa/2 -\eps/2} + N^{-7+13\kappa+6\vep}\big) \, \tr \, \cN_S^2 \Gamma + N^{5\kappa/2-\eps/2} 
\end{split} \end{equation} 
if $1/2 < \kappa < 8/15-2\vep/3$ and $\eps > 0$ is chosen as in (\ref{eq:mom-sets}).
\end{lemma}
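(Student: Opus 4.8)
\textbf{Proof plan for Lemma \ref{lem:Tc_main}.}

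The plan is to compute the action of $T_c = \prod_{k \in S} T_k$ on the renormalized excitation Hamiltonian given by \cref{lem:TLc_main}, term by term, and to show that the net effect is precisely to produce the quadratic operator on the right-hand side of \eqref{eq:cubic} up to the error $\delta(\Gamma)$. First I would record the key algebraic facts that make each $e^{\cB_k}$ tractable: because of the cutoffs $\Theta_{k,r}$ (the factor $\Theta_k^{(1)}$ preventing creation of a second $k$-connection, the factor $\Theta_{k,r}^{(2)}$ preventing any $q$-connection with $q \neq k$), the operator $\mathcal B_k^\sharp$ satisfies $(\mathcal B_k^\sharp)^2 = 0$, $\mathcal B_k^\sharp (\mathcal B_k^\circ)^2 = 0$ and similar relations, so that the Duhamel/commutator expansion \eqref{duhamel:intro} for $e^{\cB_k}$ truncates and yields the closed formula $e^{\mathcal B_k} = \cos X_k + \mathcal B_k^\sharp \tfrac{\sin X_k}{X_k} - \tfrac{\sin X_k}{X_k}\mathcal B_k^\circ + \mathcal B_k^\sharp \tfrac{\cos X_k - 1}{X_k^2}\mathcal B_k^\circ$ with $X_k = |\mathcal B_k^\sharp|$ (this is the content of the lemma referenced as \ref{eq:BcircBdag} in the excerpt, which I may assume). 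The crucial quantitative input is that $X_k^2 \sim N \sum_{r \in H_k} \varphi_r^2 \, \cN_k \lesssim N^{-1+\kappa-\eps}\cN_k$ is small on the trial state, so that each $T_k$ is a small perturbation of the identity; this is what will eventually make the errors in $\delta(\Gamma)$ summable over $k \in S$.

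Next I would treat the conjugation $T_c^* (\,\cdot\,) T_c$ as a sequential process, conjugating by one $T_k$ at a time and tracking how each piece of the Hamiltonian transforms. The main contributions to identify are: (i) the kinetic term $\sum_p p^2 a_p^* a_p$ restricted to $H$, where $T_c$ shifts energy by the scattering-type combination $\sum_{r \in H_k} N \varphi_r^2 r^2$ per $k$-excitation — this, together with the renormalization already carried out by $e^{\cB_1}$, is what converts the coefficient $2\hat V(0)N^\kappa$ in front of $\cN$ into $16\pi\mathfrak{a}N^\kappa$ in front of $\cN_S$, using the scattering equation \eqref{eq:vphi} and the definition \eqref{eq:scatlength} of $\mathfrak{a}_N$ together with \eqref{eq:scat_comparison}; (ii) the cubic term $Q_3$, whose leading part is cancelled against the linear-in-$\mathcal B_k^\sharp$ term produced by commuting $Q_3$ through $T_k$ (this is the whole point of introducing $T_c$), leaving a remainder of size controlled by $\|\varphi^{(\kappa/2+\eps)}\|$-type quantities from \cref{lem:vphi_prop}; (iii) the quartic term $Q_4$, which after conjugation generates, among subleading pieces, the off-diagonal pairing $4\pi\mathfrak{a}N^\kappa(a_p^* a_{-p}^* + \hc)$ already present being preserved, plus error terms. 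For each of these I would isolate the main term, match it against \eqref{eq:cubic}, and estimate the remainder in terms of $\cN_S$, $\sum_{k\in S}\cN_k^2$, $\cN_H$ and the kinetic energy, using the pointwise and $L^p$ bounds on $\varphi$ from \cref{lem:vphi_prop}, the a priori bounds $\cN_H^j \lesssim N^{(-2+9\kappa/2+\eps)j}$, $\cN^j \lesssim N^{3j\kappa/2}$ and $Q_4, \sum p^2 a_p^* a_p \lesssim N^{5\kappa/2}$ quoted in the remark after \cref{lem:TLc_main} (which hold on the trial state), and the parity restriction $[\Gamma,\mathbb P_k]=0$ to discard non-parity-preserving terms. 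Throughout, the restriction $\Gamma = \1_{\{\cN_{S^c}=0\}}\Gamma\1_{\{\cN_{S^c}=0\}}$ lets me replace $\cN$ by $\cN_S$ and drop many terms outright before conjugation.

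The main obstacle I anticipate is controlling the \emph{accumulation} of errors when conjugating by the full product $T_c = \prod_{k\in S} T_k$ rather than a single $T_k$: the shell $S$ contains $\sim N^{3\kappa/2}$ momenta, so any error of size $\varepsilon_k$ per factor must be shown to sum to something $\ll N^{5\kappa/2}$. The difficulty is compounded by the fact that the $T_k$ do not commute, so conjugating by $T_{k}$ affects terms already modified by the earlier factors $T_{k'}$, and one must show these ``cross'' contributions — e.g. a $k'$-connection created by $T_{k'}$ interacting with $\mathcal B_k^\sharp$ — are suppressed. This is precisely where the cutoff $\Theta_{k,r}^{(2)}$ earns its keep: it guarantees $\mathcal B_k^\sharp$ only acts when the relevant shell-neighborhoods are empty, which decouples the factors up to controllable commutator errors. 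My strategy would be to run a telescoping/induction over the (arbitrary) ordering of $S$, at each step using the smallness of $X_k$ on states in the range of $\1_{\{\cN_{S^c}=0\}}$ to bound the increment, and to verify that $T_c$ preserves the structural properties (vanishing outside $S$, parity, approximate bounds on $\cN_S$ and $\sum_k \cN_k^2$) needed to continue the induction — this last point being the ``the cubic transformation preserves some properties of the trial state'' remark from the strategy section. Once the telescoped error is organized into the three groups appearing in \eqref{eq:deltaG}, plugging in \cref{lem:vphi_prop} and the a priori bounds gives the stated estimate under $1/2 < \kappa < 8/15 - 2\eps/3$.
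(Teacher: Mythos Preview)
Your high-level plan is broadly right --- term-by-term conjugation with a telescoping over $k \in S$, driven by the closed formula for $T_k$ and the smallness of $X_k^2$ --- and matches the paper's architecture. But there is a substantive gap: you rely on parity alone to discard the mixed pieces of $Q_3$ and $Q_4$ (those coupling $H$ and $S$ momenta), and parity alone does not suffice. The parity operators $\mathbb{P}_k$ are built from $\cM_k = \cN_k + \tfrac{1}{2}\sum_{t\in H_k}\cN_{-t}\cN_{t+k}$, so when a monomial such as $a_{-r}^* a_{r+p}^* a_p$ with $r,p\in H$, $r+p\in S$ acts, its effect on $\cM_x$ depends on the ambient $H$-occupation structure --- whether creating $a_{-r}^*$ completes a new $x$-connection or not. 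The paper resolves this with a structural ``monogamy'' result (Lemma~\ref{lm:mono}): on $T_c\Gamma T_c^*$, every $H$-mode is occupied by at most one particle and is paired with exactly one partner in its $S$-neighborhood. Only with this constraint does the parity argument become deterministic: one inserts projections like $\1_{\{\cN_p=1\}}\sum_{x\in S}\1_{\{\cN_{-p+x}=1\}}\prod_{y\neq x}\1_{\{\cN_{-p+y}=0\}}$ around each $H$-mode and then computes that some $\cM_x+\cM_{-x}$ necessarily flips parity, forcing the expectation of $Q_3^M$, $Q_3^S$, $Q_4^{M,1}$, and $Q_4^H-\widetilde{Q}_4^H$ to vanish (or be small). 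Without this lemma your plan has no mechanism to control these mixed terms; they are not small on bare size grounds, and the cutoff $\Theta^{(2)}_{k,r}$ by itself does not do the job.

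Two smaller points. First, the a priori bounds $\cN_H^j \lesssim N^{(-2+9\kappa/2+\eps)j}$ you invoke are properties of the \emph{specific} trial state, not of a general $\Gamma$ satisfying the lemma's hypotheses; to bound $\tr\, T_c^*\cE_1 T_c\,\Gamma$ you need an operator inequality of the type $T_{c,S'}^*\cN_H^j T_{c,S'} \lesssim C_j(N^{-2+3\kappa+\eps}\cN_S)(N^{-2+3\kappa+\eps}\cN_S+1)^{j-1}$ on $\mathrm{Ran}\,\1_{\{\cN_{S^c}=0\}}$, which the paper proves separately (Lemma~\ref{lem:Tc_N}) by the same telescoping. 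Second, your description of $Q_3$ is off: $Q_3^H$ is not ``cancelled''. Its conjugation by $T_k$ leaves $Q_3^H$ essentially intact and \emph{in addition} produces a constant $\sim 2\sum_{r}N\hat{V}_N(r)(\varphi_r+\varphi_{r+k})\cN_k$; this must be combined with the analogous constants coming from the kinetic term and from $\widetilde{Q}_4^H$, and it is only the \emph{sum} of all three that collapses via the scattering equation \eqref{eq:vphi} to $2N^\kappa(8\pi\mathfrak{a}-\hat{V}(0))\cN_k$, which then merges with the $2\hat{V}(0)N^\kappa\cN$ from \cref{lem:TLc_main} to give the $16\pi\mathfrak{a}N^\kappa\cN_S$ you want.
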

This lemma is the main novelty of the paper. We defer its proof to Section \ref{sec:cubic}.

\subsection{Quadratic transformation \texorpdfstring{$e^{\mathcal B_2}$}{e\textasciicircum B2} on shell momenta}

Consider now the quadratic Hamiltonian in (\ref{eq:cubic}). Since we are going to choose $N_0$  such that $N- N_0 = \tr \cN_S \Gamma + o(N^{3\kappa/2})$, see (\ref{eq:N0-def}) below, we are effectively left with a factor $8\pi \mathfrak{a} N^\kappa \cN_S$ in the last line of (\ref{eq:cubic}). With the aim of diagonalizing this quadratic operator, we implicitly define the coefficients $\tau_p \in \R$, for $p \in S$, via 
\[ \tanh(2\tau_p) = - \frac{8\pi \mathfrak{a}N^\kappa}{p^2+8\pi\mathfrak{a}N^\kappa}. \]
The action of the unitary Bogoliubov transformation $e^{\cB_2}$, with the antisymmetric operator
\[ \cB_2 = \frac{1}{2} \sum_{p\in S} \tau_p a_p^*a_{-p}^* - \hc, \]
is given explicitly, similarly to (\ref{eq:bogo1}), by 
\begin{equation}\label{eq:bogo2}  e^{-\cB_2} a_p e^{\cB_2} = \gamma_p a_p + \sigma_p a_{-p}^*  , \qquad  e^{-\cB_2} a^*_p e^{\cB_2} = \gamma_p a^*_p + \sigma_p a_{-p}  \end{equation} 
with $\gamma_p = \cosh \tau_p$, $\sigma_p = \sinh \tau_p$, for all $p \in S$. 
%The transformation $e^{\cB_2}$ acts trivially on momenta outside the shell.
A straightforward calculation shows that
\begin{equation}\label{eq:diago} \begin{split} e^{-\cB_2} &\Big[ \sum_{p \in S}  \, p^2 a_p^*a_p + 8\pi\mathfrak{a} N^\kappa \cN_S + \sum_{p \in S}  4\pi\mathfrak{a} N^\kappa  \big(a_p^*a_{-p}^* + \hc \big) \Big]  e^{\cB_2} \\ &= \frac{1}{2} \sum_{p \in S} \big[\sqrt{|p|^4 + 16 \pi \mathfrak{a} N^\kappa p^2} - p^2 - 8\pi \mathfrak{a} N^\kappa\big] + \sum_{p \in S} \sqrt{|p|^4 + 16 \pi \mathfrak{a} N^\kappa p^2} \, a_p^* a_p. \end{split} \end{equation} 
We may now precisely define the trial state of the transformed Hamiltonian as the Gibbs state of the diagonal quadratic Hamiltonian in (\ref{eq:diago}). We set 
\begin{equation} \label{eq:Gamma0} \Gamma_0 = Z^{-1} \mathds{1}_{\{\cN_{S^c} = 0\}} \, \exp\bigg( -\frac{1}{T N^{2-2\kappa}} \sum_{p \in S} \sqrt{|p|^4 + 16 \pi \mathfrak{a} N^\kappa p^2} \, a_p^* a_p\bigg),  \end{equation} 
with the normalization constant 
\[ Z = \tr \, \mathds{1}_{\{\cN_{S^c} = 0\}} \, \exp \bigg( -\frac{1}{T N^{2-2\kappa}} \sum_{p \in S} \sqrt{|p|^4 + 16 \pi \mathfrak{a} N^\kappa p^2} \, a_p^* a_p\bigg).  \]
Furthermore, we define 
\begin{equation}\label{eq:G-def}  \Gamma = e^{\cB_2} \Gamma_0 e^{-\cB_2} \end{equation} 
as the trial state for the Hamiltonian appearing in Lemma \ref{lem:Tc_main}.
%
%and 
%\begin{equation} \label{eq:GammaN} 
%\Gamma_N = \weyl e^{\cB_1} T_c e^{\cB_2} \, \Gamma_0 \,  e^{-\cB_2} T_c^* e^{-
%\cB_1} \weyl^*
%\end{equation}
%as the trial state for the Hamiltonian $\cH_N$ on the small, rescaled box $\Lambda$.
Some important properties of $\Gamma$ are listed in the following lemma.
\begin{lemma} \label{lm:G-prop} 
Let $\Gamma$ be defined as in (\ref{eq:G-def}). Then $\Gamma \geq 0$ with $\tr \Gamma = 1$. Moreover, we have $\Gamma = \mathds{1}_{\{\cN_{S^c} = 0\}} \Gamma \mathds{1}_{\{\cN_{S^c} = 0\}}$ and $[\Gamma, \mathbb{P}_k] = [\Gamma, \mathbb{Q}_k] = 0$ for all $k\in S$ and with the parity operators $\mathbb{P}_k, \mathbb{Q}_k$ defined in (\ref{eq:parity}). Furthermore, for $j =1,2$, there exist positive constants $c < C$ such that 
\begin{equation}\label{eq:NG-claim} \begin{split} c N^{3j \kappa/2} \leq \tr \, \cN^j_S \Gamma &\leq C N^{3j \kappa /2} \\  
\tr \cN_S^{j-1} \sum_{k \in S} \cN_k^2 \Gamma & \leq C N^{j (3\kappa/2+2\eps)}. \end{split} \end{equation} 
\end{lemma}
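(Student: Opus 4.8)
The state $\Gamma = e^{\cB_2}\Gamma_0 e^{-\cB_2}$ is a unitary conjugate of the explicit quasi-free Gibbs state $\Gamma_0$ from \eqref{eq:Gamma0}, so positivity and $\tr\Gamma = 1$ are immediate. For the support property, note that $\Gamma_0$ is by construction in the range of $\mathds 1_{\{\cN_{S^c}=0\}}$ and that $\cB_2$ only involves modes $a_p^*, a_{-p}^*$ with $p\in S$; hence $e^{\cB_2}$ commutes with $\mathds 1_{\{\cN_{S^c}=0\}}$ and the support property is preserved. For the parity commutations, the operators $\cM_k$ of \eqref{eq:def_Mx}, when restricted to $\Ran\mathds 1_{\{\cN_{S^c}=0\}}$ (where all high-momentum occupation numbers vanish), reduce to $\cN_k$, so $\mathbb P_k$ restricted to this subspace is $\mathds 1_{\{\cN_k+\cN_{-k}\in 2\bN_0\}}$. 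One checks directly that $\cB_2$ commutes with $\cN_k + \cN_{-k}$ for every $k\in S$ (each summand $a_p^*a_{-p}^* - \hc$ changes $\cN_p + \cN_{-p}$ by $\pm 2$), hence $e^{\cB_2}$ commutes with $\mathbb P_k$ and $\mathbb Q_k = 1-\mathbb P_k$; since $\Gamma_0$ is a function of the number operators $\cN_p$, it commutes with $\mathbb P_k$ as well, and the commutation passes to $\Gamma$.

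For the moment bounds, the strategy is to reduce everything to the explicit state $\Gamma_0$ via the Bogoliubov relations \eqref{eq:bogo2}. Since $e^{\cB_2}$ leaves $\cN_{S^c}$ invariant, $\tr\,\cN_S^j\Gamma = \tr\, (e^{-\cB_2}\cN_S e^{\cB_2})^j \Gamma_0$. Using $e^{-\cB_2}a_p e^{\cB_2} = \gamma_p a_p + \sigma_p a_{-p}^*$ one expands $e^{-\cB_2}\cN_S e^{\cB_2} = \sum_{p\in S}(\gamma_p a_p^* + \sigma_p a_{-p})(\gamma_p a_p + \sigma_p a_{-p}^*)$, and since $\Gamma_0$ is diagonal in the occupation-number basis and quasi-free with $\tr\, a_p^*a_p\,\Gamma_0 = (e^{\beta_p}-1)^{-1} =: n_p$ where $\beta_p = (TN^{2-2\kappa})^{-1}\sqrt{|p|^4+16\pi\mathfrak a N^\kappa p^2}$, all expectations reduce to sums of products of $\gamma_p^2, \sigma_p^2, n_p$ computed by Wick's theorem. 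The key quantitative input is a two-sided estimate on $\sum_{p\in S}\sigma_p^2$ and $\sum_{p\in S}n_p$: in the shell $S$, $|p|\sim N^{\kappa/2}$ so $p^2 \sim 8\pi\mathfrak a N^\kappa$, giving $\sigma_p^2 = \sinh^2\tau_p \sim 1$ uniformly, and the number of lattice points in $S$ is $\sim N^{3\kappa/2}$ (since $N^{\kappa/2+\eps}\gg 1$), so $\sum_{p\in S}\sigma_p^2 \sim N^{3\kappa/2}$; similarly $\beta_p \sim (TN^{2-2\kappa})^{-1}N^\kappa = (TN^{2-3\kappa})^{-1}$, which for $T\lesssim N^{-2+3\kappa}$ is bounded below, so $n_p \lesssim 1$ and $\sum_{p\in S}n_p \lesssim N^{3\kappa/2}$. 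The leading contribution to $\tr\,\cN_S\Gamma$ is $\sum_{p\in S}(\sigma_p^2 + (\gamma_p^2+\sigma_p^2)n_p) \sim N^{3\kappa/2}$, giving both bounds in the first line of \eqref{eq:NG-claim} for $j=1$; the case $j=2$ follows from Wick's theorem, the leading term being $(\tr\,\cN_S\Gamma)^2$ plus variance terms of the same order $N^{3\kappa}$.

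For the second bound, $\tr\,\cN_S^{j-1}\sum_{k\in S}\cN_k^2\,\Gamma = \tr\,(e^{-\cB_2}\cN_S e^{\cB_2})^{j-1}\sum_{k\in S}(e^{-\cB_2}\cN_k e^{\cB_2})^2\Gamma_0$; expanding $e^{-\cB_2}\cN_k e^{\cB_2}$ in the modes $\pm k$ only and applying Wick's theorem, the dominant term in $\sum_{k\in S}\tr\,(e^{-\cB_2}\cN_k e^{\cB_2})^2\Gamma_0$ comes from the disconnected contraction and equals roughly $\sum_{k\in S}(\sigma_k^2 + (\gamma_k^2+\sigma_k^2)n_k)^2 \lesssim N^{3\kappa/2}$ (each summand $O(1)$, cardinality $N^{3\kappa/2}$), while the extra $\cN_S^{j-1}$ factor contributes another $N^{(j-1)3\kappa/2}$; the parameter $\eps$ only appears because $|S|$ is controlled by $N^{3\kappa/2+O(\eps)}$ rather than exactly $N^{3\kappa/2}$, leaving slack in the stated bound $N^{j(3\kappa/2+2\eps)}$. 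The main obstacle is purely bookkeeping: organizing the Wick expansion of products of the conjugated number operators and checking that all non-leading contractions (connected terms, cross terms mixing $\gamma$ and $\sigma$) are genuinely lower order, uniformly in the temperature range $0\le T\le CN^{-2+3\kappa}$, including the limiting case $T=0$ where $n_p = 0$ and one must verify the lower bounds survive from the $\sigma_p^2$ terms alone.
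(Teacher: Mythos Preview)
Your overall strategy matches the paper's: reduce to the explicit Gibbs state $\Gamma_0$ via the Bogoliubov relations \eqref{eq:bogo2}, use the quasi-free structure (Wick/ideal-gas formulas) for $\Gamma_0$, and estimate the resulting sums by rescaling $p\mapsto p/N^{\kappa/2}$ and approximating with integrals. The qualitative parts (positivity, normalization, support on $\mathrm{Ran}\,\mathds 1_{\{\cN_{S^c}=0\}}$, and parity) are fine; one small slip is that $\cB_2$ does \emph{not} commute with $\cN_k+\cN_{-k}$, only with its parity projection, but your parenthetical remark shows you have the right picture.

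The genuine gap is in the quantitative analysis. Your claims that ``$\sigma_p^2\sim 1$ uniformly'' on $S$ and that ``$\beta_p$ is bounded below, so $n_p\lesssim 1$'' are false at the lower edge of the shell. Recall $S=\{N^{\kappa/2-\eps}<|p|\le N^{\kappa/2+\eps}\}$. After rescaling $q=p/N^{\kappa/2}$ one has
\[
\gamma_p^2+\sigma_p^2=\frac{q^2+8\pi\mathfrak a}{|q|\sqrt{q^2+16\pi\mathfrak a}}\sim \frac{C}{|q|}\quad\text{as }q\to 0,
\]
so at $|q|\sim N^{-\eps}$ one gets $\sigma_p^2\sim N^{\eps}$, not $O(1)$. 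Likewise, at the top of the temperature range $T\sim N^{-2+3\kappa}$, the effective inverse temperature is $\beta_p\sim C^{-1}|q|$ and hence $n_p\sim |q|^{-1}\sim N^{\eps}$ at the lower edge. For the first line of \eqref{eq:NG-claim} this infrared blow-up is harmless because the integrand $\sigma_p^2\,q^2\,\dd q$ is still integrable near $0$; but for the second line it is exactly the mechanism that forces the extra factor $N^{2\eps}$. Indeed $\tr\,\cN_p^2\Gamma_0=2n_p^2+n_p\sim |q|^{-2}$ at small $q$, and combined with $(\gamma_p^2+\sigma_p^2)^2\sim |q|^{-2}$ the rescaled integrand behaves like $|q|^{-4}$, which is \emph{not} integrable at $0$; the paper handles this by using the crude uniform bound $\tr\,\cN_p^2\Gamma_0\lesssim N^{2\eps}$ on the region $N^{-\eps}\le |q|\le c$, which is precisely where the $2\eps$ in the exponent originates. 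Your attribution of the $\eps$-loss to the cardinality of $S$ is therefore incorrect, and the assertion ``each summand $O(1)$'' in your treatment of $\sum_k\cN_k^2$ fails. The fix is exactly this infrared analysis; once you incorporate it, your sketch coincides with the paper's proof.
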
 

\begin{proof} 
The fact that $\Gamma \geq 0$, $\tr \Gamma = 1$, $\Gamma = \mathds{1}_{\{\cN_{S^c} = 0\}} \Gamma \mathds{1}_{\{\cN_{S^c} = 0\}}$ and $[\Gamma, \mathbb{P}_k] = [\Gamma, \mathbb{Q}_k] = 0$ for all $k \in S$ follows readily from the definition of $\Gamma$. 
It remains to show the inequalities (\ref{eq:NG-claim}). We begin with $j=1$. From the definition of $\Gamma_0$
% \[ \Gamma_0 = Z^{-1} \mathds{1}_{\{\cN_{S^c} = 0\}} \exp \big( -\frac{1}{T N^{2-2\kappa}} \sum_{p \in S} \sqrt{|p|^4 + 16 \pi \mathfrak{a} N^\kappa p^2} \, a_p^* a_p \big)  \]
a standard ideal gas computation gives
\begin{equation}\label{eq:ideal} \begin{split} 
\tr \cN_p \Gamma_0 &= \frac{1}{\exp \big( \frac{1}{TN^{2-2\kappa}} \sqrt{|p|^4 + 16 \pi \mathfrak{a} N^\kappa p^2} \big) -1}  \\ 
\tr \cN_p^2 \Gamma_0 &= \frac{2}{\Big[ \exp \big( \frac{1}{TN^{2-2\kappa}}\sqrt{|p|^4 + 16 \pi \mathfrak{a} N^\kappa p^2}\big)  -1 \Big]^2} + \frac{1}{\exp \big( \frac{1}{TN^{2-2\kappa}}\sqrt{|p|^4 + 16 \pi \mathfrak{a} N^\kappa p^2}\big) -1}  \end{split} \end{equation}  
and, by symmetry, 
\[ \tr a_p^* a_{-p}^* \Gamma_0 = \tr a_p a_{-p} \Gamma_0 = 0 \]
for all $p \in S$. With (\ref{eq:bogo2}), we find 
\begin{equation}\label{eq:NG2} \begin{split} 
\tr \cN_S \Gamma = \; &\sum_{p\in S} \tr \, (\gamma_p a_p^* + \sigma_p a_{-p}) ( \gamma_p a_p + \sigma_p a_{-p}^*) \Gamma_0 \\
% = \sum_{p \in S} (\gamma_p^2 + \sigma_p^2) \tr \cN_p \Gamma_0 + \sum_{p \in S} \gamma_p \sigma_p %\tr  (a_p^* a_{-p}^* + a_p a_{-p} ) \Gamma_0 + \sum_{p \in S} \sigma_p^2 
%\\ = \; &\sum_{p \in S} \frac{\gamma_p^2 + \sigma_p^2}{\exp \big(\frac{1}{TN^{2-2\kappa}} \sqrt{|p|^4 + 16 %\pi \mathfrak{a} N^\kappa p^2} \big) -1} + \sum_{p \in S} \sigma_p^2 \\
= \; &\sum_{p \in S} \frac{p^2+8\pi\mathfrak{a} N^\kappa}{\sqrt{|p|^4 + 16 \pi \mathfrak{a} N^\kappa p^2}} \frac{1}{\exp \big(\frac{1}{TN^{2-2\kappa}} \sqrt{|p|^4 + 16 \pi \mathfrak{a} N^\kappa p^2} \big) -1} \\ &\hspace{5cm} + \frac{1}{2} \sum_{p \in S} \Big[ \frac{p^2+8\pi\mathfrak{a} N^\kappa}{\sqrt{|p|^4 + 16 \pi \mathfrak{a} N^\kappa p^2}} - 1 \Big].
\end{split} \end{equation}  
Scaling $p \mapsto p/N^{\kappa/2}$, interpreting the sums as Riemann sums, approximating them with integrals over $\bR^3$, and using the restriction $T N^{2-3\kappa} \leq C$, we conclude that $cN^{3\kappa/2} \leq \tr \cN_S \Gamma \leq C N^{3\kappa/2}$ (both terms are positive, the second term is of order $N^{3\kappa/2}$, independently of $T$ and the first term is at most of order $N^{3\kappa/2}$, for $T N^{2-3\kappa} \leq C$). 

To prove the second bound in (\ref{eq:NG-claim}), we proceed similarly. Also here, we use the fact that only observables preserving the number of particles have non-trivial expectation w.r.t. $\Gamma_0$.
\begin{equation}\label{eq:N2S} \begin{split} 
\tr \, \sum_{p \in S} \cN_p^2 \Gamma =\; & \sum_{p \in S} \tr \, (\gamma_p a_p^* + \sigma_p a_{-p}) ( \gamma_p a_p + \sigma_p a_{-p}^*) (\gamma_p a_p^* + \sigma_p a_{-p}) ( \gamma_p a_p + \sigma_p a_{-p}^*) \Gamma_0 \\ 
\lesssim \; & \sum_{p \in S} (\gamma_p^2 + \sigma_p^2)^2  \, \tr (\cN_p^2 + \cN_{-p}^2) \Gamma_0 + \sum_{p \in S} \sigma_p^2 (\gamma_p^2 + \sigma_p^2) \\ 
\lesssim \; & \sum_{p \in S} \frac{\big[ p^2+8\pi\mathfrak{a} N^\kappa \big]^2}{|p|^4 + 16 \pi \mathfrak{a} N^\kappa p^2} \, \tr \, \cN_p^2 \Gamma_0 \\ &+ \frac{1}{2} \sum_{p \in S} \Big[ \frac{p^2+8\pi\mathfrak{a} N^\kappa}{\sqrt{|p|^4 + 16 \pi \mathfrak{a} N^\kappa p^2}} - 1 \Big] \frac{p^2+8\pi\mathfrak{a} N^\kappa}{\sqrt{|p|^4 + 16 \pi \mathfrak{a} N^\kappa p^2}}
%\left[ \frac{2}{\big[ \exp \big(\frac{1}{TN^{2-2\kappa}} \sqrt{|p|^4 + 16 \pi \mathfrak{a} N^\kappa p^2} \big) %-1\big]^2}  +   \frac{1}{ \exp \big(\frac{1}{TN^{2-2\kappa}} \sqrt{|p|^4 + 16 \pi \mathfrak{a} N^\kappa p^2} 
%\big) -1} \right] \\ &+ \frac{1}{2} \sum_{p \in S} \Big[ \frac{p^2+8\pi\mathfrak{a} N^\kappa}{\sqrt{|p|^4 + 16 \pi %\mathfrak{a} N^\kappa p^2}} - 1 \Big] \frac{p^2+8\pi\mathfrak{a} N^\kappa}{\sqrt{|p|^4 + 16 \pi \mathfrak{a} %N^\kappa p^2}}
 \end{split} \end{equation} 
Inserting (\ref{eq:ideal}), we can again scale $p \mapsto p/ N^{\kappa/2}$ and we can approximate with an integral. Compared with (\ref{eq:NG2}), however, the singularity at $p/N^{\kappa/2} \simeq 0$ is more severe since there $\tr \cN_p^2 \Gamma_0 \simeq (p/N^{\kappa/2})^{-2}$. For this reason, in the region $N^{-\eps} \leq |p| / N^{\kappa/2} \leq c$, we estimate $\tr \cN_p^2 \Gamma_0 \lesssim N^{2\eps}$. We conclude that $\sum_{p \in S} \tr \cN_p^2 \Gamma \lesssim N^{3\kappa/2 + 2\eps}$. 

Let us now consider the case $j=2$. We have
\begin{equation}\label{eq:NS2} \begin{split} \tr \cN_S^2 \Gamma = \sum_{p,q \in S} \tr \; (\gamma_p a_p^* + \sigma_p a_{-p})(\gamma_p a_p + \sigma_p a_{-p}^*) (\gamma_q a_q^* + \sigma_q a_{-q} ) (\gamma_q a_q + \sigma_q a_{-q}^*) \Gamma_0 \end{split} \end{equation} 
Since the Gibbs state $\Gamma_0$ factorizes, we find that $\tr \cN_S^2 \Gamma$ is the same as $(\tr \cN_S \Gamma )^2$, up to terms that are associated with $p = q$ on the right-hand side of (\ref{eq:NS2}) and that can be handled as we did in (\ref{eq:N2S}), producing errors of order $N^{3\kappa/2+2\eps} \ll N^{3\kappa}$. We conclude that $cN^{3\kappa} \leq \tr \cN_S^2 \Gamma \leq C N^{3\kappa}$. The second bound in (\ref{eq:NG-claim}) for $j=2$ can be proven similarly; we leave the details to the reader.
\end{proof}
%proof
%
%entropy
%
%energy
%
%conclusion
\subsection{Definition of \texorpdfstring{$N_0$}{N\textunderscore 0} and proof of Proposition \ref{prop:loc-bound}}

From (\ref{eq:G-def}), we are led to the definition   
\begin{equation} \label{eq:GammaN} 
\Gamma_N = \weyl e^{\cB_1} T_c \Gamma T_c^* e^{-
\cB_1} \weyl^*= \weyl e^{\cB_1} T_c e^{\cB_2} \, \Gamma_0 \,  e^{-\cB_2} T_c^* e^{-
\cB_1} \weyl^*
\end{equation}
as trial state for the Hamiltonian $\cH_N$ on the small, rescaled box $\Lambda$. In the next proposition, we estimate the expected number of particles in the state $\Gamma_N$. Its proof is deferred to Section \ref{sec:propN}, because it makes use of some properties of $T_c$ that will be discussed in Section \ref{sec:cubic}.
\begin{proposition}\label{prop:NN2} 
Let $\Gamma_N$ be given by (\ref{eq:GammaN}). Then there exists a constant $C>0$ such that for $\vep$  chosen as in (\ref{eq:mom-sets}) we have
\begin{align}
N_0 + \tr \cN_S \Gamma \leq \tr \cN  \Gamma_N \leq N_0 + \tr \cN_S \Gamma + CN^{3\kappa/2-\vep}. \label{eq:NN2}
\end{align}
\end{proposition}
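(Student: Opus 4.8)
The plan is to compute $\tr\,\cN\,\Gamma_N$ by conjugating the number operator $\cN=\sum_{p\in\Lambda^*}a_p^*a_p$ through the three transformations that build $\Gamma_N$ in \eqref{eq:GammaN} and then taking the expectation in $\Gamma$. By cyclicity of the trace, $\tr\,\cN\,\Gamma_N=\tr\,\big(T_c^*e^{-\cB_1}\weyl^*\cN\weyl e^{\cB_1}T_c\big)\Gamma$. First, \eqref{eq:weyl2} gives $\weyl^*\cN\weyl=N_0+\sqrt{N_0}(a_0^*+a_0)+\cN$. Next, since $\cB_1$ acts only on momenta $|p|>N^{\kappa/2+\vep}$, a set disjoint from $\{0\}$ and from $S$, the transformation $e^{\cB_1}$ fixes $a_0,a_0^*$ and $\cN_S$, while \eqref{eq:bogo1}--\eqref{eq:sp_def}, together with $c_p^2=1+s_p^2$ and the symmetry $p\mapsto-p$, yield
\begin{equation*}
e^{-\cB_1}\cN e^{\cB_1}=\cN+2\sum_{|p|>N^{\kappa/2+\vep}}s_p^2\,a_p^*a_p+\sum_{|p|>N^{\kappa/2+\vep}}s_p^2+\sum_{|p|>N^{\kappa/2+\vep}}c_ps_p\,(a_p^*a_{-p}^*+\hc).
\end{equation*}
Since $T_c$ commutes with $a_0,a_0^*$ and $\Gamma$ keeps the zero mode in the vacuum, the term $\sqrt{N_0}\tr(a_0^*+a_0)T_c\Gamma T_c^*$ vanishes, and inserting the display above we obtain the identity
\begin{equation*}
\tr\,\cN\,\Gamma_N=N_0+\sum_{|p|>N^{\kappa/2+\vep}}s_p^2+\tr\,\cN\,T_c\Gamma T_c^*+\tr\Big(2\sum_{|p|>N^{\kappa/2+\vep}}s_p^2\,a_p^*a_p+\sum_{|p|>N^{\kappa/2+\vep}}c_ps_p(a_p^*a_{-p}^*+\hc)\Big)T_c\Gamma T_c^*.
\end{equation*}

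The constant term equals $N_0^2\sum_{|p|>N^{\kappa/2+\vep}}\varphi_p^2$; by \cref{lem:vphi_prop} it is $\leq CN^{3\kappa/2-\vep}$, and using the pointwise lower bound $|\varphi_p|\gtrsim|p|^{-2}N^{-1+\kappa}$ valid on a macroscopic range of momenta (obtained by analysing the box scattering equation \eqref{eq:vphi}, see \cref{app:scat}) together with $N_0\geq N-CN^{3\kappa/2}\geq N/2$, it is also $\geq cN^{3\kappa/2-\vep}$; this is the origin of the term $CN^{3\kappa/2-\vep}$ in \eqref{eq:NN2}. For $\tr\,\cN\,T_c\Gamma T_c^*$ I split $\cN=\cN_S+\cN_H$: since $\Gamma$ is supported on $S$ and $T_c$ creates excitations only in $H$, no momentum in the gap $N^{\kappa/2+\vep}<|p|\leq N^{1-\kappa-\vep}$ is occupied in $T_c\Gamma T_c^*$. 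For the $S$-part I use that $\cM_k$ from \eqref{eq:def_Mx} satisfies $T_c^*\cM_kT_c=\cM_k$ (see the discussion preceding \cref{lem:Tc_main}) and that $\tr\,\cM_k\Gamma=\tr\,\cN_k\Gamma$, since $\Gamma$ carries no high momenta, whence
\begin{equation*}
\tr\,\cN_S\,T_c\Gamma T_c^*=\tr\,\cN_S\,\Gamma-\tfrac12\sum_{k\in S}\sum_{t\in H_k}\tr\,\cN_{-t}\cN_{t+k}\,T_c\Gamma T_c^*.
\end{equation*}

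It remains to show that $\mathcal E:=\tr\,\cN\,\Gamma_N-N_0-\tr\,\cN_S\,\Gamma$ satisfies $0\leq\mathcal E\leq CN^{3\kappa/2-\vep}$. Collecting the above, $\mathcal E$ is the sum of the constant $\sum_{|p|>N^{\kappa/2+\vep}}s_p^2$, of the non-negative terms $\tr\,\cN_H\,T_c\Gamma T_c^*$ and $2\sum_{|p|>N^{\kappa/2+\vep}}s_p^2\,\tr\,a_p^*a_p\,T_c\Gamma T_c^*$, of the anomalous term $\sum_{|p|>N^{\kappa/2+\vep}}c_ps_p\,\tr(a_p^*a_{-p}^*+\hc)\,T_c\Gamma T_c^*$, minus $\tfrac12\sum_{k,t}\tr\,\cN_{-t}\cN_{t+k}\,T_c\Gamma T_c^*$. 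To estimate the errors I invoke the properties of $T_c$ from \cref{sec:cubic} and of $\Gamma$ from \cref{lm:G-prop}: the Remark following \cref{lem:TLc_main} gives $\tr\,\cN_H^j\,T_c\Gamma T_c^*\lesssim N^{j(-2+9\kappa/2+\vep)}$ for $j=1,2$; combined with $\tfrac12\sum_{k,t}\cN_{-t}\cN_{t+k}\leq\tfrac12\cN_H^2$ and $s_p\leq N_0\varphi_p$ (\cref{lem:vphi_prop}), all of $\tr\,\cN_H\,T_c\Gamma T_c^*$, $\tfrac12\sum_{k,t}\tr\,\cN_{-t}\cN_{t+k}\,T_c\Gamma T_c^*$ and $2\sum_{|p|>N^{\kappa/2+\vep}}s_p^2\,\tr\,a_p^*a_p\,T_c\Gamma T_c^*$ are $\ll N^{3\kappa/2-\vep}$ in the range $1/2<\kappa<8/15-2\vep/3$ considered here. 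For the anomalous term a naive Cauchy--Schwarz fails, since the constant ``$+1$'' from $a_{-p}a_{-p}^*$ would contribute at order $N_0\|\varphi\|_1\sim N$; instead, because $T_c$ creates high momenta only in tight $k$-connection pairs $(-r,r+k)$ and never in pairs $(\pm p)$, the expectations $\tr\,a_p^*a_{-p}^*\,T_c\Gamma T_c^*$ with $p\in H$ are a genuine two-connection quantity, and the analysis of \cref{sec:cubic} (together with \cref{lem:Tc_main}) yields $\big|\sum_{|p|>N^{\kappa/2+\vep}}c_ps_p\,\tr(a_p^*a_{-p}^*+\hc)\,T_c\Gamma T_c^*\big|\ll N^{3\kappa/2-\vep}$. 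The upper bound on $\mathcal E$ then follows by dropping the negative term, while the lower bound $\mathcal E\geq0$ follows because the only negative contributions are $\ll N^{3\kappa/2-\vep}$ and are dominated by the constant $\sum_{|p|>N^{\kappa/2+\vep}}s_p^2\geq cN^{3\kappa/2-\vep}$; this establishes \eqref{eq:NN2}.

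The main obstacle is the control of $\tr\,\cN_H^j\,T_c\Gamma T_c^*$ and of the anomalous high-momentum pairings $\tr\,a_p^*a_{-p}^*\,T_c\Gamma T_c^*$, that is, showing that $T_c$ acting on the thermal state $\Gamma$ creates only a small, controlled number of high-momentum excitations, organised in essentially disjoint $k$-connection pairs. This rests on the detailed analysis of $T_c$ carried out in \cref{sec:cubic}, relying on the explicit formula for $e^{\cB_k}$ and its commutator structure, which is precisely why the proof of this proposition is placed after that section.
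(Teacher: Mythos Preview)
Your overall decomposition matches the paper's: conjugate $\cN$ through $\weyl$, $e^{\cB_1}$, $T_c$ and evaluate each piece in $\Gamma$. Two points, however, are handled differently in the paper, and in one of them your argument has a real gap.

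\textbf{The anomalous term.} You correctly identify that the dangerous piece is $\sum_{|p|>N^{\kappa/2+\vep}} c_ps_p\,\tr(a_p^*a_{-p}^*+\hc)\,T_c\Gamma T_c^*$ and that a naive Cauchy--Schwarz fails. But you then only assert that ``the analysis of Section~\ref{sec:cubic} (together with Lemma~\ref{lem:Tc_main})'' shows it is $\ll N^{3\kappa/2-\vep}$; Lemma~\ref{lem:Tc_main} is about the Hamiltonian and does not give this. The paper shows this term is \emph{exactly zero} using Lemma~\ref{lm:mono}: in $T_c\Gamma T_c^*$ every high-momentum particle is paired with a unique $S$-partner (the projections $\chi_p$), and applying $a_{-p}a_p$ to such a state destroys the particle at $p$ and hence leaves its former $S$-partner $-p+x$ without any partner, so that projecting back with $\chi_{-p+x}$ gives zero. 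This is the precise mechanism behind your correct intuition that $(p,-p)$ is never an $S$-connection; you should invoke Lemma~\ref{lm:mono} (for both $p$ and the partner momentum) rather than Lemma~\ref{lem:Tc_main}.

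\textbf{The lower bound.} Your route via $\sum s_p^2\geq cN^{3\kappa/2-\vep}$ is workable (using \eqref{eq:vphi_approx}, not Appendix~\ref{app:scat}, for the pointwise lower bound on $|\vphi_p|$), but unnecessarily delicate: you must balance this constant against the connection count and the anomalous piece. The paper avoids all of this with the conservation law \eqref{eq:TN-inv}, namely $T_c^*(\cN_S+\tfrac12\cN_H)T_c=\cN_S+\tfrac12\cN_H$, which on the range of $\Gamma$ (where $\cN_H=0$) gives $T_c^*\cN T_c\geq T_c^*(\cN_S+\tfrac12\cN_H)T_c=\cN_S$ in one line. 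This, combined with the vanishing of the anomalous term and of $\tr(a_0^*+a_0)\Gamma$, yields $\tr\cN\Gamma_N\geq N_0+\tr\cN_S\Gamma$ directly.
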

Prop. \ref{prop:NN2} motivates the following choice of $N_0 \in \mathbb{R}_+$
\begin{equation}\label{eq:N0-def} 
N_0:= N  - \tr \cN_S \Gamma + N^{3\kappa/2-(\kappa-1/2)}.
\end{equation}
From Lemma \ref{lm:G-prop}, we conclude that $c N^{3\kappa/2} \leq N - N_0 \leq C N^{3\kappa/2}$ for all $\kappa > 1/2$ and $N$ large enough. The term $N^{3\kappa/2-(\kappa-1/2)}$, which is small compared with $\tr \cN_S \Gamma$ for $\kappa > 1/2$, will be used to make sure that the lower bound in \eqref{eq:trNGN} holds true. We are now set to show \cshref{prop:loc-bound}. 

\begin{proof}[Proof of \cshref{prop:loc-bound}]
Combining (\ref{eq:N0-def}) with \cshref{prop:NN2}, we obtain constants $c_1, c_2 > 0$ such that 
\begin{equation} \label{eq:NGN} N + c_1 N^{3\kappa/2 - (\kappa -1/2)} \leq \tr \cN \Gamma_N \leq N + c_2 N^{3\kappa/2-\vep}.
\end{equation} 
Furthermore, we observe that with the definition (\ref{eq:G-def}) of the density matrix $\Gamma$ and the choice  \eqref{eq:N0-def} of $N_0$, all assumptions in Lemma \ref{lem:Tc_main} are satisfied. Thus, inserting \eqref{eq:diago} into \Cref{lem:Tc_main} we find
\begin{equation} \begin{split}\label{eq:ener-G}
 \tr  \cH_N &\Gamma_N - T N^{2-2\kappa} S (\Gamma_N) \\ 
\leq \; &4 \pi \mathfrak{a} N^{1+\kappa} + \frac{1}{2} \sum_{p\in S} \big[\sqrt{|p|^4 + 16 \pi \mathfrak{a} N^\kappa p^2} - p^2 - 8\pi \mathfrak{a} N^\kappa + \frac{(8\pi \mathfrak{a} N^\kappa)^2}{2p^2} \big] \\  &+ \tr \sum_{p \in S} \sqrt{|p|^4 + 16\pi \mathfrak{a} N^\kappa p^2} \, a_p^*a_p \Gamma_0 -  T N^{2-2\kappa} S (\Gamma_0)  + 8\pi\mathfrak{a} N^{5\kappa/2-(\kappa-1/2)} + \delta(\Gamma).
\end{split} \end{equation}
Combining (\ref{eq:deltaG}) and Lemma \ref{lm:G-prop}, we conclude that
\[ \delta (\Gamma) \lesssim N^{5\kappa/2 - \eps/2} \]
if $1/2 < \kappa < 14/27$ and $\vep > 0$ is small enough. As in the proof of \Cref{lm:G-prop}, we scale $p \mapsto p/N^{-\kappa/2}$ $p \mapsto q = p/N^{\kappa/2}$ in the first sum on the r.h.s. of  \eqref{eq:ener-G}. We find
\begin{equation}\label{eq:sum1} \begin{split} 
\frac{1}{2} \sum_{p \in S} & \big[\sqrt{|p|^4 + 16 \pi \mathfrak{a} N^\kappa p^2} - p^2 - 8\pi \mathfrak{a} N^\kappa + \frac{(8\pi \mathfrak{a} N^\kappa)^2}{2p^2} \big]  \\ 
= &\frac{N^{5\kappa/2}}{(2\pi)^3} \frac{(2\pi)^3}{N^{3\kappa/2}} \sum_{\substack{q \in N^{-\kappa/2} 2\pi  \bZ^3: \\ N^{-\eps} \leq |q| \leq N^\eps}}  \big[ \sqrt{|q|^4 + 16 \pi \mathfrak{a} q^2}  - q^2 - 8\pi \mathfrak{a} + \frac{(8\pi \mathfrak{a})^2}{2q^2} \big] \\
\leq \; &\frac{N^{5\kappa/2}}{(2\pi)^3} \int_{\bR^3} \big[  \sqrt{|q|^4 + 16 \pi \mathfrak{a} q^2}  - q^2 - 8\pi \mathfrak{a} + \frac{(8\pi \mathfrak{a})^2}{2q^2} \big]  dq  + C N^{5\kappa/2-\eps} \\
= \; &4 \pi \mathfrak{a} N^{5\kappa/2}  \frac{128 \mathfrak{a}^{3/2}}{15 \sqrt{\pi}}  + C N^{5\kappa/2 - \eps}. \end{split} \end{equation} 
Here, we approximated the Riemann sum with the corresponding integral over $\bR^3$, which can be computed explicitly, see, for example, \cite[eq.\,(8.11)]{HabHaiNamSeiTri-23} for more details.

Next, we consider the term in the second line of \eqref{eq:ener-G}. From the choice (\ref{eq:Gamma0}) of $\Gamma_0$ we conclude, by the Gibbs principle, that
\begin{align*}
\tr \sum_{p \in S} &\sqrt{|p|^4 + 16\pi \mathfrak{a} N^\kappa p^2} \, a_p^*a_p \Gamma_0 -  T N^{2-2\kappa} S (\Gamma_0) \\ &= - T N^{2-2\kappa} \log Z = T N^{2-2\kappa} \sum_{p \in S} \log \left[ 1 - \exp \Big( -\frac{1}{TN^{2-2\kappa}} \sqrt{|p|^4 + 16 \pi \mathfrak{a} N^\kappa p^2} \Big) \right],
\end{align*}
where the last equality is a standard ideal gas computation as in \eqref{eq:ideal}. As before we rescale $p\mapsto p /N^{\kappa/2}$ and replace the resulting Riemann sum by an integral, while
keeping in mind the condition $T N^{2-3\kappa} \leq C$. We obtain
\begin{align}\label{eq:sum2} 
& T N^{2-2\kappa} \sum_{p\in S} \log \Big[ 1- \exp \Big( -\frac{1}{TN^{2-2\kappa}} \sqrt{|p|^4 + 16 \pi \mathfrak{a} N^\kappa p^2} \Big) \Big] \nn
%\\ &\leq  T N^{2-2\kappa} N^{3\kappa/2} (2\pi)^{-3} \int_{\R^3} \log \Big[ 1- \exp \Big( - \frac{1}{TN^{2-3\kappa}} \sqrt{|p|^4 + 16 \pi \mathfrak{a} p^2} \Big) \Big] dq + C N^{5\kappa/2 - \eps} \nn
\\
&\qquad \leq T^{5/2}N^{5-5\kappa} (2\pi)^{-3} \int_{\R^3} \log \Big[ 1- \exp \Big( - \sqrt{|q|^4 + \frac{16 \pi  \mathfrak{a} q^2} { T N^{2-3\kappa}}} \; \Big) \Big] dq + C N^{5\kappa/2-\eps} ,
\end{align} 
see \cite[eq.\,(9.13)]{HabHaiNamSeiTri-23} for more details.
Inserting \eqref{eq:sum2} and \eqref{eq:sum1} into \eqref{eq:ener-G}, we arrive at 
\begin{equation}\label{eq:en-GN} 
\begin{split} 
&\tr \cH_N \Gamma_N - T N^{2-2\kappa} S(\Gamma_N) \\ & \quad \leq \; 4\pi \mathfrak{a} N^{1+\kappa} \left( 1 + \frac{128}{15 \sqrt{\pi}} (N^{-2+3\kappa} \mathfrak{a}^3 )^{1/2} \right) \\ & \qquad + T^{5/2} N^{5-5\kappa} (2\pi)^{-3} \int_{\R^3} \log \Big[ 1- \exp \Big( - \sqrt{|q|^4 + \frac{16 \pi  \mathfrak{a} q^2} { T N^{2-3\kappa}}} \; \Big) \Big] dq + C N^{5\kappa/2-\eps/2} \, .
\end{split} \end{equation} 

Next, we scale back to the box $\Lambda_L = [-L/2, L/2]^3$, with $L = \rho^{-\gamma}$, for some $\gamma > 1$. We recall the choice $N = \rho L^3 = \rho^{1-3\gamma}$ which translates into $L = N^{1-\kappa}$, with $\kappa = (2\gamma-1)/(3\gamma -1)$. We observe that the condition $\gamma > 1$ is equivalent to $\kappa > 1/2$. We define the unitary operator $\mathcal{U}_L : \cF (\Lambda) \to \cF (\Lambda_L)$,  $(\mathcal{U}_L \psi)^{(n)} (x_1, \dots , x_n) = L^{-3n/2} \psi^{(n)} (x_1/ L , \dots , x_n / L)$, for all $n \in \bN$ and all $\Psi = \{ \psi^{(n)} \}_{n\in \bN} \in \cF (\Lambda)$ and we remark that the Hamiltonian (\ref{eq:cHN}) satisfies $L^{-2} \cU_L \cH_N \cU^*_L = \cH$. Therefore, a good trial state for $\cH$ is given by the density matrix $\Gamma_L = \cU_L \Gamma_N \cU_L^*$ on $\cF (\Lambda_L)$. Since $\tr \cN \Gamma_L = \tr \cN \Gamma_N$, it follows from (\ref{eq:NGN}) that 
\[  c_1 \rho^{(\gamma+2)/2} \leq \frac{1}{L^3} \tr \cN \Gamma_L - \rho \leq c_2 \rho^{3/2} \]
in accordance with \eqref{eq:prop3-NN2}. Here we used the facts that $N^{\kappa/2} / L = \rho^{1/2}$, that $N^{-2+3\kappa} = \rho$ and that $\kappa = (2\gamma-1)/(3\gamma-1)$.
Moreover, from (\ref{eq:en-GN}) we obtain
\[ \begin{split} 
& \frac{1}{L^3} \Big[ \tr \cH \Gamma_L - T S (\Gamma_L) \Big] = \; L^{-5} \big[ \tr \cH_N \Gamma_N - T N^{2-2\kappa} S (\Gamma_N)  \Big]  
\\ &\leq \;  4\pi \mathfrak{a} \rho^2 \Big( 1 + \frac{128}{15 \sqrt{\pi}} (\rho \mathfrak{a}^3)^{1/2} \Big) 
+ \frac{T^{5/2}}{(2\pi)^3}  \int_{\bR^3} \log \Big[ 1-  \exp \big( - \sqrt{|q|^4 + 16\pi \mathfrak{a} \rho q^2 / T} \big) \Big]  dq + \rho^{5/2 + \epsilon} 
\end{split} 
\] 
for a sufficiently small $\epsilon > 0$ and for all $T \leq C \rho \mathfrak{a}$. This concludes the proof of \cshref{prop:loc-bound}.
\end{proof}

\section{Quadratic Transformation on Large Momenta} 
\label{sec:B1} 

The goal of this section is to prove \Cref{lem:TLc_main}. 
Throughout, we will assume $0 \leq N - N_0 \lesssim N^{3\kappa/2}$ as in \Cref{lem:TLc_main}. 
%Moreover, we fix $\vep$ as given in \eqref{eq:mom-sets}.
With this aim, we consider separately the action of the Bogoliubov transformation $e^{\cB_1}$, determined by \eqref{eq:bogo1}, on each term in the excitation Hamiltonian \eqref{eq:Weyl}. We start with the diagonal term.
\begin{lemma} \label{lem:TLc_kinetic}
We have 
\begin{align} \label{eq:kin-1} 
e^{-\cB_1} \sum_{p \in \Lambda^*} & \left(p^2 +N_0\hat{V}_N(0) + N_0 \hat{V}_N(p)\right) a_p^*a_p e^{\cB_1} \nn
\\
&= 2\hat{V}(0) N^\kappa \cN + \sum_{p \in \Lambda^*} p^2a_p^*a_p + \sum_{p\in\Lambda^*} p^2 s_p  \big( a_p^*a_{-p}^* + \hc \big) + \sum_{p \in \Lambda^*} p^2s_p^2 + \mathcal{E}_{1}^{(\rm diag)},
\end{align} 
with
\begin{align*}
\pm \mathcal{E}_{1}^{(\rm diag)} \lesssim N^{-2+3\kappa} \sum_{p \in \Lambda^*}p^2a_p^*a_p + N^{\kappa - \vep/2}\cN + N^{5\kappa/2 - \vep/2},
\end{align*}
for all $0 < \kappa < 2/3$ and with $\vep>0$ as introduced in \eqref{eq:mom-sets}.
\end{lemma}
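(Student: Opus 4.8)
The plan is to prove \eqref{eq:kin-1} by a direct computation: conjugate each term $a_p^* a_p$ by $e^{\cB_1}$ according to \eqref{eq:bogo1} and reorganize. Write $\omega_p := p^2 + N_0\hat V_N(0) + N_0\hat V_N(p)$. Using $e^{-\cB_1} a_p^* e^{\cB_1} = c_p a_p^* + s_p a_{-p}$, one expands $e^{-\cB_1} a_p^* a_p e^{\cB_1} = (c_p a_p^* + s_p a_{-p})(c_p a_p + s_p a_{-p}^*)$, normal orders via $a_{-p} a_{-p}^* = a_{-p}^* a_{-p} + 1$, and relabels $p \mapsto -p$ in one of the resulting sums (legitimate since $\omega_p$, $c_p$, $s_p$ are even in $p$, a consequence of the spherical symmetry of $V$ and hence of $\varphi$). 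This gives
\[ e^{-\cB_1} \sum_p \omega_p a_p^* a_p e^{\cB_1} = \sum_p \omega_p (c_p^2 + s_p^2) a_p^* a_p + \Big[ \sum_p \omega_p c_p s_p\, a_p^* a_{-p}^* + \hc \Big] + \sum_p \omega_p s_p^2 . \]
Here I will use $c_p^2 = 1 + s_p^2$ and the fact that $s_p = N_0\varphi_p$ for $|p| > N^{\kappa/2+\vep}$, $s_p = 0$ otherwise, so that by \cref{lem:vphi_prop} $|s_p| \lesssim N^\kappa |p|^{-2} \lesssim N^{-2\vep}$ and hence $0 \leq c_p - 1 \lesssim s_p^2$.

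Next I would isolate the leading contributions. For the diagonal term, $\sum_p \omega_p a_p^* a_p = \sum_p p^2 a_p^* a_p + N_0\hat V_N(0)\cN + N_0\sum_p \hat V_N(p) a_p^* a_p$; replacing $N_0 N^{-1+\kappa}$ by $N^\kappa$ (an error of size $N^{5\kappa/2-1}$ since $0 \leq N - N_0 \lesssim N^{3\kappa/2}$ and $\kappa < 2/3$) and $\hat V(p/N^{1-\kappa})$ by $\hat V(0)$ (an error $\lesssim \min\{1, N^{-1+\kappa}|p|\}$ by Lipschitz continuity of $\hat V$, which holds because $V$ is compactly supported) produces $\sum_p p^2 a_p^* a_p + 2\hat V(0) N^\kappa \cN$ up to errors, while the extra $2\sum_p \omega_p s_p^2 a_p^* a_p$ also goes into the error. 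For the off-diagonal term I would write $\omega_p c_p s_p = p^2 s_p + (N_0\hat V_N(0) + N_0\hat V_N(p)) s_p + \omega_p(c_p-1)s_p$, keep $\sum_p p^2 s_p(a_p^* a_{-p}^* + \hc)$, and estimate the rest after using the box scattering equation \eqref{eq:vphi} to trade the factors $\hat V_N(p)$ for $p^2\varphi_p$ (and conversely), which is what yields the required summability in $p$. Finally $\sum_p \omega_p s_p^2 = \sum_p p^2 s_p^2 + \sum_p (N_0\hat V_N(0) + N_0\hat V_N(p)) s_p^2$, where the second sum is $\lesssim N^\kappa N_0^2 \|\varphi^{(\kappa/2+\vep)}\|_{2}^2 \lesssim N^{5\kappa/2-\vep}$ by \cref{lem:vphi_prop}.

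It then remains to collect all remainders into $\cE_1^{(\rm diag)}$ and bound them. The inputs are the estimates of \cref{lem:vphi_prop}, the cutoff $|p| > N^{\kappa/2+\vep}$ in $s_p$, $N_0 \lesssim N$, and the elementary observation that momenta $p \in H$, i.e.\ $|p| > N^{1-\kappa-\vep}$, satisfy $a_p^* a_p \leq N^{-2+2\kappa+2\vep}\, p^2 a_p^* a_p$, which absorbs their contributions into $\sum_p p^2 a_p^* a_p$. So $2\sum_p \omega_p s_p^2 a_p^* a_p \lesssim N^{\kappa-2\vep}\cN$; the coefficient replacements contribute $\lesssim N^{5\kappa/2-1}\cN + N^{\kappa-\vep}\cN + N^{-2+3\kappa+2\vep}\sum_p p^2 a_p^* a_p$; and the $(c_p-1)$-correction together with the off-diagonal remainder are controlled via quantities such as $\sum_p p^2 |s_p|^3 \lesssim N^{3\kappa}\big\| |p|^{-4}\mathds{1}_{\{|p|>N^{\kappa/2+\vep}\}}\big\|_{\ell^1} \lesssim N^{5\kappa/2-\vep}$ and the $\ell^2$-decay of $\varphi^{(\alpha)}$. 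For $0 < \kappa < 2/3$ and $\vep$ as in \eqref{eq:mom-sets} all of these fit into the stated bound on $\cE_1^{(\rm diag)}$. The main obstacle is precisely the off-diagonal pair-excitation remainder generated by the shift potential $N_0(\hat V_N(0) + \hat V_N(p))$: it is not obviously subleading once paired against $s_p$ and summed, and controlling it requires the scattering-equation rewrite \eqref{eq:vphi} combined with the decay of $\varphi$ past the momentum cutoff (and kinetic-energy domination for genuinely high momenta). The coefficient replacements $N_0 N^{-1+\kappa} \to N^\kappa$ and $\hat V_N(p) \to N^{-1+\kappa}\hat V(0)$ are what force the restriction $\kappa < 2/3$.
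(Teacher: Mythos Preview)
Your approach is essentially the same as the paper's: conjugate by $e^{\cB_1}$ via \eqref{eq:bogo1}, normal order, split into diagonal, off-diagonal and constant pieces, and extract the leading terms. Two points deserve comment.

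First, your Lipschitz bound $|\hat V(p/N^{1-\kappa})-\hat V(0)|\lesssim \min\{1,N^{-1+\kappa}|p|\}$ is weaker than what the paper uses and does not quite reproduce the stated kinetic coefficient. Splitting $N^{-1+2\kappa}|p|\leq \delta p^2+\delta^{-1}N^{-2+4\kappa}$ you cannot simultaneously hit $N^{-2+3\kappa}$ in front of $\sum p^2 a_p^*a_p$ and $N^{\kappa-\vep/2}$ in front of $\cN$; indeed your own accounting produces $N^{-2+3\kappa+2\vep}$ rather than $N^{-2+3\kappa}$. The paper instead exploits that $V$ is even (radially symmetric) to get the second-order estimate $|\hat V_N(p)-\hat V_N(0)|\lesssim p^2 N^{-3+3\kappa}$, which immediately gives $N_0|\hat V_N(p)-\hat V_N(0)|\lesssim N^{-2+3\kappa}p^2$ and hence the exact coefficient in the lemma. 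This is the one ingredient you are missing.

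Second, the scattering equation is not needed for the off-diagonal remainder. The paper simply observes that the coefficient satisfies
\[
N_0\hat V_N(0)c_p+N_0\hat V_N(p)c_p+p^2(c_p-1)\lesssim N^\kappa
\]
(using $0\leq c_p-1\leq s_p^2$ and $p^2 s_p^2\lesssim N^\kappa$), and then a single weighted Cauchy--Schwarz
\[
\pm\, N^\kappa\sum_p s_p\big(a_p^*a_{-p}^*+\hc\big)\lesssim N^{\kappa-\vep/2}\cN+N^{\kappa+\vep/2}\|s\|_\infty^2\cN+N^{\kappa+\vep/2}\|s\|_2^2
\]
finishes the job. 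Your detour through \eqref{eq:vphi} can be made to work, but it is unnecessary here; the ``required summability in $p$'' is already provided by $\|s\|_2^2\lesssim N^{3\kappa/2-\vep}$ and $\|s\|_\infty\lesssim N^{-2\vep}$, without any rewriting.
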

\begin{proof}
Denote $\beta_p = \left(p^2 + N_0\hat{V}_N(0) + N_0\hat{V}_N(p)\right)$. With (\ref{eq:bogo1}) and $c_p^2 - s_p^2=1$, we compute
\begin{align*}
e^{-\cB_1} \sum_{p \in \Lambda^*} \beta_p a_p^*a_p e^{\cB_1} - \sum_{p \in \Lambda^*} \beta_p a_p^*a_p
&= 2 \sum_{p\in \Lambda^*} \beta_p s_p^2 a_p^*a_p  +  \sum_{p\in\Lambda^*} \beta_p c_p s_p \big( a_p^*a_{-p}^* + \hc \big) + \sum_{p \in \Lambda^*} \beta_p s_p^2
\end{align*}
and we obtain
\begin{align} \label{eq:B1_kin_main}
&e^{-\cB_1} \sum_{p \in \Lambda^*} \beta_p a_p^*a_p e^{\cB_1} - 2\hat{V}(0)N^\kappa \cN - \sum_{p\in \Lambda^*} p^2a_p^*a_p - \sum_{p\in \Lambda^*}p^2 s_p (a_p^*a_{-p}^* + \hc) - \sum_{p \in \Lambda^*} p^2s_p^2 \nn
\\
&= \sum_{p \in \Lambda^*} \left(N_0 \hat{V}_N(0) + N_0 \hat{V}_N(p) - 2\hat{V}(0)N^\kappa + 2\beta_ps_p^2 \right) a_p^*a_p 
\\
&+ \sum_{p\in\Lambda^*} \big((N_0 \hat{V}_N(0)c_p + N_0\hat{V}_N(p)c_p + p^2 (c_p-1)\big) s_p \big( a_p^*a_{-p}^* + \hc \big) + \sum_{p \in \Lambda^*} N_0 (\hat{V}_N(0) + \hat{V}_N(p))s_p^2. \nn
\end{align}
Let us denote the term on the right-hand side of the previous equation by $\mathcal{E}_1^{(\text{diag})}$. We obtain \eqref{eq:kin-1} and it remains to estimate the error term. In order to control the diagonal part of $\mathcal{E}_1^{(\text{diag})}$, we use $|\hat{V}_N(p)-\hat{V}_N(0)| \lesssim p^2 N^{-3+3\kappa}$, which easily follows from the fact that $V$ is even, the assumption of \Cref{lem:TLc_main} that $0\leq N-N_0\lesssim N^{3\kappa/2}$, as well as the estimate $\beta_p s_p^2 \lesssim N^{\kappa-2\vep}$ from (\ref{eq:sp_def}) and \Cref{lem:vphi_prop}. We find 
\begin{align} \label{eq:B1_kin_diag}
\pm \sum_{p \in \Lambda^*} & \left(N_0 \hat{V}_N(0) + N_0 \hat{V}_N(p) - 2\hat{V}(0)N^\kappa + 2\beta_ps_p^2 \right) a_p^*a_p  \nn
\\
&\qquad \lesssim N^{-2+3\kappa} \sum_{p \in \Lambda^*} p^2 a_p^*a_p + N^{\kappa - 2\vep}\cN + N^{-1 + 5\kappa/2} \cN.
\end{align}
As for the constant term in $\mathcal{E}_1^{(\text{diag})}$, we have
\begin{align} \label{eq:B1_kin_const}
\pm \sum_{p \in \Lambda^*} N_0 (\hat{V}_N(0) + \hat{V}_N(p))s_p^2 \lesssim N^\kappa \|s\|_2^2 \lesssim N^{5\kappa/2-\vep}.
\end{align}
Finally, we consider the off-diagonal part of $\mathcal{E}_1^{(\text{diag})}$. From $0\leq c_p -1 \leq s_p^2$ we find the inequality $N_0 \hat{V}_N(0)c_p + N_0\hat{V}_N(p)c_p + p^2 (c_p-1) \lesssim N^\kappa$. Thus, Cauchy--Schwarz implies
\begin{align*}
&\pm \sum_{p\in\Lambda^*} \big((N_0 \hat{V}_N(0)c_p + N_0\hat{V}_N(p)c_p + p^2 (c_p-1)\big) s_p \big( a_p^*a_{-p}^* + \hc \big) 
\\
&\qquad \qquad  \lesssim N^\kappa \left(N^{-\vep/2} \cN + N^{\vep/2} \|s\|_\infty^2 \cN + N^{\vep/2} \|s\|_2^2\right) 
\lesssim N^{\kappa - \vep/2} \cN + N^{5\kappa/2-\vep/2}.
\end{align*}
Inserting \eqref{eq:B1_kin_diag}, \eqref{eq:B1_kin_const} and the last estimate into \eqref{eq:B1_kin_main} yields \Cref{lem:TLc_kinetic} (using also the assumption $-2+3\kappa + 4\eps < 0$ from \eqref{eq:mom-sets}).
\end{proof}

Next, we consider the action of $e^{\cB_1}$ on the quadratic term $Q_2$, defined in \eqref{eq:Weyl}. 
\begin{lemma} \label{lem:TLc_Q2}
We have 
\begin{equation} \label{eq:Q2} 
e^{-\cB_1} Q_2 e^{\cB_1} = Q_2 + N_0 \sum_{p\in \Lambda^*} \hat{V}_N(p) s_p + \mathcal{E}_{1}^{(Q_2)}
\end{equation} 
with
\begin{align*}
\pm \mathcal{E}_{1}^{(Q_2)} \lesssim N^{\kappa-2\eps}\cN + N^{5\kappa/2-3\eps},
\end{align*}
for all $0 < \kappa < 2/3$ and with $\vep>0$ as introduced in \eqref{eq:mom-sets}. 
\end{lemma}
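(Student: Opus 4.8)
The plan is to conjugate $Q_2$ directly with the closed-form Bogoliubov rule \eqref{eq:bogo1}; since $\cB_1$ is quadratic, no Duhamel/commutator expansion is needed. Writing $Q_2 = \tfrac{N_0}{2}\sum_{p}\hat{V}_N(p)\big(a_p^*a_{-p}^* + a_{-p}a_p\big)$, inserting $e^{-\cB_1}a_pe^{\cB_1} = c_pa_p + s_pa_{-p}^*$, and using $c_p=c_{-p}$, $s_p=s_{-p}$ (since $\varphi$ is radial) together with the canonical commutation relations to normal order, one obtains the \emph{exact} identity
\begin{equation*}
e^{-\cB_1}Q_2e^{\cB_1} = Q_2 + N_0\sum_{p}\hat{V}_N(p)\,s_p^2\big(a_p^*a_{-p}^* + \hc\big) + 2N_0\sum_p\hat{V}_N(p)\,c_ps_p\,a_p^*a_p + N_0\sum_p\hat{V}_N(p)\,c_ps_p,
\end{equation*}
where $c_p^2 + s_p^2 = 1 + 2s_p^2$ has been used. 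The claimed main term $N_0\sum_p\hat{V}_N(p)s_p$ is then split off from the scalar term by writing $c_ps_p = s_p + (c_p-1)s_p$, and everything else is collected into $\cE_1^{(Q_2)}$.

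The three remaining pieces are estimated using only \Cref{lem:vphi_prop}, the bound $\|\hat{V}_N\|_\infty\lesssim N^{-1+\kappa}$, the hypothesis $N_0\lesssim N$, the bound $c_p\lesssim 1$, and — since $s_p = N_0\varphi_p\1_{\{|p|>N^{\kappa/2+\vep}\}}$ — the resulting estimates $\|s\|_\infty\lesssim N^{-2\vep}$ and $\|s\|_2^2\lesssim N^{3\kappa/2-\vep}$. Using $0\le c_p-1\le s_p^2$, the scalar remainder obeys $\big|N_0\sum_p\hat{V}_N(p)(c_p-1)s_p\big|\lesssim N^\kappa\sum_p s_p^3\le N^\kappa\|s\|_\infty\|s\|_2^2\lesssim N^{5\kappa/2-3\vep}$. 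The diagonal error is bounded termwise by $|N_0\hat{V}_N(p)c_ps_p|\lesssim N^\kappa\|s\|_\infty\lesssim N^{\kappa-2\vep}$, hence is $\lesssim N^{\kappa-2\vep}\cN$. For the off-diagonal error I would apply $\pm(a_p^*a_{-p}^* + \hc)\le \eta\,a_p^*a_p + \eta^{-1}(a_{-p}^*a_{-p}+1)$ with the weight $\eta = N^{2\vep}$, summed against the coefficient $|N_0\hat{V}_N(p)s_p^2|\lesssim N^\kappa s_p^2$; after relabelling $p\mapsto-p$ the diagonal contribution is $\lesssim N^{2\vep}N^\kappa\|s\|_\infty^2\cN\lesssim N^{\kappa-2\vep}\cN$ and the scalar contribution is $\lesssim N^{-2\vep}N^\kappa\|s\|_2^2\lesssim N^{5\kappa/2-3\vep}$, which together yield the bound on $\pm\cE_1^{(Q_2)}$ stated in \eqref{eq:Q2}.

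There is no serious obstacle here; the argument is essentially bookkeeping once the exact identity above is written down. The only delicate point is the choice of the Cauchy--Schwarz weight $\eta = N^{2\vep}$, which is forced by the balance between $\|s\|_\infty^2\simeq N^{-4\vep}$ and $\|s\|_2^2\simeq N^{3\kappa/2-\vep}$ and is exactly what produces the sharp exponents $N^{\kappa-2\vep}$ and $N^{5\kappa/2-3\vep}$ (the naive choice $\eta=1$ would only yield $N^{5\kappa/2-\vep}$ in the scalar error). It is also worth emphasizing that the restriction of $\cB_1$ to momenta $|p|>N^{\kappa/2+\vep}$ is precisely what makes $\|s\|_\infty$ a negative power of $N$; without this cutoff the estimate would fail.
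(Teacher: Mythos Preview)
Your proof is correct and follows essentially the same route as the paper: compute $e^{-\cB_1}Q_2e^{\cB_1}$ exactly via \eqref{eq:bogo1}, split off $N_0\sum_p\hat V_N(p)s_p$ from the scalar using $c_ps_p=s_p+(c_p-1)s_p$, and bound the three remaining pieces with Cauchy--Schwarz, $0\le c_p-1\le s_p^2$, and the $\varphi$-estimates of \Cref{lem:vphi_prop}. The paper's proof is simply a terser version of yours (it states only ``follows from Cauchy--Schwarz, $0\le c_p-1\le s_p^2$ and \Cref{lem:vphi_prop}''); your explicit choice of weight $\eta=N^{2\vep}$ is exactly what is needed to hit the stated exponents, and your remark about why the momentum cutoff in $\cB_1$ is essential is to the point.
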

\begin{proof}
With (\ref{eq:bogo1}), we compute
\[ \begin{split} 
e^{-\cB_1} &Q_2 e^{\cB_1} - Q_2 
\\ &=  N_0 \sum_{p\in \Lambda^*} \hat{V}_N(p) s_p^2  \, (a_p^*a_{-p}^*+ \hc ) +  N_0 \sum_{p\in \Lambda^*} \hat{V}_N(p) c_p s_p \, a_p^*a_{p} + N_0 \sum_{p\in \Lambda^*} \hat{V}_N(p) s_p c_p.
\end{split} \]
We obtain  
$$
\mathcal{E}_1^{(Q_2)} = N_0 \sum_{p\in \Lambda^*} \hat{V}_N(p)s_p^2 \, (a_p^*a_{-p}^*  + \hc) +  N_0 \sum_{p\in \Lambda^*} \hat{V}_N(p) c_ps_p a_p^*a_{p} + N_0 \sum_{p\in \Lambda^*} \hat{V}_N(p) s_p (c_p - 1).
$$
The bound for $\mathcal{E}_1^{(Q_2)}$ follows from Cauchy--Schwarz, $0 \leq c_p -1 \leq s_p^2$ and \Cref{lem:vphi_prop}.
\end{proof}

Next, we consider the cubic term $Q_3$. Notice that here, and also in the next lemma which devoted to $Q_4$, we only bound the restriction of the error term on the range of the projection $\Xi$; this is enough, since our trial state will be in the range of $\Xi$.
\begin{lemma} \label{lem:TLc_Q3}
We have  
\begin{equation}\label{eq:BQ3B} 
e^{-\cB_1} Q_3 e^{\cB_1} = Q_3 + \mathcal{E}^{(Q_3)}_{1},
\end{equation} 
where, on the range of $\Xi$, 
\[
\pm  \mathcal{E}^{(Q_3)}_{1}  \lesssim N^{-2 + 4\kappa + 5 \eps} \big(\cN_H+1\big)^2 + N^{\kappa-4\vep} \big( \cN + N^{3\kappa/2+3\vep} \big), \]
for all $0 < \kappa < 2/3$ and with $\vep>0$ as introduced in \eqref{eq:mom-sets}. 
\end{lemma}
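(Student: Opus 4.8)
The plan is to compute $e^{-\cB_1}Q_3 e^{\cB_1}$ by inserting the Bogoliubov relation \eqref{eq:bogo1}, $e^{-\cB_1}a_q e^{\cB_1}=c_q a_q+s_q a^*_{-q}$, into each of the three operators of $Q_3=N_0^{1/2}\sum_{p,r\in\Lambda^*}\hat{V}_N(r)\,a^*_{-r}a^*_{r+p}a_p+\hc$. Since $e^{-\cB_1}(ABC)e^{\cB_1}=(e^{-\cB_1}Ae^{\cB_1})(e^{-\cB_1}Be^{\cB_1})(e^{-\cB_1}Ce^{\cB_1})$ and each factor remains linear, the result is again a sum of cubic monomials, which I would organize according to how many of the three factors are replaced by an $s$-term. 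Recall that $s_q=N_0\varphi_q$ is supported on $|q|>N^{\kappa/2+\vep}$ (with $c_q=1$, $s_q=0$ otherwise) and that $N_0\le N$, so the estimates of \Cref{lem:vphi_prop} for $\varphi$ and for $\varphi^{(\alpha)}$ with $\alpha=\kappa/2+\vep$ are available; in particular $\|s\|_2=N_0\|\varphi^{(\kappa/2+\vep)}\|_2\lesssim N^{3\kappa/4-\vep/2}$ and $\|s\|_\infty=N_0\|\varphi^{(\kappa/2+\vep)}\|_\infty\lesssim N^{-2\vep}$.

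For the single monomial with no $s$-factor, $N_0^{1/2}\sum_{p,r}\hat{V}_N(r)c_r c_{r+p}c_p\,a^*_{-r}a^*_{r+p}a_p$, I would write $c_q=1+(c_q-1)$ with $0\le c_q-1\le s_q^2$ and expand: the all-ones term reproduces $Q_3$ exactly, while each of the remaining seven contributions carries at least one factor $c_q-1\lesssim s_q^2$ supported on high momenta, which after a Cauchy--Schwarz estimate pairing the two creation operators with $\cN$ or with $\cN_H$ (depending on which momenta are thereby forced into $H$) is of the claimed size. For the three monomials with exactly one $s$-factor, the one coming from $a_p\mapsto s_p a^*_{-p}$ is a purely creating cubic term, estimated directly by Cauchy--Schwarz against $(\cN_H+1)$ using $\sum_q s_q^2\lesssim\|s\|_2^2$; the two coming from $a^*_{-r}\mapsto s_r a_r$ or $a^*_{r+p}\mapsto s_{r+p}a_{-(r+p)}$ must first be put into normal order, yielding a cubic term estimated as before plus a quadratic (and, for the exceptional momenta $p=-r$, linear) remainder from the single contraction, bounded by $\cN$. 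At this point the restriction to $\Ran\Xi$ enters: it lets me discard, or control with the help of $\cN_{(S\cup H)^c}=0$, precisely the contributions in which an external momentum would lie outside $S\cup H$.

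The monomials with two or three $s$-factors carry two or three factors of $\|s\|_\infty\lesssim N^{-2\vep}$ and, after normal ordering, reduce to cubic operators all of whose momenta lie in $H$; estimating such an operator by squaring its double-annihilation (or double-creation) part produces a bound of the form $N^{-2+4\kappa+5\vep}(\cN_H+1)^2$, for which the refined decay $\|\varphi^{(\alpha)}\|_\infty\lesssim N^{-1+\kappa-2\alpha}$ — rather than the crude $\|\varphi\|_\infty\lesssim N^{-1+\kappa}$ — is essential, while the three-$s$ monomial and all lower-order remainders are manifestly smaller. Summing all contributions and adding the hermitian conjugates gives \eqref{eq:BQ3B} with the asserted bound on $\Ran\Xi$; the hypotheses $0<\kappa<2/3$ and the choice of $\vep$ in \eqref{eq:mom-sets} are used only to absorb the various cross terms into the stated error.

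The main obstacle I anticipate is the bookkeeping in the one- and two-$s$-factor terms after normal ordering: one has to determine, monomial by monomial, which of the external momenta are forced into the high set $H$ (where the strong decay of $\varphi^{(\alpha)}$ and the smallness of $\|s\|_\infty$ apply) and which remain in the shell $S$ or lower, in order to distribute the available smallness so that the error splits exactly into an $(\cN_H+1)^2$-piece and an $\cN$-piece with the stated powers of $N$; and one has to invoke the projection $\Xi$ at the right moments to kill the terms whose momenta would otherwise escape $S\cup H$.
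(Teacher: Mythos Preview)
Your overall strategy—expand via \eqref{eq:bogo1}, organize by the number of $s$-factors, and estimate each monomial by Cauchy--Schwarz—is exactly the paper's approach. However, you are missing the key structural ingredient that makes the bookkeeping work, and this is precisely the obstacle you flag in your last paragraph.

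The projection $\Xi$ has two components: $\mathds{1}_{\{\cN_{(S\cup H)^c}=0\}}$, which you use, and $\mathds{1}_{\{\cN_H\in 2\mathbb{N}_0\}}$, which you never mention. This parity constraint is what resolves the bookkeeping. Every cubic monomial changes the total particle number by an odd amount; once you know that each external momentum lies in $S\cup H$ and that $\cN_H$ must change by an even number, it follows that among the three momenta exactly two lie in $H$ and one lies in $S$ (or the monomial vanishes on $\Ran\Xi$). In particular: the three-$s$ monomial does not merely become ``manifestly smaller''—it vanishes identically on $\Ran\Xi$, because all three momenta would lie in $H$; and for the two-$s$ monomials your claim that ``all of whose momenta lie in $H$'' is wrong—parity forces the third momentum into $S$, which is why the estimate naturally produces a factor $\|\cN_H\xi\|\,\|\cN^{1/2}\xi\|$ (then split by Young's inequality into the two pieces in the statement) rather than $\|\cN_H\xi\|^2$ directly. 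Without invoking parity you have no mechanism to decide, monomial by monomial, which momenta sit in $H$ versus $S$, and the error does not organize into the stated form.

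A smaller point: the contractions you worry about when normal-ordering the one-$s$ terms all produce operators proportional to $a_0$ or $a_0^*$ (by momentum conservation in $Q_3$), and these vanish on $\Ran\Xi$ since $0\notin S\cup H$; there is no genuine quadratic or linear remainder to estimate.
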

\begin{proof}
To compute $e^{-\cB_1} Q_3 e^{\cB_1}$, we apply (\ref{eq:bogo1}). We find (\ref{eq:BQ3B}), with an error $\cE_1^{(Q_3)}$ given, as a quadratic  form on the range of $\Xi$, by 
\begin{align} \label{eq:cEQ3}
\cE_1^{(Q_3)} = \; &N_0^{1/2} \sum_{p,r \in \Lambda^*} \widehat{V}_N(r) (c_{r} c_{r+p} c_{p} - 1) \, a_{-r}^*a_{r+p}^* a_{p}  + N_0^{1/2} \sum_{p,r \in \Lambda^*} \widehat{V}_N(r) c_{r}c_{r+p} s_p
\, a_{-r}^*a_{r+p}^* a_{-p}^* \nn
\\ &+ N_0^{1/2} \sum_{p,r \in \Lambda^*} \widehat{V}_N(r) c_{r} s_{r+p} c_p \, a_{-r}^*a_{-(r+p)} a_{p} + N_0^{1/2} \sum_{p,r \in \Lambda^*} \widehat{V}_N(r) c_{r} s_{r+p} s_p \, a_{-r}^*  a_{-p}^* a_{-(r+p)} \nn
\\ &+ N_0^{1/2} \sum_{p,r \in \Lambda^*} \widehat{V}_N(r) s_{r} c_{r+p} c_p \, a_{r+p}^* a_{r} a_{p} + N_0^{1/2} \sum_{p,r \in \Lambda^*} \widehat{V}_N(r) s_{r} c_{r+p} s_p \,  a_{r+p}^* a^*_{-p} a_{r} \nn
\\ &+ N_0^{1/2} \sum_{p,r \in \Lambda^*} \widehat{V}_N(r) s_{r} s_{r+p} c_p \,  a_{r} a_{-(r+p)} a_{p} + \text{h.c.} 
\end{align} 
Here we already used the fact that terms that annihilate or create particles with momenta in $(H \cup S)^c$, as well as terms that do not preserve the parity of $\cN_H$ vanish on the range of $\Xi$. 
In particular, all contributions where two momenta contract in the canonical commutation relation must vanish since, by momentum conservation, these terms are either proportional to $a_0$ or to $a_0^*$. Additionally, the terms proportional to $s_r s_{r+p} s_p$ must vanish on the range of $\Xi$ because, since $s$ is zero on $S$, we would need $-r,p+r,p \in H$; this, however, breaks the parity of $\cN_H$. To bound the contributions in (\ref{eq:cEQ3}), we argue again with the parity of $\cN_H$ to conclude that among the three momenta $r,r+p,p$, two must be in $H$, one in $S$. Keeping this in mind, and moving around factors of $(\cN_H+1)^{\pm 1}$, $(\cN+1)^{\pm 1/2}$ as needed, we can estimate the terms on the right-hand side of (\ref{eq:cEQ3}) as follows.
Let us consider the first term. We use $|c_r c_{r+p} c_{p} - 1| \leq s_r^2 + s_{r+p}^2 + s_{p}^2$ and the notation $s_H$ for the restriction of the coefficients $s_p$ to the set $H$. Then, for $\xi \in \text{Ran} \, \Xi$, we obtain
\[ \begin{split}  \Big| N_0^{1/2} \sum_{p,r \in \Lambda^*} &\widehat{V}_N(r) (c_r c_{r+p} c_{p} - 1) \langle \xi, a_{r}^*a_{r+p}^* a_{p} \xi \rangle \Big|  \\ &\lesssim N^{-1/2+\kappa} \big( \| s^2_{H} \|_2 + \| s^2_{H} \|_\infty N^{3\kappa/4+3\eps/2} \big) \| (\cN_H + 1) \xi \| \| (\cN + 1)^{1/2} \xi \| \\ &\lesssim N^{-3 + 11\kappa/2 + 5\eps/2}  \| (\cN_H + 1) \xi \| \| (\cN + 1)^{1/2} \xi \|. \end{split} \]
Here we used that in the term proportional to $s_p^2$ we have $p \in H$ so that either $r \in S$ or $r+p \in S$; this allows us to sum up the second momentum after a Cauchy--Schwarz inequality (recall that $|S| \leq C N^{3\kappa/2 + 3\eps}$). 
Moreover, we used Lemma \ref{lem:vphi_prop} to bound $\| s^2_{H} \|_2 \leq C N^{-5/2 +9 \kappa/2 + 5 \eps/2}$, $\| s^2_{H} \|_\infty \leq CN^{-4 + 6\kappa + 4\eps}$ and the condition $4\vep-2+3\kappa<0$ from \eqref{eq:mom-sets}.

To estimate the second term on the right-hand side of (\ref{eq:cEQ3}), we assume for example that $r,p \in H$, $r+p \in S$ (the case $r+p, p \in H$, $r \in S$ can be treated similarly). Since $c$ is bounded uniformly in $N$ and with the estimate $\| s_{H} \|_2 \lesssim N^{(-1+3\kappa+\vep)/2}$ we find
\[ \begin{split} \Big| N_0^{1/2} \sum \widehat{V}_N (r)  c_{r} c_{r+p} s_p \, \langle \xi , a_{-r}^*a_{r+p}^* a_{-p}^* \xi \rangle \Big| 
&\lesssim N^{-1/2+\kappa} \sum s_p \| a_{-r} a_{-p}  \xi \| \big[ \| a_{r+p}  \xi \| + \| \xi \| \big]  \\ &\lesssim N^{-1/2 + \kappa} \| s_{H} \|_2 \| \cN_H  \xi \|  \big( \| \cN^{1/2} \xi \| + N^{3\kappa/4 + 3\eps/2} \| \xi \| \big) \\ &\lesssim N^{\frac{-2+5\kappa+\eps}{2}}\| \cN_H  \xi \|  \big( \| \cN^{1/2} \xi \| + N^{3\kappa/4 + 3\eps/2} \| \xi \| \big). \end{split} \] 

The next four terms on the right-hand side of (\ref{eq:cEQ3}) can all be estimated in the same way applying a Cauchy--Schwarz inequality. We obtain 
\begin{align*}
&\Big|  N_0^{1/2} \sum_{p,r \in \Lambda^*} \widehat{V}_N(r) c_{r} s_{r+p} c_p
 \, \langle \xi , a_{-r}^*a_{-(r+p)} a_{p} \xi \rangle \Big|, \; \Big| N_0^{1/2} \sum_{p,r \in \Lambda^*} \widehat{V}_N(r) c_{r} s_{r+p} s_p\, \langle \xi, a_{-r}^*  a_{-p}^* a_{-(r+p)} \xi \rangle \Big|,
\\  &\Big| N_0^{1/2} \sum_{p,r \in \Lambda^*} \widehat{V}_N(r) s_{r} c_{r+p} c_p 
\, \langle \xi, a_{r+p}^* a_{r}  a_{p} \xi \rangle \Big| , \; \Big| N_0^{1/2} \sum_{p,r \in \Lambda^*} \widehat{V}_N(r) s_{r} c_{r+p} s_p \, \langle \xi,  a_{r+p}^* a^*_{-p} a_{r}  \xi \rangle \Big|
\\ & \qquad \qquad \qquad \lesssim N^{-1/2 + \kappa} \| s_{H} \|_2 \| (\cN_H + 1) \xi \| \| (\cN +1)^{1/2} \xi \|  
\\ 
&\qquad \qquad \qquad\lesssim N^{\frac{-2 + 5\kappa+\eps}{2}} \| (\cN_H + 1) \xi \| \| (\cN +1)^{1/2} \xi\|.
\end{align*}

Finally, we control the last term on the right-hand side of (\ref{eq:cEQ3}). We find 
\[ \begin{split}  
\Big| N_0^{1/2} \sum_{p,r \in \Lambda^*} \widehat{V}_N(r) s_{r} &s_{r+p} c_p
\,  \langle \xi, a_{r}  a_{-(r+p)} a_{p} \xi \rangle \Big| \\ &\lesssim N^{-1/2+\kappa} \sum s_r s_{r+p} \| a_{r} a_{-(r+p)} \xi \| \big( \| a_p \xi \| + \| \xi \| \big) 
\\ &\lesssim N^{-1/2 + \kappa} \| \cN_H \xi \|  \big( \| s_{H} \|_2 \| s_{H} \|_\infty  \| \cN^{1/2} \xi \| + \| s_{H} \|_2^2 \| \xi \| \big)   \\ &\lesssim N^{-3+11\kappa /2 + 5\eps/2} \| \cN_H \xi \| \| \cN^{1/2} \xi \| + N^{-3/2 + 4\kappa +\eps} \| \cN_H \xi \| \| \xi \|.
\end{split} \]

Putting all together, we conclude that, for any $0< \kappa < 2/3$ and $\vep>0$ as in  \eqref{eq:mom-sets}, we have
% and $\eps > 0$ small enough, 
\[ \begin{split}  |\langle \xi , \cE_1^{(Q_3)} \xi \rangle | 
&\lesssim N^{\frac{-2+5\kappa+\eps}{2}}\| (\cN_H +1) \xi \|  \big( \| \cN^{1/2} \xi \| + N^{3\kappa/4+3\vep/2} \| \xi \| \big) \\ &\leq  N^{-2+4\kappa+5\eps} \| (\cN_H +1) \xi \|^2 + N^{\kappa- 4\eps} \big( \| \cN^{1/2} \xi \|^2 + N^{3\kappa/2+3\vep} \| \xi \|^2 \big).  \end{split}  \]
%&\lesssim N^{\frac{5\kappa-2+\eps}{2}}\| (\cN_H +1) \xi \|  \big( \| \cN^{1/2} \xi \| + N^{3\kappa/4 + 3\eps/2} \| \xi \| \big) \\ &\leq  N^{4-13\kappa/2-3\eps} \| (\cN_H +1) \xi \|^2 + N^{23\kappa/2-6+4\eps} \big( \| \cN^{1/2} \xi \|^2 + N^{3\kappa/2 + 3\eps} \| \xi \|^2 \big)  \end{split}  \]
\end{proof}

Finally, let us compute the action of $e^{\cB_1}$ on the quartic term $Q_4$. Here, the restriction on $\kappa$ and $\vep$ only serves to simplify the form of the error terms.
\begin{lemma} \label{lem:TLc_Q4}
We have 
\begin{equation} \label{eq:BQB} 
e^{-\cB_1} Q_4 e^{\cB_1} = Q_4 + \frac{1}{2}\sum_{p,r \in \Lambda^*} \hat{V}_N(r-p) s_r \big( a_p^*a_{-p}^* + \hc \big) + \frac{1}{2}\sum_{p,r \in \Lambda^*} \hat{V}_N(p-r)s_p s_r + \mathcal{E}_{1}^{(Q_4)},
\end{equation} 
where, on the range of $\Xi$, 
\[ \begin{split} \pm \cE^{(Q_4)}_1 \lesssim  N^{-\kappa/2-\vep} \cN^2  + N^{-3+7\kappa+4\eps} \cN_H^2 +  N^{2-2\kappa-2\eps} \cN_H +  N^{\kappa-3\eps} \cN + N^{5\kappa/2-\eps},
 \end{split} \]
if $1/2 < \kappa < 8/15-2\vep/3$ and $\eps > 0$ is chosen as in \eqref{eq:mom-sets}.
\end{lemma}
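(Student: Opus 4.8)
Since $e^{\cB_1}$ is a Bogoliubov transformation, its action on $Q_4$ can be computed explicitly by inserting \eqref{eq:bogo1} into each of the four operators in $Q_4 = \tfrac12\sum_{p,q,r}\hat V_N(r)\,a^*_{p+r}a^*_q a_{q+r}a_p$. This expands $e^{-\cB_1}Q_4 e^{\cB_1}$ into $2^4=16$ terms, labelled by the choice of $c$ or $s$ at each of the four legs. The plan is to normal order each of the $16$ terms, to single out the three main contributions appearing in \eqref{eq:BQB}, and to bound everything else as a quadratic form on $\Ran\Xi$. The term with four $c$'s equals $Q_4$ up to a remainder in which some $c_k$ is replaced by $c_k-1 = O(s_k^2)$; the leading ``$1$'' reproduces the $Q_4$ on the right of \eqref{eq:BQB}. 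The quadratic term $\tfrac12\sum_{p,r}\hat V_N(r-p)\,s_r(a_p^*a_{-p}^*+\hc)$ arises from the two terms carrying a single $s$ on an outer leg ($a^*_{p+r}$ or $a_p$), in which the flipped operator contracts with the adjacent creation/annihilation operator, leaving a pair with opposite momenta; after changing variables and replacing the surviving $c$'s by $1$ (at the cost of further $O(s^2)$ remainders) these reproduce exactly the stated quadratic expression. Likewise, the scalar $\tfrac12\sum_{p,r}\hat V_N(p-r)\,s_p s_r$ comes from the two-$s$ term (with both $s$'s on the outer legs) that fully contracts, again with $c\to 1$.

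All remaining contributions --- the $c\to1$ remainders, the one-$s$ terms that are not contracted or whose contraction produces a diagonal $\cN$-type operator, the higher-$s$ terms, and the partially contracted pieces of the main terms --- are error terms, and I would estimate them one by one as quadratic forms on $\Ran\Xi$. On this subspace I would repeatedly use that every monomial with a leg at momentum in $(S\cup H)^c$ vanishes (in particular all $a_0, a_0^*$ terms drop out), and that every monomial with an odd number of legs in $H$ vanishes because of the parity constraint $\cN_H\in 2\bN_0$; since $s$ is supported on $H$ this forces, for instance, any surviving one-$s$ monomial to have a second leg in $H$. Given these constraints, the bounds follow from Cauchy--Schwarz after moving suitable factors of $(\cN_H+1)^{\pm1/2}$ and $(\cN+1)^{\pm1/2}$ through the monomials, using $\|\hat V_N\|_\infty \lesssim N^{-1+\kappa}$, the norm bounds on $\varphi$ and on $\varphi^{(\alpha)}$ from \Cref{lem:vphi_prop} (applied with $\alpha = \kappa/2+\vep$, since $s$ is supported on $|p|>N^{\kappa/2+\vep}$), and the cardinality bound $|S|\lesssim N^{3\kappa/2+3\vep}$. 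Collecting the resulting powers of $N$ and using $1/2<\kappa<8/15-2\vep/3$ to discard the subdominant ones leaves precisely the five error terms in the statement.

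The main obstacle will be the genuinely quartic error terms --- the monomials that survive with four operators because no contraction is available --- and in particular those carrying two factors of $s$, both necessarily on legs in $H$: a naive bound on these produces a factor $\|s_H\|_2^2$ times $\cN_H^2$, which is exactly the source of the $N^{-3+7\kappa+4\vep}\cN_H^2$ contribution, and one must check that the estimate does not degrade further. Relatedly, one has to exploit the parity constraint on $\Ran\Xi$ to see that the monomials which would otherwise force three or four legs in $H$ (and would be more dangerous) in fact vanish, and one must keep careful track of which of the one-$s$, partially contracted terms reproduce the quadratic main term versus which are genuine errors, making sure the $c\to1$ replacements are compatible with the precise form of \eqref{eq:BQB}. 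The remaining estimates are routine if tedious, and mirror the corresponding $\kappa=0$ computation in \cite{BCS}.
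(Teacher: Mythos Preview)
Your proposal is correct and follows essentially the same route as the paper: the paper organizes the $16$ terms by the number $j\in\{0,\dots,4\}$ of $s$-factors into operators $A_0,\dots,A_4$, extracts $Q_4$ from $A_0$, the quadratic main term from the contracted part of $A_1$, and the scalar from the fully contracted part of $A_2$, then bounds all remainders on $\Ran\Xi$ exactly as you describe. One small correction to your heuristics: the $N^{-3+7\kappa+4\vep}\cN_H^2$ term does not come from a direct $\|s_H\|_2^2$ bound on a normal-ordered two-$s$ quartic (that gives only $N^{-2+4\kappa+\vep}$); it arises from the two-$s$ term with four \emph{creation} operators $a^*_{p+r}a^*_q a^*_{-q-r}a^*_{-p}$, whose estimate picks up a factor $\|\hat V_N\|_2$ and produces the cross term $N^{-1/2+5\kappa/2+\vep}\|(\cN_H+1)\xi\|\,\|(\cN_H+1)^{1/2}\xi\|$, which is then split by a weighted Cauchy--Schwarz into the pair $N^{-3+7\kappa+4\vep}\cN_H^2$ and $N^{2-2\kappa-2\vep}\cN_H$.
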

\begin{proof}
With (\ref{eq:bogo1}), we decompose
\begin{equation}\label{eq:Aj}  e^{-\cB_1} Q_4 e^{\cB_1} = \frac{1}{2} \sum_{j=0}^4 A_j \end{equation}
where $A_j$ collects all contributions with exactly $j$ coefficients $s_k$, with $k \in \{p+r,q, q+r, p \}$. We will see that the contributing terms stem from $A_0,A_1$ and $A_2$.
%, more precisely from the terms proportional to $cccc$, $cccs, sccc$ and $sccs$.
For $j=0$, we clearly have 
\[ A_0 = \sum_{p,r,q \in \Lambda^*}  \widehat{V}_N (r) c_{p+r} c_q c_{q+r} c_p a^*_{p+r} a^*_q a_{q+r} a_p. \]
We observe that 
\[ A_0 - 2 Q_4 = \sum_{p,r,q \in \Lambda^*}  \widehat{V}_N (r)  \big[ c_{p+r} c_q c_{q+r} c_p -1 \big] a^*_{p+r} a^*_q a_{q+r} a_p. \]
Estimating $|c_{p+r} c_q c_{q+r} c_p -1| \leq s_{p+r}^2 + s_q^2 +s_{q+r}^2 + s_p^2$, we find 
\[ \big| \langle \xi, (A_0 - 2Q_4) \xi \rangle \big| \leq 4 \Big| \sum \widehat{V}_N (r)  s_p^2 \langle  a_{p+r} a_q \xi , a_{q+r} a_p \xi \rangle \Big|.  \]
For $\xi$ in the range of $\Xi$, we must have $p+r, q, q+r , p \in H \cup S$. Since $s_p = 0$ for $p \in S$, we can assume $p \in H$. To preserve the parity of $\cN_H$, there must be a second momentum in $H$. Assuming for example $q \in H$, we can estimate, using Lemma \ref{lem:vphi_prop}, 
\[  \begin{split} \Big| \sum_{p,q \in H, r \in \Lambda^*} &\widehat{V}_N (r)  s_p^2 \langle  a_{p+r} a_q \xi , a_p  a_{q+r} \xi \rangle \Big| \\  \lesssim \; & \sum_{p,q \in H, r\in \Lambda^*}  | \widehat{V}_N (r)|  |s_p| \| a_{p+r} a_q \xi \| \|  a_p  a_{q+r} \xi \| \\ \lesssim \; & \Big[ \sum_{p,q \in H, r \in \Lambda^*} s_p^4 \| a_{p+r} a_q \xi \|^2 \Big]^{1/2} \Big[  \sum_{p,q \in H, r \in \Lambda^*} \widehat{V}_N^2 (r) \| a_p a_{q+r} \xi \|^2 \Big]^{1/2} \\ \lesssim \; & N^{-2+4\kappa+5\eps/2}\| \cN^{1/2}  \cN^{1/2}_H \xi \|^2, \end{split} \]
where we used $\| \hat{V}_N\|_2 \lesssim N^{(1-\kappa)/2}$ and $s_p \lesssim N^\kappa p^{-2}$.
Moving factors of $(\cN+1)^{1/2}, (\cN_H + 1)^{1/2}$ around, we can similarly bound also contributions arising when $p+r \in H$ or $q+r \in H$. With the assumption $-2+3\kappa+4\vep<0$ from \eqref{eq:mom-sets}, we conclude that, on the range of $\Xi$,  
\begin{equation} \label{eq:A0f} \pm \Big(\frac{1}{2} A_0 - Q_4 \Big)  \lesssim N^{-2+4\kappa +5\eps/2} (\cN+1) (\cN_H +1) \lesssim N^{-\kappa/2-\vep}\cN^2 + 
N^{-4+17\kappa/2 + 6\eps} \cN_H^2 + N^{5\kappa/2-\vep}.
\end{equation} 
Next, we consider the term $A_1$ in (\ref{eq:Aj}). Arranging operators in normal order we find 
\[ \begin{split} A_1 = 2 \hspace{-.2cm}  \sum_{p,q,r \in \Lambda^*} \hspace{-.1cm} \widehat{V}_N (r) s_{p+r} c_q c_{q+r} c_p \big( a_q^* a_{-p-r}  a_{q+ r} a_p + \text{h.c.} \big)+ \hspace{-.1cm}  \sum_{p,r \in \Lambda^*} \hspace{-.1cm} \widehat{V}_N (r) s_{p+r} c_{p+r} c_p^2 \left( a_p a_{-p} + \text{h.c.} \right).  \end{split} \]
Let us start to bound the first term. On the range of $\Xi$, all momenta must be in $H \cup S$. In particular, $p+r \in H$. To preserve parity of $\cN_H$, there can be, in total, two or four momenta in $H$. Handling these two cases separately (if all momenta are in $H$, we need to use $\widehat{V}_N$ to perform one of the sum; if instead 2 momenta are in $S$, we can use $|S| \leq N^{3\kappa/2 + 3\eps}$), we arrive at
\[ \begin{split} \Big| \sum_{p,q,r} &\widehat{V}_N (r) s_{p+r} c_q c_{q+r} c_p \langle \xi, a_q^* a_{-p-r} a_{q+ r} a_p  \xi \rangle \Big| \\ &\lesssim N^{\kappa+\vep/2} \| (\cN_H +1) \xi \|^2 + N^{\frac{-6+13\kappa+2\vep}{4}} \| (\cN+1)^{1/2}  (\cN_H +1)^{1/2} \xi \|^2. \end{split} \]
As for the quadratic contribution to $A_1$, we observe that 
\[ \begin{split} \Big| & \sum_{p,r} \widehat{V}_N (r) s_{p+r} ( c_{p+r} c_p^2 - 1 ) \langle \xi, a_p a_{-p} \xi \rangle \Big| 
\\ & \lesssim \sum_{p,r} |\widehat{V}_N (r)| s_{p+r} \big( s_{p+r}^2  + s_p^2 \big)  \big[ \| a_p \xi \|^2 + \| a_p \xi \| \| \xi \| \big] 
\\
&\lesssim \|s^2\|_\infty \||\hat{V}_N| \ast s\|_\infty \| \cN^{1/2} \xi \|^2 + \||\hat{V}_N| \ast s\|_\infty \|s^2\|_2  \|\cN^{1/2} \xi\| \|\xi\| + \sum_p \big(|\hat{V}_N| \ast s^3\big) (p) \|a_p \xi\| \| \xi\|.
\end{split} \]
The first two terms are bounded with $\| |\hat{V}_N| \ast s\|_\infty \lesssim N^\kappa$.
%Bounding, with Lemma \ref{lem:vphi_prop}, $\|  |\widehat{V}_N | * s \|_\infty \lesssim N^{\kappa+\eps}$, $\| |\widehat{V}_N | * s^3 \|_\infty \lesssim N^{5\kappa/2-1-\eps}$, $\| s \|_\infty \lesssim N^{-2\eps}$, $\| s \|_2 \lesssim N^{3\kappa/4 - \eps/2}$, we arrive at 
To bound the last term, we distinguish the cases $p \in H$ and $p \in S$ (on the range of $\Xi$, there is no other possibility). We find, with $\|  |\widehat{V}_N | * s^3 \|_2 \lesssim \|\hat{V}_N\|_2 \|s^3\|_1 \lesssim N^{1/2+\kappa -3\eps}$,
\[ \begin{split}  \sum_{p} \big(|\widehat{V}_N| \ast s^3 \big) (p) \| a_p \xi \| \| \xi \|  &\lesssim \| |\widehat{V}_N| * s^3 \|_2 \| \cN_H^{1/2} \xi \|  \| \xi \| + \| |\widehat{V}_N| * s^3 \|_\infty N^{3\kappa/4+3\eps/2} \| \cN^{1/2} \xi \| \| \xi \| 
\\
&\lesssim N^{1/2+\kappa-3\vep}\| \cN_H^{1/2} \xi \|  \| \xi \| + N^{7\kappa/4-5\vep/2} \|\cN^{1/2} \xi \| \| \xi \| 
%\\ &\lesssim  N^{1-\kappa/2 - 4\eps} \| \cN_H^{1/2} \xi \|^2  +  N^{4\kappa -2 + 3\eps }  \| \cN^{1/2} \xi \|^2 + N^{5\kappa/2-2\eps}  \| \xi \|^2.
\end{split} \]
We conclude that, on $\text{Ran } \Xi$, 
\begin{align} \label{eq:A1f} &\pm  \Big[ A_1 - \sum_{p,r \in \Lambda^*} \widehat{V} (r) s_{p+r} \big( a_p a_{-p} + \text{h.c.} \big) \Big] 
\\
& \qquad \lesssim N^{\kappa+\vep/2} (\cN_H^2+1) + N^{-\kappa/2-\vep}\cN^2 + N^{-3+7\kappa+2\vep} \cN_H^2 + N^{\kappa - 4\vep} \cN + N^{1-\kappa/2-5\vep}\cN_H + N^{5\kappa/2-\vep} \nn
%\\ &\lesssim \; N^\kappa \cN_H^2   N^{\frac{13\kappa-6}{4}} \cN \cN_H +  N^{\kappa-3\eps}  \cN + N^{1-\kappa/2-4\eps} \cN_H + N^{5\kappa/2-2\eps}
\end{align} 
for any $0 < \kappa < 2/3$ and $\eps>0$ as in \eqref{eq:mom-sets}.

We switch our attention to the term
\begin{equation}\label{eq:A2} \begin{split} A_2 = \; &\sum \widehat{V}_N (r) c_{p+r} c_q s_{q+r} s_p \big( a_{p+r}^* a_q^* a_{-q-r}^* a_{-p}^* + \text{h.c.} \big) \\ &+2 \sum \widehat{V}_N (r) c_{p+r} s_q c_{q+r} s_p a_{p+r}^* a_{-p}^* a_{-q} a_{q+r}  \\ &+ 2  \sum \widehat{V}_N (r) c_{p+r} s_q s_{q+r} c_p a_{p+r}^* a_{-q-r}^* a_{-q} a_p \\
&+  \sum \widehat{V}_N (r) c_q s_q c_{q+r} s_{q+r} \big( 4 a_q^* a_q + 1) \\
&+ 2  \sum \widehat{V}_N (r) c_p^2 s_{p+r}^2  a_p^* a_p + 2  \widehat{V}_N (0) \sum c_p^2 s_q^2  a_p^* a_p. 
\end{split} \end{equation} 
The only contributing term is the constant in the fourth line. We estimate the other terms.
To control the term in the first line, we observe that, on the range of $\Xi$, the four momenta $p+r, q, -q-r,-p$ must be in $H \cup S$ and, more precisely, either 4 or 2 of them must be in $H$ (to preserve parity). Let us assume, first, that all four momenta are in $H$. Denoting by $\scriptstyle \overset{*}{ \sum}$ the sum over all $p,q,r \in \Lambda^*$, with $p+r, q, -q-r, -p \in H$, we can bound 
\[ \begin{split} \Big| \sum^* &\widehat{V}_N (r) c_{p+r} c_q s_{q+r} s_p \langle \xi, a_{p+r}^* a_q^* a_{-q-r}^* a_{-p}^* \xi \rangle \Big| 
%\\ \leq \; &\sum |\widehat{V}_N (r)| s_{q+r} s_p \| a_{p+r} a_q a_{-q-r} (\cN_H + 1)^{-1/2} \xi \| \| a_q^* (\cN_H+1)^{1/2} \xi \| 
\\ \lesssim \; &\sum^* |\widehat{V}_N (r)| s_{q+r} s_p \| a_{p+r} a_{-p} a_{-q-r} (\cN_H + 1)^{-1/2} \xi \| \left[ \| a_q (\cN_H+1)^{1/2} \xi \| + \| (\cN_H + 1)^{1/2} \xi \|\right] \\ 
\lesssim \; &\| (\cN_H+1) \xi \| \bigg( \| \widehat{V}_N \|_\infty \Big[ \sum^* s_{q+r}^2 s_p^2 \| a_q (\cN_H+1)^{1/2} \xi \|^2 \Big]^{1/2} \\ &\hspace{6cm} + \| (\cN_H + 1)^{1/2} \xi \| \Big[ \sum^* |\widehat{V}_N (r)|^2 s_{q+r}^2 s_p^2 \Big]^{1/2} \bigg) \\
\lesssim \; & N^{-2 + 4\kappa + \vep} \| (\cN_H + 1) \xi \|^2  + N^{-1/2 + 5\kappa/2 + \vep} \| (\cN_H + 1) \xi \| \| (\cN_H + 1)^{1/2}  \xi \|  \\
\lesssim \; & \left(N^{-2+4\kappa+\vep} + N^{-3+7\kappa+4\eps}\right) \| (\cN_H + 1) \xi \|^2 +  N^{2-2\kappa-2\eps} \| ( \cN_H + 1)^{1/2} \xi \|^2 
\end{split} \]
for $0 < \kappa < 2/3$ (we absorb the second contribution by a Cauchy--Schwarz inequality with appropriate weights).
If two momenta are in $H$ and two in $S$ (in this case, we must have $p+r,q \in S$, $q+r, p \in H$, since $s = 0$ on $S$), we proceed similarly, but instead of using the potential $\widehat{V}_N$ to sum over the third momentum (as we did in the fourth line of the last equation), we use the restriction onto $S$, estimating $\widehat{V}_N$ in $\ell^\infty$. Indicating with $\scriptstyle\overset{**}{\sum}$ the sum over $p+r, q \in S, q+r,p \in H$, we find 
\[ \begin{split}  \Big| \sum^{**} & \widehat{V}_N (r) c_{p+r} c_q s_{q+r} s_p \langle \xi, a_{p+r}^* a_q^* a_{-q-r}^* a_{-p}^* \xi \rangle \Big| \\ 
\lesssim \; & N^{-2+4\kappa+\vep} \| (\cN \!+\! 1)^{1/2} (\cN_H  \!+\! 1)^{1/2} \xi \|^2  +  N^{-2+\frac{19\kappa}{4} + \frac{5\eps}{2}}  \| (\cN  \!+\! 1)^{1/2} (\cN_H  \!+\! 1)^{1/2} \xi \| \| (\cN_H  \!+\! 1)^{1/2}  \xi \|   \\
\lesssim \; & N^{-2+4\kappa+2\vep}\| (\cN+1)^{1/2} (\cN_H + 1)^{1/2} \xi \|^2 + N^{-2+11\kappa/2+3\vep} \| (\cN_H + 1)^{1/2} \xi \|^2.
% \lesssim \; &  (N^{-2+4\kappa +\vep} + N^{23\kappa/2-6+7\eps}) \| (\cN+1)^{1/2} (\cN_H + 1)^{1/2} \xi \|^2 + N^{2-2\kappa-2\eps} \| (\cN_H + 1)^{1/2} \xi \|^2
\end{split} \]
The second and the third terms on the right-hand side of (\ref{eq:A2}) can be bounded by a simple Cauchy--Schwarz inequality (remarking that, on the range of $\Xi$, the momenta associated with the coefficients $s$ must lie in $H$). We find
\[ \begin{split}  \Big| \sum \widehat{V}_N (r) c_{p+r} s_q c_{q+r} s_p \langle \xi, a_{p+r}^* a_{-p}^* a_{-q} a_{q+r} \xi \rangle \Big| &\lesssim N^{-2+4\kappa+\eps} \| \cN_H^{1/2} \cN^{1/2}  \xi \|^2 \\  \Big| \sum \widehat{V}_N (r) c_{p+r} s_q s_{q+r} c_p \langle \xi, a_{p+r}^* a_{-q-r}^* a_{-q} a_{p} \xi \rangle \Big| &\lesssim N^{-2+4\kappa+\eps} \| \cN_H^{1/2} \cN^{1/2} \xi \|^2. \end{split}  \]
The quadratic part of the fourth term in (\ref{eq:A2}) can be bounded, using that $\| \widehat{V}_N * s \|_\infty \lesssim N^\kappa$ and that $q \in H$ on the range of $\Xi$ (because $s_q = 0$ for $|q| \leq N^{\kappa/2+\eps}$). We find 
\[ \Big| \sum \widehat{V}_N (r) c_{q} s_q c_{q+r} s_{q+r} \langle a_q \xi , a_q \xi \rangle  \Big| 
\lesssim N^{-2+4\kappa+2\vep} \| \cN_H^{1/2} \xi \|^2.
%\lesssim N^{\kappa-2\eps} \| \cN^{1/2} \xi \|^2   
\]
As for the constant term, we extract a contribution observing that  
\begin{align*}
& \Big| \sum \widehat{V}_N (r) (c_q c_{q+r} -1) s_q s_{q+r} \Big| 
\leq \sum |\hat{V}_N(r)| (s_q^3 s_{q+r} + s_q s_{q+r}^3)
 \lesssim \| \widehat{V}_N\|_\infty \|s\|_1\|s^3\|_1 \lesssim N^{5\kappa/2 - 3\eps},
\end{align*}
where we used that $\|s\|_1 \lesssim N\|\vphi\|_1 \lesssim N$.
The terms on the last line of (\ref{eq:A2}) are simply estimated by
\begin{align*}
\Big| \sum \widehat{V}_N (r)  c_p^2 s_{p+r}^2 \langle \xi, a_p^* a_p \xi \rangle \Big|  ,  \; |\widehat{V}_N (0)| \Big| \sum c_p^2 s_q^2 \langle \xi, a_p^* a_p \xi\rangle \Big| 
% &\lesssim \|\hat{V}_N\|_\infty \|s^2\|_1 \| \cN^{1/2} \xi \|^2 \\
& \lesssim N^{-1 + 5\kappa/2-\eps} \| \cN^{1/2} \xi \|^2.
\end{align*}
Thus, on $\text{Ran } \Xi$, 
\begin{align}\label{eq:A2f}  \pm \Big[ A_2 - \sum_{q,r \in \Lambda^*} \widehat{V}_N (r) s_q s_{q+r} \Big] &\lesssim \; N^{-3+7\kappa + 4\eps} \cN^2_H +  \big(N^{2-2\kappa-2\eps} + N^{-2+11\kappa/2+3\vep}\big) (\cN_H +1) \nn
\\
& \qquad + N^{-2+4\kappa+2\eps} \cN \cN_H + N^{-1+5\kappa/2-\vep}\cN  + N^{5\kappa/2-\eps} \nn
\\
&\leq N^{-3+7\kappa + 4\eps} \cN^2_H +  \big(N^{2-2\kappa-2\eps} + N^{-2+11\kappa/2+3\vep}\big) (\cN_H +1) \nn
\\
& \qquad + N^{-\kappa/2-\vep}\cN^2 + N^{-1+5\kappa/2-\vep}\cN  + N^{5\kappa/2-\eps}
\end{align} 
for all $0<\kappa <2/3$ and for $\eps>0$ as in \eqref{eq:mom-sets}.

Next, we consider 
\[ \begin{split} 
A_3 = \; & 2  \sum \widehat{V}_N (r) c_{p+r} s_q s_{q+r} s_p \big( a_{p+r}^* a_{-q-r}^* a_{-p}^* a_{-q} + \text{h.c.} \big) \\
&+ 2  \sum \widehat{V}_N (r) s_q c_q s_{q+r}^2 \big(a_q^* a_{-q}^* + \text{h.c.} \big) \\
&+  \sum \widehat{V}_N (r) s_q c_q s_{q+r}^2 \big( a_{q+r}^* a_{-q-r}^* + \text{h.c.} \big) \\
&+2  \widehat{V}_N (0)  \sum s_q^2 s_p c_p \big(a_p^* a_{-p}^* + \text{h.c.} \big).
\end{split} \]
We can control the quartic term noticing that, on the range of $\Xi$, all momenta $p+r, q, q+r, p$ must be in $H$ (because $s=0$ on $S$ and because of parity). With $\| s_{H} \|_\infty \lesssim N^{3\kappa -2 +2\eps}$, we find, for $\xi \in \text{Ran } \Xi$,  
\[ \begin{split} \Big| \sum \widehat{V}_N (r) c_{p+r} s_q s_{q+r} s_p \langle \xi, a_{p+r}^* a_{-q-r}^* a_{-p}^* a_{-q} \xi \rangle \Big| 
&\lesssim \|\hat{V}_N\|_\infty \|s_H\|_\infty \|s_H\|_2^2 \|(\cN_H+1) \xi \|^2
\\
&\lesssim N^{-4+7\kappa+3\vep} \|(\cN_H+1) \xi \|^2.
  \end{split} \]
The quadratic terms can be estimated, for $\xi \in \text{Ran } \Xi$, by 
\[ \begin{split} \Big| \sum \widehat{V}_N (r) s_q &c_q s_{q+r}^2 \langle \xi, a_q^* a_{-q}^* \xi \rangle \Big| , \Big| \widehat{V}_N (0) \sum s_q^2 s_p c_p  \langle \xi, a_p^* a_{-p}^* \xi \rangle \Big| 
\\ &\lesssim \|\hat{V}_N\|_\infty \| s \|_2^2 \| s_{H} \|_2 \| (\cN_H+1)^{1/2} \xi \|^2 \lesssim N^{-3/2 + 4\kappa -\eps /2} \| (\cN_H + 1)^{1/2} \xi \|^2 \end{split} \] 
and by 
\[ \begin{split} \Big| \sum \widehat{V}_N (r) s_{q+r} c_{q+r} s_{q}^2 \langle \xi, a_q^* a_{-q}^* \xi \rangle \Big| &\lesssim \| \widehat{V}_N * s \|_\infty \| s^2_{H} \|_2  \| (\cN_H + 1 )^{1/2} \xi \|^2 \\ &\lesssim N^{-5/2 + 11\kappa/2  + 5\eps /2} \| (\cN_H+1)^{1/2} \xi \|^2 \end{split} \] 
We conclude that, on the range of $\Xi$, 
\begin{equation} \label{eq:A3f} \pm A_3 \lesssim  N^{-4+7\kappa+3\eps} \cN_H^2 +  N^{-3/2 + 4\kappa +\eps/2} \cN_H + N^{5\kappa/2-\vep}  \end{equation} 
for all $0 <\kappa < 2/3$ and for $\eps > 0$ as in \eqref{eq:mom-sets}. 

Finally, we bound the contribution
\[ \begin{split} 
A_4 = \; & \sum \widehat{V}_N (r) s_{p+r} s_q s_{q+r} s_p a^*_{-q-r} a^*_{-p} a_{-p-r} a_{-q} \\
&+  \sum \widehat{V}_N (r) s_{p+r}^2 s_p^2 \big( 2 a_{-p}^* a_{-p} +1 \big) +   \widehat{V}_N (0) \sum s_p^2 s_q^2 \big( 2 a_{-q}^* a_{-q} + 1 \big).
\end{split} \]
Proceeding similarly as we did for $A_3$, we easily find that, on $\text{Ran } \Xi$, 
\begin{align*}
A_4 &\lesssim \|\hat{V}_N\|_\infty \|s_H^2\|_\infty \|s_H\|_2^2 \cN_H^2 + \|\hat{V}_N\|_\infty \|s_H^2\|_\infty \|s^2\|_1 \cN_H + \|\hat{V}_N\|_\infty \|s^2\|_1^2
\\
&\lesssim N^{-6+10\kappa+5\eps} \cN_H^2 + N^{-5+17\kappa/2 + 3\eps} \cN_H + N^{-1+4\kappa-2\vep}.
\end{align*}
Combining the last equation with (\ref{eq:A0f}), (\ref{eq:A1f}), (\ref{eq:A2f}), (\ref{eq:A3f}), we obtain (\ref{eq:BQB}), with an error $\cE^{(Q_4)}_1$ satisfying, on $\text{Ran } \Xi$, 
\[ \begin{split} \pm  \cE^{(Q_4)}_1  \lesssim \; &N^{-\kappa/2-\vep} \cN^2  + N^{-3+7\kappa+4\eps} \cN_H^2 + N^{2-2\kappa-2\eps} \cN_H + N^{\kappa-3\eps} \cN + N^{5\kappa/2-\eps}, \end{split} \]
where we used the assumption $1/2 < \kappa < 8/15-2\vep/3$ (in particular $\vep < 1/20$).
\end{proof} 

\subsection{Proof of Lemma \ref{lem:TLc_main}}
We are now ready to show Lemma \ref{lem:TLc_main}.
%\begin{proof}[Proof of \Cref{lem:TLc_main}]
Combining the statements of Lemmas \ref{lem:TLc_kinetic}, \ref{lem:TLc_Q2}, \ref{lem:TLc_Q3} and \ref{lem:TLc_Q4}, we find 
\begin{equation} \label{eq:TLc_main}
\begin{split} 
e^{-\cB_1} \weyl^* &\cH_N \weyl e^{\cB_1} 
\\
= \; &\frac{N_0^2}{2} \widehat{V}_N (0) + \sum_{p \in \Lambda^*}p^2 a_p^*a_p + 2N^\kappa \widehat{V} (0) \cN + Q_3 + Q_4 \\
& + \sum_{p\in\Lambda^*} \Big[ p^2 s_p  + \frac{1}{2} N_0 \hat{V}_N(p) + \frac{1}{2}\sum_{r \in \Lambda^*} \hat{V}_N(r-p) s_r \Big] \, ( a_p^*a_{-p}^* + \hc ) \\
& +  \sum_{p \in \Lambda^*} \Big[ p^2s_p^2 + N_0 \hat{V}_N(p) s_p + \frac{1}{2} \sum_{r \in \Lambda^*} \hat{V}_N(p-r)s_p s_r \Big]  + \widetilde{\cE}_1 \\
\end{split} 
\end{equation} 
where, on the range of $\Xi$,  
\begin{equation} \label{eq:wtE1} \begin{split}  \widetilde{\cE}_1 \lesssim \; & N^{-3\vep} \sum_{p \in \Lambda^*} p^2a_p^*a_p +  N^{-\kappa/2-\vep} \cN^2 + N^{-3+7\kappa+4\vep} \cN_H^2  \\ &+  N^{2-2\kappa-2\eps} \cN_H +  N^{\kappa-\eps/2} \cN + N^{5\kappa/2 - \vep/2} , \end{split} \end{equation}  
if $1/2 < \kappa < 8/15-2\vep/3$. 
Here we also used the fact that the contribution $e^{-\cB_1} Q_1 e^{\cB_1}$, with $Q_1$ as in (\ref{eq:Weyl}), vanishes on the range of $\Xi$. With a slight abuse of notation that is only used in this subsection, we introduce the following momentum sets 
$$
L = \{ p \in \Lambda^* \backslash \{ 0 \} : |p| \leq N^{\kappa/2+\eps} \}, \quad L^c = \{ p \in \Lambda^* : |p| > N^{\kappa/2+\eps} \}.
$$

The constant term on the last line of (\ref{eq:TLc_main}) can be rewritten, using \eqref{eq:sp_def}, \eqref{eq:vphi} and \eqref{eq:scatlength}. We find
\begin{align*}
\sum_{p \in \Lambda^*} \Big[ p^2s_p^2 &+ N_0 \hat{V}_N(p) s_p + \frac{1}{2} \sum_{r \in \Lambda^*} \hat{V}_N(p-r)s_p s_r \Big]
\\
= \; &\frac{N_0^2}{2} \sum_{p \in L^c} \widehat{V}_N (p) \ph_p -\frac{N_0^2}{2} \sum_{\substack{ r\in L \\ p \in L^c}} \widehat{V}_N (p-r) \ph_p \ph_r \\
= \; &\frac{N_0^2}{2N^{1-\kappa}} (8\pi \mathfrak{a}_N - \widehat{V} (0) ) - N_0^2 \sum_{p \in L} \ph_p \left[ \frac{\widehat{V}_N (p)}{2} + \frac{1}{2} \sum_{r \in L^c } \widehat{V}_N (p-r) \ph_r \right] \\
=\; & \frac{N_0^2}{2N} N^\kappa (8\pi \mathfrak{a}_N - \widehat{V} (0) ) + N_0^2 \sum_{p \in L} p^2 \ph_p^2 + \frac{N_0^2}{2} \sum_{p,r \in L} \widehat{V}_N (p-r) \ph_p \ph_r.
\end{align*}
The last term is bounded by $C N^{-1+4\kappa+2\vep}$. We write
\[ \frac{N_0^2}{2} \sum_{p,r \in L} \widehat{V}_N (p-r) \ph_p \ph_r = \mathcal{O}\left( N^{-1+4\kappa+2\eps}\right). \] 
As for the second term, we observe that 
$$
N_0^2 \sum_{p \in L} p^2\vphi_p^2 = N_0^2 \sum_{p \in S}p^2\vphi_p^2 + \mathcal{O} \big( N^{5\kappa/2 - \vep}\big).
$$
Estimating  
$
|\hat{V}_N(p-r) - \hat{V}_N(r)| \lesssim N^{-2+2\kappa} |p| 
$
and using again equations \eqref{eq:vphi}, \eqref{eq:scatlength}, we deduce that 
\begin{equation} \label{eq:vphi_approx}
\Big| p^2 \vphi_p + \frac{4\pi \mathfrak{a}_N N^\kappa}{N} \Big| \lesssim N^{-2+2\kappa} |p|.
\end{equation}
Together with the bound $|\mathfrak{a}_N - \mathfrak{a} | \lesssim N^{-1+\kappa}$ from Lemma \ref{lem:vphi_prop}, this yields
\[
\begin{split} 
& \sum_{p \in \Lambda^*} \Big[ p^2 s_p^2 + N_0 \hat{V}_N(p) s_p +  \frac{1}{2} \sum_{r \in \Lambda^*} \hat{V}_N(p-r)s_p s_r \Big]  
\\
&= \frac{N_0^2}{2N} N^\kappa (8\pi \mathfrak{a} - \widehat{V} (0))  + \frac{N_0^2}{N^2} \sum_{p \in S} \frac{(4\pi \mathfrak{a} N^\kappa)^2}{p^2} + \mathcal{O}\big( N^{5\kappa/2-\eps}\big),
\end{split} 
\]  
if $-2+3\kappa+6\vep < 0$. Combining this with the first term on the right-hand side of (\ref{eq:TLc_main}), we conclude that 
\begin{equation}  \label{eq:TLc_constant}
\begin{split} \frac{N_0^2}{2} \widehat{V}_N (0) + &\sum_{p \in \Lambda^*} \Big[ p^2 s_p^2 + N_0 \hat{V}_N(p) s_p + \frac{1}{2} \sum_{r \in \Lambda^*} \hat{V}_N(p-r)s_p s_r \Big] \\ 
&=  4\pi \mathfrak{a} N^{1+\kappa} - 8\pi \mathfrak{a} N^\kappa (N-N_0) +  \sum_{p \in S} \frac{(4\pi \mathfrak{a} N^\kappa)^2}{p^2} + \mathcal{O}\big(
N^{5\kappa/2-\eps} \big),
\end{split} 
\end{equation} 
where we used the assumption $0 \leq N- N_0 \lesssim N^{3\kappa/2}$.

To deal with the off-diagonal quadratic term in \eqref{eq:TLc_main}, we use again \eqref{eq:vphi}, \eqref{eq:sp_def} and \eqref{eq:scatlength}. As a form on the range of the projection $\Xi = \1_{\{\cN_{(S \cup H)^c} = 0\}}\1_{\{\cN_H \in 2\mathbb{N}_0\}} $, we find  
\begin{equation} \begin{split}  \label{eq:Q2_renorm}
\sum_{p \in \Lambda^*} \Big[ p^2 s_p  &+ \frac{1}{2} N_0 \hat{V}_N(p) + \frac{1}{2}\sum_{r \in \Lambda^*} \hat{V}_N(r-p) s_r \Big]  (a_p^*a_{-p}^* + \hc ) \\ &= -N_0 \sum_{p \in S} p^2 \vphi_p \, (a_p^*a_{-p}^* + \hc )   - \frac{N_0}{2}\sum_{\substack{p \in \Lambda^* \\ r\in L}} \hat{V}_N(r-p) \vphi_r \, (a_p^*a_{-p}^* + \hc ).
\end{split}
\end{equation} 
Recalling \eqref{eq:vphi_approx}, we can estimate 
\[ \pm \Big[ - N_0 \sum_{p \in S} p^2 \vphi_p \, ( a_p^*a_{-p}^* + \hc )  - 4\pi \mathfrak{a} N^\kappa \sum_{p\in S} ( a_p^* a_{-p}^* + \hc ) \Big] \lesssim N^{-1+5\kappa/2+5\eps/2} (\cN + N^{3\kappa/2}). \]
As for the last term on the right-hand side of (\ref{eq:Q2_renorm}), we can bound it switching to position space. For $\xi \in \cF(\Lambda)$ we find  
\begin{align*}
&\Big| \frac{N_0}{2} \sum_{\substack{p \in \Lambda^* \\ r\in L}} \hat{V}_N(r-p) \vphi_r  \langle \xi, a_p^*a_{-p}^* \xi \rangle \Big| = \Big| \frac{N_0}{2} \int_{\Lambda^2} V_N(x-y)\sum_{r\in L} e^{i(x-y)r} \vphi_r \langle a_x a_y \xi , \xi \rangle \, \dd x \dd y \Big| \\
&\leq N_0 \sum_{r\in L} |\vphi_r| \left(\delta  \langle \xi, Q_4 \xi \rangle + C \delta^{-1} N^{-1+\kappa} \| \xi \|^2 \right)
\lesssim N^{ -1/2+3\kappa/4} \left(\langle \xi, Q_4 \xi \rangle + N^{5\kappa/2+2\vep} \| \xi \|^2 \right), 
\end{align*}
choosing $\delta = N^{-1/2-3\kappa/4-\vep}$. We conclude that
\[ \sum_{p \in \Lambda^*} \Big[ p^2 s_p  + \frac{1}{2} N_0 \hat{V}_N(p) + \frac{1}{2}\sum_{r \in \Lambda^*} \hat{V}_N(r-p) s_r \Big]  (a_p^*a_{-p}^* + \hc ) = 4\pi \mathfrak{a} N^\kappa \sum_{p\in S} ( a_p^* a_{-p}^* + \hc ) + \cE'_1 \]
where, on the range of $\Xi$, 
\[  \pm \cE'_1 \lesssim N^{-5\eps/2} Q_4 + N^{-1+5\kappa/2+5\eps/2} \cN + N^{5\kappa/2-\eps/2}, \]
if $-2+3\kappa+10\vep < 0$ .
Combining this estimate with (\ref{eq:wtE1}), and simplifying the absorbing the term proportional to $\cN$ by a Cauchy--Schwarz inequality, we arrive at (\ref{eq:claimB1}), (\ref{eq:claimB1E}) given that $1/2 < \kappa < 8/15-2\vep/3$ (note that this implies the previous inequality on $\vep$ and $\kappa$).
\qed

\section{Cubic transformation} 
\label{sec:cubic} 

The goal of this section is to show Lemma \ref{lem:Tc_main} and Proposition \ref{prop:NN2}.
All lemmas in this section hold for all $0 < \kappa < 2/3$ and $\vep>0$ as introduced in \eqref{eq:mom-sets}.

\subsection{Properties of the cubic transformation} 

We start by observing that the action of the transformation $T_k = e^{\cB_k}$, defined in \eqref{eq:Tc-def}, can be computed explicitly. To this end, we define, for $k \in S$, 
\[ X_k = |\cB_k | = \sqrt{\cB_k^\circ \cB_k^\sharp}. \]

\begin{lemma}\label{lem:com_BB}
Let $k\in S$. We have
\begin{align}\label{eq:exp_Xk2}
X_k^2 = \cB_k^\circ \cB_k^\sharp = \sum_{r \in H_k} N \vphi_r (\varphi_r + \varphi_{r+k})  a^*_k a_k\Theta_{k,r}.
\end{align}
Furthermore, recalling the notation $\cN_k = a_k^* a_k$, 
\begin{equation} \label{eq:Xp_bound}
X_k^2\leq 2 N \|\varphi_H\|^2_2 \cN_k \lesssim N^{-2+3\kappa+\vep}\cN_k. 
\end{equation}

\end{lemma}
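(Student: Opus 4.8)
The plan is to compute $\cB_k^\circ \cB_k^\sharp$ directly from the definition \eqref{eq:def:Bk}. Writing out the product,
\[
\cB_k^\circ \cB_k^\sharp = \sum_{r, r' \in H_k} N \vphi_r \vphi_{r'} \, \Theta_{k,r'} \, a_{r+k} a_{-r} a_k^* \, a_k^* a_{-r'}^* a_{r'+k}^* \, \Theta_{k,r}
\]
(being careful with operator ordering and with the fact that the cutoffs $\Theta_{k,r}$, being diagonal in the number basis, commute among themselves but not with the creation/annihilation operators). The key structural input is the role of the cutoffs: $\Theta_{k,r'}$ on the left only survives when, after the action of $a_k^\ast a_{-r'}^\ast a_{r'+k}^\ast$, there is no $k$-connection \emph{except} possibly the one just created, and no $q$-connection for $q \neq k$. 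I would argue that this forces $r' = r$: if $r' \neq r$ were both active, the exclusion rule $\Theta_{k,r'}^{(1)}$ (which forbids creating a second $k$-connection) would annihilate the term, since the state already carries the $(-r, r+k)$ pair. Thus only the diagonal $r' = r$ contributes, reducing the double sum to a single sum.

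For the diagonal term, I would then commute the annihilation operators $a_{r+k} a_{-r}$ through the creation operators $a_{-r}^\ast a_{r+k}^\ast$ using the CCR. Because $r \in H_k$ means $r, r+k \in H$ and $k \in S$, the momenta $-r$, $r+k$, $k$ are pairwise distinct (they lie in disjoint momentum regions, using that $S$ and $H$ do not overlap, which follows from the condition $-2+3\kappa+4\vep<0$ in \eqref{eq:mom-sets}). On the range of the relevant cutoff — where there is no pre-existing $(-r, r+k)$ pair — the normal-ordering produces $a_{-r} a_{-r}^\ast = 1$ and $a_{r+k} a_{r+k}^\ast = 1$ (the number operators $\cN_{-r}, \cN_{r+k}$ vanish on that subspace, which is exactly what $\Theta_{k,r}^{(1)}$ enforces for the pair $(-r,r+k)$). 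What remains is $N \vphi_r^2 \, a_k^\ast a_k \, \Theta_{k,r}$ plus the cross term. Actually, re-examining: the product $a_{-r} a_{-r}^\ast a_{r+k} a_{r+k}^\ast$ acting on a state with no such pair gives $1$, but one must also track whether the cutoff $\Theta_{k,r}$ on the right, after being pushed left past everything, matches $\Theta_{k,r'}$; since $r'=r$ this is automatic. This yields
\[
X_k^2 = \sum_{r \in H_k} N \vphi_r (\vphi_r + \vphi_{r+k}) \, a_k^\ast a_k \, \Theta_{k,r},
\]
where the asymmetric factor $(\vphi_r + \vphi_{r+k})$ rather than $2\vphi_r^2$ arises because the two high momenta $-r$ and $r+k$ play symmetric roles in the pair but carry different $\vphi$-weights in $\cB_k^\sharp$ — one must recognize that relabelling $r \mapsto -(r+k)$ is a symmetry of $H_k$ sending $\vphi_r \mapsto \vphi_{r+k}$, so the sum can be symmetrized to replace one factor of $\vphi_r$ by $\tfrac12(\vphi_r + \vphi_{r+k})$; combined with the overall factor this gives exactly \eqref{eq:exp_Xk2}. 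I expect \textbf{the pairing argument ($r'=r$ forced by the cutoff) to be the main obstacle}, requiring careful bookkeeping of how $\Theta_{k,r'}^{(1)}$ interacts with the freshly created pair.

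For the bound \eqref{eq:Xp_bound}: since $0 \le \Theta_{k,r} \le 1$ and $\vphi \ge 0$ (which follows from $\vphi_p \lesssim p^{-2}N^{-1+\kappa} > 0$, or more carefully from the sign properties of the scattering solution established in \Cref{app:scat}), and $a_k^\ast a_k = \cN_k$ commutes with everything in the sum, I would estimate
\[
X_k^2 \;\le\; \sum_{r \in H_k} N \vphi_r(\vphi_r + \vphi_{r+k}) \, \cN_k \;\le\; N \sum_{r \in H_k} \big(\vphi_r^2 + \vphi_r \vphi_{r+k}\big)\, \cN_k \;\le\; 2 N \|\vphi_H\|_2^2 \, \cN_k,
\]
using Cauchy–Schwarz $\sum_r \vphi_r \vphi_{r+k} \le \|\vphi_H\|_2^2$ (both $r$ and $r+k$ range over subsets of $H$) and $\sum_{r \in H_k} \vphi_r^2 \le \|\vphi_H\|_2^2$. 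Finally, $\|\vphi_H\|_2^2 = \|\vphi^{(1-\kappa-\vep)}\|_2^2 \lesssim N^{2(-1+\kappa - (1-\kappa-\vep)/2)} = N^{-3+3\kappa+\vep}$? — here I need to recheck the exponent: $H = \{|p| > N^{1-\kappa-\vep}\}$, so by \Cref{lem:vphi_prop} with $\alpha = 1-\kappa-\vep$, $\|\vphi^{(\alpha)}\|_2 \lesssim N^{-1+\kappa-\alpha/2} = N^{-1+\kappa - (1-\kappa-\vep)/2} = N^{-3/2 + 3\kappa/2 + \vep/2}$, hence $\|\vphi_H\|_2^2 \lesssim N^{-3+3\kappa+\vep}$, giving $X_k^2 \lesssim N \cdot N^{-3+3\kappa+\vep}\cN_k = N^{-2+3\kappa+\vep}\cN_k$, as claimed. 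This last step is entirely routine once \Cref{lem:vphi_prop} is invoked.
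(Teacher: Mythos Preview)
Your overall plan has the right intuition --- the cutoffs collapse the double sum in $\cB_k^\circ\cB_k^\sharp$ --- but the claim ``this forces $r'=r$'' is where it goes wrong. The pair $\{-t,\,t+k\}$ created by $\cB_k^\sharp$ is unordered, so the annihilation $a_{r+k}a_{-r}$ from $\cB_k^\circ$ matches it in \emph{two} ways: $r=t$ and $r=-(t+k)$. Both contributions survive, and the second one supplies precisely the cross term $\vphi_r\vphi_{r+k}$ (via $\vphi_{-(r+k)}=\vphi_{r+k}$). Your attempt to recover the factor $(\vphi_r+\vphi_{r+k})$ by symmetrizing $\sum_r N\vphi_r^2\,\Theta_{k,r}$ under $r\mapsto -(r+k)$ cannot work: that relabelling is a symmetry of the sum and leaves its value unchanged; one obtains $\tfrac12\sum_r(\vphi_r^2+\vphi_{r+k}^2)\Theta_{k,r}$, which differs from the target $\tfrac12\sum_r(\vphi_r+\vphi_{r+k})^2\Theta_{k,r}$ by exactly the missing cross terms.

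The paper's computation is cleaner and pinpoints which cutoff does what. First, $a_{r+k}a_{-r}\,\Theta_{k,t}=0$ for \emph{every} $r\in H_k$, because the rightmost $\Theta_k^{(1)}$ forbids any $k$-connection and hence simultaneous occupancy of $-r$ and $r+k$; so the middle factor reduces to the commutator $[a_{k+r}a_{-r},\,a_{-t}^*a_{t+k}^*]=(\delta_{t,r}+\delta_{t+r+k,0})(1+\cN_{-r}+\cN_{r+k})$. Second, it is the \emph{left} cutoff $\Theta_{k,r}^{(2)}$ --- not $\Theta^{(1)}$ --- that kills the number-operator pieces, since $-r\in -(k+r)+S$ and $r+k\in r+S$ lie in the excluded shell-vicinities, giving $\Theta_{k,r}a_{-r}^*=\Theta_{k,r}a_{r+k}^*=0$. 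Finally, $\Theta_{k,-(r+k)}=\Theta_{k,r}$ lets the two deltas combine into $\vphi_r(\vphi_r+\vphi_{r+k})$. For the bound \eqref{eq:Xp_bound} you do not need $\vphi_p\ge 0$, which is not asserted in \Cref{lem:vphi_prop}: the symmetrized form $\tfrac{N}{2}\sum_r(\vphi_r+\vphi_{r+k})^2\Theta_{k,r}\,\cN_k$ has manifestly nonnegative coefficients, so one may drop $\Theta_{k,r}\le 1$ and use $(\vphi_r+\vphi_{r+k})^2\le 2(\vphi_r^2+\vphi_{r+k}^2)$ together with $\|\vphi^{(1-\kappa-\vep)}\|_2^2\lesssim N^{-3+3\kappa+\vep}$.
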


\begin{proof}
Let us compute 
\begin{equation}\label{eq:BcircBdag} 
\begin{split} 
\cB_k^\circ \cB_k^\sharp 
	&= \sum_{r,t \in H_k} N \varphi_r\varphi_t\Theta_{k,r} a^*_k a_{r+k} a_{-r} a^*_{-t} a^*_{t+k}  a_k\Theta_{k,t} \\
	&= \sum_{r,t \in H_k} N \varphi_r\varphi_t \Theta_{k,r} a^*_k [a_{k+r} a_{-r}, a^*_{-t} a^*_{t+k} ] a_k\Theta_{k,t} \\
	&= \sum_{r,t \in H_k} N \varphi_r\varphi_t \Theta_{k,r} a^*_k \left( \delta_{t,r} + \delta_{k+r+t,0}\right) \left(1 + a^*_{-r} a_{-r} + a^*_{r+k}a_{r+k}\right) a_k\Theta_{k,t} \\
	&= \sum_{r \in H_k} N \vphi_r (\varphi_r + \varphi_{r+k})  a^*_k a_k\Theta_{k,r}.
\end{split}
\end{equation}
To obtain the second equality, we used that $a_{r+k} a_{-r} \Theta_{k,t} = 0$ (since the cutoff imposes that there is no $k$-connection). In the last identity, we used $\Theta_{k,r} a_{-r}^* = \Theta_{k,r} a_{r+k}^* = 0$ (because the cutoff imposes that the shell-vicinity of $-(k+r)$ and $r$ are empty) and $\Theta_{k,-(r+k)} = \Theta_{k,r}$.
The bound (\ref{eq:Xp_bound}) then follows from \Cref{lem:vphi_prop}.
% we obtain
%\begin{align*}
%X_k^2 \lesssim N \|\varphi\|_2^2 \mathcal N_k \lesssim N^{-2+3\kappa+\vep}\cN_k.
%\end{align*}
\end{proof}

\begin{lemma}\label{lem:Tk}
For $k\in S$ we have
\begin{align*}
T_k &= \cos X_k + \mathcal B_k^\sharp \frac{\sin X_k}{X_k} - \frac{\sin X_k}{X_k} \mathcal B_k^\circ +	 \mathcal B_k^\sharp \frac{\cos X_k -1}{X_k^2}\mathcal B_k^\circ
\end{align*}
%and consequently
%\begin{align} \label{eq:T_Simple}
%T_{k} \Lambda_{k} = \T_k:= \left(\cos X_{k}  + \mathcal B_{k}^\sharp \frac{\sin X_{k}}{X_{k}} \right) 
%\Lambda_{k}.
%\end{align}
\end{lemma}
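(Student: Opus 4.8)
The idea is that $\cB_k^\sharp$ and $\cB_k^\circ$ are nilpotent, so that $T_k=e^{\cB_k}$ can be computed by a closed-form resummation of its Taylor series $e^{\cB_k}=\sum_{n\ge 0}\frac1{n!}\cB_k^n$, exactly as one exponentiates a $2\times 2$ skew-symmetric matrix. Write $A:=\cB_k^\sharp$, so that $\cB_k=A-A^*$ with $A^*=\cB_k^\circ$, and set $X_k^2=A^*A$ (the operator computed in \Cref{lem:com_BB}) and $Y_k:=\sqrt{AA^*}=\sqrt{\cB_k^\sharp\cB_k^\circ}$. The first step — and really the only non-formal one — is to check that $A^2=0$, hence $(A^*)^2=0$. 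This is precisely what the cut-off $\Theta_k^{(1)}$ is for: its factors are commuting functions of number operators, the factor indexed by $t$ annihilates every vector of the form $a_{-t}^*a_{t+k}^*\psi$ when $t\in H_k$, so $\Theta_{k,r}\,a_{-t}^*a_{t+k}^*=0$ for all $r,t\in H_k$. Reading off \eqref{eq:def:Bk}, this says $\Theta_{k,r}\cB_k^\sharp=0$, whence $(\cB_k^\sharp)^2=0$ and, by taking adjoints, $(\cB_k^\circ)^2=0$.

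From nilpotency I would next extract the intertwining relations $AX_k^2=AA^*A=Y_k^2A$ and, by adjunction, $X_k^2A^*=A^*Y_k^2$. Iterating gives $A\,(X_k^2)^m=(Y_k^2)^m A$ and $(X_k^2)^mA^*=A^*(Y_k^2)^m$ for all $m\ge 0$, hence $A\,g(X_k^2)=g(Y_k^2)\,A$ and $g(X_k^2)\,A^*=A^*\,g(Y_k^2)$ for any function $g$ defined by the continuous functional calculus of the bounded, non-negative operators $X_k^2$ and $Y_k^2$; boundedness on the dense subspace of vectors with finitely many particles is guaranteed by the bound $X_k^2\lesssim N^{-2+3\kappa+\eps}\cN_k$ of \Cref{lem:com_BB}, which also makes the exponential series norm-convergent there.

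The core of the argument is then a closed formula for the powers of $\cB_k$, obtained by induction from $A^2=(A^*)^2=0$ together with the intertwining relations: for $m\ge 0$,
\begin{equation*}
\cB_k^{\,2m+1}=(-1)^m\big(\cB_k^\sharp\,X_k^{2m}-X_k^{2m}\,\cB_k^\circ\big),\qquad
\cB_k^{\,2m+2}=(-1)^{m+1}\big(X_k^{2m+2}+Y_k^{2m+2}\big),
\end{equation*}
with $\cB_k^0=1$. In the induction step every mixed word of length $\ge 3$ either collapses through $AX_k^{2m}A^*=Y_k^{2(m+1)}$ and $A^*AX_k^{2m}=X_k^{2(m+1)}$, or vanishes because — after commuting $A,A^*$ past $X_k^{2m}$ with the intertwining relations — it contains $A^2$ or $(A^*)^2$. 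Substituting this into $e^{\cB_k}=\sum_{n\ge 0}\frac1{n!}\cB_k^n$ and resumming the odd part as $\cB_k^\sharp\frac{\sin X_k}{X_k}-\frac{\sin X_k}{X_k}\cB_k^\circ$ and the even part as $(\cos X_k-1)+(\cos Y_k-1)$ yields
\begin{equation*}
e^{\cB_k}=\cos X_k+(\cos Y_k-1)+\cB_k^\sharp\,\frac{\sin X_k}{X_k}-\frac{\sin X_k}{X_k}\,\cB_k^\circ .
\end{equation*}
Finally, the intertwining relation gives $\cB_k^\sharp\,\frac{\cos X_k-1}{X_k^2}\,\cB_k^\circ=\frac{\cos Y_k-1}{Y_k^2}\,\cB_k^\sharp\cB_k^\circ=\frac{\cos Y_k-1}{Y_k^2}\,Y_k^2=\cos Y_k-1$, where the ratio is defined by functional calculus with value $-\tfrac12$ at $0$; replacing the $Y_k$-term by this expression produces exactly the asserted identity.

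The bookkeeping in the induction — keeping track of which words survive and which vanish — is the most error-prone part, but it is entirely mechanical. The one place where a little genuine thought is needed is the nilpotency $(\cB_k^\sharp)^2=0$: one must verify that $\Theta_{k,r}$ annihilates every state on which a second copy of $\cB_k^\sharp$ could act, i.e.\ that the ``create at most one $k$-connection'' exclusion rule encoded in $\Theta_k^{(1)}$ really does what it is designed for. After that, the computation is just the operator version of exponentiating a skew-symmetric $2\times2$ matrix.
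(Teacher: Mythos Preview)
Your proof is correct and follows essentially the same strategy as the paper: both rely on the nilpotency $(\cB_k^\sharp)^2=0$ forced by the cut-off $\Theta_k^{(1)}$, and then resum the exponential series explicitly. The only difference is organizational: you introduce the auxiliary operator $Y_k=\sqrt{\cB_k^\sharp\cB_k^\circ}$ and intertwining relations, obtaining the term $\cos Y_k-1$ which you then rewrite as $\cB_k^\sharp\frac{\cos X_k-1}{X_k^2}\cB_k^\circ$ at the end, whereas the paper avoids $Y_k$ altogether by using the elementary identity $(\cB_k^\sharp\cB_k^\circ)^m=\cB_k^\sharp X_k^{2(m-1)}\cB_k^\circ$ from the outset, so that this term appears directly in the resummation. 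The paper's route is slightly more economical, but the content is the same.
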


\begin{proof}
Due to the cutoff, $ \mathcal B_k^\sharp$ cannot create a $k$-connection if there is one already. That is, $$(\cB_k^\sharp)^2 =  (\cB_k^\circ)^2 = 0.$$ This allows us to expand the exponential explicitly. We find 
\begin{align*}
T_k &= e^{\cB_k} 
	= \sum_{m \geq 0} \frac{( \mathcal B_k^\sharp - \mathcal B_k^\circ)^m}{m!} 
	= \sum_{m\geq 0}  \frac{( \mathcal B_k^\sharp - \mathcal B_k^\circ)^{2m}}{(2m)!} + \frac{( \mathcal B_k^\sharp - \mathcal B_k^\circ)^{2m+1}}{(2m+1)!} \\
	&= \sum_{m\geq 1} \frac{(-1)^m}{(2m)!} \mathcal B_k^\sharp \left(\mathcal B_k^\circ \mathcal B_k^\sharp \right)^{m-1} \mathcal B_k^\circ + \sum_{m\geq 0} \frac{(-1)^m}{(2m!)} \left(\mathcal B_k^\circ \mathcal B_k^\sharp \right)^{m} \\
	&\quad  + \sum_{m\geq 0} \frac{(-1)^m}{(2m+1)!} \mathcal B_k^\sharp \left(\mathcal B_k^\circ \mathcal B_k^\sharp \right)^m -  \sum_{m\geq 0} \frac{(-1)^m}{(2m+1)!} \left(\mathcal B_k^\circ \mathcal B_k^\sharp \right)^m \mathcal B_k^\circ.
	\end{align*}
With Lemma \ref{lem:com_BB}, we find 
\begin{align*} 
T_k &= \mathcal B_k^\sharp \left( \sum_{m\geq 1} \frac{(-1)^m}{(2m)!} X_k^{2m-2}\right) \mathcal B_k^\circ + \sum_{m\geq 0} \frac{(-1)^m}{(2m!)} X_k^{2m} \\
	&\quad + \mathcal B_k^\sharp \sum_{m\geq 0} \frac{(-1)^m}{(2m+1)!}  X_k^{2m} -  \sum_{m\geq 0} \frac{(-1)^m}{(2m+1)!} X_k^{2m} \mathcal B_k^\circ  \\
	&= \mathcal B_k^\sharp \frac{\cos X_k -1}{X_k^2}\mathcal B_k^\circ + \cos X_k + \mathcal B_k^\sharp \frac{\sin X_k}{X_k} - \frac{\sin X_k}{X_k} \mathcal B_k^\circ \, .
\end{align*}
%which is the first identity of the lemma. The second identity \eqref{eq:T_Simple} follows from that $$
%\mathcal B_k^\circ \Theta_{k}^{(1)} = 0.$$
\end{proof}

For $k \in S$, we introduce the notation 
\begin{align*}
\Lambda_k := \Theta_{k}^{(1)} \mathds{1}_{\{\mathcal N_{(S\cup H)^c}=0\}} \mathds{1}_{\{\mathcal N_H \in 2 \mathbb{N}_0\}}.
\end{align*} 
We observe that 
\[ \prod_{k \in S} \Lambda_k \mathds{1}_{\{\cN_{S^c} = 0\}} = \mathds{1}_{\{\cN_{S^c} = 0\}}. \]
Thus, on the range of $\mathds{1}_{\{\cN_{S^c} = 0\}}$, we have 
\begin{equation}\label{eq:TTL} T_k = T_k \Lambda_k  =  \Big( \cos X_k + \cB_k^\sharp \frac{\sin X_k}{X_k} \Big) \Lambda_k =: \wt{T}_k. \end{equation} 
Moreover, since $\Lambda_k$ commutes with $T_{k'}$ for all $k' \not = k$, we obtain, on the range of $\mathds{1}_{\{\cN_{S^c} = 0\}}$, 
\begin{equation}\label{eq:TcwtT}  T_c = \prod_{k \in S} T_k = \prod_{k \in S} \wt{T}_k. \end{equation} 
%In this section, we will frequently use (\ref{eq:TTL}) and (\ref{eq:TcwtT}). 

In the next lemma, we control moments of the number operator with respect to the action of the cubic transformation.
\begin{lemma} \label{lem:Tc_N}
For $S' \subset S$, let $T_{c,S'}:= \prod_{p\in S'} T_p$. There is a constant $C>0$ such that, for all $S' \subset S$ and all $j \geq 1$, we have, on the range of $\mathds{1}_{\{\cN_{S^c} = 0\}}$, 
\begin{align} \label{eq:lem:Tc_N:1}
\pm  \left( T_{c,S'}^* \mathcal N_S^j T_{c,S'} - \mathcal N_S^j\right)  
&\leq C j N^{-2+3\kappa + \varepsilon} \mathcal N_S^{j},
\\
 T_{c,S'}^* \mathcal N_S^j T_{c,S'} &\leq \cN_S^j \label{eq:lem:Tc_N:2}.
\end{align}
Moreover, for all $j \geq 1$, there is a constant $C_j>0$ such that, for all $S' \subset S$, we have, on the range of $\mathds{1}_{\{\cN_{S^c} = 0\}}$,   
\begin{align}
 T_{c,S'}^* \mathcal N_H^j T_{c,S'}	&\leq C_j  N^{-2+3\kappa+\vep} \cN_S \left(N^{-2+3\kappa+\vep} \cN_S+1\right)^{j-1}. \label{eq:lem:Tc_N:3}
\end{align}
\end{lemma}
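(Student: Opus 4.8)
The plan is to reduce the whole statement to the explicit one--step formula of \Cref{lem:Tk} and then telescope along the product $T_{c,S'}=\prod_{p\in S'}T_p$. By \eqref{eq:TTL}--\eqref{eq:TcwtT} (whose proof applies verbatim to any subset $S'$), on $\Ran\mathds{1}_{\{\cN_{S^c}=0\}}$ one may replace $T_{c,S'}$ by $\prod_{p\in S'}\wt T_p$ with $\wt T_k=\big(\cos X_k+\cB_k^\sharp\tfrac{\sin X_k}{X_k}\big)\Lambda_k$. The first step is to prove, for a fixed $k\in S$ and every integer $n\geq1$, the one--step identities
\[
\wt T_k^*\,\cN_S^n\,\wt T_k=\Lambda_k\Big[\cN_S^n-\sin^2(X_k)\big(\cN_S^n-(\cN_S-1)^n\big)\Big]\Lambda_k,\qquad
\wt T_k^*\,\cN_H^n\,\wt T_k=\Lambda_k\Big[\cN_H^n+\sin^2(X_k)\big((\cN_H+2)^n-\cN_H^n\big)\Big]\Lambda_k .
\]
To obtain them I expand $\wt T_k^*(\cdot)\wt T_k$ into four pieces; the two cross pieces vanish, because $\Lambda_k\cB_k^\sharp=0$ ($\cB_k^\sharp$ always creates a $k$-connection, whereas $\Lambda_k$ projects, through $\Theta_k^{(1)}$, onto the sector with no $k$-connection) and dually $\cB_k^\circ C_k\Lambda_k=\cB_k^\circ\Lambda_k C_k=0$ (using $\cB_k^\circ\Lambda_k=0$ and that $C_k=\cos X_k$ commutes with $\Lambda_k$). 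The two surviving pieces are then evaluated using $\cN_S^n\cB_k^\sharp=\cB_k^\sharp(\cN_S-1)^n$ and $\cN_H^n\cB_k^\sharp=\cB_k^\sharp(\cN_H+2)^n$ (read off from \eqref{eq:def:Bk}), the identity $\cB_k^\circ\cB_k^\sharp=X_k^2$ of \Cref{lem:com_BB}, and $\cos^2 X_k+\sin^2 X_k=1$, recalling that $X_k,\tfrac{\sin X_k}{X_k},\cN_S,\cN_H$ and $\Lambda_k$ commute pairwise.

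From the first identity, $\wt T_k^*\cN_S^n\wt T_k\leq\Lambda_k\cN_S^n\Lambda_k$, which already yields \eqref{eq:lem:Tc_N:2} by telescoping: enumerate $S'=\{p_1,\dots,p_m\}$, put $\phi_l=\wt T_{p_{l+1}}\cdots\wt T_{p_m}\psi$ for $\psi\in\Ran\mathds{1}_{\{\cN_{S^c}=0\}}$, and note that $\Lambda_{p_l}\phi_l=\phi_l$, since $\phi_l$ contains no $p_l$-connection, has no particles outside $S\cup H$, and has an even number of particles in $H$ (immediate from the definitions of $\wt T_{p_{l'}}$ and of the cutoffs $\Theta_{p_{l'},r}$); then $\langle\phi_{l-1},\cN_S^n\phi_{l-1}\rangle=\langle\phi_l,\wt T_{p_l}^*\cN_S^n\wt T_{p_l}\phi_l\rangle\leq\langle\phi_l,\cN_S^n\phi_l\rangle$, and iterating down to $\phi_m=\psi$ gives $T_{c,S'}^*\cN_S^n T_{c,S'}\leq\cN_S^n$ on $\Ran\mathds{1}_{\{\cN_{S^c}=0\}}$. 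For the lower bound in \eqref{eq:lem:Tc_N:1} the same telescoping with the \emph{exact} identity gives
\[
\langle\psi,\cN_S^n\psi\rangle-\langle\phi_0,\cN_S^n\phi_0\rangle=\sum_{l=1}^m\big\langle\phi_l,\sin^2(X_{p_l})\big(\cN_S^n-(\cN_S-1)^n\big)\phi_l\big\rangle\leq 2nN\|\varphi_H\|_2^2\sum_{l=1}^m\langle\phi_l,\cN_{p_l}\cN_S^{n-1}\phi_l\rangle,
\]
where I used the elementary bound $0\leq\sin^2(X_k)\big(\cN_S^n-(\cN_S-1)^n\big)\leq n\,X_k^2\,\cN_S^{n-1}\leq 2nN\|\varphi_H\|_2^2\,\cN_k\cN_S^{n-1}$ (combining $\sin^2 X_k\leq X_k^2$, the fact that $\sin^2 X_k$ vanishes when $\cN_k=0$, the inequality $0\leq m^n-(m-1)^n\leq n\,m^{n-1}$ for integers $m\geq1$, and \eqref{eq:Xp_bound}).

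The key point now is that $\cN_{p_l}$ commutes with $\wt T_{p_{l'}}$ whenever $l'\neq l$ (as $\wt T_{p_{l'}}$ acts only on the mode $p_{l'}\in S$ and on momenta in $H$), so $\cN_{p_l}^{1/2}\phi_l=T_{c,S'_l}\big(\cN_{p_l}^{1/2}\psi\big)$ with $S'_l=\{p_{l+1},\dots,p_m\}$ and $\cN_{p_l}^{1/2}\psi\in\Ran\mathds{1}_{\{\cN_{S^c}=0\}}$; applying \eqref{eq:lem:Tc_N:2} (exponent $n-1$) then gives $\langle\phi_l,\cN_{p_l}\cN_S^{n-1}\phi_l\rangle\leq\langle\psi,\cN_{p_l}\cN_S^{n-1}\psi\rangle$, and since the $p_l$ are distinct, $\sum_l\cN_{p_l}\leq\cN_S$, so the total error is $\leq 2nN\|\varphi_H\|_2^2\langle\psi,\cN_S^n\psi\rangle\lesssim nN^{-2+3\kappa+\eps}\langle\psi,\cN_S^n\psi\rangle$; together with \eqref{eq:lem:Tc_N:2}, this is \eqref{eq:lem:Tc_N:1}. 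Finally, \eqref{eq:lem:Tc_N:3} follows by induction on $j$ by exactly the same telescoping applied to $\cN_H^j$, using $(\cN_H+2)^j-\cN_H^j\leq C_j(\cN_H+1)^{j-1}$: for $j=1$ one gets $\langle\phi_0,\cN_H\phi_0\rangle=2\sum_l\langle\phi_l,\sin^2(X_{p_l})\phi_l\rangle\leq 4N\|\varphi_H\|_2^2\sum_l\langle\phi_l,\cN_{p_l}\phi_l\rangle$ and $\langle\phi_l,\cN_{p_l}\phi_l\rangle=\langle\psi,\cN_{p_l}\psi\rangle$ (now with equality, since $\cN_{p_l}$ commutes through and $\wt T_{p_{l'}}^*\wt T_{p_{l'}}=\Lambda_{p_{l'}}$ acts trivially on $\psi$), hence $\leq 4N\|\varphi_H\|_2^2\langle\psi,\cN_S\psi\rangle\lesssim N^{-2+3\kappa+\eps}\langle\psi,\cN_S\psi\rangle$; for $j\geq2$ one bounds $\langle\phi_l,\cN_{p_l}(\cN_H+1)^{j-1}\phi_l\rangle$ by transferring back to $\psi$ as above and invoking \eqref{eq:lem:Tc_N:3} for all exponents $\leq j-1$ on $T_{c,S'_l}$.

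The step I expect to be the main obstacle is the derivation of the two one--step identities: one has to keep track of the order of the cutoffs sitting inside $\cB_k^\sharp$, $\cB_k^\circ$ and $\Lambda_k$ and check cleanly that the cross terms vanish — this is exactly where the exclusion rules encoded in $\Theta_{k,r}$ (at most one $k$-connection, and no $q$-connection with $q\neq k$) enter, and they are also what guarantees that $\phi_l$ has the structure needed for $\Lambda_{p_l}\phi_l=\phi_l$. Once that is in place, the telescoping — together with the transfer estimate that keeps the error of size $N^{-2+3\kappa+\eps}$ rather than $|S|\,N^{-2+3\kappa+\eps}$, thanks to $[\cN_{p_l},\wt T_{p_{l'}}]=0$ for $l'\neq l$ and $\sum_l\cN_{p_l}\leq\cN_S$ — is routine.
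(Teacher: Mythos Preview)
Your proposal is correct and follows essentially the same route as the paper: both derive the one--step identities
\[
\wt T_k^*\,\cN_S^j\,\wt T_k=\Lambda_k\big[\cN_S^j+\big((\cN_S-1)^j-\cN_S^j\big)\sin^2 X_k\big]\Lambda_k,\qquad
\wt T_k^*\,\cN_H^j\,\wt T_k=\Lambda_k\big[\cN_H^j+\big((\cN_H+2)^j-\cN_H^j\big)\sin^2 X_k\big]\Lambda_k,
\]
and then telescope over $S'$, using that $\cN_{p_l}$ commutes with $T_{p_{l'}}$ for $l'\neq l$ so that $\sum_l\cN_{p_l}\leq\cN_S$ absorbs the sum without picking up a factor $|S|$. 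The only noticeable difference is in \eqref{eq:lem:Tc_N:2}: the paper observes directly that $[\cB_k,\cN_S+\cN_H/2]=0$, hence $T_{c,S'}^*(\cN_S+\cN_H/2)^jT_{c,S'}=(\cN_S+\cN_H/2)^j$, which on $\Ran\mathds{1}_{\{\cN_{S^c}=0\}}$ gives $T_{c,S'}^*\cN_S^jT_{c,S'}\leq\cN_S^j$ in one line, whereas you obtain the same inequality by telescoping the one--step identity; both are fine, and the paper's shortcut is marginally slicker.
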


%\begin{remark}
%Note that from a similar computation as in (\ref{eq:TcN^k_calc}), we have
%\begin{align*}
%T_{c,S'}^* (\mathcal N_S +\frac{1}{2} \mathcal N_H) T_{c,S'} = (\mathcal N_S +\frac{1}{2} \mathcal N_H)
%\end{align*}
%from which we obtain easily from a computation similar to (\ref{eq:TcN^k_calc})
%\begin{align*}
%2^{-j} \mathcal N^j \leq T_{c,S'}^* (\mathcal N_S +\frac{1}{2} \mathcal N_H)^j T_{c,S'} \leq 2^{j} \mathcal %N^j.
%\end{align*}
%However, we need more subtle estimates for $\mathcal N_S$ and $\mathcal N_H$ as stated in the %lemma.
%\end{remark}

\begin{proof}
We start by proving (\ref{eq:lem:Tc_N:1}) and (\ref{eq:lem:Tc_N:2}). First of all, we observe that $[ \cB_k , \cN_S + \cN_H/2] = 0$, for all $k \in S$. This implies that 
\begin{equation}\label{eq:TN-inv} T_{c,S'}^* \big( \cN_S + \cN_H/2 \big)^j T_{c,S'} = \big( \cN_S + \cN_H/2 \big)^j  \end{equation} 
for all $j\in \bN$ and all $S' \subset S$. We immediately obtain
\begin{equation} \label{eq:TS'-up} T_{c,S'}^* \cN_S^j T_{c,S'} \leq  \big( \cN_S + \cN_H / 2 \big)^j = \cN_S^j \end{equation} 
on the range of $\mathds{1}_{\{\cN_{S^c} = 0\}}$. To prove the lower bound we choose $k$ as the ``first'' element in $S' \subset S$ (according to the order used to define $T_c$) and we compute
\begin{align} \label{eq:TcN^k_calc}
\T_k^* \mathcal N_S^j \T_k  
	&= \Lambda_k \Big( \cos X_{k}  + \frac{\sin X_{k}}{X_{k}} \mathcal B_{k}^\circ \Big)  \mathcal N_S^j \Big( \cos X_{k}  + \mathcal B_{k}^\sharp \frac{\sin X_{k}}{X_{k}} \Big) \Lambda_k \nn \\
		&= \Lambda_k \Big(  \cos^2 X_{k} \mathcal N_S^j  + \frac{\sin X_{k}}{X_{k}} \mathcal B_{k}^\circ \mathcal N_S^j \mathcal B_{k}^\sharp \frac{\sin X_{k}}{X_{k}} \Big) \Lambda_k \nn \\
		&= \Lambda_k \Big(  \cos^2 X_{k} \mathcal N_S^j + \frac{\sin X_{k}}{X_{k}} (\mathcal N_S - 1)^j X_k^2 \frac{\sin X_{k}}{X_{k}} \Big)\Lambda_k \nn \\
		&= \Lambda_k \Big( \mathcal N_S^j + ((\mathcal N_S-1)^j - \mathcal N_S^j) \sin^2 X_{k} \Big)\Lambda_k.
\end{align}
On the range of $\Lambda_k$, we find (with $\ph_H$ the restriction of $\ph$ on $H$) 
\[ T^*_k  \cN_S^j T_k \geq \cN_S^j -  j 2 N \| \ph_{H} \|_2^2 \cN_k \cN_S^{j-1} \geq \cN_S^j - C j N^{-2 + 3\kappa + \eps} \cN_S^{j-1} \cN_k  \]
Conjugating with $T_{c,S' \backslash \{ k \}}$ and using (\ref{eq:TS'-up}) we obtain, on the range of $\mathds{1}_{\{\cN_{S^c} = 0\}}$, 
 \begin{equation*} \label{eq:TNS-low} 
 \begin{split} 
 T_{c,S'}^* \cN_S^j T_{c,S} &\geq T_{c,S' \backslash \{ k \}}^* \cN_S^j T_{c,S' \backslash \{ k \}} - C j N^{-2+3\kappa +\eps}  \cN_k^{1/2} T_{c,S' \backslash \{ k \}}^*  \cN_S^{j-1} T_{c,S' \backslash \{ k \}} \cN^{1/2}_k \\ &\geq T_{c,S' \backslash \{ k \}}^* \cN_S^j T_{c,S' \backslash \{ k \}} - C j N^{-2+3\kappa +\eps}  \cN_S^{j-1}  \cN_k.
\end{split}  \end{equation*} 
Iterating to cover all $k \in S'$, we obtain 
\[  T_{c,S'}^* \cN_S^j T_{c,S} \geq \cN_S^j - C j N^{-2+3\kappa + \eps} \cN^{j-1}_S \sum_{k \in S'} \cN_k \geq  \cN_S^j - C j N^{-2+3\kappa + \eps} \cN^{j}_S. \]
Let us now show (\ref{eq:lem:Tc_N:3}). Similarly as in (\ref{eq:TcN^k_calc}), we find  
\begin{align*} \label{eq:proof_Tc_N:3}
\T_k^* \cN_H^j \T_k = \Lambda_k \big (\cN_H^j + ((\mathcal N_H+2)^j - \mathcal N_H^j) \sin^2 X_{k} \big)\Lambda_k.
\end{align*}
Thus, there exists $C > 0$ depending on $j$ such that, on the range of $\Lambda_k$, 
\begin{equation*}\label{eq:TNH-up} \T_k^* \cN_H^j \T_k \leq \cN_H^j + C N^{-2+3\kappa + \eps} (\cN_H^{j-1} +1) \cN_k. \end{equation*} 
Conjugating with $T_{c,S' \backslash \{ k \}}$ we obtain, on the range of $\mathds{1}_{\{\cN_{S^c} = 0\}}$,  
\begin{equation}\label{eq:TNHT} T_{c,S'}^* \cN_H^j T_{c,S'} \leq T_{c,S' \backslash \{ k \}}^* \cN_H^j T_{c,S' \backslash \{ k \}} +  C N^{-2+3\kappa + \eps} \cN_k^{1/2} ( T_{c,S' \backslash \{ k \}}^*  \cN_H^{j-1}  T_{c,S' \backslash \{ k \}}  +1) \cN^{1/2}_k. \end{equation} 
Next, we claim that for all $j \in \bN$ there exists $C_j > 0$ independent of $S'$ such that, on the range of $\mathds{1}_{\{\cN_{S^c} = 0\}}$, 
\begin{equation}\label{eq:induc} T_{c,S'}^* \cN_H^j T_{c,S'} \leq C_j N^{-2+3\kappa+\eps} \cN_S  \big( N^{-2+3\kappa + \eps} \cN_S + 1 \big)^{j-1}. \end{equation} 
For $j =1$, the claim follows (\ref{eq:TNHT}), iterating to cover all $k \in S'$. Assuming (\ref{eq:induc}), we use (\ref{eq:TNHT}) to estimate 
\[ \begin{split} T_{c,S'}^* \cN_H^{j+1} T_{c,S'} 
&\leq T_{c,S'\backslash \{ k \}}^* \cN_H^{j+1} T_{c,S' \backslash \{ k \}} +  C N^{-2+3\kappa + \eps} \cN_k^{1/2} ( T_{c,S' \backslash \{ k \}}^*  \cN_H^{j}  T_{c,S' \backslash \{ k \}}  +1) \cN^{1/2}_k \\ 
&\leq T_{c,S' \backslash \{ k \}}^* \cN_H^{j+1} T_{c,S' \backslash \{ k \}} +  C_j C N^{-2+3\kappa + \eps}  (N^{-2+3\kappa+\eps} \cN_S +1)^{j}  \cN_k. \end{split} \]
Iterating over all $k \in S'$ we conclude that, on the range of $\mathds{1}_{\{\cN_{S^c} = 0\}}$, 
\[ \begin{split} 
T_{c,S'}^* \cN_H^{j+1} T_{c,S'} &\leq \cN_H^{j+1} +  C_{j+1} N^{-2+3\kappa + \eps}   (N^{-2+3\kappa+\eps} \cN_S +1)^j \sum_{k \in S'} \cN_k \\ &\leq  C_{j+1} N^{-2+3\kappa+\eps} \cN_S  (N^{-2+3\kappa+\eps} \cN_S +1)^{j}. \end{split} \] 
By induction, we obtain (\ref{eq:lem:Tc_N:3}).
\end{proof}

\subsection{Action on the off-diagonal quadratic term} 

Next, we control the action of $T_c$ on the quadratic off-diagonal term on the right-hand side of (\ref{eq:claimB1}). We introduce the  notation 
\begin{align*}
\widetilde{Q}_2^\sharp = 4\pi\mathfrak{a} N^\kappa  \sum_{p\in S} a_p^*a_{-p}^* = (\widetilde{Q}_2^\circ)^*.
\end{align*}
 
\begin{lemma} \label{lem:Tc_nondiag} On the range of $\mathds{1}_{\{\cN_{S^c} = 0\}}$, we have  
\begin{align*}
&\pm \Big[ T_c^*(\widetilde{Q}_2^\sharp + \widetilde{Q}_2^\circ) T_c - (\widetilde{Q}_2^\sharp + \widetilde{Q}_2^\circ) \Big] \lesssim N^{-2+4\kappa+\vep}  (\cN_S + N^{3\kappa/2+3\vep}) .
\end{align*}
\end{lemma}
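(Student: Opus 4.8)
\emph{Strategy.} Write $Q := \widetilde{Q}_2^\sharp + \widetilde{Q}_2^\circ = 4\pi\mathfrak{a} N^\kappa \sum_{p \in S}\big(a_p^* a_{-p}^* + a_p a_{-p}\big)$ and fix the order $S=\{k_1,\dots,k_M\}$ used to define $T_c$. The plan is to telescope the conjugation over the shell: with $W_j := T_{k_j}\cdots T_{k_M}$ and $W_{M+1}:=1$,
\[
T_c^* Q\, T_c - Q = \sum_{j=1}^M W_{j+1}^*\,\big(T_{k_j}^* Q\, T_{k_j} - Q\big)\,W_{j+1},
\]
and each summand is estimated as a quadratic form on the range of $\mathds{1}_{\{\cN_{S^c}=0\}}$. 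Two preliminary facts are used. First, from the explicit form \eqref{eq:def:Bk}--\eqref{eq:Thetakr} of $\cB_{k'}$ and the scale separation $S\cap H=\varnothing$ (a momentum in $H$ shifted by a shell momentum stays in $H$), all ladder operators occurring in $\cB_{k'}^\sharp,\cB_{k'}^\circ$ and all number operators entering $\Theta_{k',r}$ carry momenta in $H$ or equal to $k'$; hence $[\cB_{k'},a_k^{\#}]=0$ whenever $k'\in S\setminus\{k\}$, and a fortiori $[\cB_k,a_{-k}^{\#}]=0$. It follows that $T_{k_j}$ alters only the pair $\{k_j,-k_j\}$ of $Q$, which contributes $8\pi\mathfrak{a}N^\kappa(a_{k_j}^*a_{-k_j}^*+a_{k_j}a_{-k_j})$, and (using $T_{k_j}^*a_{-k_j}^{\#}T_{k_j}=a_{-k_j}^{\#}$) that $T_{k_j}^* Q\, T_{k_j} - Q = 8\pi\mathfrak{a} N^\kappa\big(R_{k_j}^*\, a_{-k_j}^* + a_{-k_j}R_{k_j}\big)$ with $R_k:=T_k^*a_kT_k-a_k$. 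Second, since $[\cB_q,\cM_k]=0$ for every $q\in S$, the operators $\cM_k$ of \eqref{eq:def_Mx} are conserved by $W_{j+1}$; as $\cM_k$ agrees with $\cN_k$ on the range of $\mathds{1}_{\{\cN_{S^c}=0\}}$, this yields $\|a_{\pm k_j}^{\#}W_{j+1}\xi\|\leq\langle\xi,(\cN_{\pm k_j}+1)\xi\rangle^{1/2}$ for every $\xi$ in that range (and, since the factors of $W_{j+1}$ commute with $a_{k_j}^{\#}$, even $\|a_{k_j}W_{j+1}\xi\|=\|a_{k_j}\xi\|$).

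\emph{The key step.} For $\xi$ in the range of $\mathds{1}_{\{\cN_{S^c}=0\}}$ the state $W_{j+1}\xi$ carries no $k_j$-connection: each $T_{k_i}$ occurring in $W_{j+1}$ only creates $k_i$-connections (disjoint from any already present, by $\Theta_{k_i,r}^{(2)}$), and $k_i\neq k_j$ for $i>j$; in particular $\cB_{k_j}^\circ W_{j+1}\xi=0$. Expanding $R_{k_j}$ with the closed formulas of Lemma~\ref{lem:Tk} for $T_{k_j}$ and $T_{k_j}^*$, every term that nets a creation or an annihilation of a $k_j$-connection either has $\cB_{k_j}^\circ$ acting directly on $W_{j+1}\xi$ (hence vanishes) or produces a state with a $k_j$-connection, orthogonal to $a_{-k_j}^{\#}W_{j+1}\xi$. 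Thus, inside the two inner products entering the summand, $R_{k_j}$ may be replaced by its connection-preserving part, which — using the identity $\cB_{k_j}^\circ a_{k_j}\cB_{k_j}^\sharp = X_{k_j}^2 a_{k_j}$ (a short CCR computation with the exclusion rules, in the spirit of Lemma~\ref{lem:com_BB}) — equals $a_{k_j}\widetilde G_{k_j}$, where $\widetilde G_{k_j}$ is a bounded self-adjoint function of $\cN_{k_j}$ and $Y_{k_j}:=\sum_{r\in H_{k_j}}N\vphi_r(\vphi_r+\vphi_{r+k_j})\Theta_{k_j,r}$ with
\[
\|\widetilde G_{k_j}\|\lesssim\|Y_{k_j}\|\leq 2N\|\vphi_H\|_2^2\lesssim N^{-2+3\kappa+\vep}
\]
by Lemma~\ref{lem:vphi_prop}. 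The decisive gain is that only a full power of $Y_{k_j}$ appears — hence no factor of $\cN_{k_j}$ — which follows because the connection-preserving coefficient of $a_{k_j}$ in $T_{k_j}^*a_{k_j}T_{k_j}$ reduces to $1$ when $Y_{k_j}=0$, by $\cos^2 X_{k_j}+\sin^2 X_{k_j}=1$.

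\emph{Summation.} Inserting $R_{k_j}=a_{k_j}\widetilde G_{k_j}$, using that $a_{k_j}^{\#}$ commutes with $W_{j+1}$ and with $\widetilde G_{k_j}$, and moving one annihilation operator to the opposite side of the inner product so that each factor carries at most one number operator at momentum $\pm k_j$, one obtains
\[
\big|\langle W_{j+1}\xi, (T_{k_j}^* Q\, T_{k_j}-Q)W_{j+1}\xi\rangle\big|\lesssim N^\kappa\|\widetilde G_{k_j}\|\,\langle\xi,(\cN_{k_j}+1)\xi\rangle^{1/2}\langle\xi,(\cN_{-k_j}+1)\xi\rangle^{1/2}.
\]
Summing over $j$, i.e.\ over $k\in S$, with $\|\widetilde G_k\|\lesssim N^{-2+3\kappa+\vep}$, the Cauchy--Schwarz inequality in $k$, $\sum_{k\in S}\cN_k=\cN_S$, and $|S|\lesssim N^{3\kappa/2+3\vep}$, gives $|\langle\xi,(T_c^*Q\,T_c-Q)\xi\rangle|\lesssim N^{-2+4\kappa+\vep}\langle\xi,\cN_S\xi\rangle^{1/2}\langle\xi,(\cN_S+N^{3\kappa/2+3\vep})\xi\rangle^{1/2}\lesssim N^{-2+4\kappa+\vep}\langle\xi,(\cN_S+N^{3\kappa/2+3\vep})\xi\rangle$, which is the claim.

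\emph{Main obstacle.} The genuinely delicate point is the key step: making rigorous that $W_{j+1}\xi$ has no $k_j$-connection (which rests on the precise interplay of the two factors of $\Theta_{k_i,r}$ and on the fact that $\cB_{k_i}^\sharp$ creates a connection disjoint from all high-momentum particles already present, even when there are several), determining exactly which terms of the expansion of $R_{k_j}$ survive the sandwiching by $W_{j+1}\xi$, and organising the survivors into $a_{k_j}\widetilde G_{k_j}$ with the favourable bound $\|\widetilde G_{k_j}\|\lesssim\|Y_{k_j}\|$; had one only $\|\widetilde G_{k_j}\|\lesssim\|X_{k_j}^2\|=\|\cN_{k_j}Y_{k_j}\|$, the resulting estimate would be quadratic in $\cN_S$ and would fail as an operator inequality on the full range of $\mathds{1}_{\{\cN_{S^c}=0\}}$.
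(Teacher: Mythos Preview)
Your approach is essentially correct and follows the same route as the paper, though organised differently. Both proofs hinge on the fact that $T_p$ commutes with $a_k^\#,a_{-k}^\#$ for $p\in S\setminus\{\pm k\}$ and that $W_{j+1}\xi$ lies in $\mathrm{Ran}\,\Lambda_{k_j}$, so only $T_k$ and $T_{-k}$ act nontrivially; both then reduce to the single-$k$ estimate via the trigonometric identity $\cos x\cos y+\tfrac{z^2}{xy}\sin x\sin y=1+O(x^2-y^2)$ with $x^2-y^2=\sum_r N\vphi_r(\vphi_r+\vphi_{r+k})\Theta_{k,r}$, and both close with $\cM_k$-conservation to control $\cN_{\pm k}$ through $W_{j+1}$.

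The paper's presentation is somewhat cleaner: it conjugates the pair $a_k^*a_{-k}^*$ directly by $\tilde T_k=(\cos X_k+\cB_k^\sharp\tfrac{\sin X_k}{X_k})\Lambda_k$ (valid on $\mathrm{Ran}\,\Lambda_k$), uses $[\cB_k^\circ,a_k^*a_{-k}^*]=0$, and obtains $\tilde T_k^*a_k^*a_{-k}^*\tilde T_k=\Lambda_k a_k^*a_{-k}^*(1+R_k)\Lambda_k$ with $\pm R_k\lesssim N^{-2+3\kappa+\vep}$ in one stroke. You instead isolate $R_{k}=T_k^*a_kT_k-a_k$ and argue by orthogonality that its cross terms (which net a $k$-connection) vanish in the sandwich; this works but requires more bookkeeping, and invoking the full Lemma~\ref{lem:Tk} is unnecessary since on $\mathrm{Ran}\,\Lambda_{k_j}$ only the two $\tilde T_{k_j}$ terms survive anyway.

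One small slip: you write that $a_{k_j}^\#$ commutes with $\tilde G_{k_j}$, but $\tilde G_{k_j}$ depends on $\cN_{k_j}$, so this is false. What your estimate actually needs (and what holds) is that $\tilde G_{k_j}$ commutes with $\cN_{k_j}$, whence $\|a_{k_j}\tilde G_{k_j}\psi\|^2=\langle\psi,\tilde G_{k_j}^2\cN_{k_j}\psi\rangle\le\|\tilde G_{k_j}\|^2\langle\psi,\cN_{k_j}\psi\rangle$; with this correction your summation step goes through as written.
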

\begin{proof}
Let $p, k\in S$. If $p \notin \{\pm k\}$, then clearly 
\begin{align} \label{eq:nondiag_commute}
\T_k^* a_p^*a_{-p}^* \T_k = \Lambda_k a_p^*a_{-p}^* \Lambda_k.
\end{align}
Let us now consider the case $p \in \{\pm k\}$, say $p =k$. Since $\Theta^{(1)}_k$ only depends on operators $\cN_t$, with $t \in H$, we have $[a_p^*a_{-p}^*, \Theta^{(1)}_k] =0$  and therefore $\Lambda_k a_p^*a_{-p}^* \mathcal B_c^\sharp = \mathcal B_c^\circ a_p^*a_{-p}^* \Lambda_k = 0$. We find
\begin{align*}
\T_k^* a_k^*a_{-k}^* \T_k &= \Lambda_k \Big(\cos X_k a_k^*a_{-k}^* \cos X_k + \frac{\sin X_k}{X_k} \cB_k^\circ a_k^*a_{-k}^* \cB_k^* \frac{\sin X_k}{X_k} \Big) \Lambda_k
\\
&= \Lambda_k \Big(\cos X_k a_k^*a_{-k}^* \cos X_k + \frac{\sin X_k}{X_k} a_k^*a_{-k}^* X_k \sin X_k \Big) \Lambda_k,
\end{align*}
where we used that $[\cB_k^\circ, a_k^*a_{-k}^*] =0$. 
%Note that the cutoff function $\Theta_{k,r}$, in the definition of $\cB_k^\circ$, only acts on momenta in 
%$H$.
With the formula (\ref{eq:exp_Xk2}) for $X_k^2$ and since $\mathcal N_k a_k^*a_{-k}^* = a_k^*a_{-k}^* (\mathcal N_k+1)$, we obtain 
\begin{align*}
X_k^2 a_k^*a_{-k}^* = a_k^*a_{-k}^* Y_k^2,
\end{align*}
with $Y_k$ being defined as the square root of the positive operator 
\begin{align*}
Y_k^2 &:=  X_k^2 + \sum_{t\in H_k} N\vphi_t(\vphi_t + \vphi_{t+k}) \Theta_{k,t}.
\end{align*}
Since (with $\Theta_{k,t} = \Theta_{k,-t-k}$) 
\begin{align} \label{eq:positivity}
Y_k^2 - X_k^2  =\sum_{t\in H_k} N \vphi_t(\vphi_t + \vphi_{t+k})\Theta_{k,t} = \frac{N}{2} \sum_{t\in H_k}(\vphi_t+\vphi_{t+k})^2 \Theta_{k,t}
\end{align}
we conclude that $Y_k^2 \geq X_k^2$. Since $X_k$ and $Y_k$ commute, we arrive at 
\begin{equation}\label{eq:TaaT}
\begin{split} 
\T_k^* a_k^*a_{-k}^* \T_k 
&= \Lambda_k a_k^*a_{-k}^* \Big( \cos Y_k \cos X_k + X_k \sin X_k \frac{\sin Y_k}{Y_k} \Big) \Lambda_k
\\
&= \Lambda_k a_k^*a_{-k}^* \Big( \cos(Y_k - X_k) + (X_k - Y_k) \sin X_k \frac{\sin Y_k}{Y_k} \Big) \Lambda_k
\\
&=: \Lambda_k a_k^*a_{-k}^* \big(1 + R_k \big) \Lambda_k.
\end{split} 
\end{equation} 
From $| \cos(y-x) -1 +(x-y)\sin(x) \sin(y) / y| \leq C (y^2-x^2)$ for all $0\leq x\leq y$ and from Lemma~\ref{lem:vphi_prop}, we obtain
$$
\pm R_k \lesssim N \|\varphi_H\|_2^2 \lesssim N^{-2+3\kappa+\vep}.
$$
With (\ref{eq:TaaT}), we conclude that  
\begin{align*}
& \pm \Lambda_k \left(T_k^* a_k^*a_{-k}^* T_k -a_k^*a_{-k}^*\right)\Lambda_k + \hc
\lesssim N^{-2+3\kappa+\vep} (\cN_k + \cN_{-k} + 1).
\end{align*}
Conjugation with $T_{-k}$ can be handled similarly. Since moreover the error term is invariant w.r.t. $T_p$, for $p \in S \backslash \{ \pm k \}$, we find 
%conjugating the error with $T_p$, for any $p \in S$, does not change it (up to a %constant); see the calculation \eqref{eq:TcN^k_calc} in Lemma \ref{lem:Tc_N}.
%We find  
%\[ \T_{-k}^* \cN_{-k} \T_{-k}   \Lambda_{-k} \big( \cN_{-k} - \sin^2 X_k \big) \Lambda_{-k} \leq \cN_{-k} \]
%we obtain 
\[ \pm \big( T^*_c a_k^* a_{-k}^* T_c - a_k^* a_{-k}^* \big) + \text{h.c.}  \lesssim N^{-2+3\kappa+\eps} (\cN_k + \cN_{-k} + 1) \]
on the range of $\mathds{1}_{\{\cN_{S^c} = 0\}}$. The claim follows by summing over $k \in S$ (recall that $|S| \lesssim N^{3\kappa/2+3\vep}$). 
\end{proof}

\subsection{Action on the kinetic energy operator} 

In this subsection, we control the action of a single cubic transformation $T_k$, for a fixed $k \in S$, on the kinetic energy operator. In the next subsections, we show similar statements for the action of $T_k$ on the cubic term $Q_3$ and on the quartic term $Q_4$, appearing in Lemma \ref{lem:TLc_main} on the right hand side of \eqref{eq:claimB1}. Eventually, we will obtain the action of the full transformation $T_c$ by iteration.
\begin{lemma} \label{lem:Tc_kin}
For $k \in S$ we have, on the range of $\Lambda_k$, 
\begin{equation} \label{eq:kine}
\begin{split} 
&\pm \Big[ T_k^* \sum_{p\in \Lambda^*} p^2 a_p^*a_p T_k - \sum_{p\in \Lambda^*} p^2 a_p^*a_p - 2 \sum_{r\in H_k} N r^2 \vphi_r (\vphi_r+\vphi_{k+r})  a_k^*a_k \Theta_{k,r} \Big] \lesssim  N^{-1+5\kappa/2+3\vep/2}\cN_k^2.
%& \qquad \lesssim  N^{-1+5\kappa/2+3\vep/2}\cN_k +  N^{-2+4\kappa+\vep}\cN_k^2 .
\end{split} \end{equation}
\end{lemma}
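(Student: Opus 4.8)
The plan is to use the explicit formula for $T_k$ recorded in \eqref{eq:TTL}: on $\Ran\Lambda_k$ one has $T_k = \wt{T}_k = \big(\cos X_k + \cB_k^\sharp \tfrac{\sin X_k}{X_k}\big)\Lambda_k$ (this follows from \Cref{lem:Tk} because $\cB_k^\circ\Lambda_k = 0$). The key structural input is that the kinetic operator $K := \sum_{p\in\Lambda^*}p^2 a_p^* a_p$ is diagonal in the occupation–number basis, hence commutes with $X_k$, with $\cos X_k$, with $\tfrac{\sin X_k}{X_k}$ and with all the cutoffs ($\Lambda_k$, $\Theta_{k,r}$), which are functions of the number operators.

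I would first expand, as a form on $\Ran\Lambda_k$,
\[
\wt{T}_k^* K \wt{T}_k = \Lambda_k\Big(\cos X_k + \tfrac{\sin X_k}{X_k}\cB_k^\circ\Big) K \Big(\cos X_k + \cB_k^\sharp \tfrac{\sin X_k}{X_k}\Big)\Lambda_k
\]
into four terms. By the commutation properties above, the two mixed terms take the form $\Lambda_k\cos X_k\, K\, \cB_k^\sharp(\cdots)\Lambda_k$ and its adjoint, and both \emph{vanish}: indeed $\Lambda_k\cB_k^\sharp = 0$ and $\cB_k^\circ\Lambda_k = 0$, since $\cB_k^\sharp$ creates a $k$-connection while the factor $\Theta_k^{(1)}$ inside $\Lambda_k$ projects onto states without any $k$-connection. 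In the remaining term I would write $K\cB_k^\sharp = \cB_k^\sharp K + [K,\cB_k^\sharp]$, where $[K,\cB_k^\sharp] = \sum_{r\in H_k}N^{1/2}\vphi_r\,\omega_{k,r}\,a_{-r}^* a_{r+k}^* a_k\Theta_{k,r}$ with $\omega_{k,r} := r^2 + (r+k)^2 - k^2$ (the cutoffs commute with $K$). Then $\cB_k^\circ\cB_k^\sharp = X_k^2$ by \eqref{eq:exp_Xk2}, while the same algebraic manipulation as in \eqref{eq:BcircBdag} yields $\cB_k^\circ[K,\cB_k^\sharp] =: W_k = \sum_{r\in H_k}N\vphi_r\,\omega_{k,r}\,(\vphi_r + \vphi_{r+k})\,\cN_k\,\Theta_{k,r}$ — the scalar weight $\omega_{k,r}$ simply rides along, using $\omega_{k,-(k+r)} = \omega_{k,r}$. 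With $\cos^2 X_k + \sin^2 X_k = 1$ and $\tfrac{\sin X_k}{X_k}X_k^2\tfrac{\sin X_k}{X_k} = \sin^2 X_k$ this collapses to the clean identity $T_k^* K T_k - K = \tfrac{\sin^2 X_k}{X_k^2}\,W_k$ on $\Ran\Lambda_k$.

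It then remains to match $\tfrac{\sin^2 X_k}{X_k^2}W_k$ with the stated leading term up to the claimed error. First, since $0\leq 1 - \tfrac{\sin^2 x}{x^2}\leq x^2/3$ and everything commutes, $\pm\big(\tfrac{\sin^2 X_k}{X_k^2}-1\big)W_k \lesssim X_k^2\,|W_k|$; using \eqref{eq:Xp_bound} and the crude bound $|W_k|\lesssim N^\kappa\cN_k$ (from $\omega_{k,r}\lesssim|r|^2$ and $\sum_{r\in H_k}r^2\vphi_r(\vphi_r+\vphi_{r+k})\lesssim N^{-1+\kappa}$, by Cauchy–Schwarz and \Cref{lem:vphi_prop}), this is $\lesssim N^{-2+4\kappa+\vep}\cN_k^2$. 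Second, since $\omega_{k,r} - 2r^2 = (r+k)^2 - r^2 - k^2 = 2\,r\cdot k$, one has $W_k - 2\sum_{r\in H_k}N r^2\vphi_r(\vphi_r+\vphi_{r+k})\cN_k\Theta_{k,r} = 2\sum_{r\in H_k}N\,(r\cdot k)\,\vphi_r(\vphi_r+\vphi_{r+k})\,\cN_k\Theta_{k,r}$, which I would bound by
\[
N\,|k|\sum_{r\in H_k}|r|\,|\vphi_r|\big(|\vphi_r| + |\vphi_{r+k}|\big)\,\cN_k \lesssim N\,|k|\,\|p\vphi\|_2\,\|\vphi_H\|_2\,\cN_k \lesssim N^{-1+5\kappa/2+3\vep/2}\cN_k
\]
by Cauchy–Schwarz together with $\|p\vphi\|_2^2\lesssim N^{-1+\kappa}$ and $\|\vphi_H\|_2^2\lesssim N^{-3+3\kappa+\vep}$ from \Cref{lem:vphi_prop}. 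Since $\cN_k\leq\cN_k^2$ and $0<\kappa<2/3$, both error contributions are $\lesssim N^{-1+5\kappa/2+3\vep/2}\cN_k^2$, which is \eqref{eq:kine}.

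The step I expect to be the crux is the vanishing of the two mixed terms: it genuinely uses that $\Lambda_k$ sits on both sides (so one works on $\Ran\Lambda_k$) and that $\Theta_k^{(1)}$ excludes $k$-connections, which is precisely why the parity-odd cubic object $\cB_k^\sharp$ contributes nothing. The only quantitative point requiring care is the second estimate above: the naive bound $\|\vphi\|_2^2\lesssim N^{-2+2\kappa}$ does not suffice, and one must use the sharpened high-momentum estimate $\|\vphi_H\|_2^2\lesssim N^{-3+3\kappa+\vep}$ to land on exactly the exponent $N^{-1+5\kappa/2+3\vep/2}$.
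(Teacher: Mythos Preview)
Your proof is correct and follows essentially the same approach as the paper. Both arguments use the explicit form $\wt T_k=(\cos X_k+\cB_k^\sharp\tfrac{\sin X_k}{X_k})\Lambda_k$ on $\Ran\Lambda_k$, observe that the cross terms vanish because $\Lambda_k\cB_k^\sharp=0$ and $\cB_k^\circ\Lambda_k=0$, reduce to the identity $T_k^*KT_k-K=\tfrac{\sin^2 X_k}{X_k^2}W_k$ with $W_k=\cB_k^\circ[K,\cB_k^\sharp]=[\cB_k^\circ,K]\cB_k^\sharp$ (these agree since $[K,X_k^2]=0$), and then control the two error pieces via $N\|p\vphi\|_2\,|k|\,\|\vphi_H\|_2$ for the $r\cdot k$ contribution and $X_k^2\cdot N^\kappa\cN_k$ for the $(\tfrac{\sin^2 X_k}{X_k^2}-1)$ contribution; the paper's organization differs only in that it writes the error as a single expression before splitting it.
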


\begin{proof}
Using that $\mathcal B_{k}^\circ \Lambda_k = 0$ and that $[X_k^2, \Lambda_k] =0$, we obtain
\begin{align*}
\Lambda_k  T_k^* \sum_{q\in \Lambda^*} q^2 a_q^*a_q T_k \Lambda_k &= \Lambda_k \Big( \cos X_{k}  + \frac{\sin X_{k}}{X_{k}} \mathcal B_{k}^\circ \Big)  \sum_{q\in \Lambda^*} q^2 a_q^*a_q \Big( \cos X_{k}  + \mathcal B_{k}^\sharp \frac{\sin X_{k}}{X_{k}} \Big)\Lambda_k \\
	&= \Lambda_k \Big( \cos^2 X_{k} \sum_{p\in \Lambda^*} p^2 a_p^*a_p + \frac{\sin X_{k}}{X_{k}} \mathcal B_{k}^\circ \sum_{p\in \Lambda^*} p^2 a_p^*a_p \mathcal B_{k}^\sharp \frac{\sin X_{k}}{X_{k}} \Big)\Lambda_k \\
%	&= \Lambda_k \left( \cos^2 X_{k} \sum_{p\in \Lambda^*} p^2 a_p^*a_p  + \frac{\sin X_{k}}{X_{k}} 
%\left([\mathcal B_{k}^\circ,\sum_{p\in \Lambda^*} p^2 a_p^*a_p] \mathcal B_{k}^\sharp +  \sum_{p\in 
%\Lambda^*} p^2 a_p^*a_p X_k^2 \right) \frac{\sin X_{k}}{X_{k}} \right)\Lambda_k \\
	&= \Lambda_k \Big(  \sum_{p\in \Lambda^*} p^2 a_p^*a_p  + \frac{\sin X_{k}}{X_{k}} \big[ \mathcal B_{k}^\circ,\sum_{p\in \Lambda^*} p^2 a_p^*a_p \big] \mathcal B_{k}^\sharp \frac{\sin X_{k}}{X_{k}} \Big)\Lambda_k.
\end{align*}
Let us now compute 
\begin{align*}
\big[ \mathcal B_{k}^\circ,\sum_{p\in \Lambda^*} p^2 a_p^*a_p \big] \mathcal B_{k}^\sharp
&= \sum_{r\in H_k} N^{1/2}\vphi_r \left((r^2+(k+r)^2-k^2\right) \Theta_{k,r} a_k^* a_{-r}a_{k+r} \cB_k^\sharp
\\
	&= 2 \sum_{r\in H_k} N \vphi_r (\vphi_r+\vphi_{k+r}) \left(r^2+k \cdot r \right)  a_k^* a_k\Theta_{k,r},
%\\	&= X^{(1)}_k \Theta_{k}^{(1)}.
\end{align*}
where we used that, similarly as in the proof of Lemma \ref{lem:com_BB},
\begin{align*}
a_{-r}a_{k+r} \cB_k^\sharp = (\varphi_r + \varphi_{r+k}) a_k\Theta_{k,r}.
\end{align*}
We therefore obtain 
\begin{align*}
& \Lambda_k \Big( T_k^* \sum_{p\in \Lambda^*}p^2 a_p^*a_p  T_k - \sum_{p\in \Lambda^*}p^2 a_p^*a_p - 2 \sum_{r\in H_k} N r^2 \vphi_r (\vphi_r+\vphi_{k+r})  a_k^* a_k\Theta_{k,r} \Big) \Lambda_k
\\
	&= \Lambda_k \Big( 2 \sum_{r\in H_k} N  \vphi_r (\vphi_r+\vphi_{k+r}) a_k^* a_k\Theta_{k,r}
\Big[ r^2 \Big (\frac{\sin^2 X_{k}}{X^2_{k}} -1\Big) + r \cdot k \, \frac{\sin^2 X_{k}}{X^2_{k}} \Big] \Big)\Lambda_k.
\end{align*}
From \Cref{lem:vphi_prop} we find that
\begin{align*}
\sum_{r\in H_k} N r^2 |\vphi_r(\vphi_r + \vphi_{k+r})| &\lesssim N^\kappa \| \varphi  \|_1 \lesssim N^{\kappa}
\end{align*}
%where we used that $|r|\leq 2|r+k|$ since $|k|\leq N^{\kappa/2 + \varepsilon} <  N^{1-\kappa-\varepsilon} < %r$. 
and that 
\begin{align*}
\sum_{r\in H_k} N |r||k| |\vphi_r||\vphi_r + \vphi_{k+r}| \lesssim N |k| \|p \varphi\|_2 \|\varphi_{H} \|_2 \lesssim N^{-1+5\kappa/2 + 3\vep/2}.
%\sum_{r\in H_k} N |r||k| |\vphi_r||\vphi_r + \vphi_{k+r}| \lesssim N^{-1+5\kappa/2 + 3\vep/2}.
\end{align*}
since $|k|\leq N^{\kappa/2 + \varepsilon}$. Moreover, \eqref{eq:Xp_bound} implies that 
\begin{align*}
\Big( 1 - \frac{\sin^2 X_{k}}{X^2_{k}} \Big) \lesssim N^{-2 + 3\kappa + \varepsilon} \mathcal N_k.
\end{align*}
Since $N^\kappa N^{-2+3\kappa+\eps} = N^{-2 + 4\kappa + \eps} \leq N^{-1+5\kappa/2+3\eps/2}$ for all $\kappa < 2/3$, we obtain (\ref{eq:kine}).
\end{proof}

\subsection{Action on the cubic term}

Next, we proceed with the conjugation of the cubic term $Q_3$ appearing on the right-hand side of (\ref{eq:claimB1}). On the range of the projection $\Xi = \mathds{1}_{\{\mathcal N_{(H\cup S)^c}=0\}} \mathds{1}_{\{\mathcal N_H \in 2\mathbb{N}_0\}}$, we can decompose 
\begin{equation}\label{eq:Q3-deco} Q_3 = \sqrt{\frac{N_0}{N}} \left(Q_3^H + Q_3^M + Q_3^S\right) \end{equation} 
with
\begin{align*}
Q_3^H &= \sum_{r\in H_p,p\in S} N^{1/2} \hat{V}_N (r) a^*_{-r} a^*_{r+p} a_p + \hc
\\
	Q_3^M &=  \sum_{\substack{r,p\in H: \\ r+p \in S}} N^{1/2} \hat{V}_N (r) a^*_{-r} a^*_{r+p} a_p  + \sum_{p\in H_r,r\in S} N^{1/2} \hat{V}_N (r) a^*_{-r} a^*_{r+p} a_p + \hc
\\
	Q_3^S &= \sum_{\substack{r,p \in S : \\ r+p \in S}} N^{1/2} \hat{V}_N (r) a^*_{-r} a^*_{r+p} a_p + \hc
\end{align*}
Note that since $0\leq N - N_0 \lesssim N^{3\kappa/2}$, we have $\sqrt{N_0/N} = 1 + \mathcal O(N^{-1+3\kappa/2})$.

In the next lemma we conjugate the operator $Q_3^H$. % As we did for the kinetic energy operator, we conjugate it here with a single cubic transformation $T_k$).
\begin{lemma} \label{lem:Q3_Tp}
Let $k \in S$. On the range of $\Lambda_k$, we have 
\begin{align}
&\pm \Big( T_k^* Q_3^H T_k - Q_3^H - 2\sum_{r\in H_k} N \hat{V}_N(r) (\vphi_r + \vphi_{r+k}) \cN_k \Theta_{k,r}  \Big)  \nn \\
	& \lesssim N^{-2+4\kappa +\vep} \cN_k^2 + N^{-5+15\kappa/2+7\vep}
\Big(N^{2-2\kappa}\sum_{r,q\in H} \1_S(r+q) a_{r}^* a_{q}^* a_{q}a_{r} + N^{\kappa} \cN_S \Big)(\cN_k+1)
\label{eq:Q3H} 
\end{align}
\end{lemma}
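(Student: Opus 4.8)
The plan is to compute $T_k^* Q_3^H T_k$ explicitly using the closed formula $T_k = (\cos X_k + \cB_k^\sharp \tfrac{\sin X_k}{X_k})\Lambda_k$ from \eqref{eq:TTL}, valid on the range of $\Lambda_k$, together with the identities for $\cB_k^\circ \cB_k^\sharp = X_k^2$ and the action of $\cB_k^\sharp$ derived in \Cref{lem:com_BB} and its proof. First I would write $\Lambda_k T_k^* Q_3^H T_k \Lambda_k = \Lambda_k \big(\cos X_k + \tfrac{\sin X_k}{X_k}\cB_k^\circ\big) Q_3^H \big(\cos X_k + \cB_k^\sharp \tfrac{\sin X_k}{X_k}\big)\Lambda_k$, and since $\cB_k^\circ \Lambda_k = 0$, the expansion produces the four terms $\cos X_k\, Q_3^H \cos X_k$, $\cos X_k\, Q_3^H \cB_k^\sharp \tfrac{\sin X_k}{X_k}$, its adjoint, and $\tfrac{\sin X_k}{X_k}\cB_k^\circ Q_3^H \cB_k^\sharp \tfrac{\sin X_k}{X_k}$. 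The main algebraic input is the commutator $[\cB_k^\circ, Q_3^H]$: writing $Q_3^H = \sum_{r\in H_p, p\in S} N^{1/2}\hat V_N(r) a^*_{-r}a^*_{r+p}a_p + \hc$, the piece of $Q_3^H$ creating a $k$-connection (the $p=k$ term of the $\hc$ part, namely $\sum_{r\in H_k} N^{1/2}\hat V_N(r) a^*_k a_{r+k}a_{-r}$) contracts against $\cB_k^\sharp$ — which annihilates $a_k$ and creates the pair $a^*_{-r}a^*_{r+k}$ — to produce exactly the diagonal main term $2\sum_{r\in H_k} N\hat V_N(r)(\varphi_r+\varphi_{r+k})\cN_k\Theta_{k,r}$, using $a_{-r}a_{k+r}\cB_k^\sharp = (\varphi_r+\varphi_{r+k})a_k\Theta_{k,r}$ as in the proof of \Cref{lem:Tc_kin}. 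This term arises with coefficient $1$ from the $\cos X_k\, Q_3^H\cos X_k + \hc$ contribution after replacing $\cos^2$ by $1$, and the replacement error ($\cos^2 X_k - 1 = -\sin^2 X_k$, which is $O(N^{-2+3\kappa+\eps}\cN_k)$ by \eqref{eq:Xp_bound}) times the size $N^\kappa$ of the relevant part of $Q_3^H$ contributes to the error.

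Second I would bound the remainder terms. For the leading $Q_3^H = Q_3^H + (\cos^2 X_k - 1) Q_3^H$-type correction and for the cross terms $\cos X_k Q_3^H \cB_k^\sharp \tfrac{\sin X_k}{X_k}$, I would move the operators $a^*_{-r}a^*_{r+k}a_k$ (from $\cB_k^\sharp$) and the $Q_3^H$ operators past each other and use Cauchy--Schwarz, weighting by powers of $(\cN_k+1)$ and of the auxiliary operator $\sum_{r,q\in H}\mathds 1_S(r+q)a_r^*a_q^* a_q a_r + N^\kappa\cN_S$ which controls the $Q_3^M$-type structure (two high momenta summing to a shell momentum, or high-shell-shell combinations). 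The factor $N\varphi_r$ contributes $N^\kappa r^{-2}$ pointwise, giving $\ell^1$ and $\ell^2$ norms $\|\varphi_H\|_1 \lesssim 1$, $\|\varphi_H\|_2 \lesssim N^{-1+\kappa}$ from \Cref{lem:vphi_prop}; combined with $\hat V_N$ bounded by $N^{-1+\kappa}\|\hat V\|_\infty$ and with $\sqrt{N}$ prefactors, and with $\sin X_k/X_k$ bounded by $1$, this produces the claimed powers. The term $\tfrac{\sin X_k}{X_k}\cB_k^\circ Q_3^H \cB_k^\sharp\tfrac{\sin X_k}{X_k}$ is the most delicate: $\cB_k^\circ Q_3^H \cB_k^\sharp$ requires commuting $Q_3^H$ to the right through $\cB_k^\sharp$ and to the left through $\cB_k^\circ$, generating both a "non-contracted" piece (bounded crudely, picking up an extra factor $\|\varphi_H\|_2^2 \sim N^{-2+3\kappa+\eps}$ and hence landing well inside the error) and contracted pieces that either vanish by the cutoff rules ($\Theta_{k,r}$ forbidding multiple $k$-connections and $q$-connections for $q\neq k$) or reproduce, with an extra $\sin^2 X_k/X_k^2 = 1 + O(N^{-2+3\kappa+\eps}\cN_k)$ factor, terms already accounted for. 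I expect the bookkeeping of which contractions survive the cutoffs to be the main obstacle: one must carefully track that $\Theta_{k,r}$ kills all "dangerous" recombinations, so that only the stated diagonal term survives at leading order and everything else is genuinely quadratically small in $N^{-1+\kappa}$-type factors.

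Finally, I would collect the estimates, using $N^\kappa\cdot N^{-2+3\kappa+\eps} = N^{-2+4\kappa+\eps}$ for the leading replacement error (contributing the $N^{-2+4\kappa+\eps}\cN_k^2$ term, after one factor of $\cN_k$ from $\sin^2 X_k$ and another from the norm of the relevant operators restricted to the range of $\Lambda_k$), and tracking the worst remaining power, which is $N^{-5+15\kappa/2+7\eps}$ times the auxiliary operator times $(\cN_k+1)$ — this comes from a double application of the small factor $N^{1/2}\|\varphi_H\|_2 N^{-1+\kappa/2}$-type bounds together with the $\hat V_N$ and $\sqrt N$ prefactors in $Q_3^H$ and in two copies of $\cB_k^{\sharp/\circ}$. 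Since all these error powers are negative for $\kappa$ in the admissible range (in particular $-5+15\kappa/2 < 0$ for $\kappa < 2/3$), the bound \eqref{eq:Q3H} follows after a Cauchy--Schwarz splitting of any cross terms between $\cN_k^2$ and the auxiliary operator.
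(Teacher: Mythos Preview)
Your overall framework is right (use the closed formula from \eqref{eq:TTL} and expand), but there is a genuine gap in how you handle the $p\neq k$ summands of $Q_3^H$, and a misidentification of where the main diagonal term comes from.

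First, you write ``$Q_3^H = Q_3^H + (\cos^2 X_k - 1) Q_3^H$-type correction'', implicitly treating $\cos X_k\, Q_{3,p}^{H,\sharp}\cos X_k$ as $\cos^2 X_k\cdot Q_{3,p}^{H,\sharp}$. But $X_k$ does \emph{not} commute with $a_{-r}^* a_{r+p}^*$ for $p\neq k$: these creation operators can land in the shell-neighborhood of some $t$ or $-(t+k)$ and thereby kill factors of $\Theta_{k,t}^{(2)}$ inside $X_k^2$. The paper deals with this by introducing the modified operator $(X_k^{(-r,r+p)})^2 = X_k^2 - \delta_k^{(-r,r+p)}$ (see \eqref{eq:def_Xk^-r}--\eqref{eq:delta_bound}) via the commutation rule \eqref{eq:comm_a_Theta}, and it is precisely the bound $\delta_k^{(-r,r+p)}\lesssim N|S|\,\|\varphi_H\|_\infty^2\,\cN_k \lesssim N^{-5+15\kappa/2+7\eps}\cN_k$ that produces the specific error power you are trying to explain. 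After this, a trigonometric identity recombines $\cos X_k^{(-r,r+p)}\cos X_k + \tfrac{\sin X_k^{(-r,r+p)}}{X_k^{(-r,r+p)}} X_k\sin X_k$ into $1 + R_k^{(-r,r+p)}$ with $\pm R_k^{(-r,r+p)}\leq C\delta_k^{(-r,r+p)}$. None of this mechanism appears in your sketch; your proposed origin of $N^{-5+15\kappa/2+7\eps}$ (``a double application of $N^{1/2}\|\varphi_H\|_2 N^{-1+\kappa/2}$-type bounds'') does not reproduce that exponent.

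Second, the main diagonal term $2\sum_{r\in H_k} N\hat V_N(r)(\varphi_r+\varphi_{r+k})\cN_k\Theta_{k,r}$ does \emph{not} come from $\cos X_k\,Q_3^H\cos X_k$. For $p=k$ one has $\Lambda_k Q_{3,k}^{H,\sharp}=0$ (it creates a $k$-connection), so that term vanishes; the main contribution comes instead from the cross term $\tfrac{\sin X_k}{X_k}\cB_k^\circ Q_{3,k}^{H,\sharp}\cos X_k$ (and its adjoint), via the computation \eqref{eq:Q3H_Tp_comm}. The replacement $\tfrac{\sin X_k\cos X_k}{X_k}\to 1$ then yields the $N^{-2+4\kappa+\eps}\cN_k^2$ error. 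The paper's proof hinges on the case split $p=k$ versus $p\neq k$, with entirely different mechanisms in each case; your sketch conflates them.
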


\begin{proof} Let us write $Q_3^H = \sum_{p\in S}Q_{3,p}^{H,\sharp} + Q_{3,p}^{H,\circ}$ with $Q_{3,p}^{H,\circ} = (Q_{3,p}^{H,\sharp})^*$ and
$$
Q_{3,p}^{H,\sharp}:= \sum_{r\in H_p} N^{1/2}\hat{V}_N(r) a_{-r}^*a_{r+p}^*a_p.
$$
Let $k,p \in S$. With \eqref{eq:TTL} and using that $\Lambda_k Q_{3,p}^{H,\sharp} \cB_k^\sharp = 0$ (since $Q_{3,p}^{H,\sharp}$ cannot annihilate a $k$-connection), we find
\begin{align}\label{eq:Q3H_expansion}
\Lambda_k T_k^* Q_{3,p}^{H,\sharp} T_k \Lambda_k
= \; &\Lambda_k \cos X_k Q_{3,p}^{H,\sharp} \cos X_k \Lambda_k + \Lambda_k \frac{\sin X_k}{X_k} \cB_k^\circ Q_{3,p}^{H,\sharp} \cB_k^\sharp \frac{\sin X_k}{X_k} \Lambda_k \nn
\\
& + \Lambda_k \frac{\sin X_k}{X_k} \cB_k^\circ Q_{3,p}^{H,\sharp} \cos X_k \Lambda_k.
\end{align}
We compute
\begin{align} \label{eq:Q3H_Tp_comm}
\cB_k^\circ Q_{3,p}^{H,\sharp} &= \sum_{t\in H_k, r\in H_p} N \vphi_t \hat{V}_N(r) \Theta_{k,t} a_k^* a_{k+t} a_{-t} a_{-r}^*a_{r+p}^* a_p \nn
\\
&= \sum_{t\in H_k, r\in H_p} N \vphi_t \hat{V}_N(r) \Theta_{k,t} \Big( a_{-r}^*a_{r+p}^* a_{k+t} a_{-t} + \delta_{k,p} (\delta_{t,r} + \delta_{t,-(p+r)}) \nn \\ &\hspace{2cm} +  a_{p+r}^* a_{r+k} (\delta_{t,r} + \delta_{t,-r-k} )  + a^*_{-r} a_{-p-r+k} (\delta_{t, -p-r} + \delta_{t, p+r-k})  \Big) \, a_k^* a_p  \nn
\\
&= \sum_{t\in H_k, r\in H_p} N \vphi_t \hat{V}_N(r) \Theta_{k,t} \Big( a_{-r}^*a_{r+p}^* a_{k+t} a_{-t} + \delta_{k,p} (\delta_{t,r} + \delta_{t,-(p+r)}) \Big) a_k^* a_p,
\end{align}
where, in the last step, we used the definition of $\Theta_{k,t}$ (which implies for example that $\Theta_{k,r} a_{p+r}^* = 0$ and similarly for the other terms on the third line). 

Next, we distinguish the cases $k=p$ and $k \not = p$. Let us first assume that $k=p$.
Since $\Lambda_k Q_{3,k}^{H,\sharp} = 0$, the first term on the right-hand side of \eqref{eq:Q3H_expansion} vanishes. Using (\ref{eq:Q3H_Tp_comm}) and $\Theta_{k,t} a_{-r}^*a_{r+k}^* = 0$, we conclude that also the second term on the right-hand side of \eqref{eq:Q3H_expansion} is zero (because $\Lambda_k$ can be moved through $\cB_k^\circ Q_{3,p}^{H,\sharp}$ and $\Lambda_k \cB_k^\sharp = 0$). Thus, we have 
\begin{align*}
\Lambda_k T_k^* Q_{3,k}^{H,\sharp} T_k \Lambda_k =  \Lambda_k \frac{\sin X_k\cos X_k}{X_k} \sum_{r\in H_k} N \hat{V}_N(r) (\vphi_r + \vphi_{r+k}) a_k^* a_k \Theta_{k,r}  \Lambda_k.
\end{align*}
With the elementary inequality 
\[  0 \leq 1-\frac{\sin x\cos x}{x}  \leq \frac{2}{3} x^2 , \] using Lemma \ref{lem:vphi_prop} to estimate \[ \Big|\sum_{r\in H_k} N \hat{V}_N(r) (\vphi_r + \vphi_{r+k}) \Big| \lesssim N^\kappa \] 
and applying the bound (\ref{eq:Xp_bound}), we obtain 
\begin{align} \label{eq:Q3Hp}
\pm \Lambda_k \Big( T_k^* Q_{3,k}^{H,\sharp} T_k - \sum_{r\in H_k} N \hat{V}_N(r) (\vphi_r + \vphi_{r+k}) a_k^* a_k \Theta_{k,r} \Big) \Lambda_k 
\lesssim N^{-2+4\kappa + \vep} \cN_k^2.
\end{align}
We now consider the case $p \neq k$. We will prove that
\begin{align} \label{eq:Q3Hq}
&\pm \Lambda_k \Big(T_k^* \sum_{p\in S\setminus\{k\}} Q_{3,p}^{H,\sharp} T_k - \sum_{p\in S\setminus\{k\}} Q_{3,p}^{H,\sharp} \Big) \Lambda_k + \hc \nn
\\
&\qquad \lesssim N^{-5+15\kappa/2+7\vep}
\Big(N^{2-2\kappa}\sum_{r,q\in H} \1_S(r+q) a_{r}^* a_{q}^* a_{q}a_{r} + N^{\kappa} \cN_S \Big) (\cN_k+1).
\end{align}
Together with \eqref{eq:Q3Hp} and $\Lambda_k \big(Q_{3,k}^{H,\sharp} + Q_{3,k}^{H,\circ} \big)\Lambda_k  = 0$ this implies \eqref{eq:Q3H}. 

To show (\ref{eq:Q3Hq}), we observe that, from (\ref{eq:Q3H_Tp_comm}), $ \cB_k^\circ Q_{3,p}^{H,\sharp} \Lambda_k = 0$ since $a_{k+t} a_{-t} \Lambda_k = 0$ for $t\in H_k$. Therefore, the last term in  \eqref{eq:Q3H_expansion} vanishes. Using \eqref{eq:Q3H_Tp_comm} to rewrite the second term on the right-hand side of (\ref{eq:Q3H_expansion}), we find 
\begin{align} \label{eq:Q3Hq_expanded}
\Lambda_k T_k^* Q_{3,p}^{H,\sharp} T_k \Lambda_k =& \; \Lambda_k \cos X_k Q_{3,p}^{H,\sharp} \cos X_k \Lambda_k \nn \\
	& + \Lambda_k \frac{\sin X_k}{X_k} \sum_{t\in H_k, r\in H_p} N \vphi_t \hat{V}_N(r) \Theta_{k,t}^{(2)} a_{-r}^*a_{r+p}^* a_{k+t} a_{-t} a_k^* a_p \cB_k^\sharp \frac{\sin X_k}{X_k} \Lambda_k 
\end{align}
%We will now show that in the previous equation all terms commute up to an error. 
In the second term, we move the cutoff $\Theta^{(2)}_{k,t}$ to the right to reconstruct the operator $\cB_k^\circ$. To this end, we use the identity 
%(confirming the intuition that $\Theta^{(2)}_{k,t}$ only acts on $S$-neighbourhoods of $t, -(t+k)$) 
%
%We would like to commute $\Theta_{k,t}^{(2)}$ in the second term to the right to reconstruct $\mathcal %B_k^\circ$ and make $X_k^2$ appear. Since $\Theta_{k,t}^{(2)}$ only sees the shell-vicinity of $t$ and $-%(t+k)$, it actually commutes with $a_{-r}^*a_{r+p}^*$ unless $-r$ or $r+p$ is close to $t$ or $-(t+k)$. Yet, %if this is the case, we can, up to a contribution of order $|S|$, eliminate the $t$-sum. In this case, the term %can be estimated. Rigorously we have
\begin{align} \label{eq:comm_a_Theta}
\Theta_{k,t}^{(2)} a_r^* =a_r^* \Theta_{k,t}^{(2)} \left(1- \mathds{1}_{(t+S)\cup (-(t+k)+S)}(r) \right) \quad \forall k\in S,t\in H_k, r \in \Lambda^*
\end{align}
%We may interpret this in the following way: The cutoff commutes except if the %respective momenta are $S-$connected. However, if this is the case we 
We find 
\begin{align*}
\sum_{t\in H_k, r\in H_p} & N \vphi_t \hat{V}_N(r) \Theta_{k,t}^{(2)} a_{-r}^*a_{r+p}^* a_{k+t} a_{-t} a_k^* a_p \cB_k^\sharp
\\
	&= \sum_{t\in H_k, r\in H_p} N \vphi_t \hat{V}_N(r) a_{-r}^*a_{r+p}^*  a_p \Theta_{k,t}^{(2)} a_{k+t} a_{-t} a_k^*  \cB_k^\sharp \big( 1 - \mathds{1}(-r,r+p \simS -(t+k),t) \big)
\end{align*}
with the notation 
%Here we denoted the condition that $-r$ or $r+p$ is in the shell-vicinity of $t$ or $-(k+t)$ by
$$
\mathds{1}(-r,r+p \simS -(t+k),t) :=
\begin{cases}
1 & \text{if } \{-r,r+p\} \cap \big((t+S)\cup (-(t+k)+S)\big) \neq \emptyset \\
0 & \text{otherwise.}
\end{cases}
$$
Proceeding as in  the proof of \Cref{lem:com_BB}, we find 
\[ \begin{split} 
\sum_{t\in H_k, r\in H_p} & N \vphi_t \hat{V}_N(r) \Theta_{k,t}^{(2)} a_{-r}^*a_{r+p}^* a_{k+t} a_{-t} a_k^* a_p \cB_k^\sharp =
\sum_{r \in H_p} N^{1/2} \widehat{V}_N (r) a_{-r}^* a_{r+p}^* a_p (X_k^{(-r, r+p)} )^2
\end{split} \]
where we defined
\begin{align} \label{eq:def_Xk^-r}
\big(X_k^{(-r,r+p)}\big)^2 &= \sum_{t\in H_k}\left( 1 - \mathds{1}(-r,r+p \simS -(t+k),t) \right)\Theta_{k,t} N \vphi_t(\vphi_t + \vphi_{t+k}) a_k^*a_k 
\end{align}
We can decompose $\big(X_k^{(-r,r+p)}\big)^2 = X_k^2 - \delta_k^{(-r,r+p)}$ with 
\begin{equation} \label{eq:delta_bound}
\delta_k^{(-r,r+p)} = \sum_{t\in H_k} \mathds{1}(-r,r+p \simS -(t+k),t)  \Theta_{k,t} N \vphi_t(\vphi_t + \vphi_{t+k}) \cN_k 
\end{equation} 
Notice that $(X^{(-r, r+p)}_k)^2$ and $\delta_k^{(-r, r+p)}$ are both non-negative operators (this can be shown similarly as in \eqref{eq:positivity}). Moreover, with \Cref{lem:vphi_prop}, we find
\[ \delta_k^{(-r,r+p)} \lesssim N |S|  \|\vphi_H\|_\infty^2 \cN_k
\lesssim N^{-5+15\kappa/2+7\vep} \cN_k.
\] 
Hence, after summing over $p \in S \backslash \{ k \}$,  the contribution of $\delta_k^{(-r, r+p)}$ can be bounded with Cauchy--Schwarz on the range of $\Lambda_k$ by 
%
%We can now sum over $p \in S\setminus\{k\}$ and estimate
\begin{align} \label{eq:delta_Q3H_estimate}
&\pm \sum_{p \in S\setminus\{k\}, r\in H_p} N^{1/2} \hat{V}_N(r) a_{-r}^*a_{r+p}^* a_p \delta_k^{(-r,r+p)} +\hc \nn
%\\
%& \qquad\leq \eta \sum_{p \in S\setminus\{k\}, r\in H_p} a_{-r}^* a_{r+p}^* (\cN_k+1) a_{r+p}a_{-r} \nn
% + \eta^{-1} N\hat{V}_N(r)^2 \delta_k^{(-r,r+p)} a_p^* (\cN_k+1)^{-1} a_p \delta_k^{(-r,r+p)} \nn
\\
&\qquad \lesssim N^{-5+15\kappa/2+7\vep}
\Big( N^{2-2\kappa}\sum_{r,q\in H} \1_S(r+q) a_{r}^* a_{q}^* a_{q}a_{r} + N^{\kappa} \cN_S \Big)(\cN_k+1). 
\end{align}
Therefore
%, summing \eqref{eq:Q3Hq_expanded} over  $p \in S\setminus\{k\}$ yields
\begin{equation} \begin{split}  \label{eq:Q3H_cos_2ndstep}
\pm  \sum_{p \in S\setminus\{k\}} \Lambda_k  \Big[ &T_k^* Q_{3,p}^{H,\sharp} T_k - \cos X_k Q_{3,p}^{H,\sharp} \cos X_k - \frac{\sin X_k}{X_k} Q_{3,p}^{H,\sharp} X_k \sin X_k + \text{h.c.} \Big] \Lambda_k  \\ &\lesssim 
N^{-5+15\kappa/2+7\vep}
\Big(N^{2-2\kappa}\sum_{r,q\in H} \1_S(r+q) a_{r}^* a_{q}^* a_{q}a_{r} + N^{\kappa} \cN_S \Big) (\cN_k+1).
\end{split} \end{equation} 
Arguing as in \eqref{eq:comm_a_Theta}, we also obtain 
\begin{align} \label{eq:comm_Xk^-r}
\Lambda_k
X_k^2 a_{-r}^* a_{r+p}^ *a_p \Lambda_k = \Lambda_k a_{-r}^* a_{r+p}^* a_p \big(X_k^{(-r,r+p)}\big)^2 \Lambda_k.
\end{align}
Since moreover $X_k$ and $X_k^{(-r,r+p)}$ commute, we find
\begin{align*}
&\Lambda_k \cos X_k a_{-r}^*a_{r+p}^*a_p \cos X_k \Lambda_k + \Lambda_k \frac{\sin X_k}{X_k} a_{-r}^*a_{r+p}^*a_p X_k\sin X_k \Lambda_k 
\\
&= \Lambda_k a_{-r}^*a_{r+p}^*a_p \left(\cos X_k^{(-r,r+p)} \cos X_k + \frac{\sin X_k^{(-r,r+p)}}{X_k^{(-r,r+p)}} X_k\sin X_k \right) \Lambda_k
\\
&= \Lambda_k a_{-r}^*a_{r+p}^*a_p \left(\cos (X_k -X_k^{(-r,r+p)}) + (X_k - X_k^{(-r,r+p)}) \sin X_k  \frac{\sin X_k^{(-r,r+p)}}{X_k^{(-r,r+p)}} \right) \Lambda_k
\\
&=: \Lambda_k a_{-r}^*a_{r+p}^*a_p \left(1 + R_k^{(-r,r+p)} \right) \Lambda_k
\end{align*}
With $|\cos(x-y) -1 +(x-y)\sin(x) \sin(y) / y| \leq C (x^2-y^2)$ for all $0\leq y\leq x$, we conclude that 
$$
\pm R_k^{(-r,r+p)} \leq C \delta_k^{(-r,r+p)}.
$$
Thus, proceeding as in \eqref{eq:delta_Q3H_estimate}, we arrive, on the range of $\Lambda_k$, at 
\begin{align*}
& \pm \sum_{p \in S \setminus \{k\}, r \in H_p} N^{1/2}\hat{V}_N(r) a_{-r}^*a_{r+p}^*a_p R_k^{(-r,r+p)} +\hc\\
&\qquad \lesssim N^{-5+15\kappa/2+7\vep}
\left(N^{2-2\kappa}\sum_{r,q\in H} \1_S(r+q) a_{r}^* a_{q}^* a_{q}a_{r} + N^{\kappa} \cN_S \right)(\cN_k+1).
\end{align*}
Inserting this bound into \eqref{eq:Q3H_cos_2ndstep}, we arrive at \eqref{eq:Q3Hq}. 
\end{proof}

As for the observables $Q_3^M, Q_3^S$, we employ the parity operators $\mathbb{P}_k$ defined in (\ref{eq:parity}) and the assumption $[ \mathbb{P}_k, \Gamma] =0$ for all $k\in S$ to show that their expectation vanishes in the state described by the density matrix $\Gamma$. Here, we will make use of the following 
%``monogamy'' 
property stating that, for $p \in H$, the density matrix $T_c \Gamma T_c^*$ can only have either zero or one particle with momentum $p$ and that, in the second case, there must be exactly one particle with momentum in the set $-p+S$, forming an $S$-connection with the particle with momentum $p$. 
\begin{lemma} \label{lm:mono} 
Let $\Gamma = \mathds{1}_{\{\cN_{S^c} = 0\}} \Gamma \mathds{1}_{\{\cN_{S^c} = 0\}}$. For $p\in H$, define 
\begin{equation}\label{eq:chip}
\begin{split}  \chi_p &=  \mathds{1}_{\{\cN_p = 0\}} + \mathds{1}_{\{\cN_p = 1\}}\sum_{x \in S} \mathds{1}_{\{\cN_{-p+x} = 1\}} \prod_{ y\in S\setminus\{x\}} \mathds{1}_{\{\cN_{-p+y} = 0 \}} \\
\wt{\chi}_p &= \mathds{1}_{\{\cN_p = 0\}} + \mathds{1}_{\{\cN_p = 1\}}\sum_{x \in S} \mathds{1}_{\{\cN_{-p+x} = 1\}}
\end{split} 
 \end{equation}
Then we have 
\begin{equation}\label{eq:cTGTc}  T_c \Gamma T_c^* = \chi_p T_c \Gamma T_c^* \chi_p = \wt{\chi}_p T_c \Gamma T_c^* \wt{\chi}_p.  \end{equation} 
\end{lemma}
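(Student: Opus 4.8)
The plan is to prove the two identities in \eqref{eq:cTGTc} by showing that the projections $\chi_p$ and $\wt\chi_p$ act as the identity on the range of $T_c \Gamma T_c^*$, or equivalently that $(1-\chi_p) T_c = 0$ and $(1-\wt\chi_p) T_c = 0$ on the range of $\mathds{1}_{\{\cN_{S^c}=0\}}$, since $\Gamma = \mathds{1}_{\{\cN_{S^c}=0\}}\Gamma\mathds{1}_{\{\cN_{S^c}=0\}}$. The starting observation is that any state in the range of $\mathds{1}_{\{\cN_{S^c}=0\}}$ has no particles with momentum in $H$ at all, so $\chi_p$ and $\wt\chi_p$ trivially act as the identity on it; the content of the lemma is that applying $T_c = \prod_{k\in S}\wt T_k$ (using \eqref{eq:TcwtT}) can create, for a given $p\in H$, at most one particle with momentum $p$, and if it does, then it simultaneously creates exactly one particle with momentum $-p+x$ for some unique $x\in S$. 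This is precisely the exclusion structure built into the cutoffs $\Theta_{k,r}$.

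The key steps, carried out by induction over the (ordered) factors $\wt T_k$ in the product $T_c$, would be as follows. First, recall from \eqref{eq:TTL} that on the range of $\mathds{1}_{\{\cN_{S^c}=0\}}$ each $\wt T_k = (\cos X_k + \cB_k^\sharp \sin X_k / X_k)\Lambda_k$, so the only way $\wt T_k$ changes the occupation numbers of high momenta is through the creation operator part $\cB_k^\sharp = \sum_{r\in H_k} N^{1/2}\varphi_r a_{-r}^* a_{r+k}^* a_k \Theta_{k,r}$, which creates a $k$-connection $(-r, r+k)$. Because $(\cB_k^\sharp)^2 = 0$ (a $k$-connection cannot be created if one exists, due to $\Theta_k^{(1)}$), applying $\wt T_k$ either leaves the high-momentum configuration untouched or adds exactly one $k$-connection. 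Second, one tracks the momentum $p\in H$ through the product: at the stage where we have applied $\prod_{k\in S'}\wt T_k$ for some initial segment $S'\subset S$, the state is a superposition of configurations in which either $\cN_p=0$, or $\cN_p=1$ and there is a unique $x\in S'$ with $\cN_{-p+x}=1$ forming the $x$-connection $(p, -p+x)$ that contains $p$, with no other occupied mode in $-p + S$. The cutoff $\Theta^{(1)}_{k}$ forbids creating a second $k$-connection (which would give $\cN_p = 2$ if it were a $k$-connection through $p$), and the cutoff $\Theta^{(2)}_{k,r}$ forbids $\cB_k^\sharp$ from creating a particle in $r+S$ or $-(k+r)+S$; this is exactly what prevents a second occupied mode in $-p+S$ once $p$ is occupied, and conversely prevents $p$ itself from becoming occupied a second time. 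Carefully checking these two cases under application of the next factor $\wt T_{k}$, $k\in S\setminus S'$, gives the inductive step, and after exhausting $S$ one arrives at $(1-\chi_p)T_c = 0$ on $\mathrm{Ran}\,\mathds{1}_{\{\cN_{S^c}=0\}}$; the statement for $\wt\chi_p$ is weaker and follows a fortiori, since $\chi_p \leq \wt\chi_p$.

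The main obstacle I expect is the bookkeeping in the inductive step, specifically verifying that when $\cB_k^\sharp$ acts on a configuration where $p$ is already occupied (through an earlier $x$-connection, $x\in S'$), the cutoff $\Theta_{k,r}$ genuinely kills every offending term: one must consider the cases $-r = p$, $r+k = p$, and $-r$ or $r+k$ lying in $-p+S$, and in each case identify which factor of $\Theta_k^{(1)}$ or $\Theta^{(2)}_{k,r}$ vanishes on the relevant configuration. A secondary subtlety is that $\cB_k^\sharp$ also contains the annihilation operator $a_k$ with $k\in S$, so one should note that removing a shell particle does not disturb the high-momentum structure and that the $\Theta$ cutoffs depend only on high-momentum and shell occupation numbers in a way that commutes appropriately; this is already implicit in the earlier computations (e.g.\ in Lemma \ref{lem:com_BB}) and can be invoked. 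Once the exclusion bookkeeping is set up cleanly the conclusion is immediate, so the proof is essentially combinatorial rather than analytic.
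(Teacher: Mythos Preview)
Your approach is correct but differs from the paper's. You plan an induction over the factors $\wt T_k$ in $T_c$, maintaining the invariant that the partial product applied to $\mathrm{Ran}\,\mathds{1}_{\{\cN_{S^c}=0\}}$ stays in $\mathrm{Ran}\,\chi_p$, and you correctly identify that the inductive step reduces to a case analysis on how the created pair $(-r,r+k)$ interacts with the momentum $p$ and its shell-neighbourhood $-p+S$, with the offending cases killed by $\Theta_{k,r}^{(2)}$. The paper instead proves the stronger algebraic statement $[a_{-r}^* a_{r+k}^* a_k \Theta_{k,r}, \chi_p] = 0$ for all $k\in S$, $r\in H_k$, which immediately gives $[T_c,\chi_p]=0$ and hence the identity, since $\chi_p\Gamma\chi_p=\Gamma$. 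The paper's case split (four cases: $p\in\{-r,r+k\}$; $p\in r+(S\setminus\{k\})$; $p\in -(r+k)+(S\setminus\{k\})$; none of these) is essentially the same list you would encounter in your inductive step, so the combinatorial content is identical. The commutator formulation is a bit cleaner because it avoids having to track the intermediate state structure and the auxiliary fact that $\Lambda_{k}$ acts trivially at each stage; on the other hand, your state-tracking picture makes the ``monogamy'' intuition more transparent. One small remark: $\wt\chi_p$ is not a projection in general (the sum over $x$ can exceed $1$), so ``$\chi_p\leq\wt\chi_p$'' should be read as ``$\wt\chi_p=1$ on $\mathrm{Ran}\,\chi_p$'', which is what you actually use.
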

%The statement about $\tilde{\chi}_p$ contains less information but simplifies notation.
\begin{proof}
Let $p\in H$. Note that $\Gamma = \chi_p \Gamma \chi_p$. We now show that $[a_{-r}^*a_{r+k}^*a_k\Theta_{k,r}, \chi_p] = 0$ for all $k\in S, r\in H_k$, which implies $[T_c,\chi_p] = 0$ and thus the first identity in \eqref{eq:cTGTc}.
 
We start by observing that from the definition of $\Theta_{k,r}$ in (\ref{eq:Thetakr}) we have
\begin{equation}  \label{eq:monogamy_creation}
a_{-r}^*a_{r+k}^*a_k \Theta_{k,r} = \mathds{1}_{\{\cN_{-r} = 1\}} \mathds{1}_{\{\cN_{r+k} = 1\}} \prod_{ y\in S\setminus\{k\}} \mathds{1}_{\{\cN_{r+y} = 0 \}} a_{-r}^*a_{r+k}^*a_k \Theta_{k,r}.
\end{equation}
This implies
\begin{align}
\big[ a_{-r}^* a_{r+k}^* &a_k \Theta_{k,r}, \mathds{1}_{\{\cN_p = 0\}} \big] \nn \\ &= a_{-r}^*a_{r+k}^*a_k \Theta_{k,r} \mathds{1}_{\{\cN_p = 0\}} \big(\delta_{-r,p} + \delta_{r+k,p} \big)
\nn \\
& = \mathds{1}_{\{\cN_p = 1\}} \mathds{1}_{\{\cN_{-p+k} = 1\}} \prod_{ y\in S\setminus\{k\}} \mathds{1}_{\{\cN_{-p+y} = 0 \}} a_{p}^*a_{-p+k}^*a_k \Theta_{k,-p} \left(\delta_{-r,p} + \delta_{r+k,p} \right). \label{eq:monog2}
\end{align}
To compute the commutator with the second summand in the definition (\ref{eq:chip}) of the projection $\chi_p$, we distinguish four cases. 

\medskip

\emph{Case 1:} $p = -r$ or $p = r+ k$. This implies that $p \in -(r+k) + S$ or $p \in r + S$, which by definition of $\Theta_{k,r}^{(2)}$ gives $\Theta_{k,r} \mathds{1}_{\{\cN_p = 1\}} = 0$. Thus
\begin{align*}
\Big[ a_{-r}^* a_{r+k}^*a_k \Theta_{k,r}, \mathds{1}_{\{\cN_p = 1\}} &\sum_{x \in S} \mathds{1}_{\{\cN_{-p+x} = 1\}}  \prod_{ y\in S\setminus\{x\}} \mathds{1}_{\{\cN_{-p+y} = 0\}} \Big]
\\
&= - \mathds{1}_{\{\cN_p = 1\}}\sum_{x \in S} \mathds{1}_{\{\cN_{-p+x} = 1\}} \prod_{ y\in S\setminus\{x\}} \mathds{1}_{\{\cN_{-p+y} = 0\}} a_{p}^* a_{-p+k}^*a_k \Theta_{k,-p}
\\
&= - \mathds{1}_{\{\cN_p = 1\}} \mathds{1}_{\{\cN_{-p+k} = 1\}}\prod_{ y\in S\setminus\{k\}}  \mathds{1}_{\{\cN_{-p+y} = 0 \}} a_p^* a_{-p+k}^*a_k \Theta_{k,-p},
\end{align*}
where the last equality follows from \eqref{eq:monogamy_creation}. With (\ref{eq:monog2}), this proves $[ a_{-r}^* a_{r+k}^*a_k \Theta_{k,r}, \chi_p] = 0$. 

\medskip

\emph{Case 2:} $p = r+z$ for some $z \in S \setminus\{k\}$. Then $p \notin \{-r,r+k\}$ and also $\Theta_{k,r} \mathds{1}_{\{\cN_p = 1\}} = 0$, which implies
\begin{align*}
&\Big[a_{-r}^* a_{r+k}^*a_k \Theta_{k,r}, \mathds{1}_{\{\cN_p = 1\}}\sum_{x \in S} \mathds{1}_{\{\cN_{-p+x} = 1\}} \prod_{ y\in S\setminus\{x\}} \mathds{1}_{\{\cN_{-p+y} = 0\}} \Big] = 0
\end{align*}
Since, in this case, we find from (\ref{eq:monog2}) that $[ a_{-r}^* a_{r+k}^*a_k \Theta_{k,r} , \mathds{1}_{\{\cN_p =0\}} ] = 0$, we conclude again that $[ a_{-r}^* a_{r+k}^*a_k \Theta_{k,r}, \chi_p ]= 0$. 

\medskip

\emph{Case 3:} $p = -(r+k) + z$ for some $z \in S \setminus\{k\}$. Then $p \notin \{-r,r+k\}$ and $\Theta_{k,r} \mathds{1}_{\{\cN_p = 1\}} = 0$. Again, we find $[ a_{-r}^* a_{r+k}^*a_k \Theta_{k,r}, \chi_p ]= 0$. 

\medskip

\emph{Case 4:} If none of the conditions apply, terms commute and $[ a_{-r}^* a_{r+k}^*a_k \Theta_{k,r}, \chi_p ]= 0$.

\medskip

This shows that $\chi_p T_c \Gamma T_c^* \chi_p = T_c \Gamma T_c^*$. The second identity in (\ref{eq:cTGTc}) can be shown similarly.
\end{proof} 

With Lemma \ref{lm:mono}, we can now show that the expectation of $Q_3^S$ and $Q_3^M$ vanish in the state $T_c^* \Gamma T_c$. 

\begin{lemma} \label{lm:Q3SM}
Let $\Gamma$ be a density matrix on $\cF (\Lambda)$ satisfying $\Gamma = \mathds{1}_{\{\cN_{S^c} = 0\}} \, \Gamma \mathds{1}_{\{\cN_{S^c} = 0\}}$ as well as $[\Gamma, \mathbb{P}_k ] = 0$ for all $k \in S$. Then  
\begin{equation}\label{eq:Q3SM}  \tr \, T_c^* Q_3^S T_c \, \Gamma  =  \tr \,  T_c^* Q_3^M T_c  \, \Gamma = 0.
\end{equation} 
\end{lemma}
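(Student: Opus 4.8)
The plan is to pass to $\tr\, T_c^* Q_3^S T_c\,\Gamma=\tr\, Q_3^S\,\widetilde\Gamma$ and $\tr\, T_c^* Q_3^M T_c\,\Gamma=\tr\, Q_3^M\,\widetilde\Gamma$ with $\widetilde\Gamma:=T_c\Gamma T_c^*$, and to use two structural facts about $\widetilde\Gamma$. First, since $[T_c,\mathbb P_k]=0$ for all $k\in S$ (noted right before \cref{lem:Tc_main}) and $[\Gamma,\mathbb P_k]=0$ by assumption, we get $[\widetilde\Gamma,\mathbb P_k]=0$; because $\mathbb P_k-\mathbb Q_k=(-1)^{\mathcal M_k+\mathcal M_{-k}}$, the state $\widetilde\Gamma$ then commutes with the parity operator $\Pi:=\prod_{\{k,-k\}\subset S}(\mathbb P_k-\mathbb Q_k)=(-1)^{\sum_{k\in S}\mathcal M_k}$, the product running over unordered pairs. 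Second, by \cref{lm:mono} we have $\widetilde\Gamma=\chi_p\widetilde\Gamma\chi_p$ for every $p\in H$, so $\widetilde\Gamma$ is supported on the joint range $\mathcal R$ of the (commuting, number-operator valued) projections $\chi_p$, on which every high momentum is occupied by at most one particle which, if present, belongs to exactly one $S$-connection. Consequently, on $\mathcal R$ the occupied high momenta pair up into $S$-connections, so that $\mathcal N_H$ is even and $\sum_{k\in S}\mathcal M_k=\mathcal N_S+\tfrac12\mathcal N_H$.

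The mechanism is then a $\mathbb Z_2$ selection rule. Every monomial $a_{-r}^*a_{r+p}^*a_p$ (and its adjoint) appearing in $Q_3^S$ or $Q_3^M$ raises $\mathcal N_S$ by one and leaves $\mathcal N_H$ unchanged: in $Q_3^S$ all three momenta lie in $S$; in the first sum of $Q_3^M$ only $r+p\in S$, and it is created, while $a_{-r}^*a_p$ creates and annihilates one high momentum; in the second sum only $-r\in S$, created, while $a_{r+p}^*a_p$ moves a particle within $H$. Hence each such monomial $M$ satisfies $[\mathcal N_S+\tfrac12\mathcal N_H,M]=M$. Writing $P:=\prod_{p\in H}\chi_p$ for the orthogonal projection onto $\mathcal R$, we have $[\mathcal N_S+\tfrac12\mathcal N_H,PMP]=PMP$, and $PMP$ has range in $\mathcal R$, where $\mathcal N_S+\tfrac12\mathcal N_H$ coincides with the integer-valued operator $\Psi:=\sum_{k\in S}\mathcal M_k$. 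Thus, restricted to $\mathcal R$, the operator $PMP$ raises $\Psi$ by one, i.e.\ it anticommutes with $(-1)^\Psi=\Pi$ and interchanges its two eigenspaces. Since $\widetilde\Gamma=P\widetilde\Gamma P$ commutes with $\Pi$ and is hence block diagonal for this eigenspace decomposition, the standard cancellation for an off-diagonal operator paired against a block-diagonal state gives $\tr\, M\widetilde\Gamma=\tr\, PMP\,\widetilde\Gamma=0$. Summing over the finitely many monomials of $Q_3^S$ and $Q_3^M$ (and their adjoints, which lower $\mathcal N_S$ by one and are handled identically) yields \eqref{eq:Q3SM}.

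This also explains why $Q_3^H$ must be treated separately, as in \cref{lem:Q3_Tp}: its monomials create a high-momentum pair and annihilate a shell particle, so they preserve $\mathcal N_S+\tfrac12\mathcal N_H$ and do not flip $\Pi$ --- accordingly $Q_3^H$ contributes to the renormalized Hamiltonian rather than dropping out. I expect the step requiring most care to be the second structural fact: establishing precisely, via \cref{lm:mono}, that $\widetilde\Gamma$ is supported on $\mathcal R$ and that there the identity $\sum_{k\in S}\mathcal M_k=\mathcal N_S+\tfrac12\mathcal N_H$ holds with $\mathcal N_H$ even, in particular that the occupied high momenta genuinely pair into $S$-connections so that no half-integer contributions survive in the $\mathcal M_k$; the remainder is a short algebraic manipulation. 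The conceptual content is simply that $Q_3^S$ and $Q_3^M$ break the parity $(-1)^{\mathcal N_S+\mathcal N_H/2}$ that both the cubic transformation $T_c$ and the trial state $\Gamma$ leave invariant, whereas $Q_3^H$ does not.
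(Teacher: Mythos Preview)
Your argument is correct and takes a more global route than the paper. The paper works monomial by monomial: for each term it identifies a specific $x\in S$ with $\mathbb P_x\,a_{-r}^*a_{r+p}^*a_p = a_{-r}^*a_{r+p}^*a_p\,\mathbb Q_x$, which for $Q_3^M$ requires first inserting $\chi_{-r},\chi_p$ via \cref{lm:mono} so that the connection-counting part of $\mathcal M_x$ becomes tractable, followed by a case distinction on $y,z\in S$ to locate the right $x$. You instead collapse all the individual parities into the single involution $\Pi=(-1)^{\sum_{k\in S}\mathcal M_k}$ and observe that on the monogamy range $\mathcal R$ it equals $(-1)^{\mathcal N_S+\mathcal N_H/2}$; since every monomial of $Q_3^S$ and $Q_3^M$ shifts $\mathcal N_S+\tfrac12\mathcal N_H$ by one, $PMP$ anticommutes with $\Pi|_{\mathcal R}$ and the trace vanishes. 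This treats $Q_3^S$ and $Q_3^M$ uniformly and eliminates the case distinctions, at the cost of invoking \cref{lm:mono} even for $Q_3^S$, where the paper does not need it. One point to make explicit: your identity $\sum_{k\in S}\mathcal M_k=\mathcal N_S+\tfrac12\mathcal N_H$ on $\mathcal R$ uses that the unique $S$-partner $-p+x$ of an occupied $p\in H$ is itself in $H$ rather than in the gap $\{N^{\kappa/2+\eps}<|q|\le N^{1-\kappa-\eps}\}$; this holds because $\widetilde\Gamma$ is additionally supported on $\{\mathcal N_{(S\cup H)^c}=0\}$, which you should state.
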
 

\begin{proof}
As noticed below (\ref{eq:parity}), we have $[ \mathbb{P}_k, T_c ] = 0$ for all $k \in S$. By assumption, we consider density matrices $\Gamma$ such that $[\Gamma, \mathbb{P}_k] = 0$ for all $k \in S$. It follows that 
\begin{equation}\label{eq:TGT-sym} T_c \Gamma T_c^* = \mathbb{P}_k T_c \Gamma T_c^* \mathbb{P}_k + \mathbb{Q}_k T_c \Gamma T_c^* \mathbb{Q}_k \end{equation} 
for all $k \in S$. On the other hand, from the definition of $\cM_x$ in \eqref{eq:def_Mx} and $\cN_i a_j^* = a_j^*(\cN_i + \delta_{i,j})$ we find 
\begin{equation} \label{eq:Q3S_parity}
\cM_x a_{-r}^*a_{r+p}^*a_p = a_{-r}^*a_{r+p}^*a_p \left(\cM_x + \delta_{x,-r} + \delta_{x,r+p} - \delta_{x,p}\right).
\end{equation}
The analogous statement holds for $\cM_{-x}$. Distinguishing different cases, we readily show that the parity has to be violated for some $x\in S$, i.e. 
for every $r,p \in S$ with $r+p \in S$ there exists a $x\in S$ such that 
\[ \mathbb{P}_x a_{-r}^* a_{r+p}^* a_p  = a_{-r}^* a_{r+p}^* a_p  \mathbb{Q}_x. \]
Thus we find that $\tr  \, a_{-r}^* a_{r+p}^* a_p T_c \Gamma T_c^* = 0$ for all $r,p \in S$ with $r+p \in S$. Hence $\tr Q_3^S T_c \Gamma T_c^* = 0$. 

To show the second equality in (\ref{eq:Q3SM}), we derive an identity similar to \eqref{eq:Q3S_parity}. Here this task is more subtle than above, because the second term in the definition (\ref{eq:def_Mx}) of $\cM_x$, measuring the number of $x$-connections, does not commute with $Q_3^M$. We are going to use Lemma \ref{lm:mono}. We decompose 
$$
Q_3^M =  \sum_{\substack{r,p\in H: \\ r+p \in S}} N^{1/2} \hat{V}_N (r) a^*_{-r} a^*_{r+p} a_p + \hc + \sum_{p\in H_r,r\in S} N^{1/2} \hat{V}_N (r) a^*_{-r} a^*_{r+p} a_p + \hc =: Q_3^{M,1} + Q_3^{M,2}.
$$
First, we consider $Q_3^{M,1}$. 
%Heuristically, using the knowledge from \Cref{lm:mono}, we see that this term %annihilates a connection, creates a particle in the shell and creates a connection. This, %however, changes the parity of the term in the same way that we just saw for $Q_3^S$ %and thus gives a zero contribution. The same argument repeats several times in the %treatment of the quartic term $Q_4$ and we omit details there.  For now, let us be %precise.
From \eqref{eq:cTGTc}, we find
\[ \begin{split}  \tr \,  T_c^*Q_3^{M,1} T_c \Gamma &= \sum_{\substack{r,p\in H: \\ r+p \in S}} N^{1/2} \hat{V}_N (r)  \tr \,  a^*_{-r} a^*_{r+p} a_p T_c \Gamma T_c^* \\ &=  \sum_{\substack{r,p\in H: \\ r+p \in S}} N^{1/2} \hat{V}_N (r)  \tr \, \chi_{-r} a^*_{-r} a^*_{r+p} a_p \chi_p T_c \Gamma T_c^*. \end{split} \]
Using that $ a_{-r}^* =  \mathds{1}_{\{\cN_{-r} \geq 1\}} a_{-r}^*$ and $ a_p = a_p\mathds{1}_{\{\cN_{p} \geq 1\}}$, we may equivalently consider the expectation, in the state $T_c \Gamma T_c^*$, of the observable
\begin{equation}
\label{eq:QM-mono}
\sum_{y,z \in S} \mathds{1}_{\{\cN_{-r} =1\}} \mathds{1}_{\{\cN_{r+y} =1\}} \prod_{ y'\in S\setminus\{y\}}  \mathds{1}_{\{\cN_{r+y'} = 0\} } a_{-r}^*a_{r+p}^*a_p \mathds{1}_{\{\cN_{p} =1\}} \mathds{1}_{\{\cN_{-p+z} =1\}} \prod_{ z'\in S\setminus\{z\}} \mathds{1}_{\{\cN_{-p+z'} = 0 \} }
\end{equation}
for $r,p\in H$ such that $r+p\in S$. 
%
%Before going to the details, let us interpret what the summand above does. It annihilates the $z$-%connection $(p,z-p) \in H^2$, creates the $y$-connection $(-r,r+y) \in H^2$ and creates a mode $r+p\in %S$. Because of the conservation identity (\ref{eq:parity_plus}), this imposes each summand $\Xi$ above %maps $\mathbb{P}_x \Xi \mathbb{Q}_x =\mathbb{Q}_x \Xi \mathbb{P}_x = 0$ for $x$ either being $y,z$ %or $r+p$. 
To show that the expectation of each summand in the previous line vanishes, we apply the parity argument as we did for $Q_3^S$. For $x \in S$, using the definition of $\cM_x$ in (\ref{eq:def_Mx}) and $\mathcal N_i a^*_j = a^*_j (\mathcal N_i + \delta_{i,j})$, we find
\begin{align*}
\cM_x & a_{-r}^*a_{r+p}^*a_p = a_{-r}^*a_{r+p}^* \left(\cM_x + \delta_{x,r+p} + \frac{1}{2}\sum_{t\in H_x} \left(\delta_{t,r} \cN_{t+x} + \delta_{t+x,-r}\cN_{-t} \right) \right) a_p
\\
	& = a_{-r}^*a_{r+p}^*a_p \big(\cM_x + \delta_{x,r+p}\big) + \cN_{r+x} \mathds{1}_{H_x}(r) a_{-r}^*a_{r+p}^*a_p - a_{-r}^*a_{r+p}^*a_p \, \cN_{-p+x} \mathds{1}_{H_x}(-p).
\end{align*}
Thus, we obtain, for $y,z \in S$ such that $r+y, -p+z \in H$ (note that otherwise the expectation in \eqref{eq:QM-mono} vanishes as $r+y,-p+z$ cannot be in $S$ and there is no particle in $(H\cup S)^c$)
\begin{align*}
\cM_x \mathds{1}_{\{\cN_{r+y} =1\}} & \prod_{ y'\in S\setminus\{y\}} \mathds{1}_{\{\cN_{r+y'} = 0\} } a_{-r}^*a_{r+p}^*a_p  \mathds{1}_{\{\cN_{-p+z} =1\}}  \prod_{ z'\in S\setminus\{z\}} \mathds{1}_{\{\cN_{-p+z'} = 0\} }
\\
& = \mathds{1}_{\{\cN_{r+y} =1\}}  \prod_{ y'\in S\setminus\{y\}} \mathds{1}_{\{\cN_{r+y'} = 0\} } a_{-r}^*a_{r+p}^*a_p \mathds{1}_{\{\cN_{-p+z} =1\}}  \prod_{ z'\in S\setminus\{z\}} \mathds{1}_{\{\cN_{-p+z'} = 0 \} }\\ &\hspace{8cm} \times \left(\cM_x + \delta_{x,r+p} + \delta_{x,y} - \delta_{x,z} \right)
\end{align*}
(compare this to \eqref{eq:Q3S_parity}).
The same calculation for $\cM_{-x}$ and a case distinction shows that the parity is violated for some $x\in S$, i.e. for every $p,r \in H$ such that $r+p \in S$ and for every $y,z \in S$, there exists 
$x \in S$ such that
\begin{align} \label{eq:Q3M1_parity}
&\mathbb{P}_x \mathds{1}_{\{\cN_{r+y} =1\}}  \prod_{ y'\in S\setminus\{y\}} \mathds{1}_{\{\cN_{r+y'} = 0\} }  a_{-r}^*a_{r+p}^*a_p \mathds{1}_{\{\cN_{-p+z} =1\}}  \prod_{z'\in S\setminus\{y\}} \mathds{1}_{\{\cN_{r+z'} = 0\} }  \nn \\
& \quad =\mathds{1}_{\{\cN_{r+y} =1\}}  \prod_{ y'\in S\setminus\{y\}} \mathds{1}_{\{\cN_{r+y'} = 0\} }  a_{-r}^*a_{r+p}^*a_p \mathds{1}_{\{\cN_{-p+z} =1\}}  \prod_{z'\in S\setminus\{y\}} \mathds{1}_{\{\cN_{r+z'} = 0\} } \mathbb{Q}_x.
\end{align}
With (\ref{eq:TGT-sym}) this implies that the expectation of each term in (\ref{eq:QM-mono}) vanishes. By linearity, we conclude that $\tr \, T_c^* Q_3^{M,1} T_c \Gamma = 0$. Similarly, swapping the roles of $-r$ and $r+p$, we also obtain $\tr \, T_c^* Q_3^{M,2} T_c \Gamma = 0$.
\end{proof}

\subsection{Action on the quartic potential energy operator} 

Finally, we study the conjugation of the quartic term $Q_4$ in (\ref{eq:claimB1}). On the range of the projection $\Xi = \mathds{1}_{\{\mathcal N_{(H\cup S)^c}=0\}} \mathds{1}_{\{\mathcal N_H \in 2\mathbb{N}_0\}}$, we can decompose 
\[ Q_4 = Q_4^H + Q_4^M + Q_4^S \] 
with
\begin{equation} \label{eq:split-Q4} 
\begin{split} 
Q_4^H & = \frac{1}{2} \sum_{\substack{r\in \Lambda^* \\ p,q \in H_r}} \widehat{V}_N (r) a_{p+r}^*a_q^*a_{q+r}a_p
\\
Q_4^M &= \frac{1}{2} \sum_{\substack{q \in H, p\in S, r\in \Lambda^*: \\ p+r \in H, q+r \in S}} \widehat{V}_N(r) a_{p+r}^*a_q^*a_{q+r}a_p + \hc
+ \frac{1}{2} \sum_{\substack{p,q \in S, r \in \Lambda^*: \\ p+r, q+r \in H}} \widehat{V}_N(r) a_{p+r}^*a_q^*a_{q+r}a_p  + \hc
\\
&\, + \frac{1}{2} \sum_{\substack{p \in H, q\in S, r\in \Lambda^*: \\ p+r \in H, q+r \in S}} \widehat{V}_N(r) a_{p+r}^*a_q^*a_{q+r}a_p + \hc =: Q_4^{M,1} + Q_4^{M,2} + Q_4^{M,3}
\\
Q_4^S &= \frac{1}{2} \sum_{\substack{q,p \in S, r\in \Lambda^*: \\ p+r,q+r \in S}} \widehat{V}_N (r) a_{p+r}^*a_q^*a_{q+r}a_p.
\end{split} \end{equation} 

Let us first consider the term $Q_4^H$. It is convenient to define 
\[ \wt{Q}_4^H = \frac{1}{2} \sum_{\substack{r \in \Lambda^*  \\ p, q \in H_r}} \widehat{V}_N (r) \mathds{1}_S (p+q+r) a_{p+r}^* a_q^* a_{q+r} a_p. \]
In the next lemma we show that the difference $Q_4^H - \wt{Q}_4^H$ is small on an appropriate class of trial states. 
\begin{lemma}\label{lem:Q4_HS} 
Suppose that  $\Gamma$ is a density matrix on $\cF (\Lambda)$ satisfying $\Gamma = \mathds{1}_{\{\cN_{S^c} = 0\}} \, \Gamma \mathds{1}_{\{\cN_{S^c} = 0\}}$. Then we have
\begin{align*}
\pm \tr \, T_c^* (Q_4^H - \widetilde{Q}_4^{H})T_c \Gamma \lesssim N^{-1+5\kappa/2+3\vep} \tr \, T_c^* \cN_H^2 T_c\Gamma ,
\end{align*}
%where
%\begin{align*}
%\widetilde{Q}_4^{H} = \frac{1}{2} \sum_{\substack{r\in \Lambda^* \\ p,q \in H_r}} \widehat{V}_N (r) 
%\1_S(p+q+r) a_{p+r}^*a_q^*a_{q+r}a_p.
%\end{align*}
%Moreover, we have
%\begin{align}\label{eq:est_Q4H_tilde}
%\widetilde{Q_4}^{H} \lesssim |S| \|\widehat{V}_N\|_{\infty} \mathcal N_H^2 \lesssim N^{5\kappa/2 - 1 + 3\varepsilon} \mathcal N_H^2.
%\end{align}
\end{lemma}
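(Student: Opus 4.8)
The plan is to estimate the difference $Q_4^H - \widetilde Q_4^H$ directly as a quadratic form in the state $T_c\Gamma T_c^*$, exploiting the monogamy structure captured by Lemma \ref{lm:mono}. Write
\[
Q_4^H - \widetilde Q_4^H = \tfrac12 \sum_{\substack{r\in\Lambda^* \\ p,q\in H_r}} \widehat V_N(r)\,\big(1-\mathds 1_S(p+q+r)\big)\, a_{p+r}^* a_q^* a_{q+r} a_p .
\]
On the range of $\Xi$ (which contains $T_c\Gamma T_c^*$, since $\Gamma$ is supported on $\mathds 1_{\{\cN_{S^c}=0\}}$ and $T_c$ commutes with $\Xi$), every momentum $p+r,q,q+r,p$ lies in $H\cup S$, and parity of $\cN_H$ forces an even number of them to be in $H$. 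First I would argue that the term with $p+q+r\notin S$ forces all four momenta $p+r,q,q+r,p$ to lie in $H$: indeed $p,q\in H_r$ already puts $p,q$ and $p+r,q+r$ in $H$ unless one of the sums $p+r$ or $q+r$ is in $S$, but then $\mathds 1_S$ considerations and parity pin down the configuration. (More carefully: $H_r=\{t\in H: t+r\in H\}$, so $p,q,p+r,q+r\in H$ automatically, and the $\mathds 1_S$ factor is then simply absent — the real content is that $\widetilde Q_4^H$ only keeps the sub-sum where the ``total momentum'' $p+q+r$ lands in the shell $S$, and $Q_4^H-\widetilde Q_4^H$ keeps the complementary sub-sum, in which the four external legs $p+r,q,q+r,p$ are all high and pair up into two $S$-connections or live outside any connection.)

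Next I would use Lemma \ref{lm:mono}: conjugating by $\chi_p$ and $\chi_{q+r}$ (or $\widetilde\chi_p,\widetilde\chi_{q+r}$), the state $T_c\Gamma T_c^*$ satisfies $T_c\Gamma T_c^* = \widetilde\chi_p\, T_c\Gamma T_c^*\,\widetilde\chi_p$ for each $p\in H$, so in the expectation of $a_{p+r}^*a_q^*a_{q+r}a_p$ every high leg carries at most one particle. Inserting these projections and applying Cauchy--Schwarz in the standard form — splitting $\widehat V_N(r) = \widehat V_N(r)^{1/2}\cdot\widehat V_N(r)^{1/2}$, grouping $(a_{p+r}^* a_q^*)$ against $(a_{q+r} a_p)$, moving factors of $(\cN_H+1)^{\pm 1/2}$ around as in the proofs of Lemmas \ref{lem:TLc_Q4} and \ref{lem:Q3_Tp} — I expect to bound the form by
\[
\Big|\tr\, (Q_4^H-\widetilde Q_4^H)\, T_c\Gamma T_c^*\Big|
\;\lesssim\; \big\|\widehat V_N\big\|_\infty \cdot (\text{number of shell momenta available per leg})\cdot \tr\,\cN_H^2\, T_c\Gamma T_c^* ,
\]
with $\|\widehat V_N\|_\infty \lesssim N^{-1+\kappa}$, the shell-counting giving a factor $|S|\lesssim N^{3\kappa/2+3\vep}$ that enters because for each fixed high leg $p$ the constraint $p+q+r\notin S$ together with the monogamy projector ties the partner momentum to a set of size $|S|$, and an extra $\|\widehat V_N\|_2 \lesssim N^{(1-\kappa)/2}$ or $\|\widehat V_N\|_1\lesssim 1$ when summing the free momentum $r$. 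Collecting the powers one reaches $N^{-1+\kappa}\cdot N^{1-\kappa}\cdot$\,(shell factor)$^{1/2}$ type bookkeeping, which after the choice $-2+3\kappa+4\vep<0$ yields the claimed $N^{-1+5\kappa/2+3\vep}$.

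Finally I would rewrite the right-hand side back in terms of $\Gamma$ using $\tr\,\cN_H^2\, T_c\Gamma T_c^* = \tr\, T_c^*\cN_H^2 T_c\,\Gamma$, which is exactly the form stated in the lemma, so no further appeal to Lemma \ref{lem:Tc_N} is needed at this stage (that bound will be used later, when this lemma is combined with $\tr\,\cN_S\Gamma\lesssim N^{3\kappa/2}$ from Lemma \ref{lm:G-prop}, to see the error is genuinely subleading). \textbf{The main obstacle} will be the careful bookkeeping of which momenta are forced into $H$ versus $S$ under the combined constraints ``$p,q\in H_r$'', ``$p+q+r\notin S$'', parity of $\cN_H$ on $\Xi$, and the monogamy projectors $\widetilde\chi_p$ from Lemma \ref{lm:mono} — getting the counting of the effective number of free summation momenta exactly right is what determines whether the final exponent is $5\kappa/2+3\vep$ or something worse, and it requires the same kind of case analysis (how many legs in $H$, using $\widehat V_N$ versus $|S|$ to perform each sum) that appears in the proof of Lemma \ref{lem:Q3_Tp}.
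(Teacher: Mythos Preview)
Your plan has the right ingredients (Lemma~\ref{lm:mono} plus Cauchy--Schwarz) but misses the structural dichotomy that makes the argument work, and your exponent bookkeeping is off.

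The paper's proof proceeds by a sharp case split. Fix $p,q\in H_r$ with $p+q+r\notin S$. Inserting $\chi_p$ on the right, the particle at momentum $p$ has a unique partner at $-p+x$ for some $x\in S$. Since $p+q+r\notin S$, this partner is \emph{not} $q+r$; it survives the action of $a_{q+r}a_p$ and therefore, after $a_{p+r}^*a_q^*$ acts and we apply $\chi_{-p+x}$ on the left, it must be reconnected to some $p-x+y$ with $y\in S$. The only candidates created by the operator are $p+r$ and $q$; this forces $r\in S+S$ or $q-p\in S+S$. If instead the partner is an already-present particle, a second application of monogamy rules it out. Hence for $r\notin S+S$ and $q-p\notin S+S$ the expectation of each summand \emph{vanishes exactly}. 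Only the terms with $r\in S+S$ or $q-p\in S+S$ survive, and those are bounded by a plain Cauchy--Schwarz as an operator inequality:
\[
\pm\sum_{\substack{r\in\Lambda^*\\p,q\in H_r}}\widehat V_N(r)\,\mathds 1_{S^c}(p+q+r)\big(\mathds 1_{S+S}(r)+\mathds 1_{(S+S)^c}(r)\mathds 1_{S+S}(q-p)\big)a_{p+r}^*a_q^*a_{q+r}a_p\;\lesssim\; \|\widehat V_N\|_\infty\,|S+S|\,\cN_H^2.
\]
With $\|\widehat V_N\|_\infty\lesssim N^{-1+\kappa}$ and $|S+S|\lesssim N^{3\kappa/2+3\vep}$ this gives exactly $N^{-1+5\kappa/2+3\vep}$.

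Your plan, by contrast, tries to bound \emph{all} terms with $p+q+r\notin S$ through Cauchy--Schwarz and monogamy-induced counting, without first killing the bulk of them. That won't close: without the constraint $r\in S+S$ or $q-p\in S+S$, the three summation indices $p,q,r$ (with $p,q\in H_r$) are essentially free inside $H$, and no $|S|$ factor appears. Your suggested counting ``$N^{-1+\kappa}\cdot N^{1-\kappa}\cdot(\text{shell})^{1/2}$'' does not arise from any identifiable Cauchy--Schwarz splitting; the $N^{1-\kappa}$ has no source here, and taking a square root of $|S|$ would give the wrong exponent. The essential missing step is the two-pass monogamy argument that converts $p+q+r\notin S$ into either exact vanishing or a hard constraint restricting one variable to a set of size $|S+S|$.
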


\begin{proof} 
Let us first explain the idea of the proof, heuristically. Consider the expectation of $a_{p+r}^*a_q^*a_{q+r}a_p$ in the state $T_c \Gamma T_c^*$. The operator $a_p$ annihilates a particle with momentum $p$ which, by \eqref{eq:cTGTc}, has to be connected to exactly one other particle with momentum $-p+x, x\in S$. After the  annihilation and application of the remaining $a_{p+r}^*a_q^*a_{q+r}$ this particle again needs to be connected. So either $q+r = -p + x$, which is the main term of $\widetilde{Q}_4^H$, or $-p+x \in -(p+r)+S \, \cup \, -q+S$.
This leads to the condition $r \in S+S$ or $q-p \in S+S$. Either constraint is enough to show that these contributions are negligible. 

We will now make this argument rigorous. First, let us assume that $r \notin S+S$ and $q-p \notin S +S$. Then from Lemma \ref{lm:mono}, we find  
\begin{align} 
\tr \Big(a_{p+r}^*a_q^* &a_{q+r}a_p T_c \Gamma T_c^* \Big) \nn \\ &= \tr \Big(a_{p+r}^*a_q^*a_{q+r} a_p \mathds{1}_{\{\cN_p = 1\}} \sum_{x \in S}\mathds{1}_{\{\cN_{-p+x} = 1\}} T_c \Gamma T_c^*  \Big) \nn
\\
& = \sum_{x \in S} \tr \Big( \mathds{1}_{\{\cN_{-p+x} = 1\}} a_{p+r}^*a_q^*a_{q+r}a_p   \mathds{1}_{\{\cN_{p} = 1\}}  T_c \Gamma T_c^* \Big) \mathds{1}(q,p+r \neq -p+x) \nn
\end{align} 
because, on the one hand, $-p+x \not = q+r$ (from the assumption $p+q+r \not \in S$) and, on the other hand, the contributions from $-p+x = q$ and $-p+x = p+r$ vanish, since there cannot be two $H$-particles in the same momentum state on the range of $T_c \Gamma T_c^*$, i.e.\ $\mathds{1}_{\{\cN_k \geq 2\}} T_c \Gamma T_c^* = 0$ for all $k \in H$, which follows from (\ref{eq:cTGTc}). Applying again (\ref{eq:cTGTc}), we obtain
\begin{align} 
\tr \Big( &a_{p+r}^*a_q^* a_{q+r}a_p T_c \Gamma T_c^* \Big) \nn \\ &= \sum_{x,y \in S} \tr \Big( \mathds{1}_{\{\cN_{-p+x} = 1\}} \mathds{1}_{\{\cN_{p-x+y} = 1\}}  a_{p+r}^*a_q^*a_{q+r}a_p   \mathds{1}_{\{\cN_{p} = 1\}}    T_c \Gamma T_c^*   \Big) \mathds{1}(q,p+r \neq -p+x) \nn
\\
& = \sum_{x,y \in S} \tr \Big( a_{p+r}^*a_q^*a_{q+r}a_p  \mathds{1}_{\{\cN_{-p+x} = 1\}} \mathds{1}_{\{\cN_{p-x+y} = 1\}}   \mathds{1}_{\{\cN_{p} = 1\}}   T_c \Gamma T_c^*  \Big) \nn
\\
&\qquad \qquad \hspace{3cm} \times \mathds{1}(q,p+r \neq -p+x)\mathds{1}(q+r,p \neq p-x+y) \nn
\end{align}
since $p-x+y \not = p+r , q$ from the assumption $r, q-p \not \in S+S$ and the contribution from $p-x+y = q+r,p$ can be excluded as above, using that $\mathds{1}_{\{\cN_k \geq 2\}} T_c \Gamma T_c^* = 0$ for all $k \in H$. Since the particle with momentum $-p+x$ is $S$-connected both with the particle with momentum $p$ and the particle with momentum $p-x+y$, we conclude that 
\begin{equation}\label{eq:Q4H_tilde_shell} \tr \Big(a_{p+r}^*a_q^*a_{q+r}a_p T_c \Gamma T_c^* \Big) = 0 \end{equation}

To handle the case $r \in S+S$ or $q-p \in S +S$, we observe that, by the Cauchy--Schwarz inequality, we find 
\begin{align*}
\pm \sum_{\substack{r \in \Lambda^* \\ p,q \in H_r}} \hat{V}_N(r) a_{p+r}^*a_q^*a_{q+r}a_p \mathds{1}_{S^c}(p+q+r) &\left( \mathds{1}_{S+S}(r) + \mathds{1}_{(S+S)^c}(r)\mathds{1}_{S+S}(q-p)\right) \\ &\hspace{3cm} \lesssim N^{-1+5\kappa/2+3\vep} \cN_H^2,
\end{align*}
where we used that $\|\widehat V\|_\infty \lesssim N^{-1+\kappa}$ and $|S| \lesssim N^{3\kappa/2 + 3\varepsilon}$. 
\end{proof}

Next, we conjugate $\wt{Q}_4^H$ with $T_k$, for a $k \in S$. 
\begin{lemma}\label{lem:Q4_H}
On the range of $\Lambda_k$, we have, in the sense of quadratic forms,
\begin{align*}
\pm &\bigg( T_k^* \widetilde{Q}_4^H T_k - \widetilde{Q}_4^H -  \sum_{r,p \in H_k} \widehat{V}_N(r-p) N\varphi_r (\varphi_p + \varphi_{p+k}) \Theta_{k,r}^{(2)}  \Theta_{k,p}^{(2)} a_k^*a_k \bigg) 
\\
&\hspace{3cm}  \lesssim N^{-5+15\kappa/2+9\vep} (\cN_k+1) \sum_{p,q \in H} \1_S(p+q) |q|^2 a_p^*a_q^*a_qa_p  + N^{-2+4\kappa+\eps} \cN_k^2.
\end{align*}
\end{lemma}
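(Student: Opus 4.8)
\emph{Set-up.} As in the proofs of \Cref{lem:Tc_kin} and \Cref{lem:Q3_Tp}, the plan is to conjugate $\widetilde{Q}_4^H$ with the explicit form $T_k=\widetilde T_k=\big(\cos X_k+\cB_k^\sharp\tfrac{\sin X_k}{X_k}\big)\Lambda_k$, which is valid on the range of $\mathds 1_{\{\cN_{S^c}=0\}}$ by \eqref{eq:TTL}. Using $\cB_k^\circ\Lambda_k=0$, $[X_k^2,\Lambda_k]=0$ and the self-adjointness of $\widetilde{Q}_4^H$, I would first expand, on the range of $\Lambda_k$,
\begin{align*}
T_k^* \widetilde{Q}_4^H T_k = \; &\Lambda_k \cos X_k \, \widetilde{Q}_4^H \, \cos X_k \, \Lambda_k + \Lambda_k \tfrac{\sin X_k}{X_k} \, \cB_k^\circ \, \widetilde{Q}_4^H \, \cB_k^\sharp \, \tfrac{\sin X_k}{X_k} \, \Lambda_k \\
&+ \Lambda_k \cos X_k \, \widetilde{Q}_4^H \, \cB_k^\sharp \, \tfrac{\sin X_k}{X_k} \, \Lambda_k + \hc
\end{align*}
and treat the three contributions separately.

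\emph{The main term.} The leading contribution comes from $\cB_k^\circ\widetilde{Q}_4^H\cB_k^\sharp$, which I would compute by moving the annihilation operators of $\widetilde{Q}_4^H$ through $\cB_k^\sharp$, exactly as in the proof of \Cref{lem:Q3_Tp}. The term in which $\widetilde{Q}_4^H$ scatters the $k$-connection freshly created by $\cB_k^\sharp$ into another $k$-connection, which is then annihilated by $\cB_k^\circ$ on the left, reconstructs the stated main term $\sum_{r,p\in H_k}\widehat V_N(r-p)\,N\varphi_r(\varphi_p+\varphi_{p+k})\,\Theta_{k,r}^{(2)}\Theta_{k,p}^{(2)}\,a_k^*a_k$; here one uses momentum conservation in $\widetilde{Q}_4^H$ (so that both the annihilated pair and the created pair sum to $k\in S$, which is compatible with the cutoff $\mathds 1_S$) together with the identities $a_{-r}a_{r+k}\cB_k^\sharp=(\varphi_r+\varphi_{r+k})a_k\Theta_{k,r}$ and $\Theta_{k,r}=\Theta_{k,-r-k}$ from \Cref{lem:com_BB}. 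The discrepancy between $1$ and $\sin^2X_k/X_k^2$, controlled by $X_k^2\lesssim N^{-2+3\kappa+\vep}\cN_k$ via \eqref{eq:Xp_bound}, multiplied by the size $\lesssim N^\kappa$ of the coefficient of the main term (using $\|\varphi\|_1\lesssim1$ and $\|\widehat V_N\|_\infty\lesssim N^{-1+\kappa}$), produces the $N^{-2+4\kappa+\eps}\cN_k^2$ error. The remaining contributions to $\cB_k^\circ\widetilde{Q}_4^H\cB_k^\sharp$ are those in which $\widetilde{Q}_4^H$ acts on independent $H$-particles or scatters a $k$-connection momentum into a shell neighbourhood $\pm t+S$ of another connection momentum; as in \Cref{lem:Q3_Tp}, via the analogue of \eqref{eq:comm_a_Theta}, these are localised by $\Theta^{(2)}$ and are proportional to $N|S|\,\|\varphi_H\|_\infty^2$, which is of order $N^{-5+15\kappa/2}$ up to factors $N^{c\vep}$. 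After pairing the remaining $\widetilde{Q}_4^H$-factor by a weighted Cauchy--Schwarz inequality — borrowing the decay $\varphi_r\lesssim N^{-1+\kappa}|r|^{-2}$ and paying the complementary weight $|q|^2$ on one of the $H$-momenta — one obtains the error $N^{-5+15\kappa/2+9\vep}(\cN_k+1)\sum_{p,q\in H}\mathds 1_S(p+q)\,|q|^2\,a_p^*a_q^*a_qa_p$.

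\emph{The remaining terms and the main difficulty.} The cross terms $\Lambda_k\cos X_k\widetilde{Q}_4^H\cB_k^\sharp\tfrac{\sin X_k}{X_k}\Lambda_k+\hc$ always create a $k$-connection, hence do not contribute to the $a_k^*a_k$ part; I would bound them directly by Cauchy--Schwarz, distinguishing whether $\widetilde{Q}_4^H$ interacts with the $\cB_k^\sharp$-created pair or not, and using $\|\varphi_H\|_2\lesssim N^{(-1+3\kappa+\vep)/2}$, $\|\widehat V_N\|_\infty\lesssim N^{-1+\kappa}$ and $|S|\lesssim N^{3\kappa/2+3\vep}$, absorbing the result into the two error terms of the statement. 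For the $\cos$--$\cos$ contribution I would write $\cos X_k\widetilde{Q}_4^H\cos X_k-\widetilde{Q}_4^H=(\cos X_k-1)\widetilde{Q}_4^H\cos X_k+\widetilde{Q}_4^H(\cos X_k-1)+[\widetilde{Q}_4^H,\cos X_k]\cos X_k$; the first two pieces use $\pm(\cos X_k-1)\lesssim X_k^2\lesssim N^{-2+3\kappa+\vep}\cN_k$ with a connection-preserving bound on $\widetilde{Q}_4^H$, while $[\widetilde{Q}_4^H,\cos X_k]$ is nonzero only because the $\Theta$-cutoffs inside $X_k$ see the $H$-particles moved by $\widetilde{Q}_4^H$, and is again localised to $\mathcal O(|S|)$ momenta. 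The main obstacle is precisely obtaining the errors in the sharp form stated: a crude operator estimate such as $\pm\widetilde{Q}_4^H\lesssim N^{2-2\kappa}\cN_H^2$ would be far too large, so one must retain throughout the structure that both the annihilated and the created pair of $\widetilde{Q}_4^H$ sum to an element of $S$, exploit the sparsity $|S|\lesssim N^{3\kappa/2+3\vep}$ together with the decay of $\varphi$, and carefully book-keep the interaction of $\widetilde{Q}_4^H$ with the two cutoffs $\Theta^{(2)}$ attached to the $k$-connection — this is the step analogous to, but somewhat more delicate than, the treatment of $\delta_k^{(-r,r+p)}$ in \Cref{lem:Q3_Tp}.
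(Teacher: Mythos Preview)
Your overall strategy follows the paper's, but there is a genuine gap in the size accounting that would prevent you from reaching the stated error.

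\medskip

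\textbf{The main gap.} In your ``main term'' paragraph you claim that the remaining (normal-ordered) contributions to $\cB_k^\circ\widetilde{Q}_4^H\cB_k^\sharp$ are ``proportional to $N|S|\,\|\varphi_H\|_\infty^2\sim N^{-5+15\kappa/2}$''. This is not correct: the normal-ordered piece is
\[
a_{p+r}^*a_q^*\,(X_k^{(p+r,q,q+r,p)})^2\,a_{q+r}a_p,
\]
and $(X_k^{(p+r,q,q+r,p)})^2$ differs from $X_k^2\sim N\|\varphi_H\|_2^2\,\cN_k\sim N^{-2+3\kappa+\vep}\cN_k$ only by the much smaller $\delta_k^{(p+r,q,q+r,p)}\sim N|S|\,\|\varphi_H\|_\infty^2\,\cN_k$. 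So this term is \emph{not} individually small; it is of the same order as $(\cos X_k-1)\widetilde{Q}_4^H$. If you estimate the $\cos$--$\cos$ part and this $\sin$--$\sin$ part separately, each contributes an error of size $N^{-2+3\kappa+\vep}\cN_k$ times a $\widetilde{Q}_4^H$-type operator. Tracking this through \Cref{lem:Tc_changed} (specifically the step \eqref{eq:NH^2_monogamy}) and then through \eqref{eq:deltaG} yields a contribution to $\delta(\Gamma)$ of order $N^{-2+7\kappa+O(\vep)}$, which for $\kappa>1/2$ exceeds $N^{5\kappa/2}$ and destroys the final bound.

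The paper avoids this by \emph{not} separating the two pieces. Using the commutation relation \eqref{eq:comm_Xk^-r}, namely $\Lambda_k X_k^2 a_{p+r}^*a_q^*=\Lambda_k a_{p+r}^*a_q^*(X_k^{(p+r,q)})^2$, one pulls the $a^*a^*$ to the left and the $aa$ to the right, so that both pieces sit between the same $a_{p+r}^*a_q^*(\,\cdot\,)a_{q+r}a_p$ and combine to
\[
\cos X_k^{(p+r,q)}\cos X_k^{(p,q+r)}
+\frac{\sin X_k^{(p+r,q)}}{X_k^{(p+r,q)}}\,(X_k^{(p+r,q,q+r,p)})^2\,\frac{\sin X_k^{(p,q+r)}}{X_k^{(p,q+r)}}
=1+R_k^{(p+r,q,q+r,p)},
\]
with $\pm R_k^{(p+r,q,q+r,p)}\lesssim \delta_k^{(p+r,q,q+r,p)}+\delta_k^{(p+r,q)}+\delta_k^{(p,q+r)}\lesssim N^{-5+15\kappa/2+7\vep}\cN_k$. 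It is this $\cos^2+\sin^2=1$ cancellation---surviving only up to the discrepancies $\delta_k$ between the various $X_k$-operators---that produces the sharp exponent $-5+15\kappa/2$.

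\medskip

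\textbf{Minor points.} Your observation that the cross terms ``always create a $k$-connection'' is correct and is exactly why they \emph{vanish identically} on $\Ran\Lambda_k$ (see \eqref{eq:Q4H_tilde}); no Cauchy--Schwarz is needed. Also, the decomposition you write for $\cos X_k\widetilde{Q}_4^H\cos X_k-\widetilde{Q}_4^H$ is algebraically incorrect (the commutator term is superfluous), though this is moot given the previous issue.
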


\begin{proof} 
Let us fix $k \in S$, $r \in \Lambda^*$, $p,q \in H_r$ with $p+q+r \in S$. Using (\ref{eq:TTL}), we obtain, on $\Ran \Lambda_k$,
\begin{align}
T_k^* a_{p+r}^* a_q^* a_{q+r} a_p T_k
	&= \cos(X_k)  a_{p+r}^* a_q^* a_{q+r} a_p  \cos(X_k) + \frac{\sin(X_k)}{X_k} \mathcal B_k^\circ  a_{p+r}^* a_q^* a_{q+r} a_p  \mathcal B_k^\sharp\frac{\sin(X_k)}{X_k} \nn \\
	&\quad + \Big[ \cos(X_k)  a_{p+r}^* a_q^* a_{q+r} a_p \mathcal B_k^\sharp\frac{\sin(X_k)}{X_k} + \hc \Big]  \label{eq:Q4H_0}
\end{align}
Let us first show that the cross terms on the second line vanish. We have 
\begin{align} 
 a_{p+r}^*a_q^*a_{q+r}a_p  \cB_k^\sharp 
	&= \sum_{t \in H_k} a_{p+r}^*a_q^* a_{q+r}a_p a^*_{-t} a^*_{k+t}  a_k  N^{1/2} \varphi_t \Theta_{k,t} \nn 	\\
	&= \sum_{t\in H_k} \left( a^*_{-t} a^*_{k+t}  a_{p+r}^*a_q^* a_{q+r}a_p + a_{p+r}^*a_q^* \big[ a_{q+r}a_p,  a^*_{-t} a^*_{k+t} \big]  \right)a_k  N^{1/2}\varphi_t \Theta_{k,t} \nn \end{align} 
Using 
\begin{equation}\label{eq:commut} \begin{split} 
\big[ a_{q+r}a_p,  a^*_{-t} a^*_{k+t} \big]  = \; & a^*_{k+t} a_{q+r} \delta_{p, -t} + a^*_{-t} a_{q+r} \delta_{p, k+t} +\delta_{q+r,-t} a^*_{k+t} a_p + \delta_{q+r,k+t} a^*_{-t} a_p \\ &+ \delta_{q+r, k+t} \delta_{p, -t} +\delta_{p,k+t} \delta_{q+r, -t}  \end{split} \end{equation} 
and noticing that, since the cutoff $\Theta_{k,t}$ imposes that the $S$-neighbourhoods of $t$ and of $-(t+k)$ are empty and since $p+q+r \in S$, the contribution of the quadratic terms on the right-hand side of (\ref{eq:commut}) vanishes, we obtain    	
\begin{align} \label{eq:Q4H_tilde}	
a_{p+r}^* &a_q^* a_{q+r}a_p  \cB_k^\sharp  \nn \\
	&= \sum_{t\in H_k} \left( a^*_{-t} a^*_{k+t}  a_{p+r}^*a_q^* a_{q+r}a_p + a_{p+r}^*a_q^* (\delta_{q+r,-t}\delta_{p,k+t} + \delta_{q+r,k+t}\delta_{p,-t}) \right)a_k  N^{1/2}\varphi_t \Theta_{k,t} .
\end{align}
Observing now that all terms in the sum on the right-hand side create a $k$-connection, we conclude that $\Lambda_k a_{p+r}^*a_q^*a_{q+r}a_p  \cB_k^\sharp = 0$. Thus, the cross terms vanish on $\Ran \Lambda_k$. 

Let us now consider the second term on the right-hand side of (\ref{eq:Q4H_0}). From \eqref{eq:Q4H_tilde} and using, similarly to (\ref{eq:BcircBdag}), the fact that $\Theta_{k,t}$ excludes $k$-connections and particles in the $S$-neighborhoods of $t$ and $-(t+k)$, we find, on the range of $\Lambda_k$, 
\begin{align*}
&\cB_k^\circ a_{p+r}^*a_q^*a_{q+r}a_p  \cB_k^\sharp 
\\
	&= \sum_{t\in H_k}  \cB_k^\circ \left( a^*_{-t} a^*_{k+t}  a_{p+r}^*a_q^* a_{q+r}a_p + a_{p+r}^*a_q^* (\delta_{q+r,-t} \delta_{p,k+t} + \delta_{q+r,k+t} \delta_{p,-t}) \right)a_k  N^{1/2}\varphi_t \Theta_{k,t} \\
	&= \sum_{t\in H_k}  \Theta_{k,t} N \varphi_t(\vphi_t + \varphi_{k+t}) a^*_k a_k a_{p+r}^*a_q^*  a_{q+r}a_p  \Theta_{k,t} \\
	&\quad + \sum_{t,t'\in H_k} N \varphi_t\vphi_{t'} (\delta_{p+r,-t'} \delta_{q,k+t'} + \delta_{p+r,k+t'} \delta_{q,-t'})  (\delta_{q+r,k+t} \delta_{p,-t} + \delta_{q+r,-t} \delta_{p,k+t}) \Theta_{k,t'} a_k^* a_k   \Theta_{k,t} \\
	&= a_{p+r}^*a_q^* (X_k^{(p+r,q,q+r,p)})^2  a_{q+r}a_p   + N (\vphi_p+\vphi_{p-k})(\vphi_q+\vphi_{q-k})\Theta_{k,-p} \Theta_{k,-q} \delta_{p+q+r,k} \cN_k
\end{align*}
with
\begin{align}
(X_k^{(p+r,q,q+r,p)})^2 
	&= \sum_{\substack{t \in H_k}} N \varphi_t(\vphi_t + \varphi_{k+t}) \Theta_{k,t} a^*_k a_k \big(1-\1(p+r,q,q+r,p \simS -(t+k),t)\big) \nn \\
	&=: X_k^2 - \delta_{k}^{(p+r,q,q+r,p)}. \label{eq:def_delta_4}
\end{align}
Here, similarly to the analysis of $Q_3^H$, we used \eqref{eq:comm_a_Theta} to commute $\Theta_{k,t}$ (on the range of $\Lambda_k$, we have $\Theta_{k,t} = \Theta^{(2)}_{k,t}$) and we defined 
$$
\1(p+r,q,q+r,p \simS -(t+k),t) :=
\begin{cases}
1 & \text{if } \{p+r,q,q+r,p\} \cap \big((-(t+k)+S)\cup (t+S)\big) \neq \emptyset \\
0 & \text{otherwise},
\end{cases}
$$
which checks if at least one of the momenta $p+r,q,q,+r$ or $p$ is in the shell-vicinity of $-(t+k)$ or $t$.
Notice that $\delta_{k}^{(p+r,q,q+r,p)}$ is a non-negative operator. Moreover, \Cref{lem:vphi_prop} implies that 
\begin{align} \label{eq:est_delta_4}
\delta_{k}^{(p+r,q,q+r,p)} \leq 8 |S| N \|\varphi_H\|_{\infty}^2 \mathcal N_k 
\lesssim N^{-5+15\kappa/2 + 7\varepsilon} \mathcal N_k.
\end{align}
%as we may write $\delta_k^{(p+r,q,q+r,p)} = \widetilde{\cB}_k^\circ \widetilde{\cB}_k^*$ with
%$\widetilde{\cB}_k = \sum_{t\in H_k}N^{1/2} \vphi_t \1(p+r,q,q+r,p \simS -(t+k),t) a_{-t}^*a_{t+k}^*a_k - \hc = \widetilde{\cB}_k^* - \widetilde{\cB}_k^\circ$ due to \Cref{lem:com_BB}.
Recalling the definition \eqref{eq:def_Xk^-r} and using \eqref{eq:comm_Xk^-r} we have, from \eqref{eq:Q4H_0}, on $\Ran \Lambda_k$,
\begin{align}
& T_k^* a_{p+r}^*a_q^*a_{q+r}a_p T_k \nn \\
%&=  \cos(X_k) a_{p+r}^*a_q^*a_{q+r}a_p \cos(X_k) + \frac{\sin(X_k)}{X_k} \cB_k^\circ a_{p+r}%^*a_q^*a_{q+r}a_p \cB_k^\sharp \frac{\sin(X_k)}{X_k}  \nn \\
&=  a_{p+r}^*a_q^* \Big( \cos(X_k^{(p+r,q)})\cos(X_k^{(p,q+r)}) +  \frac{\sin(X_k^{(p+r,q)} )}{X_k^{(p+r,q)}} (X_k^{(p+r,q,q+r,p)})^2  \frac{\sin(X_k^{(p,q+r)})}{X_k^{(p,q+r)}} \Big) a_{q+r}a_p \nn \\
&\quad +  N (\vphi_p+\vphi_{p-k})(\vphi_q+\vphi_{q-k})\Theta_{k,-p} \Theta_{k,-q} \delta_{p+q+r,k} \cN_k \frac{\sin^2 X_k}{X_k^2}. \label{eq:Q4H_1}
\end{align}
%where $X_k^{(a,b)}$ and $\delta_{a,b}$ are defined as in (\ref{eq:def_Xk^-r}) and (\ref{eq:delta_bound})
Using trigonometric identities and the fact that the operators $X_k^{(p+r,q)},X_k^{(q,p+r)}, X_k^{(p+r,q,q+r,p)}$ commute with each other, we can write 
\begin{align*}
&\cos(X_k^{(p+r,q)})\cos(X_k^{(p,q+r)}) +  \frac{\sin(X_k^{(p+r,q)} )}{X_k^{(p+r,q)}} (X_k^{(p+r,q,q+r,p)})^2  \frac{\sin(X_k^{(p,q+r)})}{X_k^{(p,q+r)}}\\
&= \cos(X_k^{(p+r,q)} - X_k^{(p,q+r)}) + \frac{\sin(X_k^{(p+r,q)} )}{X_k^{(p+r,q)}}\frac{\sin(X_k^{(p,q+r)})}{X_k^{(p,q+r)}} \left((X_k^{(p+r,q,q+r,p)})^2 -X_k^{(p+r,q)} X_k^{(p,q+r)}\right) \\
&=1 + R_k^{(p+r,q,q+r,p)}.
\end{align*}
We can estimate 
$$
\pm R_k^{(p+r,q,q+r,p)} \leq \delta_{k}^{(p+r,q,q+r,p)} + 3\delta_{k}^{(p+r,q)} + 2\delta_{k}^{(p,q+r)} \lesssim N^{-5+15\kappa/2+7\vep } \cN_k
$$
as follows from the inequality 
\[ \Big| \cos(x-y) + \frac{\sin x}{x}\frac{\sin y}{y} (z^2-xy)-1 \Big| \leq (v^2-z^2) + 3(v^2-x^2) + 2 (v^2-y^2) \] for all $0\leq x,y,z \leq v$ (with $v$ playing the role of $X_k$). The contribution of the remainder terms $R_k^{(p+r,q,q+r,p)}$ to $T_k \wt{Q}_4^H T_k$ can therefore be estimated, using the Cauchy-Schwarz inequality, by 
\begin{align*}
\pm \sum_{\substack{r\in \Lambda^* \\
p,q \in H_r}} &\hat{V}_N(r)   \1_S(p+q+r) a_{p+r}^*a_q^* R_k^{(p+r,q,q+r,p)} a_{q+r}a_p \\
&\leq \sum_{\substack{r\in \Lambda^* \\
p,q \in H_r}} \eta \frac{|\hat{V}_N(r)|}{p^2} \1_S(p+q+r) q^2 a_{p+r}^*a_q^* R_k^{(p+r,q,q+r,p)}(\cN_k+1)^{-1} {R_k^{(p+r,q,q+r,p)}} a_q a_{p+r} 
\\
&\qquad \qquad \qquad + \eta^{-1} \sum_{\substack{r\in \Lambda^* \\
p,q \in H_r}}  \frac{|\hat{V}_N(r)|}{q^2}  p^2 \1_S(p+q+r) a_p^* a_{q+r}^* (\cN_k+1) a_{q+r}a_p 
\end{align*} 
for every $\eta > 0$. Choosing $\eta^{-1} = N^{-5 + 15\kappa/2 + 7\eps}$, we obtain 
\begin{align} \label{eq:QH4-fin}  
\pm \sum_{\substack{r\in \Lambda^* \\
p,q \in H_r}} \hat{V}_N(r) & \1_S(p+q+r) a_{p+r}^* a_q^* R_k^{(p+r,q,q+r,p)} a_{q+r}a_p \nn \\
&\lesssim N^{-5+15\kappa/2+7\vep}  \sum_{\substack{r\in \Lambda^*   \\
q \in H}}  \frac{|\hat{V}_N(r)|}{|p-r|^2} \1_S(p+q) |p|^2a_p^*a_q^*a_pa_q (\cN_k+1) \nn \\
&\lesssim N^{-5+15\kappa/2+9\vep} \sum_{\substack{r\in \Lambda^* \\
q \in H}}  \1_S(p+q) |p|^2a_p^*a_q^*a_pa_q (\cN_k+1).
\end{align}
In the last inequality, we applied Hölder's inequality and the bound $ \|V_N \|_{L^{3/2 + \varepsilon}} \lesssim N^{2\varepsilon}$. 

Let us now consider the last term on the right-hand side of \eqref{eq:Q4H_1}. To evaluate its contribution to $T_k^* \wt{Q}_4^H T_k$, we observe that 
\begin{equation} \label{eq:Q4H_1_second}
\begin{split} 
\frac{1}{2} \sum_{\substack{r\in \Lambda^* \\ p,q \in H_r}} &\hat{V}_N(r) N (\vphi_p+\vphi_{p-k})(\vphi_q+\vphi_{q-k})\Theta_{k,-p} \Theta_{k,-q} \delta_{p+q+r,k} \cN_k \\  &\hspace{3cm} = \sum_{r,p\in H_k} N\widehat{V}_N(r-p) \varphi_r (\varphi_p + \vphi_{p+k})  \Theta_{k,p}\Theta_{k,r}\mathcal N_k.
\end{split} \end{equation} 
By Young's inequality and estimating $\| \ph \|_1 \lesssim C$ with Lemma \ref{lem:vphi_prop}, we find  
\[ \Big| \sum_{r,p\in H_k} N \widehat{V}_N(r-p) \varphi_r (\varphi_p + \vphi_{p+k})  \Big| \lesssim N^\kappa. \]
With the bound
\[ \Big| \frac{\sin^2 x}{x^2} - 1 \Big| \leq C x^2 \]
valid for all $x > 0$ and with the estimate (\ref{eq:Xp_bound}) for $X_k$, we conclude that 
\[ \begin{split} \pm \Big[ \sum_{\substack{r \in \Lambda^* \\ p,q \in H_r}}  \widehat{V}_N (r) &\mathds{1}_S (p+q+r)
 N (\ph_p + \ph_{p-k}) (\ph_q + \ph_{q-k}) \Theta_{k, -p} \Theta_{k, -q} \delta_{p+q+r, k} \cN_k \frac{\sin^2 X_k}{X_k^2} \\ & \hspace{2cm} -  \sum_{r,p\in H_k} N\widehat{V}_N(r-p) \varphi_r (\varphi_p + \vphi_{p+k})  \Theta_{k,p}\Theta_{k,r}\mathcal N_k \Big] \lesssim N^{-2+4\kappa+\eps} \cN_k^2.  \end{split} \]
 Together with (\ref{eq:QH4-fin}), we obtain the statement of the lemma. 
\end{proof}

Next, we consider the term $Q_4^S$, where the four momenta are in the shell. We show that its contribution is negligible.  
\begin{lemma} \label{lem:Q4S}
Let $\Gamma$ be a density matrix on $\cF (\Lambda)$ satisfying $\Gamma = \mathds{1}_{\{\cN_{S^c} = 0\}} \, \Gamma \mathds{1}_{\{\cN_{S^c} = 0\}}$ as well as $[\Gamma, \mathbb{P}_k ] = 0$ for all $k \in S$ (with $\mathbb{P}_k$ the parity operator defined in (\ref{eq:parity})). Then we have 
\begin{align*}
\tr \, T_c^* Q_4^S T_c \Gamma \lesssim N^{-1+\kappa} \tr\, T_c^* {\cN_S}(\cN_S + N^{3\kappa/2 + 3\vep}) T_c \Gamma .
\end{align*}
\end{lemma}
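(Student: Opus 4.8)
The plan is to bound the expectation of $Q_4^S$ in the state $T_c\Gamma T_c^*$ by handling two sources of smallness. The operator $Q_4^S$ involves four momenta $p,q,p+r,q+r$ all in the shell $S$; since $\|\widehat V\|_\infty \lesssim N^{-1+\kappa}$ and $|S|\lesssim N^{3\kappa/2+3\vep}$, one already expects a bound of the form $N^{-1+\kappa}$ times a quadratic expression in $\cN_S$, provided one can turn the sum over $r$ (equivalently, over the four labels subject to momentum conservation) into at most two free momentum sums and one use of $|S|$. The natural route is: first decompose $Q_4^S = \tfrac12\sum \widehat V_N(r)\,a_{p+r}^* a_q^* a_{q+r} a_p$ with the constraint $p,q,p+r,q+r\in S$, then reorganize so that after a Cauchy--Schwarz step in the pairing $(p+r,q)$ vs.\ $(p,q+r)$ one is left with $\sum_{p,q\in S}\big(\,\sum_{r}|\widehat V_N(r)|\,\big)$ times $\|a_q a_{p+r}(\cdot)\|\,\|a_p a_{q+r}(\cdot)\|$; bounding $\sum_r |\widehat V_N(r)|\lesssim \|\widehat V_N\|_\infty |S|\lesssim N^{-1+\kappa}\cdot N^{3\kappa/2+3\vep}$ (the $r$-sum being effectively restricted by $p-q-r\in S-S$, hence of size $|S|$) and then estimating $\sum_{p,q\in S}\|a_qa_{p+r}\xi\|\,\|a_pa_{q+r}\xi\|\lesssim \langle\xi,\cN_S^2\xi\rangle$ gives, before conjugation, $\pm Q_4^S\lesssim N^{-1+\kappa}\,\cN_S(\cN_S+N^{3\kappa/2+3\vep})$ on the range of $\mathds 1_{\{\cN_{S^c}=0\}}$.

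The second ingredient is to transport this bound through $T_c$. Here one writes $\tr\,T_c^* Q_4^S T_c\Gamma = \tr\, Q_4^S\,(T_c\Gamma T_c^*)$, but $T_c\Gamma T_c^*$ need not be supported on $\{\cN_{S^c}=0\}$, so instead one keeps the conjugation on the observable side. The point is that $Q_4^S$, together with the number operators $\cN_S$, $\cN_H$ appearing in the bound of the previous paragraph, interacts well with $T_c$: one applies the operator inequality $\pm Q_4^S \lesssim N^{-1+\kappa}\,\cN_S(\cN_S+N^{3\kappa/2+3\vep})$ inside the conjugation, i.e.\ $\pm\,T_c^* Q_4^S T_c \lesssim N^{-1+\kappa}\,T_c^*\cN_S(\cN_S+N^{3\kappa/2+3\vep})T_c$ on $\Ran\,\mathds 1_{\{\cN_{S^c}=0\}}$ (using that $T_c$ preserves this subspace, by \eqref{eq:TcwtT}), and then takes the trace against $\Gamma$. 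This directly yields $\tr\,T_c^* Q_4^S T_c\Gamma \lesssim N^{-1+\kappa}\,\tr\,T_c^*\cN_S(\cN_S+N^{3\kappa/2+3\vep})T_c\Gamma$, which is exactly the claimed estimate; one does not even need \Cref{lem:Tc_N} at this stage, since the statement is phrased with $T_c^*(\cdots)T_c$ still present on the right-hand side (that will be converted to a bound in terms of $\cN_S$ on $\Gamma$ later, via \Cref{lem:Tc_N} and \Cref{lm:G-prop}).

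The parity hypothesis $[\Gamma,\mathbb P_k]=0$ is presumably used, as for $Q_3^S$ and $Q_3^M$ in \Cref{lm:Q3SM} and for $Q_4^M$ below, to discard a large part of $Q_4^S$ outright: by the relation \eqref{eq:Q3S_parity}-type computation $\cM_x a_{p+r}^* a_q^* a_{q+r} a_p = a_{p+r}^* a_q^* a_{q+r} a_p(\cM_x + \delta_{x,p+r}+\delta_{x,q}-\delta_{x,q+r}-\delta_{x,p})$, and since all four momenta lie in $S$, a short case analysis shows that for many configurations $\mathbb P_x\,a_{p+r}^* a_q^* a_{q+r} a_p = a_{p+r}^* a_q^* a_{q+r} a_p\,\mathbb Q_x$ for some $x\in S$, so that the corresponding terms vanish once one writes $T_c\Gamma T_c^* = \mathbb P_x T_c\Gamma T_c^*\mathbb P_x + \mathbb Q_x T_c\Gamma T_c^*\mathbb Q_x$ (using $[\mathbb P_x,T_c]=0$). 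The surviving configurations are those that are ``diagonal'' in the natural pairing, and for those one falls back on the crude Cauchy--Schwarz bound above. The main obstacle is the bookkeeping: carefully identifying which momentum configurations in $Q_4^S$ survive the parity argument and checking that the remaining ones genuinely admit the $|S|$-gain in the $r$-sum rather than a full $|S|^2$; this is routine but must be done with care to land on the stated power $N^{-1+\kappa}$ with the factor $(\cN_S+N^{3\kappa/2+3\vep})$ rather than a worse power.
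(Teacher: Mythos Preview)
Your first paragraph contains a genuine error that would sink the proof. The direct Cauchy--Schwarz you describe, after summing $r$ over a set of size $|S|$ and bounding $\|\widehat V_N\|_\infty\lesssim N^{-1+\kappa}$, yields
\[
\pm\,Q_4^S \;\lesssim\; N^{-1+\kappa}\,|S|\,\cN_S^2 \;\lesssim\; N^{-1+5\kappa/2+3\vep}\,\cN_S^2,
\]
\emph{not} $N^{-1+\kappa}\cN_S(\cN_S+N^{3\kappa/2+3\vep})$ as you wrote. These are not equivalent: on the trial state $\tr\,\cN_S^2\,\Gamma\sim N^{3\kappa}$, your actual bound contributes $N^{-1+11\kappa/2+3\vep}$ to $\delta(\Gamma)$, and for this to be $o(N^{5\kappa/2})$ one would need $\kappa<1/3-\vep$, directly contradicting the standing hypothesis $\kappa>1/2$. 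So the direct route cannot reach the stated estimate; parity is not optional here.

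Your third paragraph has the right idea, and it is exactly what the paper does. The key step you left as ``bookkeeping'' is in fact the whole proof: using $\cM_x a_{p+r}^* a_q^* a_{q+r} a_p = a_{p+r}^* a_q^* a_{q+r} a_p(\cM_x+\delta_{x,p+r}+\delta_{x,q}-\delta_{x,q+r}-\delta_{x,p})$ (and the analogous relation for $\cM_{-x}$), one checks that the expectation in $T_c\Gamma T_c^*$ vanishes unless the multiset $\{p+r,q\}$ matches $\{p,q+r\}$ up to signs, which forces one of the three constraints $r=0$, $p=q$, or $r=-(p+q)$. Each of these kills one free momentum sum outright, so the surviving contributions are genuinely of the form $\|\widehat V_N\|_\infty\sum_{p,q\in S}(\cdots)$ with \emph{no} residual $|S|$ factor from $r$, and a short estimate (the only nontrivial one being $\sum_{p,q}\widehat V_N(p+q)\,a_{-q}^* a_q^* a_{-p} a_p$, which picks up the $N^{3\kappa/2+3\vep}$ via a commutator) gives the claimed $N^{-1+\kappa}\cN_S(\cN_S+N^{3\kappa/2+3\vep})$. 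Your second paragraph on transporting through $T_c$ is fine and is indeed how the statement is used.
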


\begin{proof}
Let us once again use the parity argument:
Recall (\ref{eq:def_Mx}), where, for $k\in S$, we defined
\[ \mathcal M_k = \cN_k + \sum_{t \in H_k} \cN_{-t} \cN_{t+k}, \] 
counting the number of particles with momenta $k$ and the number of $k$-connections. 
From the assumption $\Gamma = \mathds{1}_{\{\cN_{S^c} = 0\}} \Gamma \mathds{1}_{\{\cN_{S^c} = 0\}}$ with $[\Gamma , {\mathbb{P}}_k ] = 0$, and from the observation that $[ T_c, \mathbb{P}_k] = 0$, we have $T_c \Gamma T_c^* = \mathbb{P}_k  T_c \Gamma T_c^* \mathbb{P}_k + \mathbb{Q}_k  T_c \Gamma T_c^*\mathbb{Q}_k$, for all $k\in S$. 
After computing the commutator of $\cM_k$ and $\cM_{-k}$ with $a_{p+r}^*a_q^*a_{q+r}a_p$ we find that the contribution of $a_{p+r}^*a_q^*a_{q+r}a_p$ vanishes unless one of the following conditions holds true:
\begin{itemize}
\item $p+r = \pm q$ and $q+r = \pm p$
\item $p+r = \pm q+r$ and $q = \pm p$
\item $p+r = \pm p$ and $q = \pm q+r$,
\end{itemize}
This implies that, necessarily, $-r = p+q$, $p=q$ or $r=0$ must hold true. Therefore, we find that
\begin{align*}
& \tr \, T_c^* Q_4^ST_c\Gamma T_c^*  \\  
	& \lesssim \; \tr \, \Big( \sum_{p,q\in S} \widehat{V}_N(p+q) a^*_{-q} a^*_q a_{-p} a_p + \sum_{p,r\in S} \widehat{V}_N(r) a^*_{p+r} a^*_{p} a_{p+r} a_p &+ \sum_{p,q\in S} \widehat{V}_N (0) a^*_{p} a^*_{q} a_{q} a_p  \Big) T_c\Gamma T_c^*.
\end{align*}
Thus 
\begin{align*}
\tr \, T_c^* Q_4^S T_c  \Gamma  \lesssim \|\widehat{V}_N\|_{L^\infty} \tr  \, \cN_S (\cN_S + N^{3\kappa/2+3\varepsilon}) T_c  \Gamma T_c^* .
\end{align*}
Here we used that
\begin{align*}
\sum_{p,q\in S} \widehat{V}_N (p+q) a^*_{-q} a^*_{q} a_{-p} a_p &= \sum_{p\in S} \widehat{V}_N (0) \mathcal N_p(\mathcal N_p-1)  + \sum_{\substack{p, q\in S \\ p \neq q}} \widehat{V}_N (p+q) (a^*_{-q} a_{-p})(a^*_{q} a_p) \\
	&\lesssim \| \hat{V}_N \|_{L^\infty} \mathcal N_S(\mathcal N_S + N^{3\kappa/2+3\varepsilon}).
\end{align*}
\end{proof}

As for the term $Q_4^M$, we decompose it as in (\ref{eq:split-Q4}), writing $Q_4^M = Q_4^{M,1} + Q_4^{M,2} + Q_4^{M,3}$. First, we handle $Q_4^{M,1}$.  
\begin{lemma} \label{lem:Q4M1}
Let $\Gamma$ be a density matrix on $\cF (\Lambda)$ satisfying $\Gamma = \mathds{1}_{\{\cN_{S^c} = 0\}} \, \Gamma \mathds{1}_{\{\cN_{S^c} = 0\}}$ as well as $[\Gamma, \mathbb{P}_k ] = 0$ for all $k \in S$. Then we have
\begin{align*}
\tr  \, T_c^* Q_4^{M,1} T_c \Gamma = 0.
\end{align*}
\end{lemma}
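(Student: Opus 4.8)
The plan is to argue exactly as in the treatment of $Q_3^{M,1}$ in the proof of \Cref{lm:Q3SM}. By linearity it suffices to show that $\tr\, a_{p+r}^*a_q^*a_{q+r}a_p\, T_c\Gamma T_c^* = 0$ for every admissible triple $(q,p,r)$, i.e.\ with $q\in H$, $p\in S$, $p+r\in H$, $q+r\in S$; the Hermitian conjugate is handled in the same way, exchanging the roles of creation and annihilation. The operator $a_{p+r}^*a_q^*a_{q+r}a_p$ creates the two high momenta $q,p+r\in H$ and annihilates the two shell momenta $q+r,p\in S$, so it is the analogue of a $Q_3^{M,1}$-term with one extra shell annihilation; note that $q+(p+r)=p+(q+r)\in S+S$, so the two created high momenta are ``opposite'' up to a momentum in $S+S$.

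First I would invoke the monogamy property \Cref{lm:mono} at the two high momenta. Since $T_c\Gamma T_c^*=\chi_{q}T_c\Gamma T_c^*\chi_{q}=\chi_{p+r}T_c\Gamma T_c^*\chi_{p+r}$ (the $\chi$'s are functions of number operators, hence commute) and $a_q^*=\mathds{1}_{\{\cN_q\geq 1\}}a_q^*$, $a_{p+r}^*=\mathds{1}_{\{\cN_{p+r}\geq 1\}}a_{p+r}^*$, the expectation becomes a sum over ``connection labels'' $y,z\in S$ of traces of observables carrying the monogamy indicators $\mathds{1}_{\{\cN_q=1\}}\mathds{1}_{\{\cN_{-q+y}=1\}}\prod_{y'\in S\setminus\{y\}}\mathds{1}_{\{\cN_{-q+y'}=0\}}$ and $\mathds{1}_{\{\cN_{p+r}=1\}}\mathds{1}_{\{\cN_{-(p+r)+z}=1\}}\prod_{z'\in S\setminus\{z\}}\mathds{1}_{\{\cN_{-(p+r)+z'}=0\}}$, exactly as in \eqref{eq:QM-mono}.

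Next I would run the parity argument. For $x\in S$, using $\cN_i a_j^*=a_j^*(\cN_i+\delta_{i,j})$ and the definition \eqref{eq:def_Mx} of $\cM_x$, I compute the commutators of $\cM_x$ and $\cM_{-x}$ with these observables: the creation $a_q^*$ together with its partner at $-q+y$ produces a $y$-connection and contributes $\delta_{x,y}$ to the change of $\cM_x$, likewise $a_{p+r}^*$ contributes $\delta_{x,z}$, while the shell annihilations $a_{q+r},a_p$ contribute $-\delta_{x,q+r}$ and $-\delta_{x,p}$ to the $\cN_x$-part of $\cM_x$ only (the connection part is unaffected since $q+r,p\in S$). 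Collecting $\cM_x+\cM_{-x}$ modulo $2$, the operator intertwines $\mathbb{P}_x$ and $\mathbb{Q}_x$ whenever $\delta_{x,y}+\delta_{x,-y}+\delta_{x,z}+\delta_{x,-z}+\delta_{x,q+r}+\delta_{x,-(q+r)}+\delta_{x,p}+\delta_{x,-p}$ is odd, and then the trace against $T_c\Gamma T_c^*$ vanishes by \eqref{eq:TGT-sym}. A case distinction over the coincidences among $\{y,z,q+r,p\}$ and their negatives yields such an $x$ in all but a handful of degenerate configurations (for instance $q+r=p$ with $z=\pm y$); for these I would return to the full monogamy structure encoded in the indicators (the products $\prod_{y'\in S\setminus\{y\}}$, $\prod_{z'\in S\setminus\{z\}}$ together with $\cN_q=\cN_{p+r}=1$), which — given that $q$ and $p+r$ are opposite up to a momentum in $S+S$ — either is mutually inconsistent or forces an additional coincidence that again makes the parity of some $\cM_x+\cM_{-x}$ odd.

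The main obstacle is precisely this last case analysis: because $Q_4^{M,1}$ involves four ladder operators — two high creations and two shell annihilations — rather than the three of $Q_3^{M,1}$, the multiset $\{y,-y,z,-z,q+r,-(q+r),p,-p\}$ can a priori have all even multiplicities, so unlike the clean $Q_3^{M,1}$ argument one cannot conclude purely by counting and must exploit the monogamy structure of \Cref{lm:mono} to eliminate the balanced configurations.
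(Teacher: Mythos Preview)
Your instinct to mirror the $Q_3^{M,1}$ argument is sound, but there are two concrete problems.

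First, your parity accounting is wrong in the most important case. You assert that each created high momentum contributes a separate $\delta_{x,y}$, $\delta_{x,z}$; this is only valid when the partner $-q+y$ (resp.\ $-(p+r)+z$) was already present \emph{before} the action of $A=a_{p+r}^*a_q^*a_{q+r}a_p$. But if $p+q+r\in S$, inserting both $\chi_q$ and $\chi_{p+r}$ on the left forces $y=z=p+q+r$ (since $\chi_q$ with $y\neq p+q+r$ would give $\cN_{p+r}=0$, contradicting $\chi_{p+r}$), and then $-q+y=p+r$ and $-(p+r)+z=q$ are \emph{the other created particle}, not a pre-existing one. In this situation $a_{p+r}^*a_q^*$ creates \emph{one} $(p+q+r)$-connection, not two, so the correct parity shift at $x$ is $-\delta_{x,p}-\delta_{x,q+r}+\delta_{x,p+q+r}$ --- three terms, exactly as for $Q_3^S$ --- whereas your formula gives the even contribution $2\delta_{x,p+q+r}$ and would wrongly flag this as a ``degenerate'' case.

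Second, when $p+q+r\notin S$ your four-term formula \emph{is} correct, and the balanced configurations you worry about (e.g.\ $y=\pm p$, $z=\pm(q+r)$) genuinely do not succumb to parity alone. Your appeal to ``the full monogamy structure'' is too vague: what is actually needed is the iterated monogamy argument already carried out in the proof of \Cref{lem:Q4_HS}. The paper exploits this directly: from the momentum constraints of $Q_4^{M,1}$ one has $|r|\geq|q|-|q+r|$ and $|q-p|\geq|p+r|-|q+r|$, so $r\notin S+S$ and $q-p\notin S+S$; the argument leading to \eqref{eq:Q4H_tilde_shell} (applied to the created $H$-momenta, or equivalently to the Hermitian conjugate) then kills the case $p+q+r\notin S$ outright. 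What remains is $p+q+r\in S$, where --- as above --- the two created high momenta are each other's unique $S$-partner and the parity argument reduces to the elementary three-label case.
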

\begin{proof}
Recall that 
\begin{align*}
Q_4^{M,1} = \frac{1}{2}\sum_{\substack{q \in H,p\in S,r\in \Lambda^*: \\ p+r \in H, q+r \in S}} \widehat{V}_N(r) a_{p+r}^*a_q^*a_{q+r}a_p + \hc.
\end{align*}
From $q \in H, q+r \in S$, we find $r \not \in S+S$. Similarly, from $p+r \in H, q+r \in S$, we obtain $q-p \not \in S+S$. With the analysis at the beginning of the proof of \Cref{lem:Q4_HS}, in particular \eqref{eq:Q4H_tilde_shell}, we conclude that, unless $p+q+r \in S$, the expectation of $a_{p+r}^* a_q^* a_{q+r} a_p$ in the state $T_c \Gamma T_c^*$ vanishes. For $p+q+r \in S$, the operator $a_{p+r}^* a_q^* a_{q+r} a_p$ creates a $(p+q+r)$-connection, annihilates a particle with momentum $p \in S$ and a particle with momentum $q+r \in S$. Consequently, also this term vanishes in expectation due to the same parity argument that was used in the previous lemmata.
%Consequently, this term cannot preserve the parity of $\mathcal M_k+\mathcal M_{-k}$ for $k \in \{p+q+r,p,q+r\}$. Since, from the assumptions on $\Gamma$, $T_c \Gamma T_c^* = \mathbb{P}_k T_c \Gamma T_c^* \mathbb{P}_k + \mathbb{Q}_k T_c \Gamma T_c^* \mathbb{Q}_k$ for all $k \in S$, the expectation of this term vanishes, too. 
\end{proof}

Finally, we estimate the expectation of the terms $Q_4^{M,2}$ and $Q_4^{M,3}$. 
\begin{lemma} \label{lem:Q4M2}
Let $\Gamma$ be a density matrix on $\cF (\Lambda)$ satisfying $\Gamma = \mathds{1}_{\{\cN_{S^c} = 0\}} \, \Gamma \mathds{1}_{\{\cN_{S^c} = 0\}}$ as well as $[\Gamma, \mathbb{P}_k ] = 0$ for all $k \in S$. Then we have
\begin{align*}
\pm \tr \, T_c^* (Q_4^{M,2} + Q_4^{M,3}) T_c \Gamma  \lesssim N^{-1+\kappa} \tr  \, T_c^* \mathcal{N}_H \mathcal (\cN_S + N^{3\kappa/2+3\vep}) T_c \Gamma.
\end{align*}
\end{lemma}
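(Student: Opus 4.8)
The plan is to reduce the estimate to quadratic forms on the range of $T_c\Gamma T_c^*$ and to exploit the monogamy structure of \cref{lm:mono}. Diagonalising $\Gamma=\sum_j\lambda_j|\psi_j\rangle\langle\psi_j|$ it suffices to bound $|\langle\xi_j,(Q_4^{M,2}+Q_4^{M,3})\xi_j\rangle|$ uniformly over the normalised vectors $\xi_j:=T_c\psi_j$ and to sum against the weights $\lambda_j$, using that $\sum_j\lambda_j\langle\xi_j,\cN_H(\cN_S+N^{3\kappa/2+3\vep})\xi_j\rangle=\tr\,T_c^*\cN_H(\cN_S+N^{3\kappa/2+3\vep})T_c\,\Gamma$. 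Each $\psi_j$ with $\lambda_j>0$ lies in the range of $\mathds 1_{\{\cN_{S^c}=0\}}$, so by \cref{lm:mono} one has $\chi_p\xi_j=\widetilde\chi_p\xi_j=\xi_j$ for every $p\in H$ and $\mathds 1_{\{\cN_k\ge2\}}\xi_j=0$ for $k\in H$; these facts let us flank every high-momentum creation or annihilation operator by the projections $\chi$ of \eqref{eq:chip}.

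First I would record the geometry of the summands. In $Q_4^{M,3}$ (momenta $p,p+r\in H$, $q,q+r\in S$) one has $r=(q+r)-q\in S-S=S+S$ with $|r|\lesssim N^{\kappa/2+\vep}$, whereas $p\in H$ forces $|q-p|\gtrsim N^{1-\kappa-\vep}$, so $q-p\notin S\pm S$ and $p+q+r\notin S$ for $\kappa<2/3$. Dually, in $Q_4^{M,2}$ (momenta $p,q\in S$, $p+r,q+r\in H$) one has $|r|\gtrsim N^{1-\kappa-\vep}$, so $r\notin S\pm S$ and $p+q+r\notin S$, whereas $q-p\in S+S$. Thus in each case exactly one of the two off-diagonal shell differences is excluded; this is precisely what makes the term nonzero, in contrast with $Q_4^{M,1}$ in \cref{lem:Q4M1}, where both are excluded and the term vanishes identically.

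The core step is a monogamy/parity case analysis modelled on the proofs of \cref{lm:mono,lem:Q4_HS,lm:Q3SM}. For $Q_4^{M,3}$, inserting $\chi_p$ next to $a_p$ and $\chi_{p+r}$ next to $a_{p+r}^*$ and expanding via \eqref{eq:chip}, the high-momentum particle annihilated at $p$ carries a unique connection partner $-p+x\in H$, $x\in S$, while the high-momentum particle created at $p+r$ must acquire one; since $p+q+r\notin S$, the freshly created shell particle at $q$ cannot play that role, so the only surviving configurations are those in which $p+r$ adopts the orphaned partner $-p+x$ — a rigid matching that eliminates one of the free shell summations, all other pairings being discarded by $q-p\notin S\pm S$ exactly as in the derivation of \eqref{eq:Q4H_tilde_shell}. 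The term $Q_4^{M,2}$ is treated symmetrically, now applying monogamy to the high-momentum particle annihilated at $q+r$. In both cases one is left, after Cauchy--Schwarz — bounding the surviving matrix elements by $\|a_qa_{p+r}\xi_j\|\,\|a_pa_{q+r}\xi_j\|$ with a weight $\eta>0$, using $\|\widehat V_N\|_\infty\lesssim N^{-1+\kappa}$ and $|S|\lesssim N^{3\kappa/2+3\vep}$ from \eqref{eq:mom-sets}, and absorbing the connection constraint through $\sum_{x\in S}\mathds 1_{\{\cN_{-p+x}=1\}}\le\cN_H$ — with a contribution controlled by $N^{-1+\kappa}\langle\xi_j,\cN_H\cN_S\xi_j\rangle$ and one controlled by $N^{-1+\kappa}|S|\,\langle\xi_j,\cN_H\xi_j\rangle$. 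Summing over $j$ yields the claim.

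I expect the power count to be the main obstacle. A direct Cauchy--Schwarz applied to $Q_4^{M,2}$ or $Q_4^{M,3}$ — each carrying two free shell summations on top of the bound $\|\widehat V_N\|_\infty\lesssim N^{-1+\kappa}$ — produces only the weaker estimate $N^{-1+\kappa}|S|\,\cN_H\cN_S$, with a spurious factor $|S|$ multiplying $\cN_H\cN_S$; it is the rigidity of the monogamy structure of $T_c\Gamma T_c^*$ that removes this factor and splits the bound into the two advertised pieces. Organising the ensuing case distinction, and checking that no residual $|S|\,\cN_H\cN_S$ term survives, is the technical heart of the argument.
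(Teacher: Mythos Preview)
Your diagnosis that a naive Cauchy--Schwarz produces a spurious factor $|S|$ is correct, and monogamy via \cref{lm:mono} is indeed one of the two key tools. However, the mechanism you propose does not remove that factor. The ``rigid matching'' you extract from monogamy alone---that the created high-momentum particle must adopt the orphaned partner, forcing e.g.\ $y=x+r$ in your $Q_4^{M,3}$ notation---is a genuine consequence of monogamy, but it ties the two connection labels $x,y$ to one another, not to the independent \emph{shell} momenta $q,q+r$ (respectively $p,q$ for $Q_4^{M,2}$). Those shell momenta are spectators for the high-momentum monogamy, so no shell summation is eliminated and the extra $|S|$ survives your Cauchy--Schwarz. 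Your appeal to $q-p\notin S\pm S$ for $Q_4^{M,3}$ is also misplaced: with $q\in S$ and $p\in H$ this is automatic and plays no role analogous to its use in \eqref{eq:Q4H_tilde_shell}. Finally, bounding $\sum_{x\in S}\mathds 1_{\{\cN_{-p+x}=1\}}\le\cN_H$ goes the wrong way---monogamy already gives this sum equal to $1$ on the relevant components.

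The paper's proof uses the \emph{parity} hypothesis $[\Gamma,\mathbb P_k]=0$ at the decisive step. For $Q_4^{M,2}$, after inserting $\widetilde\chi_{p+r}$ and $\widetilde\chi_{q+r}$ one obtains connection labels $x,y\in S$; the full operator with these indicators destroys an $x$-connection and a shell particle at $p$, and creates a $y$-connection and a shell particle at $q$. Preservation of all parities $\mathbb P_k$ then forces $(x,y)$ into one of three configurations tied to the shell momenta $(p,q)$, for instance $x=\pm p$, $y=\pm q$. It is this linkage---connection label pinned to shell momentum---that, after Cauchy--Schwarz, lets one collapse $\sum_{q\in S}\mathds 1_{\{\cN_{\pm q-(p+r)}=1\}}$ to $1$ via monogamy of $p+r$, killing exactly the unwanted $|S|$. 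You already cite \cref{lm:Q3SM}, so you have the right ingredient in hand; the fix is to deploy parity, not a second layer of monogamy, to tie $x,y$ to the shell variables.
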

\begin{proof}
Let us focus on the contribution of $Q_4^{M,2}$, the one of $Q_4^{M,3}$ can be handled analogously. Recall that 
\begin{align*}
Q_4^{M,2} = \frac{1}{2} \sum_{\substack{p,q \in S, r\in \Lambda^*: \\ p+r,q+r \in H}} \widehat{V}_N(r) a_{p+r}^*a_q^* a_{q+r} a_p + \hc
\end{align*}
Using \Cref{lm:mono}, we write 
\begin{align*}
& \tr\, a_{p+r}^*a_q^*a_{q+r} a_p T_c\Gamma T_c^* 
\\
	&= \tr\,  \sum_{x,y\in S} \1_{\{\mathcal N_{p+r}=1\}}\1_{\mathcal \{N_{y-(p+r)}=1\}} a_{p+r}^*a_q^*a_{q+r} a_p \1_{\{\mathcal N_{q+r}=1\}} \1_{\{\mathcal N_{x-(q+r)}=1\}} T_c \Gamma T_c^* 
\end{align*} 
%Recalling the definition (\ref{eq:def_Mx}), we observe that, for $k \in S$ (and for $p,q \in S$ with $p+r, q+r \in H$)  
%\[ \begin{split} a_{p+r}^* a_q^* &a_{q+r} a_p \big(\cM_k + \cM_{-k}\big)  \\ 
%= \; &\Big( \cM_k + \cM_{-k} + \delta_{pk} + \delta_{p,-k} - \delta_{q,k} -\delta_{q, -k} \\ &+ \mathds{1}_{H_k} (q+r-k) \cN_{k - (q+r)} + \mathds{1}_{H_k} (q+r+k) \cN_{-k -(q+r)} \\ &- 
%\mathds{1}_{H_k} (p+r-k) \cN_{k -(p+r)} - 
%\mathds{1}_{H_k} (p+r+k) \cN_{-k -(q+r)} \Big)  a_{p+r}^* a_q^* a_{q+r} a_p \end{split} \]
%In order to preserve the parity of $\cM_k + \cM_{-k}$, for all $k \in S$, 
The term annihilates a $x$-connection and a particle with momentum $p$, and creates a $y$-connection and a particle with momentum $q$. Thus, by our standard parity argument, we find that one of the following conditions must be fulfilled:
\begin{enumerate}[label=(\arabic*)]
\item $x = \pm p$ and $y = \pm q$, 
\item $x = \pm q$ and $y = \pm p$,
\item $x = \pm y$ and $p = \pm q$.
\end{enumerate}
%In each case we will show that 
%\begin{align*}
%& \pm \sum_{\substack{p,q \in S, r\in \Lambda^*: \\ p+r,q+r \in H}} \hat{V}_N(r) \tr\Big(T_c^* \sum_{x,y\in %S} \1_{\{\mathcal N_{p+r}=1\}}\1_{\mathcal \{N_{y-(p+r)}=1\}} a_{p+r}^*a_q^*a_{q+r} a_p \1_{\{\mathcal %N_{q+r}=1\}} \1_{\{\mathcal N_{x-(q+r)}=1\}} T_c \Gamma \Big)
%\\
%& \qquad \lesssim \|\hat{V}_N\|_\infty \tr \Big(T_c^* (\cN_S + |S|) \cN_H T_c \Gamma\Big).
%\end{align*}

\emph{Case $(1)$:} With Cauchy--Schwarz we obtain
\begin{align} \label{eq:Q4M2_Case1}
& \pm \1_{\{\mathcal N_{p+r}=1\}}\1_{\mathcal \{N_{y-(p+r)}=1\}} a_{p+r}^*a_q^*a_{q+r} a_p \1_{\{\mathcal N_{q+r}=1\}} \1_{\{\mathcal N_{x-(q+r)}=1\}} \nn
\\
& = \pm \1_{\{\mathcal N_{p+r}=1\}}\1_{\mathcal \{N_{y-(p+r)}=1\}} a_{p+r}^* (a_pa_q^* - \delta_{p,q}) a_{q+r} \1_{\{\mathcal N_{q+r}=1\}} \1_{\{\mathcal N_{x-(q+r)}=1\}} \nn
\\
&\leq  a_{p+r}^* (a_p^*a_p + 1) a_{p+r} \1_{\{\mathcal N_{p+r}=1\}}\1_{\mathcal \{N_{y-(p+r)}=1\}} + a_{q+r}^* (a_q^*a_q + 1) a_{q+r} \1_{\{\mathcal N_{q+r}=1\}} \1_{\{\mathcal N_{x-(q+r)}=1\}} \nn
\\
& \qquad\qquad\qquad\qquad - \delta_{p,q} \, a_{p+r}^* a_{p+r} \1_{\{\mathcal N_{p+r}=1\}} \1_{\mathcal \{N_{y-(p+r)}=1\}} \1_{\{\mathcal N_{x-(p+r)}=1\}}. 
\end{align}
We sum over $x$ and $y$ and write $\pm p, \pm q$ for either value that $x$ respectively $y$ could take. The contribution of the first two terms on the right-hand side of the previous equation can be bounded by 
\begin{align*}
\sum_{\substack{p,q \in S, r\in \Lambda^*: \\ p+r,q+r \in H}} \hat{V}_N&(r) \tr \, a_{p+r}^* (a_p^*a_p + 1) a_{p+r} \1_{\{\mathcal N_{p+r}=1\}}\1_{\mathcal \{N_{\pm q -(p+r)}=1\}} T_c \Gamma T_c^* 
\\
&\leq \| \hat{V}_N\|_\infty  \tr \,   \sum_{p\in S} (a_p^*a_p + 1) \sum_{r\in H} a_{r}^*a_{r} \1_{\{\mathcal N_{r}=1\}} \sum_{q\in S} \1_{\mathcal \{N_{\pm q -r}=1\}} T_c \Gamma T_c^* 
\\
&= \| \hat{V}_N\|_\infty \tr \Big(T_c^* (\cN_S + |S|) \cN_H T_c \Gamma\Big),
\end{align*}
where we first shifted $r \to r-p$ and then we used \Cref{lm:mono} to replace the $q-$sum by $1$.
For the third term on the right-hand side of \eqref{eq:Q4M2_Case1} we readily find
\begin{align*}
\sum_{\substack{p,q \in S, r\in \Lambda^*: \\ p+r,q+r \in H}} \hat{V}_N(r) \delta_{p,q} \, a_{p+r}^* a_{p+r} \1_{\{\mathcal N_{p+r}=1\}} \1_{\mathcal \{N_{\pm q -(p+r)}=1\}} \1_{\{\mathcal N_{\pm p -(p+r)}=1\}}
\leq \| \hat{V}_N\|_\infty \cN_H |S|.
\end{align*}

\emph{Case $(2)$:} This can be handled like Case $(1)$ but without interchanging $a_q^*$ and $a_p$.

\emph{Case $(3)$:} With Cauchy-Schwarz, we obtain 
\begin{align*}
 \pm \1_{\{\mathcal N_{p+r}=1\}} &\1_{\mathcal \{N_{y-(p+r)}=1\}} a_{p+r}^*a_q^*a_{q+r} a_p \1_{\{\mathcal N_{q+r}=1\}} \1_{\{\mathcal N_{x-(q+r)}=1\}}
\\
&\leq  a_{p+r}^* a_q^*a_q a_{p+r} \1_{\{\mathcal N_{p+r}=1\}}\1_{\mathcal \{N_{y-(p+r)}=1\}} + a_{q+r}^*  a_p^*a_p a_{q+r} \1_{\{\mathcal N_{q+r}=1\}} \1_{\{\mathcal N_{x-(q+r)}=1\}}.
\end{align*}
The contribution of the first term (the second term can be handled analogously) can be estimated, after summing over $p$ and $x$, by
\begin{align*}
\sum_{\substack{q \in S, r\in \Lambda^*: \\ p+r,q+r \in H}} &\sum_{y \in S} \hat{V}_N(r) \tr \, a_{p+r}^* a_q^*a_q a_{p+r} \1_{\{\mathcal N_{p+r}=1\}}\1_{\mathcal \{N_{y-(p+r)}=1\}} \delta_{p,\pm q} T_c \Gamma T_c^* 
\\
&\leq \|\hat{V}_N\|_\infty \tr \, \sum_{q \in S} a_q^*a_q \sum_{r\in H} a_{r}^*a_{r} \1_{\{\mathcal N_{r}=1\}} \sum_{y \in S}\1_{\mathcal \{N_{y-r}=1\}} T_c \Gamma T_c^*
\\
&\leq \|\hat{V}_N\|_\infty \tr \, \cN_S \cN_H T_c \Gamma T_c^* \, .
\end{align*}
\end{proof}

\subsection{Proof of Lemma \ref{lem:Tc_main}}

First of all, we combine the statements of \Cref{lem:Tc_kin}, \Cref{lem:Q3_Tp} and \Cref{lem:Q4_H} to estimate the expectation of kinetic energy, cubic and quartic terms  on the right hand side of  \eqref{eq:claimB1}.  
\begin{lemma} \label{lem:Tc_changed}
Let $\Gamma$ be a density matrix on $\cF (\Lambda)$ satisfying $\Gamma = \mathds{1}_{\{\cN_{S^c} = 0\}} \, \Gamma \mathds{1}_{\{\cN_{S^c} = 0\}}$ as well as $[\Gamma, \mathbb{P}_k ] = 0$ for all $k \in S$. Then we have
\begin{equation} \label{eq:Tc_changed}
\tr \, T_c^* \Big[ \sum_{p \in \Lambda^*} p^2 a_p^*a_p + Q^H_3 + \wt{Q}^H_4 \Big] T_c \Gamma \leq \; \tr \, \sum_{p\in S}p^2a_p^*a_p \Gamma + 2N^\kappa \big( 8\pi\mathfrak{a} - \hat{V}(0)\big) \tr \, \cN_S \Gamma + \delta_1(\Gamma)
\end{equation}
with
\begin{equation} \label{eq:Tc_changed_error}
\delta_1(\Gamma) \lesssim N^{-1+5\kappa/2+3\vep/2} \tr \sum_{p\in S}\cN_p^2 \Gamma
+  N^{-5+17\kappa/2 + 11\eps} \tr \sum_{p \in S} \cN_p^2 (\cN_S + N^{3\kappa/2+3\eps}) \Gamma + N^{\kappa-\eps} \tr \, \cN_S \Gamma.
\end{equation}
Moreover, we have the bounds
\begin{align} \label{eq:apriori_Q3_Q4}
\tr \, T_c^* \sum_{p \in \Lambda^*} p^2 a_p^*a_p T_c \Gamma, \; \pm \tr \, T_c^* Q_3^H T_c \Gamma, \; \tr \, T_c^* Q^H_4 T_c \Gamma \lesssim N^{\kappa+2\vep}\tr \cN_S \Gamma + \delta_1(\Gamma). 
\end{align}

\end{lemma}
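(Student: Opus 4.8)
Set $\cO = \sum_{p \in \Lambda^*} p^2 a_p^*a_p + Q_3^H + \wt Q_4^H$. Since $\Gamma$ is supported on $\Ran \mathds{1}_{\{\cN_{S^c}=0\}}$, we may use the factorization $T_c = \prod_{k \in S} \wt T_k$ from \eqref{eq:TcwtT}; fix the ordering $S = \{k_1, \dots, k_m\}$ used to define $T_c$ and write $T_{c, S_{>j}} = \wt T_{k_{j+1}} \cdots \wt T_{k_m}$ (so $T_{c,S_{>0}} = T_c$ and $T_{c,S_{>m}} = \mathds{1}$). The starting point is a telescoping identity: peeling off the transformations one at a time and noting that $T_{c,S_{>j}}\,\mathds{1}_{\{\cN_{S^c}=0\}}$ maps into $\Ran \Lambda_{k_j}$ --- indeed the $\wt T_{k_\ell}$, $\ell > j$, only create particles in $H \cup S$, keep $\cN_H$ even, and by the cutoffs $\Theta^{(2)}_{k_\ell, r}$ never create a $k_j$-connection --- so that Lemmas \ref{lem:Tc_kin}, \ref{lem:Q3_Tp} and \ref{lem:Q4_H} apply with $k = k_j$ (recall $\wt T_{k_j} = T_{k_j}$ on $\Ran \Lambda_{k_j}$, cf.\ \eqref{eq:TTL}), one obtains, as an identity of forms on $\Ran \mathds{1}_{\{\cN_{S^c}=0\}}$,
\[
T_c^* \cO\, T_c = \cO + \sum_{j=1}^m T_{c,S_{>j}}^* \big( \Delta_{k_j} + E_{k_j} \big) T_{c,S_{>j}},
\]
where $\Delta_{k_j} = \Delta_{k_j}^{\rm kin} + \Delta_{k_j}^{Q_3} + \Delta_{k_j}^{Q_4}$ is the sum of the three ``main'' corrections (each a bounded coefficient, involving the cutoffs $\Theta_{k_j,r}$, times $a_{k_j}^* a_{k_j}$) and $E_{k_j}$ the sum of the three error terms from those lemmas. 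Taking the trace against $\Gamma$, the leading term gives $\tr\, \cO \Gamma = \tr \sum_{p \in S} p^2 a_p^* a_p\, \Gamma$, since $Q_3^H$ and $\wt Q_4^H$ annihilate a high momentum and hence vanish in expectation on states with no high-momentum particle.

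For the main corrections one uses that $[\cB_{k'}, a_k^* a_k] = 0$ for $k' \in S \setminus \{k\}$ (none of the modes in $\cB_{k'}$ equal $\pm k$), hence $[T_{c,S_{>j}}, a_{k_j}^* a_{k_j}] = 0$, and that $a_{k_j}^* a_{k_j}$ commutes with the cutoffs in $\Delta_{k_j}$. Dropping the cutoffs, the combined coefficient of $a_{k_j}^* a_{k_j}$ is
\[
2 \sum_{r \in H_{k_j}} N\big[ r^2 \vphi_r (\vphi_r + \vphi_{r+k_j}) + \widehat V_N(r)(\vphi_r + \vphi_{r+k_j}) \big] + \sum_{r,p \in H_{k_j}} N \widehat V_N(r-p) \vphi_r (\vphi_p + \vphi_{p+k_j}),
\]
and inserting $r^2 \vphi_r = -\tfrac12 \widehat V_N(r) - \tfrac12 \sum_{q \neq 0} \widehat V_N(r-q) \vphi_q$ from the scattering equation \eqref{eq:vphi}, together with the definition \eqref{eq:scatlength} of $\mathfrak{a}_N$, collapses this to $2 N^\kappa (8\pi \mathfrak{a}_N - \widehat V(0))$, up to errors coming from (i) replacing the domains $H_{k_j}$ by $\Lambda^* \setminus \{0\}$ (missing a thin shell near $|r| \simeq N^{1-\kappa-\eps}$ and the low momenta, estimated via the bounds of Lemma \ref{lem:vphi_prop}), (ii) replacing $\vphi_{r+k_j}$ by $\vphi_r$ for $k_j \in S$, and (iii) $|\mathfrak{a}_N - \mathfrak{a}| \lesssim N^{-1+\kappa}$; all of these are $\lesssim N^{\kappa - \eps}\, a_{k_j}^* a_{k_j}$. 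Reinstating the cutoffs costs a further error: $1 - \Theta_{k_j, r}$ is supported on configurations with a $k_j$-connection or an occupied shell-neighbour of $r$ or $-(k_j + r)$, and on $\Ran T_{c, S_{>j}} \Gamma T_{c, S_{>j}}^*$ --- which, by the monogamy property of Lemma \ref{lm:mono} (valid for any partial product) carries at most one particle per high-momentum mode, connected to a unique shell mode --- this is controlled by Lemma \ref{lem:Tc_N} and absorbed in $\delta_1(\Gamma)$. Summing over $j$ and using $[T_{c,S_{>j}}, a_{k_j}^* a_{k_j}] = 0$ yields $2 N^\kappa (8\pi \mathfrak{a} - \widehat V(0)) \tr\, \cN_S \Gamma$ up to $N^{\kappa - \eps}\tr\,\cN_S\Gamma$.

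It remains to bound $\sum_j \tr\, T_{c,S_{>j}}^* E_{k_j} T_{c,S_{>j}} \Gamma$. The pieces of $E_{k_j}$ proportional to $\cN_{k_j}^2$ --- present in all three lemmas, with coefficients $\lesssim N^{-1+5\kappa/2+3\vep/2}$ for $\kappa < 2/3$ --- are handled by $[T_{c,S_{>j}}, \cN_{k_j}] = 0$, which turns them into $\tr\, \cN_{k_j}^2 \Gamma$; summing over $j$ reproduces the first term of $\delta_1$. The remaining pieces from Lemmas \ref{lem:Q3_Tp} and \ref{lem:Q4_H} have the form $(\cN_{k_j} + 1)$ times a high-momentum observable dominated by $\cN_H^2$, by $\cN_S$, or by $\sum_{p,q \in H} \mathds{1}_S(p+q) |q|^2 a_p^* a_q^* a_q a_p$; here one again commutes out $(\cN_{k_j}+1)$ and conjugates the high-momentum observable with $T_{c,S_{>j}}$ using \eqref{eq:lem:Tc_N:2}--\eqref{eq:lem:Tc_N:3}, the last observable additionally requiring the estimate $T_{c,S_{>j}}^* \sum_{q \in H} |q|^2 a_q^* a_q\, T_{c,S_{>j}} \lesssim N^\kappa \cN_S$, proven exactly as Lemma \ref{lem:Tc_N} using $\|p\vphi\|_2^2 \lesssim N^{-1+\kappa}$. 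After summing over $j$ and applying Cauchy--Schwarz inequalities such as $\cN_S^2 \leq |S| \sum_{p \in S} \cN_p^2$, all these contributions fit inside the second term of $\delta_1(\Gamma)$, which proves \eqref{eq:Tc_changed}--\eqref{eq:Tc_changed_error}. Finally, the a priori bounds \eqref{eq:apriori_Q3_Q4} follow from the same telescoping identities applied separately to $\sum_p p^2 a_p^* a_p$, to $Q_3^H$, and to $\wt Q_4^H$: one has $\tr\, Q_3^H \Gamma = \tr\, \wt Q_4^H \Gamma = 0$, $\tr \sum_{p \in S} p^2 a_p^* a_p\, \Gamma \leq N^{\kappa+2\vep} \tr\, \cN_S \Gamma$ since $|p| \leq N^{\kappa/2+\vep}$ on $S$, each main correction is bounded by $N^\kappa\, a_{k_j}^* a_{k_j}$ in absolute value, the errors are controlled as above, and the difference $Q_4^H - \wt Q_4^H$ is negligible by Lemma \ref{lem:Q4_HS}. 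The main difficulty is the bookkeeping of the second paragraph --- verifying that \emph{every} approximation error (cutoffs, $H_{k_j}$ versus all momenta, $\vphi_{r+k_j}$ versus $\vphi_r$, $\mathfrak{a}_N$ versus $\mathfrak{a}$) genuinely lands below the resolution encoded in $\delta_1$ --- together with the conjugation of the $|q|^2$-weighted quartic error in the third paragraph, which needs the weighted moment bound for $T_c$ not stated as part of Lemma \ref{lem:Tc_N}.
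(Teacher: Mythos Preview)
Your proposal is essentially correct and follows the same telescoping strategy as the paper: peel off one $T_{k}$ at a time, apply Lemmas \ref{lem:Tc_kin}, \ref{lem:Q3_Tp}, \ref{lem:Q4_H}, identify the combined main correction $Z^{(k)}$, and iterate. Two small points where the paper proceeds slightly differently: (i) the rough kinetic bound you invoke is $T_{c,S'}^*\sum_{q}q^2 a_q^*a_q\,T_{c,S'}\lesssim N^{\kappa+2\eps}\sum_{p\in S}\cN_p^2$ rather than $N^\kappa\cN_S$ --- the extra $\cN_p^2$ comes from the error in Lemma \ref{lem:Tc_kin} (see \eqref{eq:kin-rough}); the distinction does not change the outcome since this still lands in $\delta_1$; (ii) for the a~priori bounds \eqref{eq:apriori_Q3_Q4} the paper avoids a separate telescoping by the one-line Cauchy--Schwarz $\pm Q_3^H\le \tfrac12 Q_4^H+CN^\kappa\cN_S$, which together with the already-established combined bound and Lemma~\ref{lem:Q4_HS} immediately yields the individual estimates. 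The handling of the cutoffs in $Z^{(k)}$ is also organized a bit differently in the paper --- rather than ``drop and reinstate'' it keeps the cutoffs and computes $T_q^*\Theta_{k,r}^{(2)}\Theta_{k,t}^{(2)}T_q$ explicitly (eq.\ \eqref{eq:est_Yk_Bi}), which gives the error \eqref{eq:Tc_iteration} directly --- but your approach via monogamy on the partial-product state leads to the same bound.
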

\begin{proof} We start with the proof of (\ref{eq:Tc_changed}). As a first step, we derive a rough upper bound on the kinetic operator. Let $k\in S' \subset S$.
From Lemma \ref{lem:Tc_kin} and Lemma \ref{lem:vphi_prop} we obtain, on the range of $\Lambda_k$, 
\[ T_k^* \sum_{p \in \Lambda^*} p^2 a_p^* a_p T_k \lesssim  \sum_{p \in \Lambda^*} p^2 a_p^* a_p + N^\kappa \cN_k + N^{-1+5\kappa/2 +3\eps/2} \cN^2_k \lesssim \sum_{p \in \Lambda^*} p^2 a_p^* a_p + N^\kappa \cN^2_k.  \] 
Since $[\cN_k,T_q] = 0$ for $k\neq q$, conjugating with $T_{c, S' \backslash \{k \}} = \prod_{p \in S' \backslash \{ k \}} T_p$ yields 
\[ T_{c,S'}^* \sum_{p \in \Lambda^*} p^2 a_p^* a_p T_{c,S'} \lesssim T^*_{c, S' \backslash \{k \}} \sum_{p \in \Lambda^*} p^2 a_p^* a_p T_{c, S' \backslash \{k \}} + N^\kappa \cN^2_k.  \]
Iterating, we arrive, on the range of $\mathds{1}_{\{\cN_{S^c} = 0\}}$, at
\begin{equation}\label{eq:kin-rough} T_{c,S'}^* \sum_{p \in \Lambda^*} p^2 a_p^* a_p T_{c,S'} \lesssim \sum_{p \in S} p^2 a_p^* a_p  + N^\kappa \sum_{k \in S} \cN^2_k \lesssim N^{\kappa+2\eps} \sum_{k \in S} \cN^2_k.
\end{equation} 

Next, we show \eqref{eq:Tc_changed}. Let $k \in S$.
By Lemmas \ref{lem:Tc_kin}, \ref{lem:Q3_Tp} and \ref{lem:Q4_H} we have
\begin{align} \label{eq:Tk_conclusion}
& \pm \Lambda_k \Big[ T_k^* \big( \sum_{p \in \Lambda^*} p^2 a_p^*a_p + Q_3^H + \widetilde{Q}_4^H \big) T_k - \sum_{p\in \Lambda^*}p^2a_p^*a_p -  Q_3^H - \widetilde{Q}_4^H 
- Z^{(k)}\Big] \Lambda_k \nn
\\
& \qquad \lesssim  N^{-1+5\kappa/2+3\vep/2} \cN_k^2
+ N^{-5+15\kappa/2+7\vep} \Big(\sum_{\substack{p,r \in H \\ p+r \in S}} N^{2\vep} |p|^2 a_r^*a_p^*a_p a_r + N^\kappa \cN_S \Big) (\cN_k+1), 
\end{align}
where we denoted
\begin{align}\label{eq:Zk} 
Z^{(k)} = \sum_{r \in H_k} N \Big[2\hat{V}_N(r) + 2r^2 \vphi_r + \sum_{t \in H_k}  \hat{V}_N(r-t) \vphi_t \Theta^{(2)}_{k,t}  \Big]
(\vphi_r + \vphi_{r+k}) \cN_k \Theta^{(2)}_{k,r}.
\end{align}
Conjugating with $T_{c,S \backslash \{ k \}}$, we obtain, on the range of $\mathds{1}_{\{\cN_{S^c} = 0\}}$,  
\begin{align}\label{eq:iter-main} 
 \pm \Big[ T_c^*  \big( &\sum_{p \in \Lambda^*} p^2 a_p^*a_p + Q_3^H + \widetilde{Q}_4^H \big) T_c  \nn 
 \\
  &\hspace{1cm}  - T_{c,S\backslash \{ k \}}^* \big( \sum_{p\in \Lambda^*}p^2a_p^*a_p + Q_3^H + \widetilde{Q}_4^H \big) T_{c,S\backslash \{ k \}} - T_{c,S\backslash \{ k \}}^* Z^{(k)} T_{c,S\backslash \{ k \}} \Big] \nn
\\
\lesssim \; &N^{-1+5\kappa/2+3\vep/2} \cN_k^2 \nn
\\ &+ N^{-5+15\kappa/2+7\vep} T_{c,S\backslash \{ k \}}^* \Big[ \sum_{\substack{p,r \in H \\ p+r \in S}}   N^{2\vep} |p|^2 a_r^*a_p^*a_p a_r + N^\kappa \cN_S \Big]  (\cN_k+1) T_{c,S\backslash \{ k \}}. 
\end{align} 
Using Lemma \ref{lm:mono} (note that the $r$-sum runs over the $S$-neighborhood of $-p$) we find 
\begin{align} \label{eq:NH^2_monogamy}
\tr\Big( T_{c,S\backslash \{ k \}}^* \sum_{p,r \in H} &N^{2\varepsilon} |p|^2 \1_S(p+r) a_r^*a_p^*a_pa_r \mathcal N_k T_{c,S\backslash \{ k \}} \Gamma \Big) \nn \\
	&\leq  \tr T_{c,S\backslash \{ k \}}^* N^{2\varepsilon} \sum_{p \in H} |p|^2   a_p^*a_p \mathcal N_k T_{c,S\backslash \{ k \}} \Gamma \nn
	\\
	&\leq N^{\kappa+4\vep} \tr \sum_{p \in S} \cN_p^2 \, \cN_k \Gamma,
\end{align}
where we used \eqref{eq:kin-rough} in the last inequality. 

Let us return to (\ref{eq:iter-main}) and consider the term $T^*_{c,S\backslash \{ k \}} Z^{(k)} T_{c,S\backslash \{ k \}}$. We claim that 
\begin{align}\label{eq:Tc_iteration}
 \pm\Lambda_q \big(  T_q^*  Z^{(k)} T_q   -  Z^{(k)} \big)\Lambda_q 
 \lesssim N^{-5+17\kappa/2+7\vep} \cN_q \cN_k.
\end{align}
for all $q \in S \backslash \{ k \}$. Since $T_\ell^* \cN_q \cN_k T_\ell = \cN_q \cN_k$ for all $\ell \in S \backslash \{ k, q \}$, (\ref{eq:Tc_iteration}) immediately implies that  
\begin{align}\label{eq:Tc_iteration2}
\pm \tr \big(  T_{c,S \backslash \{ k \}}^*  Z^{(k)} T_{c,S \backslash \{k \}}   -  Z^{(k)} \big)\Gamma 
 \lesssim N^{-5+17\kappa/2+7\vep} \tr \mathcal N_S \cN_k \Gamma.
\end{align}

Inserting \eqref{eq:NH^2_monogamy} and \eqref{eq:Tc_iteration2} in \eqref{eq:iter-main} yields 
\[  \begin{split} 
\tr \, T_c^*  \big( \sum_{p \in \Lambda^*} p^2 &a_p^*a_p + Q_3^H + \widetilde{Q}_4^H \big) T_c \Gamma \\ \leq  \; &\tr \, T_{c,S\backslash \{ k \}}^* \big( \sum_{p\in \Lambda^*}p^2a_p^*a_p + Q_3^H + \widetilde{Q}_4^H \big) T_{c,S\backslash \{ k \}} \Gamma +  \tr \, Z^{(k)} \Gamma 
\\ &+ C N^{-1+5\kappa/2+3\vep/2} \tr \cN_k^2 \Gamma  
%&\lesssim N^{-2+4\kappa+\vep} \cN_k^2 + N^{-1+5\kappa/2+3\vep/2} \cN_k + \nn
+ C N^{-5+17\kappa/2+11\vep} \, \tr \sum_{p \in S} \cN_p^2 (\cN_k + 1) \Gamma  \end{split} \]
Iterating (and using the assumption $\Gamma = \mathds{1}_{\{\cN_{S^c}=0\}} \Gamma \mathds{1}_{\{\cN_{S^c} = 0\}}$), we conclude that 
\begin{equation}\label{eq:iter-end} \begin{split} 
\tr \, & T_c^*  \big( \sum_{p \in \Lambda^*} p^2 a_p^*a_p + Q_3^H + \widetilde{Q}_4^H \big) T_c \Gamma  \\ 
&\leq \;  \tr  \Big[ \sum_{p\in S} p^2 a_p^*a_p + \sum_{k \in S} Z^{(k)}  \Big] \Gamma + CN^{-1+5\kappa/2+3\vep/2} \tr \sum_{p\in S}\cN_p^2 \Gamma 
\\
& \qquad + C N^{-5+17\kappa/2+11\vep} \, \tr \sum_{p \in S} \cN_p^2 (\cN_S + N^{3\kappa/2+3\eps}) \Gamma.  
\end{split} \end{equation}

In the definition (\ref{eq:Zk}) of $Z^{(k)}$, we apply the scattering equation (\ref{eq:vphi}). With (\ref{eq:scatlength}), Lemma~\ref{lem:vphi_prop} and $|\hat{V}_N(r-k)-\hat{V}_N(r)| \lesssim |k| N^{-2+2\kappa} \leq N^{-2+5\kappa/2+\vep}$, we find
\begin{align*}
\sum_{r \in H_k} N \Big[2\hat{V}_N(r) + &2r^2 \vphi_r + \sum_{t \in H_k}  \hat{V}_N(r-t) \vphi_t \Big]
(\vphi_r + \vphi_{r+k}) \\ &= \sum_{r \in H_k} N \hat{V}_N(r)
(\vphi_r + \vphi_{r+k}) + \mathcal{O}(N^{\kappa - \vep}) = 2N^\kappa \big( 8\pi\mathfrak{a} - \hat{V}(0)\big) + \mathcal{O}(N^{\kappa - \vep}).
\end{align*}
With (\ref{eq:iter-end}), this implies \eqref{eq:Tc_changed}.

It remains to show \eqref{eq:Tc_iteration}. With \eqref{eq:comm_a_Theta}, we obtain
\begin{align*}
\Lambda_q T_q^* \Theta_{k,r}^{(2)} \Theta_{k,t}^{(2)} T_q \Lambda_q
	&= \Lambda_q \left( \cos(X_q) \Theta_{k,r}^{(2)}\Theta_{k,t}^{(2)}  \cos(X_q) + \frac{\sin(X_q)}{X_q} \mathcal B_q^\circ \Theta_{k,r}^{(2)}\Theta_{k,t}^{(2)}  \mathcal B_q^\sharp\frac{\sin(X_q)}{X_q} \right) \Lambda_q \\
	&=\Theta_{k,r}^{(2)}\Theta_{k,t}^{(2)} \Lambda_q \left( \cos(X_q)^2 + \left(\frac{\sin(X_q)}{X_q}\right)^2 (X_q^{(r,-(k+r),-(k+t),t)})^2\right) \Lambda_q \\
	&=\Theta_{k,r}^{(2)}\Theta_{k,t}^{(2)} \Lambda_q \left( 1 - \left(\frac{\sin(X_q)}{X_q}\right)^2 \delta_q^{(r,-(k+r),-(k+t),t)}\right) \Lambda_q,
\end{align*}
where $\delta_q^{(r,-(k+r),-(k+t),t)}$ and $X_q^{(r,-(k+r),-(k+t),t)}$ were defined in (\ref{eq:def_delta_4}). Using (\ref{eq:est_delta_4}) we find 
\begin{align} \label{eq:est_Yk_Bi}
\pm \Lambda_q \left(T_q^* \Theta_{k,r}^{(2)} \Theta_{k,t}^{(2)} T_q - \Theta_{k,r}^{(2)} \Theta_{k,t}^{(2)}\right) \Lambda_q \lesssim N^{-5 + 15\kappa/2 + 7\varepsilon} \mathcal N_q.
\end{align}
From this and
$$
\sum_{r \in H_k} N \Big[2\hat{V}_N(r) + 2r^2 \vphi_r + \sum_{t \in H_k}  \hat{V}_N(r-t) \vphi_t \Big] (\vphi_r + \vphi_{r+k})
\lesssim N^\kappa,
$$
which follows from \Cref{lem:vphi_prop}, we find \eqref{eq:Tc_iteration}.

Finally, let us show \eqref{eq:apriori_Q3_Q4}. The bound on the kinetic term $\sum p^2 a_p^*a_p$ was shown in \eqref{eq:kin-rough}.
By Cauchy--Schwarz we find
\begin{equation} \label{eq:Q3_bound_Q4}
\pm Q_3^H \leq \frac{1}{2} Q_4^H + C N^\kappa \cN_S,
\end{equation}
which implies
$$
Q_4^H = Q_4^H + 2Q_3^H - 2Q_3^H \leq 2(Q_4^H + Q_3^H) + CN^\kappa \cN_S.
$$
This, together with \eqref{eq:Tc_changed}, \Cref{lem:Q4_HS} and \Cref{lem:Tc_N}, implies the bound on the quartic term $Q_4^H$. The bound on the cubic term then follows again by \eqref{eq:Q3_bound_Q4}.
\end{proof}

We are now set to conclude the proof of \Cref{lem:Tc_main}. Recalling the assumptions on $\Gamma$ and the decomposition \eqref{eq:Q3-deco} and collecting the bounds from \Cref{lem:TLc_main}, \Cref{lem:Tc_nondiag}, \Cref{lm:Q3SM}, \Cref{lem:Q4_HS}, \Cref{lem:Q4S}, \Cref{lem:Q4M1},  \Cref{lem:Q4M2} and \Cref{lem:Tc_changed} , we have
\begin{equation}\label{eq:Tc-main1} \begin{split} 
\tr \, T_c^* e^{-\cB_1} &\weyl^* \cH_N \weyl e^{\cB_1} T_c \Gamma \\ \leq \; &4 \pi \mathfrak{a} N^{1+\kappa} - 8 \pi \mathfrak{a} N^\kappa (N-N_0) + \sum_{p \in S} \frac{(4\pi \mathfrak{a} N^\kappa)^2}{p^2}
\\ &+ \tr \Big[\sum_{p\in S} p^2 a_p^*a_p + 16 \pi \mathfrak{a}N^\kappa \cN_S + 4 \pi \mathfrak{a} N^\kappa \sum_{p \in S} \tr \, \big(a_p^* a_{-p}^* + \text{h.c.} \big)\Big] \Gamma + \delta (\Gamma) 
\end{split} \end{equation} 
where  
\begin{equation}\label{eq:delta1}
\begin{split} 
\delta ( &\Gamma) \lesssim \, \tr \, T_c^* \cE_1 T_c \Gamma + (\sqrt{N_0/N}-1)\tr T_c^* Q_3^H T_c \Gamma +
N^{-2+4\kappa +\eps} \tr \, \big (\cN_S + N^{3\kappa/2+3\eps} \big) \Gamma
 \\ & \qquad + N^{-1+5\kappa/2+3\vep} \tr T_c^* \cN_H^2 T_c \Gamma + N^{-1+\kappa} \tr T_c^* \cN  ( \cN_S + N^{3\kappa/2+3\eps}) T_c \Gamma 
 + \delta_1(\Gamma).
% \\&+ N^{-1+5\kappa/2+3\vep/2} \tr \sum_{p\in S}\cN_p^2 \Gamma + N^{-5+17\kappa/2 + 11\eps} \tr \sum_{p \in S} \cN_p^2 (\cN_S + N^{3\kappa/2+3\eps}) \Gamma + N^{\kappa-\eps} \tr \, \cN_S \Gamma 
\end{split} \end{equation}
With (\ref{eq:claimB1E}), \eqref{eq:Tc_changed_error} and with Lemma \ref{lem:Tc_N}, we find 
\[ \begin{split}  \delta ( &\Gamma) \\ \lesssim &\, N^{-3\eps} \tr \,  T_c^* \sum_{p \in \Lambda^*} p^2 a_p^* a_p T_c \Gamma + N^{-5\eps/2} \tr \, T_c^*  Q_4 T_c \Gamma + (\sqrt{N_0/N}-1)\tr T_c^* Q_3^H T_c \Gamma
 \\
 & + \big(N^{-\kappa/2-\vep/2} + N^{-7+13\kappa+6\vep}\big) \tr \, \cN_S^2 \Gamma + N^{-1+\kappa+6\vep} \tr \big(\cN_S + N^{3\kappa/2+3\vep}\big) \sum_{p\in S}\cN_p^2 \Gamma + N^{5\kappa/2-\vep/2}, 
\end{split} \] 
if $1/2 < \kappa < 8/15 - 2\vep/3$ (note again that in particular $\vep < 1/20$). For the first three terms on the right hand side we use \eqref{eq:apriori_Q3_Q4} and find that they can be absorbed in the error terms that are already present.

\qed

\subsection{Proof of Proposition \ref{prop:NN2}} 
\label{sec:propN} 

From (\ref{eq:weyl2}) and (\ref{eq:bogo1}), we find
\begin{align}\label{eq:propNN2_1}
e^{-\cB_1} & \weyl^* \cN \weyl e^{\cB_1}
= e^{-\cB_1} \big(N_0 + \cN + \sqrt{N_0}(a_0^*+a_0) \big) e^{\cB_1} \nn
\\
&= N_0 + \sum_{p \in \Lambda^*} \left[(c_p^2+s_p^2)a_p^*a_p + c_p s_p (a_p^*a_{-p}^* + a_pa_{-p}) + s_p^2 \right] + \sqrt{N_0}(a_0^*+a_0). 
\end{align}
From \eqref{eq:sp_def} and (\ref{eq:Q3SM}), we find that on the range of $T_c \Gamma$, the second term in \eqref{eq:propNN2_1} simplifies to
\begin{align}
\sum_{p \in \Lambda^*} \left[(c_p^2+s_p^2)a_p^*a_p + c_p s_p (a_p^*a_{-p}^* + a_pa_{-p})  + s_p^2 \right]  
	&= \mathcal N +2  \sum_{p \in H} N_0^2 \ph_p^2 a_p^*a_p + \hspace{-.3cm} \sum_{|p| > N^{\kappa/2+\vep}} \hspace{-.3cm} N_0^2 \ph_p^2.  \label{eq:propNN2_2}
\end{align}
Here we used \Cref{lm:mono} to argue that $\tr \,  T_c^* a_{p}^*a_{-p}^* T_c  \, \Gamma = 0$ for $p\in H$ as $a_{p}^*a_{-p}^*$ would leave two excitations in $H$ which are no longer connected. Also, from the definition of $\Gamma$ in (\ref{eq:G-def}), we clearly have 
\begin{align*}
\tr \left(T_c^* a_0^*   T_c  \Gamma \right) = \tr \left( a_0^*  \Gamma \right) =  0.
\end{align*}
Recall from \eqref{eq:TN-inv} that on the range of $\Gamma$ we have
\begin{align*}
T_c^* \mathcal N T_c \geq T_c^* (\mathcal N_S + \frac{\mathcal N_H}{2}) T_c = \mathcal N_S + \frac{\mathcal N_H}{2} = \mathcal N_S.
\end{align*}
%and the definition of $N_0$ in (\ref{eq:N0-def}), 
Then we obtain that
\begin{align*}
\tr \cN \Gamma_N \geq \tr \cN_S  \Gamma + N_0.
%= N + N^{(1+\kappa)/2}
\end{align*}
To prove the upper bound, we combine the above identities (\ref{eq:propNN2_1}) and (\ref{eq:propNN2_2}) with Lemma \ref{lem:Tc_N} and $\varphi_p \lesssim p^{-2} N^{-1+\kappa}$ from Lemma \ref{lem:vphi_prop}, as well as $N_0 \lesssim N$, which follows directly from the definition \eqref{eq:N0-def} and \Cref{lm:G-prop}.
We find that on the range of $\Gamma$
\begin{align*}
T_c^*e^{-\cB_1} \weyl^* \mathcal N \weyl e^{\cB_1} T_c - \mathcal N_S - N_0  \lesssim N^{-2+3\kappa+\vep} \cN_S + N^{3\kappa/2 - \vep},
\end{align*}
and from $ \tr \cN_S  \Gamma \lesssim N^{3\kappa/2}$, see Lemma \ref{lm:G-prop}, we obtain \Cref{prop:NN2}.

\qed

%%%%%%%%%%%%%%%%%%%%%%%%%%%%%%%%%%%%%%%%%%%%%%

\appendix
%
%\section{Proof of Lemma \ref{lem:vphi_prop}} 
%\label{sect:scat_comparison_proof}
%
%From \eqref{eq:vphi} we have
%$$
%\sum_{p \neq 0} p^2 \vphi_p^2 \leq \sum_{p \neq 0} p^2 \vphi_p^2 + \frac{1}{2} \sum_{p,q \neq 0} \vphi_p \widehat{V}_N(q-p)\vphi_q = -\frac{1}{2}\sum_{p \neq 0} \vphi_p \widehat{V}_N(p) \leq \frac{1}{2} \sum_{p \neq 0}p^2 \vphi_p^2 + C \sum_{p \neq 0} \frac{\widehat{V}_N(p)^2}{p^2}.
%$$
%Thus $\|p\vphi\|_2 \lesssim N^{-(1-\kappa)/2}$. Together with the box scattering equation (\ref{eq:vphi}), we deduce the pointwise estimate $p^2\vphi_p \leq C N^{-(1-\kappa)}$. This easily implies most of the other bounds. To estimate the $\ell^1$-norm of $\ph$, we observe that, for any $K > 0$,  
%\begin{align*}
%\sum_{p \neq 0} |\vphi_p| &\leq \frac{1}{2} \sum_{p \neq 0} \frac{|\widehat{V}_N(p)|}{p^2} + \frac{1}{2}\sum_{p,q \neq 0} \frac{|\widehat{V}_N(p-q)|}{p^2}|\vphi_q |
%\\
%&\leq C + \frac{1}{2} \sum_{|p| \leq K, q\neq 0} \frac{|\widehat{V}_N(p-q)|}{p^2}|\vphi_q|  + \frac{1}{2} \sum_{|p| > K, q\neq 0} \frac{|\widehat{V}_N(p-q)|}{p^2}|\vphi_q|
%\\
%&\leq C + C N^{-(1-\kappa)} \sum_{|p| \leq K, q\neq 0} \frac{|\widehat{V}_N(p-q)|}{p^2 q^2} + \frac{1}{2} K^{-2} \sum_{|p| > K, q\neq 0} |\widehat{V}_N(p-q)| |\vphi_q|
%\\
%&\leq C + C N^{-(1-\kappa)}K + CK^{-2} N^{2(1-\kappa)} \sum_{q \neq 0} |\vphi_q|.
%\end{align*}
%The choice $K^{-2} = (2CN^{2(1-\kappa)})^{-1}$ yields $\|\vphi\|_1 \leq C$. 

\section{Proof of Lemma \ref{lem:vphi_prop}} 
\label{app:scat}

For $N$ large enough, $V_N$ is supported in $\Lambda$ so that we may consider it as a function on the torus, i.e.\ $V_N \in L^2(\Lambda)$. Since $V_N \geq 0$ we may invert $-\Delta + \frac{1}{2}V_N$ on $P^\perp L^2(\Lambda)$, where $P^\perp = 1-|\mathds{1}_\Lambda\rangle \langle \1_\Lambda|$ is the projection onto
the orthogonal complement of the zero mode. 
We define 
\begin{equation}\label{eq:def_vphi_pos}
\check{\vphi} = -\frac{1}{2}\frac{1}{P^\perp (-\Delta + \frac{1}{2}V_N)P^\perp} P^\perp V_N \in P^\perp L^2(\Lambda)
\end{equation}
and readily find that in momentum space $\vphi$ satisfies \eqref{eq:vphi}. Moreover, from \eqref{eq:def_vphi_pos} we obtain
$$
\braket{\check{\vphi},-\Delta \check{\vphi}} = -\frac{1}{2}\braket{\check{\vphi},V_N(1+\check{\vphi})} \leq -\frac{1}{2}\braket{\check{\vphi}, V_N}
\leq \frac{1}{2} \|\nabla \check{\vphi}\|_2 \|(P^\perp \nabla P^\perp)^{-1} V_N\|_2.
$$
In particular $\check{\vphi} \in H^1(\Lambda)$ since $\braket{\check{\vphi},V_N} \leq \|\vphi\|_2 \|V_N\|_2 < \infty$.
We compute
\begin{align*}
\|(P^\perp \nabla P^\perp)^{-1} V_N\|_2^2 &= \sum_{p\neq 0}\frac{\hat{V_N}(p)^2}{p^2} = \sum_{0<|p|<N^{1-\kappa}} \frac{\hat{V_N}(p)^2}{p^2} + \sum_{|p|\geq N^{1-\kappa}} \frac{\hat{V_N}(p)^2}{p^2} \\
&\leq C \|\hat{V}_N\|_\infty N^{1-\kappa} + C N^{-2+\kappa} \|V_N\|_2 \leq C N^{-1+\kappa}
\end{align*}
and conclude the estimate $\|p \vphi\|_2^2 \leq C N^{-1+\kappa}$ from \Cref{lem:vphi_prop}. From this and \eqref{eq:vphi} we deduce the pointwise bound $p^2\vphi_p \leq C N^{-1+\kappa}$. 
This readily implies the bounds $\|\vphi\|_\infty, \|\vphi_2\| \leq CN^{-1+\kappa}$ and the bounds for the cutoff version of $\vphi$.
Next, we estimate the $\ell^1$-norm of $\vphi$. 
For $q\in[1,6)$ we use H\"older respectively Young inequalities, note that we always estimate $\hat{V}_N$ in the $\ell^2-$norm, and find that
\begin{align*}
\|\vphi\|_q &\leq \|\vphi \1_{|p|<N^{-1+\kappa}}\|_q + \|\vphi \1_{|p| \geq N^{-1+\kappa}}\|_q 
\\
&\leq C_N + \frac{1}{2} \|p^{-2} \hat{V}_N \1_{|p| \geq N^{-1+\kappa}}\|_q + \frac{1}{2}\|p^{-2} (\hat{V}_N \ast \vphi) \1_{|p| \geq N^{-1+\kappa}}\|_q \\
&\leq C_N + C_N \|\vphi\|_s,
\end{align*}
if we choose $1 \leq s < \frac{6q}{6-q}$. By a bootstrap argument, choosing first $q=1,$ then $q = 6/5 - \vep_1$ and finally $q = 3/2-\vep_2$ for $0<\vep_1,\vep_2$ small enough we find that 
$$
\|\vphi\|_1 \leq C_N + C_N\|\vphi\|_{6/5-\vep_1} \leq C_N + C_N \|\vphi\|_{3/2-\vep_2} < \infty.
$$
Next we show the exact scaling behavior in $N$. From \eqref{eq:vphi} and $p^2 \vphi_p \leq C N^{-1+\kappa}$ we obtain for any constant $D>0$
\begin{align*}
\|\vphi\|_1 &= \|\vphi \1_{|p|<DN^{-1+\kappa}}\|_1 + \|\vphi \1_{|p|\geq DN^{-1+\kappa}}\|_1
\\
&\leq C_D + \frac{1}{2} \| p^{-2} \hat{V}_N \1_{|p|\geq DN^{-1+\kappa}}\|_1 + \frac{1}{2} \| p^{-2} \hat{V}_N \ast \vphi \1_{|p|\geq DN^{-1+\kappa}}\|_1
\\
&\leq C_D + C D^{-1/2} \|\vphi_1\|.
\end{align*}
Thus, choosing $D$ large enough, we obtain $\|\vphi\|_1 \leq C$.

Next, we want to show \eqref{eq:scat_comparison}. We want to compare the scattering solution $\vphi$ on the torus to the one on $\R^3$. It is well known that the optimization problem
$$
\inf \left\{\int_{\R^3} |\nabla w|^2 + \frac{1}{2}\int_{\R^3} V |1+w|^2 : w\in H^1(\R^3) \right\}
$$
has a unique minimizer $\omega$ and that the minimal value is given by $4\pi\mathfrak{a}$. Moreover, $\omega$ solves the scattering equation
\begin{align*}
-\Delta \omega + \frac{1}{2} V (1+\omega) = 0
\end{align*}
and it holds that $|\omega| \leq \frac{C}{1+|\,\cdot\,|}$, as well as $|\nabla \omega| \leq \frac{C}{1+|\,\cdot\,|^2}$.
As a mean to compare $\vphi$ and $\omega$ we rescale and truncate the full space scattering solution. Let $\omega_N(x) = \omega(N^{1-\kappa} x)\chi(x)$, with $0 \leq \chi \leq 1$ a smooth and radial function such that $\chi \equiv 1$ for $|x|\leq 1/2$ as well as $\chi \equiv 0$ for $|x|\geq 1$. We obtain that $\omega_N$ solves the following equation
\begin{equation} \label{eq:scat_wN}
(-\Delta + \frac{1}{2}V_N) \omega_N = -\frac{1}{2} V_N -2N^{1-\kappa} (\nabla \omega)(N^{1-\kappa}\cdot)\nabla \chi - \omega(N^{1-\kappa} \cdot) \Delta \chi =: -\frac{1}{2}V_N + \frac{1}{2} \epsilon_N.
\end{equation}
%\begin{equation*}
%p^2 \widehat{\omega_N}(p) + \frac{1}{2}\sum_{q \in 2\pi\mathbb{Z}^3} \widehat{V}_N(p-q) \widehat{\omega_N}(q) = \frac{1}{2}\widehat{V}_N(p)-\frac{1}{2}\widehat{\epsilon_N}(p),
%\end{equation*}
From the estimate on $\omega$ we find $\|\omega_N\|_1 \leq C N^{-1+\kappa}$ and $\|\epsilon_N\|_1 \leq C N^{-1+\kappa}$.
Observe that $8\pi\mathfrak{a} = \int_{\R^3}V(1+\omega) = N^{1-\kappa} \int_\Lambda V_N(1+\omega_N)$ and $8\pi\mathfrak{a}_N = N^{1-\kappa} \int_\Lambda V_N(1+\check{\vphi})$. Then with \eqref{eq:scat_wN} and \eqref{eq:def_vphi_pos} we conclude that
\begin{align*}
8\pi|\mathfrak{a} - \mathfrak{a}_N| &= N^{1-\kappa} |\braket{V_N,\omega_N-\check{\vphi}}_{L^2(\Lambda)}| 
\leq N^{1-\kappa} |\braket{P^\perp V_N,\omega_N-\check{\vphi}}| + N^{1-\kappa} \|V_N\|_1 \|\omega_N\|_1
\\
&\leq N^{1-\kappa} |\braket{2 P^\perp (-\Delta+\frac{1}{2}V_N)\check{\vphi},\omega_N-\check{\vphi}}| + C N^{-1+\kappa}
\\
&\leq N^{1-\kappa} |\braket{2 \check{\vphi},(-\Delta+\frac{1}{2}V_N)(\omega_N-\check{\vphi})}| + N^{1-\kappa} |\braket{\check{\vphi},V_N P\omega_N}| + C N^{-1+\kappa}
\\
& \leq N^{1-\kappa} |\braket{\check{\vphi},\epsilon_N}| + N^{1-\kappa} \|\check{\vphi}\|_\infty \|V_N\|_1 \|\omega_N\|_1 + C N^{-1+\kappa}
\\
&\leq N^{1-\kappa} \|\check{\vphi}\|_\infty \|\epsilon_N\|_1 + CN^{-1+\kappa}
\\
&\leq CN^{-1+\kappa}.
\end{align*}

%\begin{align*}
%8\pi|\mathfrak{a} - \mathfrak{a}_N| &= N^{1-\kappa} |\int_\Lambda V_N (\omega_N - \check{\vphi})| \leq N^{1-\kappa} |\int_\Lambda (P^\perp V_N) (\omega_N - \check{\vphi})| + N^{1-\kappa } \int_\Lambda |V_N| \int_\Lambda |\omega_N|
%\\
%&\leq N^{1-\kappa} |\int_\Lambda \check{\vphi}(-\Delta + \frac{V_N}{2})P^\perp (\omega_N - \check{\vphi})| + CN^{-1+\kappa}
%\\
%&\leq N^{1-\kappa} |\int_\Lambda \check{\vphi} (-\Delta + \frac{V_N}{2})(\omega_N - \check{\vphi})| + N^{1-\kappa}\int_\Lambda |\check{\vphi}V_N| \int_\Lambda |\omega_N| + CN^{-1+\kappa}
%\\
%&\leq N^{1-\kappa} |\int_\Lambda \check{\vphi}\epsilon_N | + N^{1-\kappa} \|\check{\vphi}\|_\infty \|V_N\|_1 \|\omega_N\|_1 + CN^{-1+\kappa}
%\\
%&\leq N^{1-\kappa} \|\vphi\|_2 \|\epsilon_N\|_2 + CN^{-1+\kappa}
%\\
%&\leq CN^{-1+\kappa}.
%\end{align*}

 \section{Proof of Proposition \ref{prop:localization}}
 \label{app:loc}
 
 We follow partly \cite{BCS}.
 
 \subsection{Dirichlet localization}
 Let $\Gamma_L$ be a density matrix over $\cF (\Lambda_L)$, satisfying periodic boundary conditions such that (\ref{eq:cond-N}) holds true. By the spectral theorem, we can decompose $$\Gamma_L = \sum_{j\in J} \lambda_j |\Psi_j \rangle \langle \Psi_j|$$ where $\lambda_j \geq 0$ for all $j \in J$, $\sum_{j \in J} \lambda_j = 1$ and where $\Psi_j \in \cF (\Lambda_L)$ is an orthonormal family on $\cF (\Lambda_L)$. We have $\Psi_j = \{ \Psi_j^{(n)} \}_{n \geq 0}$ where $\Psi_j^{(n)}$ is $L$-periodic in all its coordinates (we think of $\Psi_j^{(n)}$ as a periodic function defined on $\bR^{3n}$). For fixed $j \in J$ and $u \in \Lambda_L$, we define now $\Psi^D_{j,u} = \{ (\Psi_{j,u}^D)^{(n)} \}_{n \in \bN}  \in \cF (\Lambda^u_{L+2\ell})$, where $\Lambda^u_{L+2\ell} = u + \Lambda_{L+2\ell}$ is a box of side length $L+2\ell$, with center at $u$, setting 
 \[ (\Psi_{j,u}^D)^{(n)} (x_1 ,\dots , x_n) = \Psi_j^{(n)} (x_1, \dots , x_n) \prod_{i=1}^n Q_{L,n} (x_i - u) \]
 with $Q_{L,\ell} (x) = \prod_{j=1}^3 q_{L,\ell} (x^{(j)})$ for all $x = (x^{(1)}, x^{(2)}, x^{(3)}) \in \bR^3$, with $q_{L,\ell} : \bR \to [0;1]$ defined through 
 \[ q_{L,\ell }(t) = \left\{ \begin{array}{ll} 
 \cos \big( \frac{\pi (t+L/2-\ell)}{4\ell} \big) \quad &\text{if } |t+L/2| \leq \ell \\ 
 1 \quad &\text{if } |t| < L/2 -\ell \\
 \cos \big(\frac{\pi (t-L/2+\ell)}{4\ell} \big) \quad &\text{if } |t-L/2| \leq \ell \\
 0 \quad &\text{otherwise} \end{array} \right. \]
By construction $\Psi_{j,u}^D$ satisfies Dirichlet boundary conditions on the box $\Lambda^u_{L+2\ell}$. 
\begin{lemma}[Dirichlet localization] \label{lm:diri}  
We have 
\begin{equation} \label{eq:psiipsij} \langle \Psi_{j,u}^D , \Psi_{i, u}^D \rangle_{\cF (\Lambda^u_{L+2\ell})} = \langle \Psi_j , \Psi_i \rangle = \delta_{ij} \end{equation}
Thus, $\Gamma_{L+2\ell, u}^D := \sum_{j\in J} \lambda_j |\Psi_{j,u}^D \rangle \langle \Psi_{j,u}^D|$ is a non-negative operator on $\cF (\Lambda_{L+2\ell}^u)$, with $\tr \Gamma_{L+2\ell,u}^D = 1$, with $S(\Gamma_{L+2\ell ,u}^D) = S (\Gamma_L)$ and  
\begin{equation}\label{eq:NjGamma} \tr \cN^j \Gamma_{L+2\ell,u}^D = \tr \cN^j \Gamma_L \end{equation} 
for all $j \in \bN$ and all $u \in \Lambda_L$. Moreover, there exists $\bar{u} \in \Lambda_L$ such that 
\begin{equation} \label{eq:HGuD} \tr \, \cH \Gamma_{L+2\ell, \bar{u}}^D \leq \tr \cH \Gamma_L + \frac{C}{L \ell} \tr \cN \Gamma_L 
\end{equation} 
\end{lemma}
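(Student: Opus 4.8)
\textbf{Proof plan for Lemma \ref{lm:diri}.}

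The plan is to verify the algebraic and entropic properties first, and then to extract the good center $\bar u$ by an averaging argument. For the inner product identity \eqref{eq:psiipsij}, I would compute
\[
\int_{\Lambda^u_{L+2\ell}} \prod_{i=1}^n \big| Q_{L,\ell}(x_i - u)\big|^2 \, \overline{\Psi_j^{(n)}(x)}\,\Psi_i^{(n)}(x)\, dx ,
\]
change variables $x_i \mapsto x_i + u$, and use the crucial periodicity of $\Psi_j^{(n)}, \Psi_i^{(n)}$ together with the fact that $Q_{L,\ell}$ is supported in a box of side $L+2\ell$ but, by the explicit formula for $q_{L,\ell}$, satisfies $\sum_{m \in L\bZ} |q_{L,\ell}(t - m)|^2 = 1$ for every $t \in \bR$ (the two half-cosine pieces at the two ends of adjacent translated boxes add up via $\cos^2 + \sin^2 = 1$). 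Summing over the periodic copies then collapses the integral over $\Lambda^u_{L+2\ell}$ to the integral over one period $\Lambda_L$, which is $\langle \Psi_j^{(n)},\Psi_i^{(n)}\rangle_{L^2(\Lambda_L^n)}$; summing over $n$ gives $\langle \Psi_j,\Psi_i\rangle = \delta_{ij}$. Since the $\Psi_{j,u}^D$ are therefore orthonormal, $\Gamma^D_{L+2\ell,u}$ is a density matrix with the same eigenvalues $\lambda_j$, hence $S(\Gamma^D_{L+2\ell,u}) = -\sum_j \lambda_j \log \lambda_j = S(\Gamma_L)$, and likewise $\tr\,\cN^j \Gamma^D_{L+2\ell,u} = \sum_j \lambda_j \langle \Psi^D_{j,u}, \cN^j \Psi^D_{j,u}\rangle$; the same periodicity/partition-of-unity computation applied with the extra factor $n^j$ (which does not see the spatial variables) yields \eqref{eq:NjGamma}.

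The substantive point is \eqref{eq:HGuD}. First, the potential energy term is unchanged under the Dirichlet localization up to nothing: $V$ is supported in $\{|x| < R\}$ with $R < \ell$, so multiplying $\Psi_j^{(n)}$ by $\prod_i Q_{L,\ell}(x_i-u)$ and integrating $\sum_{i<j} V(x_i-x_j)|\Psi^D_{j,u}|^2$, the same change of variables and summation over periodic copies gives exactly $\tr \big(\sum_{i<j} V(x_i-x_j)\big)\Gamma_L$, because on the support of $V(x_i - x_j)$ the cutoff factors for the pair $(i,j)$ are ``frozen'' and the partition-of-unity identity still applies. So the only change comes from the kinetic energy $\sum_i -\Delta_{x_i}$. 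For this I would write, for a single coordinate, $\|\nabla_x (\Psi_j^{(n)} \prod_i Q_{L,\ell}(x_i-u))\|^2$, expand the derivative by Leibniz, and integrate by parts / average over $u \in \Lambda_L$. Averaging over $u$ (with the uniform probability measure on $\Lambda_L$, using $\int_{\Lambda_L} |Q_{L,\ell}(\cdot-u)|^2\,du$-type identities and $\int_{\Lambda_L}|\nabla Q_{L,\ell}(\cdot - u)|^2\,du$) turns the localization cost into $\|\nabla Q_{L,\ell}\|_\infty^2 \cdot |\Lambda_L|^{-1}\cdot(\text{measure of the cutoff region})$ times $\tr \cN \Gamma_L$. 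Since $\|q_{L,\ell}'\|_\infty \lesssim 1/\ell$ and the region where $q'_{L,\ell}\neq 0$ has one-dimensional measure $\lesssim \ell$, while there are three coordinate directions and the box has volume $L^3$, one gets after averaging a bound of the form $\mathbb{E}_u\big[\tr\,\cH\Gamma^D_{L+2\ell,u}\big] \leq \tr\,\cH\Gamma_L + \frac{C}{L\ell}\tr\,\cN\Gamma_L$; the cross term $\int \nabla\Psi\cdot\Psi\,\nabla Q \,Q$ vanishes or is controlled because it is a total derivative in the averaged quantity (equivalently, one symmetrizes). Finally, since the average over $u\in\Lambda_L$ of $\tr\,\cH\Gamma^D_{L+2\ell,u}$ satisfies this bound, there exists at least one $\bar u \in \Lambda_L$ for which $\tr\,\cH\Gamma^D_{L+2\ell,\bar u}$ is at most the average, giving \eqref{eq:HGuD}.

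The main obstacle is bookkeeping in the kinetic-energy estimate: one must be careful that the cross term between $\nabla\Psi_j^{(n)}$ and $\nabla Q_{L,\ell}$ does not contribute a term of the wrong order, and that the IMS-type localization formula with the specific cosine cutoff really produces the $1/(L\ell)$ gain rather than $1/\ell^2$. This is handled by averaging over the translation parameter $u$: the key identities are $\int_0^L |q_{L,\ell}(t-u)|^2\,du = \text{const}$ and $\int_0^L q_{L,\ell}(t-u) q_{L,\ell}'(t-u)\,du = 0$ (the latter kills the cross term on average since it is $\tfrac12\frac{d}{dt}\int |q_{L,\ell}(t-u)|^2\,du$ up to a periodic shift), so that only $\int_0^L |q_{L,\ell}'(t-u)|^2\,du \lesssim \ell \cdot \ell^{-2} = \ell^{-1}$ survives, and dividing by the total $u$-measure $L$ yields the factor $\tfrac{1}{L\ell}$ per particle. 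Everything else is the routine periodic change of variables used already for \eqref{eq:psiipsij} and \eqref{eq:NjGamma}, and the observation that $V$ and $\cN$ commute with the spatial cutoff structure in the relevant sense.
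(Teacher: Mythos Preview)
Your plan is structurally aligned with the paper's argument, and the treatment of \eqref{eq:psiipsij}, \eqref{eq:NjGamma}, the entropy, and the potential energy is correct (for the latter, no ``frozen cutoff'' argument is needed: $V(x_i-x_j)|\Psi_j^{(n)}|^2$ is $L$-periodic in every coordinate, so the key identity $\int_{-L/2-\ell}^{L/2+\ell}\Phi(t)\,q_{L,\ell}(t)^2\,dt=\int_{-L/2}^{L/2}\Phi(t)\,dt$ applies directly). The gap is in your handling of the kinetic cross term. The identity you invoke, $\int_0^L |q_{L,\ell}(t-u)|^2\,du = \text{const}$ in $t$, is \emph{false}: by the substitution $v=t-u$ this equals $\int_{t-L}^{t} q_{L,\ell}^2$, whose $t$-derivative is $q_{L,\ell}(t)^2-q_{L,\ell}(t-L)^2$, and e.g.\ at $t=L/2+\ell$ this is $0-1\neq 0$. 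Consequently your derived identity $\int_0^L q_{L,\ell}(t-u)\,q_{L,\ell}'(t-u)\,du=0$ is also false (same computation gives $\tfrac12(q_{L,\ell}(t)^2-q_{L,\ell}(t-L)^2)$).

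What actually makes the cross term disappear --- and this is how the paper (following \cite{BCS}) proceeds --- is that it vanishes \emph{pointwise in $u$}, not merely on average. After using the key identity to integrate out the coordinates $x_k$, $k\neq i$, and the directions $m\neq j$ in $x_i$, the cross term reduces to a one-dimensional integral $\int_{\bR} q_{L,\ell}(s)\,q_{L,\ell}'(s)\,\Phi(s)\,ds$ with $\Phi$ $L$-periodic. Periodizing, this equals $\int_{-L/2}^{L/2}\big(\sum_{m\in\bZ} q_{L,\ell}(s+mL)\,q_{L,\ell}'(s+mL)\big)\Phi(s)\,ds = \tfrac12\int_{-L/2}^{L/2}\tfrac{d}{ds}\big(\sum_m q_{L,\ell}(s+mL)^2\big)\Phi(s)\,ds=0$, since the partition of unity $\sum_m q_{L,\ell}(\cdot+mL)^2\equiv 1$ holds. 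Thus for every $u$ one has the exact identity
\[
\tr\,\cH\,\Gamma^D_{L+2\ell,u} \;=\; \tr\,\cH\,\Gamma_L \;+\; \frac{C}{\ell^2}\sum_{j}\lambda_j\sum_{n} n\int \wt{\chi}_{L,\ell}(x_1-u)\int_{\Lambda_L^{n-1}}|\Psi_j^{(n)}|^2\,dx_2\cdots dx_n\,dx_1,
\]
with $\wt\chi_{L,\ell}$ the indicator of the transition layer. Only now does one average over $u\in\Lambda_L$ and use $\|\wt\chi_{L,\ell}\|_1\lesssim L^2\ell$ to obtain the $\tfrac{1}{L\ell}$ factor, and then pick $\bar u$ achieving at most the average. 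Your averaging-first route can be made to work, but it ultimately relies on the same pointwise vanishing mechanism, not on the (incorrect) constancy of $\int_0^L q_{L,\ell}(t-u)^2\,du$.
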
 
\begin{proof}
The proof is an adaptation of \cite[Lemma A.1]{BCS} to the mixed state setting. It is based on the observation that, for arbitrary $L$-periodic functions $\Phi \in L^1_\text{loc} (\bR)$, 
\[ \int_{-L/2-\ell}^{L/2+\ell} \Phi (t) |q_{L,\ell} (t)|^2 dt = \int_{-L/2}^{L/2}  \Phi (t) dt \, .  \]
%This follows from the periodicity of $\psi, \ph$, combining the contribution from the interval $[-L/2-\ell ; -L/%2]$ with the one rfrom $[L/2-\ell ; L/2]$ and similarly, the contr
This shows that in particular that
\begin{align*}
\langle (\Psi_{j,u}^D)^{(n)} , (\Psi_{i,u}^D)^{(n)} \rangle 
	&=  \langle \Psi_{j}^{(n)} , \Psi_{i}^{(n)} \rangle \\
\left\langle (\Psi_{j,u}^D)^{(n)} , \sum_{1\leq i < j \leq n} V(x_i-x_j)  (\Psi_{i,u}^D)^{(n)} \right\rangle 
	&=  \left\langle \Psi_{j}^{(n)} , \sum_{1\leq i < j \leq n} V(x_i-x_j) \Psi_{i}^{(n)} \right\rangle \\
\end{align*}
and thus implies (\ref{eq:psiipsij}) and (\ref{eq:NjGamma}). 
The fact that $S(\Gamma_{L+2\ell,u}^D) = S (\Gamma_L)$ is clear, since the operators $\Gamma_{L+2\ell,u}^D$ and $\Gamma_L$ have the same eigenvalues. As for (\ref{eq:HGuD}), we  proceed as in the proof of Lemma \cite[Lemma A.1]{BCS} to show that, for all $j \in J$ and all $u \in \Lambda_L$,  
\[ \langle \psi_{j,u}^D , \cH \psi_{j,u}^D \rangle \leq \langle \psi_j , \cH \psi_j \rangle + \frac{C}{\ell^2} \sum_{n \in \bN} n \int dx_1 \, \wt{\chi}_{L,\ell} (x_1 - u) \int_{\Lambda_L^{n-1}} dx_2 \dots dx_n \, | \Psi_j^{(n)} (x_1, \dots , x_n) |^2  \]
where we set $\wt{\chi}_{L,\ell} (x) = \sum_{k=1}^3 \chi_{L,\ell} (x^{(k)}) \prod_{j \not = k} \chi_{[-L/2;L/2]}  (x^{(j)})$ and we used the notation \[ \chi_{L,\ell} = \chi_{[-L/2-\ell ; -L/2+\ell]} + \chi_{[L/2 -\ell ; L/2 + \ell]}.\]  Summing over $j \in J$ with the weights $\lambda_j$, we obtain 
\[ \tr \, \cH \Gamma_{L+2\ell,u}^D \leq \tr \cH \Gamma_L + \frac{C}{\ell^2} \sum_{j \in J} \lambda_j \sum_{n \in \bN} n \int dx_1 \, \wt{\chi}_{L,\ell} (x_1 - u) \int_{\Lambda_L^{n-1}} dx_2 \dots dx_n |\Psi_j^{(n)} (x_1, \dots , x_n) |^2 \]
for all $u \in \Lambda_L$. Averaging over $u$, using $\| \wt{\chi}_{L,\ell} \|_1  \leq C L^2 \ell$, we conclude that there exists $\bar{u} \in \Lambda_L$ such that 
\[ \tr \, \cH \Gamma_{L+2\ell,\bar{u}}^D \leq \tr \cH \Gamma_L + \frac{C}{L \ell} \tr \cN \Gamma_L \]
as claimed.
\end{proof} 

\subsection{Patching up the boxes}

For $j \in J$, we define now $\Psi_{j,L+2\ell}^D \in \cF (\Lambda_{L+2\ell})$ setting $(\Psi_{j,L+2\ell}^D)^{(n)} (x_1, \dots , x_n) =  (\Psi^D_{j,\bar{u}})^{(n)} (x_1 - \bar{u}, x_2 - \bar{u}, \dots , x_n - \bar{u})$, with $\bar{u}$ chosen as in Lemma \ref{lm:diri}. Correspondingly, we introduce the density matrix $\Gamma_{L+2\ell}^D \in \cF (\Lambda_{L+2\ell})$ on the box $\Lambda_{L+2\ell} = [-L/2-\ell ; L/2 + \ell]^3$, satisfying Dirichlet boundary conditions.  This allows us to replicate $\Gamma_{L+2\ell}^D$ into several adjacent copies of $\Lambda_{L+2\ell}$, separated by corridors of size $R > 0$ (to avoid interactions between particles in diffierent boxes). For $t \in \bN$, let $$\wt{L} = t (L+2\ell + R).$$ We think of the thermodynamic box $\Lambda_{\wt{L}}$ as the (almost) disjoint union of $t^3$ shifted copies of the small box $\Lambda_{L+2\ell+R}$, centered at 
\[ (-\wt{L}/2 , - \wt{L}/2, -\wt{L}/2) + (L+2\ell+R) (i_1 - 1/2 , i_2 - 1/2 , i_3 - 1/2) \]
for $i_1, i_2, i_3 \in \{ 1, \dots , t \}$. Let $(c_i)_{i=1,\dots , t^3}$ denote an enumeration of the centers of the boxes. Using $L^2 (\Lambda_{\wt{L}}) = \oplus_{i=1}^{t^3} L^2 (\Lambda_{L+2\ell+R}^{c_i})$ and the canonical identification (see for eg. \cite[Theorem 16]{Derezinski}) $$\cF (\Lambda_{\wt{L}}) \simeq \bigotimes_{i=1}^{t^3} \cF (\Lambda_{L+2\ell+R}^{c_i}),$$ we can define on $\cF (\Lambda_{\wt{L}})$ the state
\begin{equation}\label{eq:GwtL}  \Gamma^D_{\wt{L}} \simeq  \Gamma_{L+2\ell+R, c_1}^D \otimes \dots \otimes \Gamma_{L+2\ell + R, c_{t^3}}^D. \end{equation} 
Here the tensor products are symmetric.
Then we have $\tr \Gamma^D_{\wt{L}} = 1$, 
\[ \tr \cN \Gamma^D_{\wt{L}} = t^3 \tr \cN \Gamma_{L+2\ell+R}^D
 \]
as can be seen decomposing $\cN = \sum_{i=1}^{t^3} \cN_i$, where $\cN_i$ measures the number of particles in the box with side length $(L+2\ell+R)$ centered at $c_i$ 
%(similar formulas can also be derived for higher moments of $\cN$). 
Moreover, 
\[ \tr \cH \Gamma^D_{\wt{L}} = t^3 \tr \cH \Gamma^D_{L+2\ell+R} \]
and $S (\Gamma^D_{\wt{L}}) = t^3 S (\Gamma_{L+2\ell+R}^D)$, as follows by noticing that, if 
$(\lambda_j)_{j\in J}$ are the eigenvalues of $\Gamma_{L+2\ell+R}$, then the products $\lambda_{j_1} \dots \lambda_{j_{t^3}}$ are the eigenvalues of $\Gamma^D_{\wt{L}}$. In particular, we obtain 
\[ \tr \cH \Gamma^D_{\wt{L}} - T S (\Gamma^D_{\wt{L}}) = t^3 \big[  \tr \cH \Gamma_{L+2\ell+R}^D - T S (\Gamma_{L+2\ell+r}^D) \big] \]
 The state $\Gamma^D_{\wt{L}}$ is a good trial state for the free energy in the grand canonical ensemble. 
 
 \subsection{From grand canonical to canonical ensemble}
 
 As a last step, infer a bound on the canonical free energy, with a fixed number of particles, from the energy of a grand-canonical state.
 
% As a last step, we switch to a canonical ensemble, with a fixed number of particles. Let $F (N,T,L)$ denote the free energy at temperature $T$ of a gas of $N$ particles interacting through the potential $V$, in the box $\Lambda_L = [-L/2 ; L/2]^3$ with Dirichlet boundary conditions. For $\rho > 0$ with $\rho L^3 \in \bN$, we consider the free energy per unit volume 
%\[ f_L (\rho,T) = \frac{F(\rho L^3 , T, L)}{L^3} \]
%With (\ref{eq:frhot}), we find $f(\rho, T) = \lim_{L \to \infty} f_L (\rho ,T)$. In fact, placing boxes of side length $L$ one next to the other, separated by corridors of size $R$ to make sure that particles in different boxes do not interact, we can compare $f_L$ with $f$, estimating 
%\begin{equation}\label{eq:fLf}  f_L (\rho, T) \geq \big(1 + R/L)^3 f (\rho (1+R/L)^{-3} , T) \end{equation} 
%This bound will play an important role in the proof of the next lemma.  

\begin{lemma}\label{lm:grand2} Suppose there exists a sequence $\Gamma_{\tilde{L}}^D$ of density matrices on $\cF (\Lambda_{\tilde L})$, parametrized by $\tilde L > 0$ and satisfying Dirichlet boundary conditions and such that 
\begin{equation}\label{eq:assum-G} \lim_{{\tilde L}\to\infty} \frac{1}{\tilde{L}^3}\tr \, \cN \Gamma_{\tilde{L}}^D = {\tilde{\rho}}.
\end{equation} 
Then 
\[ f ({\tilde{\rho}}, T) \leq \liminf_{\tilde{L} \to \infty} \frac{1}{{\tilde{L}}^3} \left[ \tr \cH \Gamma_{\tilde{L}}^D - T S (\Gamma_L^D) \right]. \]
\end{lemma}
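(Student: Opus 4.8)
The plan is to deduce the canonical free energy bound from the grand-canonical trial state $\Gamma_{\tilde L}^D$ by projecting onto a sector with a fixed (nearly optimal) particle number, and then pass to the thermodynamic limit. First I would fix a large length $\tilde L$ and write $\Gamma_{\tilde L}^D = \sum_{n\geq 0} p_n \Gamma_{\tilde L}^{D,(n)}$ as a convex combination of density matrices $\Gamma_{\tilde L}^{D,(n)}$ supported on the $n$-particle sector $L^2_s(\Lambda_{\tilde L}^n)$, where $p_n = \tr\big(\mathds{1}_{\{\cN = n\}}\Gamma_{\tilde L}^D\big)$ (sectors with $p_n = 0$ are simply dropped). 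Since $\cH$ preserves the particle number and the entropy is concave, we have the splitting
\[
\tr\,\cH\Gamma_{\tilde L}^D - TS(\Gamma_{\tilde L}^D) \;\geq\; \sum_{n\geq 0} p_n\Big[\tr\,H_n\Gamma_{\tilde L}^{D,(n)} - TS(\Gamma_{\tilde L}^{D,(n)})\Big] \;\geq\; \sum_{n\geq 0} p_n\, F_{\Lambda_{\tilde L}}^D(n),
\]
where $F_{\Lambda_{\tilde L}}^D(n)$ denotes the canonical free energy on the box $\Lambda_{\tilde L}$ with Dirichlet boundary conditions. (The subadditivity of entropy $S(\Gamma)\leq \sum_n p_n S(\Gamma^{(n)}) + h(\{p_n\})$ is not even needed here, since dropping the $h(\{p_n\})\geq 0$ term only helps; one uses only $S\big(\sum p_n\Gamma^{(n)}\big)\geq\sum p_n S(\Gamma^{(n)})$, which follows from concavity of $\Gamma\mapsto S(\Gamma)$ together with the fact that the $\Gamma^{(n)}$ live on mutually orthogonal sectors.)

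Next I would compare the Dirichlet canonical free energy $F_{\Lambda_{\tilde L}}^D(n)$ with the free energy $F_{\Omega}(n)$ entering the definition \eqref{eq:frhot} of $f(\rho,T)$. By the variational principle, removing Dirichlet boundary conditions only lowers the energy, so $F_{\Lambda_{\tilde L}}(n) \leq F_{\Lambda_{\tilde L}}^D(n)$; and on the periodic box the standard existence-of-thermodynamic-limit arguments of \cite{Ruelle} give $F_{\Lambda_{\tilde L}}(n) \geq \tilde L^3 f(n/\tilde L^3, T) - o(\tilde L^3)$ uniformly for $n/\tilde L^3$ in a bounded range (more precisely, superadditivity of $\Lambda \mapsto F_\Lambda(n)$ under patching gives $F_{\Lambda_{\tilde L}}(n)\geq \tilde L^3 f(n/\tilde L^3,T)$ directly when $\tilde L$ is along the appropriate subsequence, with lower-order corrections otherwise). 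Thus
\[
\frac{1}{\tilde L^3}\Big[\tr\,\cH\Gamma_{\tilde L}^D - TS(\Gamma_{\tilde L}^D)\Big] \;\geq\; \sum_{n\geq 0} p_n\, f\!\left(\frac{n}{\tilde L^3},T\right) \;-\; o(1).
\]

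Now I would invoke convexity of $r\mapsto f(r,T)$ (from \cite{Ruelle}) and Jensen's inequality to bound the right-hand side from below by $f\big(\tilde L^{-3}\sum_n p_n n, T\big) - o(1) = f\big(\tilde L^{-3}\tr\,\cN\Gamma_{\tilde L}^D, T\big) - o(1)$. Finally, using the hypothesis \eqref{eq:assum-G} that $\tilde L^{-3}\tr\,\cN\Gamma_{\tilde L}^D \to \tilde\rho$ and the continuity of $r\mapsto f(r,T)$ (a convex finite function on $(0,\infty)$ is continuous), we pass to the $\liminf$ along $\tilde L\to\infty$ and conclude $f(\tilde\rho,T)\leq \liminf_{\tilde L\to\infty}\tilde L^{-3}[\tr\,\cH\Gamma_{\tilde L}^D - TS(\Gamma_{\tilde L}^D)]$, which is the claim. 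The main obstacle, and the only genuinely delicate point, is controlling the $o(\tilde L^3)$ surface corrections uniformly in the particle-number distribution $\{p_n\}$: one needs that $f(n/\tilde L^3,T)$ is approached uniformly for densities in a neighborhood of $\tilde\rho$, and that the tail of $\{p_n\}$ at atypically large $n$ (where $F_{\Lambda_{\tilde L}}^D(n)$ could in principle be very negative or the density very large) does not spoil the bound — this is handled by the a priori bound $\tr\,\cN^2\Gamma_{\tilde L}^D<\infty$ together with stability of matter (the interaction being non-negative, $F_{\Lambda_{\tilde L}}^D(n)\geq 0$, so there is in fact no dangerous negative tail at all), making the Jensen step clean.
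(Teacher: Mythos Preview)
Your first displayed inequality has the sign of the entropy term wrong, and this breaks the argument. For states $\Gamma^{(n)}$ supported on mutually orthogonal sectors one has the \emph{equality}
\[
S\Big(\sum_n p_n \Gamma^{(n)}\Big)=\sum_n p_n\,S(\Gamma^{(n)}) + h(\{p_n\}),\qquad h(\{p_n\})=-\sum_n p_n\log p_n\geq 0,
\]
so that
\[
\tr\,\cH\Gamma_{\tilde L}^D - T S(\Gamma_{\tilde L}^D)=\sum_n p_n\big[\tr H_n\Gamma^{(n)}-TS(\Gamma^{(n)})\big]\;-\;T\,h(\{p_n\})\;\leq\;\sum_n p_n\big[\tr H_n\Gamma^{(n)}-TS(\Gamma^{(n)})\big].
\]
You correctly state the concavity bound $S(\sum p_n\Gamma^{(n)})\geq\sum p_n S(\Gamma^{(n)})$, but then apply it with the opposite sign: concavity of $S$ gives $-TS(\Gamma)\leq -T\sum p_n S(\Gamma^{(n)})$, hence $\leq$ in the free-energy splitting, not $\geq$. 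The mixing entropy lowers the free energy, so projecting onto sectors \emph{raises} it; your chain of inequalities collapses at the very first step.

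The approach is repairable: since $\sum_n n\,p_n=\tr\,\cN\Gamma_{\tilde L}^D\simeq\tilde\rho\,\tilde L^3$, the maximum-entropy principle gives $h(\{p_n\})\lesssim \log(\tilde\rho\,\tilde L^3)=o(\tilde L^3)$, so the offending term vanishes after dividing by $\tilde L^3$ and taking $\tilde L\to\infty$. With that correction the rest of your outline (Dirichlet $\geq$ periodic, convexity of $r\mapsto f(r,T)$, Jensen, continuity) goes through. Note, however, that the paper takes a much shorter route that sidesteps the sector decomposition entirely: it uses the Legendre duality $f(\tilde\rho,T)=\sup_\mu\{\mu\tilde\rho+f_{GC}(\mu,T)\}$, bounds $f_{GC}(\mu,T)$ from above by the Gibbs variational principle applied to $\Gamma_{\tilde L}^D$ for the Hamiltonian $\cH-\mu\cN$, and then uses \eqref{eq:assum-G} to replace $\tr\,\cN\Gamma_{\tilde L}^D$ by $\tilde\rho\,\tilde L^3$. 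This avoids both the mixing-entropy issue and the uniform-convergence subtleties you flag at the end.
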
 
\begin{proof}
We use the equivalence of ensembles: the free energy ${\tilde{\rho}} \mapsto f({\tilde{\rho}},T)$ is convex and is given by the Legendre transform of the pressure, see for example \cite{Ruelle}
\begin{align} \label{eq:legendre}
f({\tilde{\rho}},T) = \sup_{\mu \in \mathbb{R}} \{\mu{\tilde{\rho}} + f_{GC}(\mu,T)\}
\end{align}
with
\begin{align*}
f_{GC}(\mu,T) = \lim_{{\tilde{L}}\to\infty} -\frac{T}{\tilde{L}^3} \log \tr e^{-\frac{1}{T}(\mathcal H - \mu \mathcal N)}.
\end{align*}
Then, for all $\mu \in \mathbb{R}$, using Gibbs' variational principle, we obtain
\begin{align*}
f_{GC}(\mu,T) \leq \liminf_{{\tilde{L}} \to \infty} \frac{1}{\tilde{L}^3} \left[ \tr (\cH-\mu) \Gamma_{\tilde{L}}^D - T S (\Gamma_{\tilde{L}}^D) \right] = \liminf_{{\tilde{L}} \to \infty} \frac{1}{{\tilde{L}}^3} \left[ \tr \cH \Gamma_{\tilde{L}}^D - T S (\Gamma_{\tilde{L}}^D) \right] -\mu {\tilde{\rho}}.
\end{align*}
The above inequality being valid for all $\mu$, we deduce the claim using (\ref{eq:legendre}).
\end{proof}

\subsection{Proof of Proposition \ref{prop:localization}}
We can now conclude the proof of  \cshref{prop:localization}. 

Let $\Gamma_L$ be a density matrix on $\cF (\Lambda_L)$, satisfying periodic boundary conditions and 
\[ \tr \cN \Gamma_L < \infty. \]
With Lemma \ref{lm:diri}, we find a density matrix $\Gamma^D_{L+2\ell}$ on $\cF (\Lambda_{L+2\ell})$ satisfying Dirichlet boundary conditions, with  
\[ \tr \cN \Gamma^D_{L+2\ell} =\tr \cN \Gamma_L\]
and 
\[ \tr \cH \Gamma_{L+2\ell}^D - T S (\Gamma_{L+2\ell}^D)  \leq \tr \cH \Gamma_L - T S (\Gamma_L) + \frac{C}{L\ell} \tr \cN \Gamma_L. \]
From (\ref{eq:GwtL}), we find, for any $t \in \bN$, a density matrix $\Gamma^D_{\wt{L}}$ on $\cF (\Lambda_{\wt{L}})$, with $\wt{L} = t (L+2\ell +R)$, satisfying Dirichlet boundary conditions, with 
\[
\frac{1}{\wt{L}^3} \left( \tr \cH \Gamma_{\wt{L}}^D - T S(\Gamma_{\wt{L}}^D)\right) \leq \frac{1}{(L+2\ell +R)^3} \left(  \tr \cH \Gamma_L - T S (\Gamma_L) \right) + \frac{C}{(L\ell)(L+2\ell +R)^3} \tr \cN \Gamma_L. \]
Denoting
\[ \tilde{\rho} := \frac{1}{\wt{L}^3}\tr \cN \Gamma^D_{\wt{L}} = \frac{1}{{(L+2\ell +R)}^3}\tr \cN \Gamma_{{L}},
 \]
which is independent of $t$, we obtain by Lemma \ref{lm:grand2}, letting $\wt{L} \to \infty$ (i.e.\ $t \to \infty$), that 
\[ \begin{split} f(\tilde \rho, T) &\leq \lim_{\wt{L} \to \infty} \frac{1}{\wt{L}^3}  \left[ \tr \cH \Gamma_{\wt{L}}^D - T S(\Gamma_{\wt{L}}^D) \right] \\  &\leq \frac{1}{(L+2\ell +R)^3}  \left[ \tr \cH \Gamma_L - T S (\Gamma_L) \right] + \frac{C\tilde \rho}{L\ell}.
%\\ &\leq \frac{1}{L^3}  \left[ \tr \cH \Gamma_L - T S (\Gamma_L) \right] + \frac{C}{L^4 \ell} \tr \cN \Gamma_L
\end{split}  \]

\qed

\end{document}